\newcommand{\blind}{1}
\newtheorem{theorem}{Theorem}
\newtheorem{lemma}{Lemma}
\newtheorem{prop}{Proposition}
\newtheorem{assumption}{Assumption}
\newtheorem{remark}{Remark}
\newtheorem{corollary}{Corollary}
\def\ps{p_{[k]}}
\def\sumn{\sum_{i=1}^n}
\def\sumk{\sum_{k=1}^{K}}
\def\sumis{\sum_{i:S_i = k}}
\def\pins{\pi_{n[k]}}
\def\pns{p_{n[k]}}
\def\ps{p_{[k]}}
\def\ns{n_{[k]}}
\def\nst{n_{[k]1}}
\def\nsc{n_{[k]0}}
\def\dim{\textnormal{tdim}}
\def\str{\textnormal{str}}
\def\Var{\textnormal{Var}}
\def\ncone{n_{0(1)}}
\def\nscone{n_{[k]0(1)}}
\def\cd{\xrightarrow{d}}
\def\cp{\xrightarrow{p}}
\def\op{o_{P}(1)}
\begin{document}
	
	\def\spacingset#1{\renewcommand{\baselinestretch}
		{#1}\small\normalsize} \spacingset{1}

	\if1\blind
	{
		\title{\bf Treatment effect estimation under covariate-adaptive randomization with heavy-tailed outcomes}
		\author{Hongzi Li\\
			Center for Statistical Science, Department of Industrial Engineering, \\
   Tsinghua University, Beijing, 100084, China\vspace{.1cm}\\
   Wei Ma\\
   Institute of Statistics and Big Data,  Renmin University of China, \\
   Beijing, 100872, China\vspace{.1cm}\\
   Yingying Ma\\
   School of Economics and Management, Beihang University,\\ Beijing, 100191, China\vspace{.1cm}\\
			Hanzhong Liu\thanks{Dr. Liu was supported by the National Natural Science Foundation of China (12071242).}\\
			Center for Statistical Science, Department of Industrial Engineering, \\
			Tsinghua University, Beijing, 100084, China\vspace{.1cm}\\
			}
		\date{}
		\maketitle
	}\fi

	\if0\blind
	{
		\title{\bf Treatment effect estimation under covariate-adaptive randomization with heavy-tailed outcomes}
  \date{}
  \maketitle
	}\fi
	
 \vspace{-0.5cm}
	
	\begin{abstract}

Randomized experiments are the gold standard for investigating causal relationships, with comparisons of potential outcomes under different treatment groups used to estimate treatment effects. However, outcomes with heavy-tailed distributions pose significant challenges to traditional statistical approaches. While recent studies have explored these issues under simple randomization, their application in more complex randomization designs, such as stratified randomization or covariate-adaptive randomization, has not been adequately addressed. To fill the gap, this paper examines the properties of the estimated influence function-based M-estimator under covariate-adaptive randomization with heavy-tailed outcomes, demonstrating its consistency and asymptotic normality. Yet, the existing variance estimator tends to overestimate the asymptotic variance, especially under more balanced designs, and lacks universal applicability across randomization methods. To remedy this, we introduce a novel stratified transformed difference-in-means estimator to enhance efficiency and propose a universally applicable variance estimator to facilitate valid inferences. Additionally, we establish the consistency of kernel-based density estimation in the context of covariate-adaptive randomization.  Numerical results demonstrate the effectiveness of the proposed methods in finite samples.

	\end{abstract}
	
	\noindent
	{\it Keywords:} blocking; covariate-adaptive randomization; heavy-tailed outcome; influence function; stratification.
	
	\newpage
	\spacingset{1.9} 	
	
	\section{Introduction}

Randomization is the gold standard for drawing causal inference, often utilizing comparisons of potential outcomes across different treatment groups to estimate treatment effects \citep{Neyman:1923, fisher1935design, rosenbaum2002observational, Imbens2015}. However, classical methods based on large sample theory for valid inferences assume well-behaved potential outcomes, a condition often violated by heavy-tailed distributions common in fields like economics, social sciences, and clinical trials. For example, factors such as payment amounts or customers' spending abilities in social science experiments \citep{athey2021semiparametric}, and data like CD4 counts in HIV studies \citep{henry1998randomized}, often exhibit heavy-tailed distributions, posing significant challenges to traditional statistical methods.

While heavy-tailed distributions have been extensively studied in association studies \citep[see, e.g.,][]{pickands1975statistical, hill1975simple,embrechts2013modelling,coles2001introduction}, recent years have witnessed a growing interest in their impact on causal inference. Seminal works by \cite{athey2021semiparametric} and \cite{ghosh2021efficiency} proposed novel methods for estimating treatment effects in randomized experiments with heavy-tailed outcomes. However, these studies primarily focus on simple randomization, neglecting the complexities introduced by stratified or more general covariate-adaptive randomization.

Covariate-adaptive randomization methods are widely used in experimental designs for their ability to balance baseline covariates, such as gender and age, across treatment groups, thereby enhancing estimation and inference efficiencies \citep{Bugni2018,Bugni2019}. For example, the stratified biased-coin design partitions samples into different strata based on covariates and then independently assigns treatments within each stratum using a biased coin to ensure balance within strata \citep{Efron1971}. Stratified permuted block randomization divides each stratum into blocks of pre-specified sizes and allocates treatments within each block \citep{Zelen1974}. Pocock and Simon's minimization method focuses on minimizing imbalances between the number of participants in each treatment group across several prognostic factors \citep{Pocock1975}. According to  \cite{ciolino2019ideal}, nearly $80\%$ of clinical trials published in leading medical journals in 2009 and 2014 utilized covariate-adaptive randomization strategies.

Recent years have witnessed significant advances in achieving valid inferences under covariate-adaptive randomization. For instance, \cite{Bugni2018} developed the groundwork by establishing a new law of large numbers and central limit theory tailored for covariate-adaptive randomization. Their study explored the asymptotic properties of the two-sample $t$-test and regression-adjusted estimators that adjust for stratification indicators,  without assuming a correct underlying outcome data-generating model. \cite{Bugni2019} proposed a fully saturated regression estimator through a two-step procedure to enhance efficiency. \cite{Ma2020Regression}, \cite{Ye2020Inference}, \cite{liu2020general}, and \cite{Gu2024} further studied various regression methods to address imbalances caused by additional covariates beyond stratification variables. Their findings demonstrated that these regression-adjusted estimators could achieve improved efficiency even under misspecified linear models. However, it's important to note that all these methods and theories require finite second moments of outcomes and are not applicable when outcomes exhibit heavy tails. To overcome this challenge, researchers have proposed methods for estimating quantile treatment effects, as evidenced by works such as \cite{firpo2007efficient} and \cite{firpo2009unconditional} under simple randomization, as well as \cite{zhang2020quantile} and \cite{jiang2023regression} under covariate-adaptive randomization. Notably, these studies primarily focus on treatment effects at specific quantiles rather than the overall treatment effect.

\cite{athey2021semiparametric} introduced an M-estimation approach to estimate the overall average treatment effect with theoretical guarantees under simple randomization. Under the constant quantile treatment effects assumption, the overall treatment effect is equivalent to the treatment effect at any specific quantile. \cite{athey2021semiparametric} demonstrated that the M-estimator achieves the semiparametric efficiency bound, making it generally more efficient than the estimator of any particular quantile treatment effect. However, it remains unclear whether the point and variance estimators remain valid under covariate-adaptive randomization. The challenge stems from, unlike simple randomization, which independently assigns each unit to the treatment with a fixed probability, covariate-adaptive randomization introduces a complex dependence structure between covariates and treatment assignments. It determines treatment allocation for a unit not only based on its own covariates but also on the covariates and treatments of all previously assigned units. 

To fill the gap, we establish the asymptotic theory of the estimated influence function-based M-estimation approach \citep{athey2021semiparametric} under covariate-adaptive randomization. Given that the point estimator asymptotically equals the difference in the sample means of appropriately transformed outcomes under treatment and control, we term it the transformed difference-in-means estimator. Our analysis demonstrates that this estimator remains consistent and asymptotically normal under covariate-adaptive randomization, broadening its applicability. Moreover, we observe 
a decrease in the asymptotic variance as treatment allocation becomes more balanced, demonstrating the advantages of balanced designs. A related work by \cite{wang2021model} investigated the properties of a general class of M-estimators. However, their theory only applied to simple, stratified, or biased-coin randomization. More importantly, they assumed that the function used in the estimating equation is known. With heavy-tailed outcomes, this function depends on the unknown density function of the potential outcomes \citep{athey2021semiparametric}. To address this challenge, we establish the consistency of kernel-based density estimation and the $L_2$ convergence of the estimated score function under covariate-adaptive randomization. The results for density estimation are of particular interest in many other problems.

Several key issues arise with the estimated influence function-based M-estimation approach. Firstly, the conditions for deriving the asymptotic results preclude the applicability of Pocock and Simon's minimization method. Secondly, the asymptotic variance depends on the randomization methods and thus violates the universal applicability requirement outlined by \cite{Ye2020}, which asserts that the same inference procedure should apply to all commonly used covariate-adaptive randomization methods. Lastly, the variance estimator proposed by \cite{athey2021semiparametric} remains consistent only for simple randomization and tends to overestimate the asymptotic variance for more balanced designs.

To tackle these challenges, we propose a novel stratified transformed difference-in-means estimator and show that it is consistent and asymptotically normal. Notably, its asymptotic variance is independent of the randomization method and is either smaller than or equal to that of the transformed difference-in-means estimator, thereby enhancing the efficiency. Moreover, we provide a consistent nonparametric variance estimator, which is universally applicable across randomization methods, for constructing confidence intervals or performing hypothesis testing. Simulation studies and real data results demonstrate the validity of the proposed methods in finite samples.

The remainder of this paper is organized as follows: Section 2 outlines the covariate-adaptive randomization framework and introduces relevant notations. In Section 3, we introduce the transformed difference-in-means estimator based on the M-estimation approach, propose the stratified transformed difference-in-means estimator, and establish their asymptotic theory under covariate-adaptive randomization. Section 4 presents the simulation results, while Section 5 reports the results of a real data example. Section 6 concludes the paper with discussions of future research. Proofs are relegated to the supplementary material.

\section{Framework and notation}
In a covariate-adaptive randomized experiment, suppose that $n$ units are drawn independently from a super-population, and each unit is assigned to the treatment or control group. Let $A_i$ represent the treatment assignment indicator for unit $i$, with $A_i=1$ indicating assignment to the treatment group and $A_i=0$ to the control group. The potential outcomes under treatment and control are denoted as $Y_i(1)$ and $Y_i(0)$, respectively, and the observed outcome is given by $Y_i=A_iY_i(1)+(1-A_i)Y_i(0)$. The $n$ units are divided into $K$ strata based on important baseline covariates like age, gender, or race. Let $S_i$ denote the stratum indicator, such that $S_i=k$ ($k=1,\ldots,K$) if unit $i$ belongs to stratum $k$. The probability of a unit belonging to stratum $k$ is denoted as $\ps = P(S_i=k)$. We assume that $K$ is fixed, $\ps$ is independent of $n$, and $\ps > 0$ for all $k=1,\ldots,K$. Let $\ns = \sum_{i=1}^n I_{S_i = k} $ denote the number of units in stratum $k$, where $I$ is the indicator function. Additionally,  $n_1 = \sumn A_i  $ and $n_0 = \sumn (1 - A_i )$ denote the numbers of treated and control units, respectively, while $\nst = \sum_{i=1}^n A_i I_{S_i = k}$ and $\nsc = \sum_{i=1}^n (1- A_i ) I_{S_i = k}$ represent the numbers of treated and control units in stratum $k$, respectively. The proportion of treated units in stratum $k$ is denoted as $\pins =\nst/\ns$, and the proportion of stratum sizes is denoted as $\pns = \ns / n$. Regarding the data-generating process and treatment assignment mechanism, we adopt assumptions similar to those in \cite{Bugni2018,Bugni2019}.

	\begin{assumption}
		\label{a1}
		$\left\{Y_i(1),Y_i(0),S_i\right\}_{i=1}^n$ are independent and identically distributed (i.i.d.).
	\end{assumption}

	\begin{assumption}
		\label{a2}
		$[\left\{Y_i(1), Y_i(0)\right\}_{i=1}^n \perp \left\{A_i\right\}_{i=1}^n ] \mid\left\{S_i\right\}_{i=1}^n$.
	\end{assumption}

	\begin{assumption}
		\label{a3}
		$\pins$ converges to $\pi \in (0,1)$ in probability for $k=1,\ldots,K$.
	\end{assumption}

 	\begin{assumption}
		\label{a4}
		$\{n^{-1 / 2} \{D_{\ns}\}_{k=1}^K \mid \left\{S_i\right\}_{i=1}^n\} \stackrel{d}{\rightarrow} N\left(0, \Sigma \right)$ almost surely, where $D_{\ns}=\nst-\pi \ns$, $\Sigma =\operatorname{diag}\left\{\pi_{[k]} q_{[k]}: k=\right.$ $1, \ldots, K\}$ with $0 \leq q_{[k]} \leq \pi(1-\pi)$ for $k=1, \ldots, K$.
	\end{assumption}

Assumption \ref{a2} indicates that the potential outcomes and treatment assignment are conditionally independent given the stratum information. In a covariate-adaptive randomized experiment, if the treatment assignments depend only on the vector of stratum covariates and an exogenous randomization device, then Assumption \ref{a2} is satisfied. For example, the stratified biased-coin design \citep{Efron1971}, stratified permuted block randomization \citep{Zelen1974}, Pocock and Simon's minimization \citep{Pocock1975}, and the randomization methods proposed by \cite{Hu2012} satisfy Assumption \ref{a2}. Assumption \ref{a3} requires that the proportion of treated units in each stratum converges to the same target treated probability $\pi$. Almost all commonly used covariate-adaptive randomization methods satisfy this assumption. Assumption \ref{a4} was proposed by \cite{Bugni2018}. It is stronger than Assumption \ref{a3} because it necessitates the level of imbalance in treatment assignment to be asymptotically independent across strata, as the asymptotic covariance $\Sigma$ forms a diagonal matrix. Simple randomization and stratified permuted block randomization are two typical designs satisfying Assumption \ref{a4}, with their corresponding $q_{[k]}$ values being $\pi(1-\pi)$ and zero, respectively. However, Pocock and Simon's minimization violates this assumption.

To address heavy-tailed outcomes, we refrain from imposing second-order moment conditions on the outcomes, as typically done in studies such as \cite{Bugni2018,Bugni2019,Ye2020Inference,Ye2020,Ma2020Regression,liu2020general}. However, certain issues can become intricate in the absence of specific moment conditions. For instance, defining the average treatment effect $E\{ Y_i(1) - Y_i(0) \}$ may become  challenging if the potential outcomes follow a Cauchy distribution. Even if well-defined, commonly used treatment effect estimators like the difference-in-means \citep{Bugni2018} or the regression-adjusted estimators \citep{Bugni2019,Ye2020Inference,Ye2020,Ma2020Regression,liu2020general} may not exhibit asymptotic normality. To address these challenges, we adopt a methodology outlined in  \cite{athey2021semiparametric} and \cite{ghosh2021efficiency}, focusing on a constant quantile treatment effects assumption. Specifically, let $F_1(y)$ and $F_0(y)$ denote the distribution functions of potential outcomes $Y_i(1)$ and $Y_i(0)$, respectively. Throughout the paper, we assume that $F_1(y) = F_0(y-\tau)$, for $y \in \mathbb R$, where $\tau$ is the target treatment effect. A sufficient but not necessary condition for constant quantile treatment effects is the additivity of the treatment effect, i.e., $Y_i(1) - Y_i(0) = \tau$.

The estimand $\tau$ is equal to the average treatment effect when the first moments of $Y_i(1)$ and $Y_i(0)$ exist. It is also equal to any specific quantile treatment effect  \citep{zhang2020quantile,jiang2023regression}, such as the treatment effect at the median. Given a distribution function $F$, we denote $F^{-1}$ as the quantile function, $f$ as the derivative of $F$, and $f^{\prime}$ as the derivative of $f$. The information function is defined as $I(f)=\int\left(f^{\prime}/f\right)^2(x) f(x) dx$. Even when the constant quantile treatment effects assumption is violated, the estimand $\tau$ retains a causal interpretation \citep{athey2021semiparametric}, representing a weighted average of quantile treatment effects: $\tau = \int_0^1 (F_1^{-1}(u) - F_0^{-1}(u)) dW_f(u),$ where the weight function is 
$
W_f(u) = \int_{0}^u  I(f_0)^{-1} ( - f'_0 / f_0)' (F_0^{-1}(t))  dt$ for $u \in [0,1].
$
Under simple randomization, \cite{athey2021semiparametric} discussed a class of weighted quantile estimators and demonstrated that using the above weight is more efficient compared to using weight at any single quantile.

\section{Treatment effect estimators}
\label{sec:ate-estimators}

\subsection{Transformed difference-in-means estimator}

Under simple randomization, \cite{athey2021semiparametric} proposed an influence function-based M-estimator of $\tau$. Initially, we introduce this estimator and subsequently investigate its asymptotic property under covariate-adaptive randomization, elucidating its limitations.

If we possess knowledge of the potential outcome distribution shape, i.e., we operate within a parametric model: $Y_i(0) \sim F(y-\eta)$, or $Y_i(1) \sim F(y-\eta-\tau)$, where $F$ is known, then the only unknown parameters are the shift parameter $\eta$ and the treatment effect $\tau$. In the context of simple randomization, if the information function corresponding to $f_0$ is bounded away from zero and infinity, $0 < I(f_0) < \infty$, the maximum likelihood estimator of $\tau$, denoted by $\hat \tau$, has the following influence function \citep{athey2021semiparametric}:
	$$
	\psi_{f_0}(A, Y ; \tau)=-\frac{1}{I\left(f_0\right)} \cdot\left\{\frac{A}{\pi} \cdot \frac{f_0^{\prime}}{f_0}(Y-\tau)-\frac{1-A}{1-\pi} \cdot \frac{f_0^{\prime}}{f_0}(Y)\right\},
	$$
where $f_0(\cdot)=f(\cdot-\eta)$ is the probability density function of $Y_i(0)$. Moreover, $\hat{\tau}$ can be approximated near the true $\tau$ through the summation of influence functions, expressed as:
 $$
 \hat{\tau} = {\tau}+\frac{1}{n}\sum_{i=1}^n \psi_{f_0}(A_i, Y_i ; {\tau}) + o_P(n^{-1/2}).
 $$
However, the parameter of interest $\tau$ in the above expansion remains unknown. Consequently, \cite{athey2021semiparametric} suggested using a two-step estimator: first identifying an initial estimator $\tilde{\tau}$, such as the difference between the medians of the outcomes under the treatment and control, followed by an update process
	\begin{equation}
		\begin{aligned}	
			\hat{\tau}_{0} &= \tilde{\tau}+\frac{1}{n}\sum_{i=1}^n \psi_{f_0}(A_i, Y_i ; \tilde{\tau})= \tilde{\tau}-\frac{1}{nI(f_0)}\sum_{i=1}^n \left\{\frac{A_i}{\pi} \cdot \frac{f_1^{\prime}}{f_1}(Y_i-\tilde{\tau}+\tau)-\frac{1-A_i}{1-\pi} \cdot \frac{f_0^{\prime}}{f_0}(Y_i)\right\}, \nonumber
		\end{aligned}
	\end{equation}
	where $f_1(\cdot)=f_0(\cdot-\tau)$ is the probability density function of $Y_i(1)$. The update step is necessary for reducing the asymptotic variance.

Let $Z_i(a) = -I(f_0)^{-1}  (f_a^{\prime}/f_a)(Y_i(a))$, $a=0,1$. Under simple randomization, if $\tilde{\tau}$ is sufficiently close to $\tau$ such that  $\sqrt{n}( \tilde{\tau} - \tau ) = O_P(1)$, then the asymptotic difference between $\hat{\tau}_{0}$ and $\tau$ is $n_1^{-1}\sumn A_i Z_i(1)- n_0^{-1}\sumn (1 - A_i)Z_i(0)$ \citep{athey2021semiparametric}. In this asymptotic context, $\hat{\tau}_{0}$ can be interpreted as the difference-in-means estimator applied to $Z_i(a)$. Intuitively, $Z_i(a)$ serves as a transformation of $Y_i(a)$ and has a thinner tail compared to $Y_i(a)$. This thinning effect stems from the condition $I(f_0)^{-1} =E [Z_i^2(a)] < \infty$, indicating the existence of the second-order moment of $Z_i(a)$. Consequently, the asymptotic normality of the difference-in-means estimator is implied \citep{Bugni2018}.

A practical challenge associated with the estimator $\hat{\tau}_{0}$ arises from the often unknown nature of $f_0$. To address this, one can use the nonparametric method proposed by \cite{bickel1982adaptive} for estimating $f_0$. Specifically, $f_0$ can be effectively estimated using a kernel method, such as
$ \hat{f_0}(y) = (nh)^{-1}\sum_{i=1}^n \phi \{(y-Y_i)/n\},$
where $\phi$ is the probability density function of $N(0,1)$, and $h$ denotes the bandwidth to be determined. Subsequently,  $\partial \hat{f_0} / \partial y$ is utilized as the estimator of $f_0^{\prime}$. To ensure robustness, it becomes necessary to truncate the estimated score function $\hat{f_0}^{\prime} / \hat{f_0}$ appropriately; for a more detailed discussion, refer to \cite{bickel1982adaptive}.

\begin{remark}
We use the Gaussian kernel here simply as an illustrative example. Theoretically, it could be substituted with any bounded kernel function. In practical applications, we recommend the triweight kernel. This preference arises because we aim to minimize the influence of distant samples on local density estimation, particularly in scenarios with heavy-tailed distributions. Therefore, we favor kernels that have bounded support. 
\end{remark}
 
However, using the same samples for estimating $f_0$ and constructing $\hat{\tau}_{0}$ can potentially lead to overfitting \citep{bickel1993efficient, hastie2009elements}. To mitigate this issue, \cite{athey2021semiparametric} suggested using sample splitting and cross-fitting techniques \citep{klaassen1987consistent, chernozhukov2018double}. We slightly modify the sample splitting procedure to take into account the stratification used in the design stage, thereby facilitating the proof of theoretical results. Specifically, we split the samples separately within each stratum for each treatment arm to ensure conditional independence. Let $\mathcal{L}_{[k],a,1}$ denote a random subset of $\{i:S_i = k, \ A_i = a\}$ (the index set for units in stratum $k$ under treatment arm $a$) with cardinality $| \mathcal{L}_{[k],a,1} | =\lfloor n_{[k]a}/2 \rfloor$. Let $\mathcal{L}_1=\bigcup_{k=1}^{K}\bigcup_{a=0}^{1} \mathcal{L}_{[k],a,1}$ denote the combination of $\mathcal{L}_{[k],a,1}$. Additionally, let $\mathcal{L}_2$ represent the complementary set of $\mathcal{L}_1$. 
Then, we modify $\hat{\tau}_{0}$ as follows: 
	\begin{equation}
		\hat{\tau}_{\dim} = \tilde{\tau}+\frac{1}{n}\left\{\sum_{i \in \mathcal{L}_1} \psi_{\hat{f}_{0(2)}}(A_i, Y_i ; \tilde{\tau}) + \sum_{i \in \mathcal{L}_2} \psi_{\hat{f}_{0(1)}}(A_i, Y_i ; \tilde{\tau}) \right\}, \label{E1}
	\end{equation}
	where $\hat{f}_{0(j)}$ denotes the nonparametric estimator  of $f_0$ \citep{bickel1982adaptive} using samples in $\mathcal{L}_j$, $j=1,2$. With a slight abuse of notation, we substitute $I(\hat{f}_{0(j)})$ in the definition of $\psi_{\hat{f}_{0(j)}}$ with $\hat{I}(f_{0})$, representing the estimated Fisher information, which will be formally defined later. For simplicity,  
 we refer to $\hat{\tau}_{\dim}$ as the transformed difference-in-means estimator.

\begin{remark}
 In the context of simple randomization, \cite{athey2021semiparametric} proposed a method involving random splitting of the sample at each treatment arm into two nearly equal parts, denoted as $\mathcal{L}_{a,1}$ and $\mathcal{L}_{a,2}$, followed by the separate combination of samples from different treatment arms into $\mathcal{L}_{1}$ and $\mathcal{L}_{2}$ (ensuring that the cardinality of $\mathcal{L}_1$ under both splitting mechanisms is approximately $n/2$). They defined the treatment effect estimator in the same way as $\hat{\tau}_{\dim}$ and showed that it shares the same asymptotic properties as $\hat{\tau}_{0}$ under the condition of accurate estimation of $f_0$. Under covariate-adaptive randomization, one can directly utilize \cite{athey2021semiparametric}'s estimator without alterations, which exhibits comparable performance to $\hat{\tau}_{\dim}$ in simulation studies. However, deriving its theoretical properties becomes more challenging due to the intricate dependence structure inherent in covariate-adaptive randomization with sample splitting.
 \end{remark}

\begin{remark}
To ensure that the final estimator $\hat{\tau}_{\dim}$ possesses favorable asymptotic properties, the initial estimator $\tilde{\tau}$ must approximate the true value of $\tau$ as closely as possible. Specifically, it is necessary for $\tilde{\tau}$ to be $\sqrt{n}$-consistent, i.e., $\sqrt{n}( \tilde{\tau} - \tau ) = O_P(1)$. A reasonable choice of $\tilde{\tau}$ is the difference in medians of outcomes under the treatment and control groups \citep{athey2021semiparametric}. As demonstrated in the simulation studies and detailed in the supplementary material, the final estimator, $\hat{\tau}_{\dim}$, exhibits lower finite-sample and asymptotic variance compared to the initial difference-in-medians estimator. To take into account stratification under covariate-adaptive randomization, another option of $\tilde{\tau}$ is the difference-in-weighted-medians estimator with weights based on stratum size, or the weighted average of the stratum-specific differences-in-medians estimators within each stratum. In Section~D.3 of the supplementary material, we establish the $\sqrt{n}$-consistency of these estimators. Our simulation studies revealed that these three estimators displayed comparable finite sample performance when used as initial estimators.
\end{remark}

In the following, we examine the asymptotic properties of $\hat{\tau}_{\dim}$ under covariate-adaptive randomization. This analysis is challenging for several reasons: firstly, the dependence among the influence function for different units renders the standard semiparametric and M-estimator theories, which are based on independence assumptions, inapplicable. Although \cite{wang2021model} extended the properties of M-estimators to non-independent settings, their theory required the function $\psi_{f_0}(A_i, Y_i; \tau)$ to be known and applied only to simple, stratified, or biased-coin randomization, excluding minimization. Secondly, existing theories that involve the sample-splitting technique often assume that the two split parts are independent. However, under covariate-adaptive randomization, $\mathcal{L}_1$ and $\mathcal{L}_2$ are correlated due to the dependence among $A_i$ and $Y_i$, necessitating the consideration of this correlation. Finally, the accuracy of the estimator $\hat{f}_0$ needs to be re-evaluated under covariate-adaptive randomization, as the correlation between samples may affect the effectiveness of the original method for estimating $f_0$.

The asymptotic variance of $\hat{\tau}_{\dim}$ depends on the following quantities:
$$
V_{Z}^2 = \frac{\Var\{\tilde Z_i(1)\}}{\pi} + \frac{\Var\{\tilde Z_i(0)\}}{1-\pi}, \quad V_{H}^2 = E\{ \check{Z}_i(1) - \check{Z}_i(0) \}^2, \quad V_{A}^2 = E\Big[ q_{[S_i]} \Big\{ \frac{\check Z_i(1)}{\pi} + \frac{\check Z_i(0)}{1-\pi}  \Big\}^2  \Big],
$$
where $\tilde Z_i(a) = Z_i(a) - E\{Z_i(a) \mid S_i  \}$ and $\check Z_i(a) = E\{Z_i(a) \mid S_i  \} -  E\{Z_i(a)\}$, $a=0,1$. Let $\sigma_{\dim}^2 = V_{Z}^2 + V_{H}^2 + V_{A}^2$. Let $f_{[k]0}(y)$ and $F_{[k]0}(y)$ be the density function and cumulative distribution function of $Y_i(0)$ conditional on $S_i = k$, for $k=1,\ldots,K$.
  
\begin{assumption}
\label{cond:second-moment}
Suppose that the following conditions hold:
\begin{itemize}
\item[(i)] $f_0$ is twice differentiable with $0< I(f_0) < \infty$ and $\int_{-\infty}^{\infty} |f_0^{\prime\prime}(y) | dy < \infty$; 
\item[(ii)] $\max_{k=1,\ldots,K}\Var\{ Z_i(a) \mid S_i = k\} > 0$; 
\item[(iii)] for $j=1,2$, $\sup_{y \in \mathbb R} \{|\hat{f}_{0(j)}^{\prime}/ \hat{f}_{0(j)}|(y) \} = o_P(n^{1/2})$, $\int ( \hat{f}_{0(j)}^{\prime}/ \hat{f}_{0(j)} -f_0^{\prime} / f_0)^2(y) d F_0(y) = o_P(1)$, and for any  $Y_1^{\prime},\ldots,Y_m^{\prime}$ i.i.d. $\sim F_{[k]0}$ and are independent of $\hat{f}_{0(j)}^{\prime}/\hat{f}_{0(j)}$, and ${\delta}_n = O_P(n^{-1 / 2})$, if $m/n$ converges to a positive constant, then as $n \to \infty$, we have $ m^{-1}\sum_{i=1}^m|({\hat{f}_{0(j)}^{\prime}}/{\hat{f}_{0(j)}})^{\prime}|(Y_i^{\prime}+{\delta}_n)= O_P(1)$, $m^{-1}\sum_{i=1}^m({\hat{f}_{0(j)}^{\prime}}/{\hat{f}_{0(j)}})^{\prime}(Y_i^{\prime}+{\delta}_n)= m^{-1}\sum_{i=1}^m({f_0^{\prime}}/{f_0})^{\prime}\left(Y_i^{\prime}\right)+\op$.
\end{itemize}
\end{assumption}

Assumption~\ref{cond:second-moment}(i) ensures the existence of the second-order moment of $Z_i(a)$, which is satisfied by most heavily-tailed distributions, such as the $t$-distribution (including the Cauchy distribution as a special case). The absolute integrability of the second derivative is also a common regularity condition, sufficient to show $E(f_0^{\prime}/f_0)^{\prime}(Y_i(0)) = -I(f_0)$. Assumption~\ref{cond:second-moment}(ii) rules out the case of degeneration of the asymptotic variance. Assumption~\ref{cond:second-moment}(iii) requires  $\hat{f}_{0(j)}^{\prime}/\hat{f}_{0(j)}$ to be an accurate estimator of the true score function. 

\begin{remark}
    Regarding Assumption~\ref{cond:second-moment}(iii), \cite{athey2021semiparametric} does not require $\sup_{y \in \mathbb R} \{|\hat{f}_{0(j)}^{\prime}/ \\ \hat{f}_{0(j)}|(y) \} = o_P(n^{1/2})$ and $ m^{-1}\sum_{i=1}^m|({\hat{f}_{0(j)}^{\prime}}/{\hat{f}_{0(j)}})^{\prime}|(Y_i^{\prime}+{\delta}_n)= O_P(1)$. However, as shown in Theorem \ref{thm:Athey-var} below, the variance estimator proposed by \cite{athey2021semiparametric} is conservative under general covariate-adaptive randomization methods. To derive a consistent variance estimator, we impose these two additional conditions. Notably, these conditions are not necessary if we use the treatment and control group samples to estimate $f_1$ and $f_0$, respectively. However, such an approach may lead to increased computational complexity and less accurate estimates due to smaller sample sizes in one arm (when $\pi \neq 1/2$). In practice, we recommend using the larger sample size arm for density estimation.
\end{remark}

Theorem~\ref{l0} below demonstrates that Assumption~\ref{cond:second-moment}(iii) holds under mild conditions on $f_0$, which are satisfied by many common heavily-tailed distributions (for details, see Section~A in the supplementary material).

\begin{theorem}
    \label{l0}
    Under Assumptions \ref{a1}--\ref{a3} and {\color{red} S.1}--{\color{red} S.2} in the supplementary material, if $f_0$ is twice differentiable with  $0< I(f_0) < \infty$ and $\int_{-\infty}^{\infty} \{f_0^{\prime\prime}(y)\}^2/f_0(y)  dy < \infty$, then $\hat{f}_{0(j)}^{\prime}/\hat{f}_{0(j)}$ ($j = 1, 2$) satisfies Assumption \ref{cond:second-moment}(iii). 
\end{theorem}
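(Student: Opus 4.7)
The plan is to reduce the problem to the classical adaptive score estimation setting of \cite{bickel1982adaptive} by conditioning on an appropriate $\sigma$-field. Let $\mathcal G_n = \sigma(\{S_i\}, \{A_i\}, \mathcal L_1, \mathcal L_2)$. Under Assumptions~\ref{a1}--\ref{a2} the observed outcomes $\{Y_i\}_{i=1}^n$ are mutually independent given $\mathcal G_n$, with $Y_i \mid \mathcal G_n \sim F_{[S_i]A_i}$. Since the sample splitting is performed within every stratum-arm cell, $\mathcal L_1$ and $\mathcal L_2$ are conditionally independent given $\sigma(\{S_i\},\{A_i\})$, so the two cross-fitted halves can be treated separately. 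Hence $\hat f_{0(j)}$ is, conditionally on $\mathcal G_n$, a function of independent observations that are i.i.d. within each cell, to which standard kernel-smoothing arguments apply; this bypasses the complicated joint distribution of $\{A_i\}$ induced by CAR.

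After shifting every treated $Y_i$ in $\mathcal L_j$ by $\tilde\tau$, the pooled sample becomes, conditionally on $\mathcal G_n$, an independent draw from the mixture density
\begin{equation*}
g_n(y) \;=\; \sum_{k=1}^K \frac{|\mathcal L_{[k],0,j}|\, f_{[k]0}(y) + |\mathcal L_{[k],1,j}|\, f_{[k]0}(y + \tilde\tau - \tau)}{|\mathcal L_j|}.
\end{equation*}
The law of large numbers on the i.i.d. stratum indicators (Assumption~\ref{a1}) sends the total stratum-$k$ weight to $\ps$, and the $\sqrt n$-consistency of $\tilde\tau$ together with smoothness of $f_0$ gives $\|g_n - f_0\|_{L_1} = O_P(n^{-1/2})$. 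Applying the kernel density/score estimators of \cite{bickel1982adaptive} (with bandwidth and truncation chosen as in S.1--S.2) to this conditionally i.i.d.\ sample then delivers: (a) $L_2(F_0)$ convergence of $\hat f_{0(j)}^{\prime}/\hat f_{0(j)}$ to $f_0^{\prime}/f_0$, which is the second requirement in Assumption~\ref{cond:second-moment}(iii); (b) the uniform bound $\sup_y |\hat f_{0(j)}^{\prime}/\hat f_{0(j)}|(y) = o_P(n^{1/2})$, from the truncation, which is the first requirement; and (c) an $L_1(F_0)$-type bound on $(\hat f_{0(j)}^{\prime}/\hat f_{0(j)})^{\prime}$, with the hypothesis $\int (f_0^{\prime\prime})^2/f_0 < \infty$ controlling the kernel bias.

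For the third requirement I would exploit the independence of $\{Y_i^{\prime}\}$ from $\hat f_{0(j)}$. Conditioning on $\hat f_{0(j)}$, the sum $m^{-1}\sum_i (\hat f_{0(j)}^{\prime}/\hat f_{0(j)})^{\prime}(Y_i^{\prime}+{\delta}_n)$ is an average of $m$ i.i.d. variables from $F_{[k]0}$. A first-order Taylor expansion in ${\delta}_n = O_P(n^{-1/2})$, together with the $L_1$-bound in (c), absorbs the random shift; the $L_2$ score convergence in (a) lets us replace $\hat f_{0(j)}$ by $f_0$ inside the expectation, which equals $-I(f_0)$ under the regularity hypotheses. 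Markov's inequality then yields the $O_P(1)$ boundedness and the $\op$ equality required by Assumption~\ref{cond:second-moment}(iii).

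The main obstacle is moving from statements that hold conditionally on $\mathcal G_n$ back to unconditional statements: both the effective density $g_n$ and the bandwidth/truncation sequences are random, and they interact with the random shift $\tilde\tau - \tau$. I plan to handle this via a restriction-to-a-high-probability-event argument in the spirit of \cite{bickel1993efficient}: establish the conclusions on an event on which the bandwidth, truncation, stratum sizes, and $|\tilde\tau - \tau|$ all lie in deterministic ranges, and let the complement contribute negligibly by Markov's inequality. A secondary technical point, required to close the argument, is upgrading the $L_2(g_n)$ convergence from Step~2 to $L_2(F_0)$, which again leverages $\int (f_0^{\prime\prime})^2/f_0 < \infty$ together with $\|g_n - f_0\|_{L_1} = O_P(n^{-1/2})$.
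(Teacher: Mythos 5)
Your high-level strategy --- condition on the design $(S^{(n)},A^{(n)},U^{(n)})$ to recover conditional independence, apply Bickel-type kernel arguments, obtain the sup bound from the truncation, and handle the last two requirements via the independence of $\{Y_i'\}$ from $\hat f_{0(j)}$ plus a conditional law of large numbers --- is the same skeleton the paper uses. But two steps as you describe them would not go through. First, the estimator $\hat f_{0(j)}$ in the paper is built from the control-arm samples in $\mathcal L_j$ only, so there is no mixture component $f_{[k]0}(y+\tilde\tau-\tau)$ and no dependence of $\hat f_{0(j)}$ on $\tilde\tau$; more seriously, your claim that the effective stratum weights converge to $p_{[k]}$ at rate $O_P(n^{-1/2})$ is false under Assumption 3 alone, which only gives $\pi_{n[k]}\to\pi$ in probability at an unspecified rate. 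The paper works with $\Delta_n=\max_k|n_{[k]0(1)}/n_{0(1)}-p_{[k]}|=o_P(1)$ and builds this slower, unquantified rate directly into the bandwidth condition of Assumption S.1, namely $e_n\sigma_n^{-5}=o_P(\min\{n,\Delta_n^{-2}\})$; the bias of $\hat f_{\sigma_n}^{(l)}$ relative to $f_{\sigma_n}^{(l)}$ is bounded by a constant times $\Delta_n^2\sigma_n^{-2l-1}f_{\sigma_n}(y)$, and without this coupling between $\Delta_n$ and $\sigma_n$ the variance/bias bounds in your Step 2 do not close.

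Second, your treatment of the shift $\delta_n$ in the third requirement by ``a first-order Taylor expansion in $\delta_n$'' of $(\hat f_{0(j)}'/\hat f_{0(j)})'(Y_i'+\delta_n)$ would require controlling the next derivative, i.e., the third derivative of $\log\hat f_{0(j)}$, which none of the assumptions bound (the truncation only caps $|\hat f_{\sigma_n}'/\hat f_{\sigma_n}|$ by $c_n$ and $|\hat f_{\sigma_n}''/\hat f_{\sigma_n}|$ by $b_n$, both diverging). The paper avoids this entirely: it changes variables so that the shift lands on the density, $\int|(\hat f_{0(j)}'/\hat f_{0(j)})'|(y+\tilde\delta_n)f_{[k]0}(y)\,dy=\int|(\hat f_{0(j)}'/\hat f_{0(j)})'|(y)f_{[k]0}(y-\tilde\delta_n)\,dy$, and then invokes Assumption S.2 --- the uniform continuity of $f_{[k]0}(y+\tilde\delta_n)/f_{[k]0}(y)$ --- to replace $f_{[k]0}(y-\tilde\delta_n)$ by $f_{[k]0}(y)\{1+o(1)\}$, after which the $L_1(F_{[k]0})$ bound on $|\hat f_{0(j)}''/\hat f_{0(j)}|+(\hat f_{0(j)}'/\hat f_{0(j)})^2$ finishes the argument. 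Your proposal never uses Assumption S.2, which is the hypothesis specifically introduced for this step, so this part of the argument is genuinely missing rather than merely compressed.
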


    Theorem~\ref{l0} illustrates that under mild conditions, the estimated score function displays notable smoothness and continuity. These properties were initially established by \cite{stone1975adaptive} and \cite{bickel1982adaptive} for a class of kernel density estimation methods applied to independent and identically distributed outcomes. We extend this research by confirming that these favorable properties also apply to outcomes from covariate-adaptive randomization. This includes the consistency of $\hat{f}_{0(j)}$ and $\hat{f}_{0(j)}^{\prime}$, as proven in Section~D.1 in the supplementary material. These findings are crucial for studies that involve density estimation under covariate-adaptive randomization.

	\begin{theorem}
		\label{t1}
		Under Assumptions \ref{a1}--\ref{a2} and \ref{a4}--\ref{cond:second-moment}, if the initial estimator $\tilde \tau$ is $\sqrt{n}$-consistent, i.e., $\sqrt{n} (\tilde \tau - \tau) = O_P(1)$, then
  $\sqrt{n}(\hat{\tau}_{\dim}-\tau) \stackrel{d}{\rightarrow} N (0, \sigma_{\dim}^2)$.
		
	\end{theorem}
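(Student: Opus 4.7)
The plan is to reduce $\hat{\tau}_{\dim}-\tau$ to an oracle influence-function sum of the form $n^{-1}\sumn\psi_{f_0}(A_i,Y_i;\tau)$ and then to apply a stratified central limit theorem tailored to covariate-adaptive randomization. The first step is to bound the cost of replacing $\hat{f}_{0(j)}$ by $f_0$ inside the sample-split sums. Conditioning on $\mathcal{L}_2$ renders $\hat{f}_{0(2)}$ deterministic, and combining the $L_2$-consistency of $\hat{f}_{0(j)}'/\hat{f}_{0(j)}$ from Assumption~\ref{cond:second-moment}(iii) with the stratified splitting structure, a conditional mean-variance calculation should deliver
\[
\frac{1}{n}\sum_{i\in\mathcal{L}_1}\bigl\{\psi_{\hat{f}_{0(2)}}(A_i,Y_i;\tilde{\tau})-\psi_{f_0}(A_i,Y_i;\tilde{\tau})\bigr\}=o_P(n^{-1/2}),
\]
and symmetrically for the other half. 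Assumption~\ref{a2} is essential at this point: conditional on the stratum indicators the potential outcomes are independent of the assignments, so within each stratum--arm cell the observations indexed by $\mathcal{L}_1$ are i.i.d.\ draws from $F_{[k]a}$ independent of $\hat{f}_{0(2)}$, despite the joint dependence of $\{A_i\}$ across strata.

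Next I would Taylor-expand $\psi_{f_0}$ around $\tau$. Since only the treated term depends on $\tau$, $\partial_\tau\psi_{f_0}(A_i,Y_i;\tau)=\{I(f_0)\}^{-1}(A_i/\pi)(f_0'/f_0)'(Y_i-\tau)$. A stratum-wise law of large numbers under Assumptions~\ref{a1}--\ref{a3}, combined with the information identity $E[(f_0'/f_0)'(Y_i(0))]=-I(f_0)$ that follows from the integrability of $f_0''$ in Assumption~\ref{cond:second-moment}(i), yields $n^{-1}\sumn\partial_\tau\psi_{f_0}(A_i,Y_i;\tau)\cp-1$. A second-order Taylor remainder is then controlled through Assumption~\ref{cond:second-moment}(iii) together with the $\sqrt{n}$-consistency of $\tilde{\tau}$. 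Putting these pieces together yields
\[
\hat{\tau}_{\dim}-\tau=\frac{1}{n}\sumn\Bigl\{\frac{A_i}{\pi}Z_i(1)-\frac{1-A_i}{1-\pi}Z_i(0)\Bigr\}+o_P(n^{-1/2}),
\]
after using $(f_1'/f_1)(y)=(f_0'/f_0)(y-\tau)$.

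To handle this leading sum, I would decompose $Z_i(a)=\tilde{Z}_i(a)+\check{Z}_i(a)$ and rearrange into three pieces: a within-stratum term $\sumn\{A_i\tilde{Z}_i(1)/\pi-(1-A_i)\tilde{Z}_i(0)/(1-\pi)\}$, an i.i.d.\ between-stratum term $\sumn\{\check{Z}_i(1)-\check{Z}_i(0)\}$, and an assignment-imbalance term $\sumk\{\mu_k(1)/\pi+\mu_k(0)/(1-\pi)\}D_{\ns}$ with $\mu_k(a)=E[Z_i(a)\mid S_i=k]$. Conditional on $\{S_i\}$, Assumption~\ref{a2} makes the first two pieces independent of the assignments, rendering the three pieces pairwise uncorrelated, and a joint conditional CLT then shows they are asymptotically independent. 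A within-stratum Lindeberg CLT conditional on $\{S_i,A_i\}$ contributes $V_Z^2$, the classical i.i.d.\ CLT contributes $V_H^2$, and Assumption~\ref{a4} applied to the diagonal matrix $\Sigma$ contributes $V_A^2$. Summing the three limiting variances yields $\sigma_{\dim}^2$, as required. The main obstacle is the first step: because $\mathcal{L}_1$ and $\mathcal{L}_2$ are not independent under covariate-adaptive randomization, the standard cross-fitting bound of \cite{athey2021semiparametric} does not apply directly, and the argument must be carried out within the stratified conditional structure in order to decouple the density estimate from the observations it is evaluated on.
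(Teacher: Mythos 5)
Your proposal follows essentially the same route as the paper: the cross-fitting replacement of $\hat f_{0(j)}$ by $f_0$ via a conditional mean--variance argument within stratum--arm cells (the paper's Lemma~\ref{l2}), the Taylor expansion showing the derivative term converges to $-1$ using the identity $E(f_0'/f_0)'(Y_i(0))=-I(f_0)$, and the reduction of the leading term to a difference in means of the transformed outcomes $Z_i(a)$. The only substantive difference is that you re-derive the three-way decomposition into within-stratum, between-stratum, and assignment-imbalance pieces, whereas the paper simply invokes Theorem~4.1 of Bugni et al.\ (2018) applied to $Z_i(a)$; since that decomposition is exactly the proof of the cited theorem, this buys nothing new but is perfectly valid.

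One point you gloss over: in the replacement step, the conditional \emph{variance} is controlled by the $L_2$-consistency in Assumption~\ref{cond:second-moment}(iii) alone, but the conditional \emph{bias} equals
\[
\sum_{k=1}^K \Bigl(\frac{n_{[k]1(1)}}{n\pi}-\frac{n_{[k]0(1)}}{n(1-\pi)}\Bigr)\int\Bigl(\frac{\hat f_{0(2)}'}{\hat f_{0(2)}}-\frac{f_0'}{f_0}\Bigr)(y)\,dF_{[k]0}(y),
\]
and the integral is only $o_P(1)$. To make this $o_P(n^{-1/2})$ you need the stratum-level imbalances $n_{[k]1}/n-p_{[k]}\pi$ to be $O_P(n^{-1/2})$, which is where Assumption~\ref{a4} enters the replacement step (not only the final CLT). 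Under Assumption~\ref{a3} alone the coefficient is merely $o_P(1)$ and the bound would fail, which is precisely why the stratified estimator $\hat\tau_{\str}$ (whose stratum-specific weights cancel this bias) gets away with the weaker assumption. With that caveat made explicit, your argument is complete.
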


\begin{remark}
\label{remark:root-n-consistent}

Under the constant quantile treatment effects assumption, estimators for any specific quantile can serve as initial estimators. Since the asymptotic variance of the quantile estimator is roughly inversely proportional to the square of the density function value at that quantile \citep{zhang2020quantile}, we typically choose the median for efficiency considerations. 
\end{remark}

Theorem~\ref{t1} implies that the transformed difference-in-means estimator $\hat \tau_{\dim}$ is consistent and asymptotically normal under covariate-adaptive randomization. However, Assumption~\ref{a4} excludes Pocock and Simon's minimization method, although this assumption is satisfied for many other covariate-adaptive randomization methods, such as simple randomization, stratified permuted block randomization, and stratified biased coin design. Consequently, it remains unclear whether $\hat \tau_{\dim}$ is asymptotically normal or not under minimization. Deriving the asymptotic distribution of $\hat \tau_{\dim}$ under minimization is challenging due to the complex dependence structure of treatment assignment across strata, even asymptotically.

As indicated by Theorem~\ref{t1}, the asymptotic variance depends on the randomization methods used during the design stage via $q_{[k]}$. Consequently, we cannot provide a universal inference method that is applicable to all commonly used covariate-adaptive randomization methods. For simple randomization, $q_{[k]}= \pi (1 - \pi)$ and the asymptotic variance simplifies to $\{\pi(1-\pi)I(f_0)\}^{-1}$; see also \cite{athey2021semiparametric}. For general covariate-adaptive randomization methods, we can verify that $$\frac{1}{\pi(1-\pi)I(f_0)} - \sigma^2_{\dim} = E\Big[ \{ \pi(1-\pi)- q_{[S_i]}\} \Big\{ \frac{\check Z_i(1)}{\pi} + \frac{\check Z_i(0)}{1-\pi}  \Big\}^2 \Big] \geq 0.$$
Thus, the asymptotic variance of $\hat \tau_{\dim}$ is no greater than that under simple randomization. Moreover, as the design becomes more balanced (i.e., $q_{[k]}$ decreases), the asymptotic variance $\sigma^2_{\dim}$ decreases, indicating that a more balanced design leads to a more efficient estimator for the average treatment effect. The asymptotic variance $\sigma^2_{\dim}$ achieves its minimum value if the design is strongly balanced, i.e., $q_{[k]} = 0$ for $k=1,...,K$. In this case, the asymptotic variance has a much simpler expression, as shown in Corollary~\ref{cor1} below.

	\begin{corollary}
		\label{cor1}
		Under Assumptions \ref{a1}--\ref{a2} and \ref{a4}--\ref{cond:second-moment} with $q_{[k]} = 0$ for $k=1,...,K$,  if the initial estimator $\tilde \tau$ is $\sqrt{n}$-consistent, i.e., $\sqrt{n} (\tilde \tau - \tau) = O_P(1)$, then
		$
		\sqrt{n}(\hat{\tau}_{\dim}-\tau) \stackrel{d}{\rightarrow} N(0, \sigma^2_{\str}),
		$ where $\sigma_{str}^2 = V_{Z}^2 + V_{H}^2$.
	\end{corollary}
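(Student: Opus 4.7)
The plan is simply to invoke Theorem~\ref{t1} and observe that the additional hypothesis $q_{[k]} = 0$ for all $k = 1, \ldots, K$ collapses the variance expression. Specifically, Theorem~\ref{t1} already delivers $\sqrt{n}(\hat{\tau}_{\dim} - \tau) \stackrel{d}{\rightarrow} N(0, \sigma_{\dim}^2)$ with $\sigma_{\dim}^2 = V_Z^2 + V_H^2 + V_A^2$ under Assumptions~\ref{a1}--\ref{a2} and \ref{a4}--\ref{cond:second-moment}. The only piece of work is therefore to verify that $V_A^2$ vanishes under the strongly balanced condition.

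Recall that
\[
V_A^2 = E\Big[ q_{[S_i]} \Big\{ \frac{\check Z_i(1)}{\pi} + \frac{\check Z_i(0)}{1-\pi} \Big\}^2 \Big] = \sum_{k=1}^{K} \ps \, q_{[k]} \, E\Big[\Big\{ \frac{\check Z_i(1)}{\pi} + \frac{\check Z_i(0)}{1-\pi} \Big\}^2 \,\Big|\, S_i = k \Big],
\]
and under the hypothesis $q_{[k]} = 0$ for every $k$, each summand is identically zero, so $V_A^2 = 0$. Consequently $\sigma_{\dim}^2$ reduces to $V_Z^2 + V_H^2 = \sigma_{\str}^2$, and substituting into the conclusion of Theorem~\ref{t1} yields $\sqrt{n}(\hat{\tau}_{\dim} - \tau) \stackrel{d}{\rightarrow} N(0, \sigma_{\str}^2)$, as claimed.

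There is no real obstacle here: the corollary is a direct specialization of Theorem~\ref{t1}, with the role of $V_A^2$ being to quantify the residual imbalance induced by the randomization device within strata. Strong balance ($q_{[k]} = 0$) eliminates exactly this term, leaving only the within-stratum variability $V_Z^2$ and the between-stratum heterogeneity $V_H^2$. No additional technical machinery beyond Theorem~\ref{t1} is required.
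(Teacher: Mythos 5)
Your proof is correct and matches the paper's own argument exactly: both invoke Theorem~\ref{t1} and observe that $q_{[k]}=0$ for all $k$ forces $V_A^2=0$, so the asymptotic variance reduces to $V_Z^2+V_H^2=\sigma^2_{\str}$. The conditioning-on-$S_i$ expansion you write out is a harmless elaboration of the same one-line observation.
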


\cite{athey2021semiparametric} suggested using $\{\pi(1-\pi)\hat{I}(f_0)\}^{-1}$ to estimate the asymptotic variance, where $\hat{I}(f_0)$ is a consistent estimator of ${I}(f_0)$. For example,
$$
\hat{I}(f_0)=\frac{1}{n_0}\bigg[\sum_{i \in \mathcal{L}_1, A_i=0}\bigg\{\frac{\hat{f}_{0(2)}^{\prime}}{\hat{f}_{0(2)}}(Y_i)\bigg\}^2 + \sum_{i \in \mathcal{L}_2, A_i=0}\bigg\{\frac{\hat{f}_{0(1)}^{\prime}}{\hat{f}_{0(1)}}(Y_i)\bigg\}^2\bigg].
$$

\begin{remark}
$\hat{I}(f_0)$ appears in both the point and variance estimators. In theory, we only need $\hat{I}(f_0)$ to be consistent with the true Fisher information $I(f_0)$, i.e., $\hat{I}(f_0) = I(f_0) + o_P(1)$. There are various ways to estimate $I(f_0)$. For example, \cite{athey2021semiparametric} proposed an alternative estimator $\hat I^*({f}_{0}) = \{ I(\hat{f}_{0(1)}) +  I(\hat{f}_{0(2)}) \} /2 $, where
$$I(\hat{f}_{0(j)}) = -|\mathcal{L}_{0,j}|^{-1}\sum_{i \in \mathcal{L}_j, A_i=0} \frac{\hat{f}_{0(j)}(Y_i)\hat{f}_{0(j)}^{\prime \prime}(Y_i) - \hat{f}_{0(j)}^{\prime}(Y_i)^2}{\hat{f}_{0(j)}(Y_i)^2}, \quad j=1,2.$$
The point estimator is modified accordingly as $\hat{\tau} = (\hat{\tau}_{(1)}+\hat{\tau}_{(2)})/2$, where 
$$
\hat {\tau}_{(j)} =  \tilde{\tau}_{(3-j)}-\frac{1}{|\mathcal{L}_j|I(\hat{f}_{0(3-j)})}\sum_{i \in \mathcal{L}_j} \left\{\frac{A_i}{\pi} \cdot \frac{f_{0(3-j)}^{\prime}}{f_{0(3-j)}}(Y_i-\tilde{\tau}_{(3-j)})-\frac{1-A_i}{1-\pi} \cdot \frac{f_{0(3-j)}^{\prime}}{f_{0(3-j)}}(Y_i)\right\}, \  j=1,2.
$$ 
Here, $\tilde{\tau}_{(j)}$ is an initial estimator (they used the difference-in-medians of outcomes under the treatment and control group within $\mathcal{L}_j$). Our simulation results indicate that these two estimating strategies perform similarly.
\end{remark}

Theorem~\ref{thm:Athey-var} below indicates that $\{\pi(1-\pi)\hat{I}(f_0)\}^{-1}$ is generally conservative under covariate-adaptive randomization when $q_{[k]}< \pi (1 - \pi)$.

\begin{theorem}
\label{thm:Athey-var}
Under the Assumptions of Theorem~\ref{t1}, $\{\pi(1-\pi)\hat{I}(f_0)\}^{-1}$ converges in probability to $\{\pi(1-\pi)I(f_0)\}^{-1}$. Moreover, we have $\{\pi(1-\pi)I(f_0)\}^{-1} \geq \sigma^2_{\dim}$, with equality holding if $q_{[k]} = \pi(1-\pi)$ for $k=1,...,K$.
\end{theorem}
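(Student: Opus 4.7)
The theorem consists of two claims: consistency $\Ih \cp \If$ and the algebraic comparison $\{\pi(1-\pi)\If\}^{-1} \geq \sigma^2_{\dim}$, which I would handle in that order.

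For consistency, I would leverage the cross-fitting construction to decouple the score estimator from its evaluation outcomes, then invoke Assumption~\ref{cond:second-moment}(iii). Focusing on the half sum $T_{(2)} = n_0^{-1}\sum_{i \in \mathcal{L}_1, A_i=0}(\hat f_{0(2)}'/\hat f_{0(2)})^2(Y_i)$, I would add and subtract the true score and expand,
\[
\Big(\frac{\hat f_{0(2)}'}{\hat f_{0(2)}}\Big)^2 = \Big(\frac{f_0'}{f_0}\Big)^2 + 2\Big(\frac{\hat f_{0(2)}'}{\hat f_{0(2)}} - \frac{f_0'}{f_0}\Big)\frac{f_0'}{f_0} + \Big(\frac{\hat f_{0(2)}'}{\hat f_{0(2)}} - \frac{f_0'}{f_0}\Big)^2.
\]
Summed over both halves and normalized by $n_0$, the leading $(f_0'/f_0)^2$ piece converges in probability to $\If$ by the CAR law of large numbers (Bugni et al., 2018, Lemma B.2) applied to the integrable variable $(f_0'/f_0)^2(Y_i(0))$, using $\nsc/n \cp (1-\pi)\ps$ and the within-stratum split fraction tending to $1/2$. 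For the cross term and the squared-remainder term, I would condition on the training subsample $\{(A_i, Y_i): i \in \mathcal{L}_2\}$; under this conditioning, $\hat f_{0(2)}'/\hat f_{0(2)}$ is a fixed (random) function, and the evaluation outcomes for $i \in \mathcal{L}_1$ with $A_i = 0$ are independent with marginals $F_{[S_i]0}$. Their conditional expectations are bounded, respectively, by $\int(\hat f_{0(2)}'/\hat f_{0(2)} - f_0'/f_0)^2 dF_0$ (remainder) and its square root times $\If^{1/2}$ via Cauchy--Schwarz (cross term), both $\op$ by Assumption~\ref{cond:second-moment}(iii). A conditional Markov inequality transfers these bounds to the sample averages, and mirroring the argument on $\mathcal{L}_2$ yields $\Ih \cp \If$; continuous mapping then delivers $\{\pi(1-\pi)\Ih\}^{-1} \cp \{\pi(1-\pi)\If\}^{-1}$.

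For the comparison, I would derive the exact identity
\[
\frac{1}{\pi(1-\pi)\If} - \sigma^2_{\dim} = E\!\left[\{\pi(1-\pi) - q_{[S_i]}\}\Big(\frac{\check Z_i(1)}{\pi} + \frac{\check Z_i(0)}{1-\pi}\Big)^2\right].
\]
The building blocks are the moment identities $E[Z_i(a)] = -\If^{-1}\int f_a'(y)dy = 0$ and, via integration by parts using the absolute integrability of $f_0''$, $E[Z_i(a)^2] = 1/\If$, giving $\Var(Z_i(a)) = 1/\If$; the law of total variance then yields $\Var(\tilde Z_i(a)) = 1/\If - \Var(\check Z_i(a))$. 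Substituting into $V_Z^2$ and collecting terms with $V_H^2 = E[\check Z_i(1) - \check Z_i(0)]^2$ produces $\{\pi(1-\pi)\If\}^{-1} - (V_Z^2 + V_H^2) = \pi(1-\pi)\,E[(\check Z_i(1)/\pi + \check Z_i(0)/(1-\pi))^2]$; subtracting $V_A^2$ gives the displayed identity. Since Assumption~\ref{a4} ensures $0 \leq q_{[k]} \leq \pi(1-\pi)$, the right-hand side is non-negative, establishing $\{\pi(1-\pi)\If\}^{-1} \geq \sigma^2_{\dim}$, with equality when $q_{[k]} = \pi(1-\pi)$ for every $k$.

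The main obstacle is the consistency proof. Although cross-fitting cleanly separates the training and evaluation sets, the CAR assignment still induces dependence between stratum counts and treatment indicators, so the leading-term LLN must be invoked in the stratified CAR form rather than the classical i.i.d.\ version, and the nonparametric errors must be handled through the conditional-on-training argument in order to exploit the $L_2$ content of Assumption~\ref{cond:second-moment}(iii) rather than via crude uniform bounds. Part~2 is routine algebra once the moment identities for $Z_i(a)$ are in hand.
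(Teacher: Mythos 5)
Your proposal is correct and follows essentially the same route as the paper: the consistency claim is exactly the content of the paper's Lemma~\ref{l1} (cross-fitting, conditioning on the training fold, a Cauchy--Schwarz reduction of the difference of squared scores to the $L_2$ error in Assumption~\ref{cond:second-moment}(iii), and the CAR law of large numbers for the leading term), and the variance comparison uses the identical decomposition via $\Var\{Z_i(a)\}=I(f_0)^{-1}$, the law of total variance, and the pointwise identity $\check Z_i^2(1)/\pi+\check Z_i^2(0)/(1-\pi)=\{\check Z_i(1)-\check Z_i(0)\}^2+\pi(1-\pi)\{\check Z_i(1)/\pi+\check Z_i(0)/(1-\pi)\}^2$. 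One negligible nit: $E\{Z_i(a)^2\}=I(f_0)^{-1}$ follows directly from the definition of the information function (no integration by parts or integrability of $f_0''$ is needed there; those are needed only for $E\{Z_i(a)\}=0$ and for the derivative term elsewhere).
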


Theorem~\ref{thm:Athey-var} shows that $\{\pi(1-\pi)\hat{I}(f_0)\}^{-1}$ is consistent under simple randomization but overestimates the asymptotic variance under a more balanced design with $0\leq q_{[k]}< \pi (1 - \pi)$, resulting in a loss of power. To make valid inferences on the average treatment effect using normal approximation, we need a non-parametric variance estimator. Specifically, we can replace the population mean and variance of the three terms in $\sigma_{\dim}^2$ with their corresponding sample mean and variance to create a plug-in estimator.

Let $\hat{Z}_i$ denote the ``observed" transformed outcome as follows: for $i \in \mathcal{L}_j, \ A_i = 0$, we define $\hat{Z}_i = -\hat{I}(f_0)^{-1} \cdot (\hat{f}_{0(3-j)}^{\prime}/\hat{f}_{0(3-j)})(Y_i),$ $j=1,2$, while for $i \in \mathcal{L}_j,\ A_i = 1$, we define $\hat{Z}_i = -\hat{I}(f_0)^{-1} \cdot (\hat{f}_{0(3-j)}^{\prime}/\hat{f}_{0(3-j)})(Y_i-\tilde{\tau}),$ $j=1,2$. Let $\Bar{\hat{Z}}_{[k]a} = \sum_{i:S_i = k, A_i=a} \hat{Z}_i / n_{[k]a}$ denote the stratum-specific sample mean of $\hat{Z}_i$ under treatment arm $a$ in stratum $k$ and $\Bar{\hat{Z}}_{a} = \sum_{i:A_i=a} \hat{Z}_i / n_a$ denote the sample mean of  $\hat{Z}_i$ under treatment arm $a$, where $a=0,1$, $k=1,\ldots,K$.
We define $\hat \sigma^2_\dim = \hat{V}_{Z}^2  + \hat{V}_{H}^2 + \hat{V}_{A}^2 $ with
$$
  \hat{V}_{Z}^2 = \frac{1}{\pi}\sum_{k=1}^K p_{n[k]}\Big\{\frac{1}{n_{[k]1}}\sumis A_i(\hat{Z}_i-\Bar{\hat{Z}}_{[k]1})^2 \Big\} + \frac{1}{1-\pi}\sum_{k=1}^K p_{n[k]}\Big\{ \frac{1}{n_{[k]0}}\sumis (1-A_i)(\hat{Z}_i-\Bar{\hat{Z}}_{[k]0})^2 \Big\}, 
$$
$$
  \hat{V}_{H}^2 = \sum_{k=1}^K p_{n[k]} \left\{(\Bar{\hat{Z}}_{[k]1}-\Bar{\hat{Z}}_{1})-(\Bar{\hat{Z}}_{[k]0}-\Bar{\hat{Z}}_{0}) \right\}^2, 
$$
$$
  \hat{V}_{A}^2 =\sum_{k=1}^K p_{n[k]}q_{[k]}\Big\{\frac{(\Bar{\hat{Z}}_{[k]1}-\Bar{\hat{Z}}_{1})}{\pi}+\frac{(\Bar{\hat{Z}}_{[k]0}-\Bar{\hat{Z}}_{0})}{1-\pi} \Big\}^2.
$$

\begin{theorem}
\label{t2}
Under Assumptions \ref{a1}--\ref{a2} and \ref{a4}--\ref{cond:second-moment}, if the initial estimator $\tilde \tau$ is $\sqrt{n}$-consistent, i.e., $\sqrt{n} (\tilde \tau - \tau) = O_P(1)$, then $\hat \sigma^2_\dim$ converges in probability to $\sigma^2_\dim$.
\end{theorem}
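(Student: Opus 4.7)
The plan is to verify $\hat\sigma^2_\dim \cp \sigma^2_\dim$ by decomposing the estimator into its three summands $\hat V_Z^2$, $\hat V_H^2$, $\hat V_A^2$ and showing each converges in probability to its population counterpart. The central simplification is a ``denoising'' step: replacing each plug-in transformed outcome $\hat Z_i$ by its oracle counterpart, which equals $Z_i(0)$ for controls and $Z_i(1)$ for treated units (using that $(f_0^{\prime}/f_0)(Y_i-\tau) = (f_1^{\prime}/f_1)(Y_i)$ on $\{A_i=1\}$). Once this replacement is controlled, the remaining task reduces to standard LLN-type convergence for within-stratum sample moments of $Z_i(a)$ under covariate-adaptive randomization, using tools already developed for the proof of Theorem~\ref{t1}.

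For the denoising step, I would show that $n^{-1}\sum_{i\in\mathcal L_j,\, A_i=a}(\hat Z_i - Z_i(a))^2 = \op$ for each pair $(j,a)$. Because $\hat I(f_0)\cp I(f_0)>0$ (already needed in Theorem~\ref{t1}) and $\tilde\tau - \tau = O_P(n^{-1/2})$ can play the role of $\delta_n$, Assumption~\ref{cond:second-moment}(iii) gives the $L_2$-convergence $\int(\hat f_{0(3-j)}^{\prime}/\hat f_{0(3-j)} - f_0^{\prime}/f_0)^2 dF_0 = \op$. The sample-splitting construction ensures that $\hat f_{0(3-j)}$ is independent of the $\mathcal L_j$-observations conditional on the strata; combined with Assumption~\ref{a2} and the within-stratum i.i.d.\ structure, this $L_2$ statement transfers into a convergent sample average. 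Cauchy--Schwarz then controls the cross terms of the form $n^{-1}\sum (\hat Z_i - Z_i(a))\,Z_i(a)$ that arise when expanding squared deviations, so the replacement error vanishes in each of $\hat V_Z^2$, $\hat V_H^2$, $\hat V_A^2$.

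With the replacement in hand, $\hat V_Z^2$ reduces to a weighted sum of within-$(k,a)$ sample variances of $Z_i(a)$, each of which converges to $\Var\{Z_i(a)\mid S_i=k\}$ by an LLN over the units in stratum $k$ in arm $a$; this uses $n_{[k]a}/n \cp p_{[k]}\{a\pi + (1-a)(1-\pi)\}$, which follows from $\pi_{n[k]}\cp\pi$ and $p_{n[k]}\cp p_{[k]}$. Summing over $k$ and applying the law of total variance recovers $V_Z^2$. For $\hat V_H^2$ and $\hat V_A^2$, the same arguments give $\bar{\hat Z}_{[k]a}\cp E\{Z_i(a)\mid S_i=k\}$ and $\bar{\hat Z}_a\cp E\{Z_i(a)\}$, so $\bar{\hat Z}_{[k]a} - \bar{\hat Z}_a$ converges to the value of $\check Z_i(a)$ on $\{S_i=k\}$; substituting into the formulas and invoking the continuous mapping theorem together with $p_{n[k]}\cp p_{[k]}$ yields $V_H^2$ and $V_A^2$.

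The main obstacle is the denoising step for $\hat V_Z^2$, because it involves \emph{squared} deviations of $\hat Z_i$ rather than $\hat Z_i$ itself, and naive bounds can be spoiled by tail observations on which $\hat f_0^{\prime}/\hat f_0$ is large. This is precisely why Assumption~\ref{cond:second-moment}(iii) is strengthened (relative to \cite{athey2021semiparametric}) to include $\sup_y|\hat f_0^{\prime}/\hat f_0|(y) = o_P(n^{1/2})$ and an $O_P(1)$ control on $m^{-1}\sum|(\hat f_0^{\prime}/\hat f_0)^{\prime}|(Y_i^{\prime}+\delta_n)$; together with the $L_2$-convergence these deliver the needed second-moment control via a truncation argument. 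A secondary technical point is that under covariate-adaptive randomization the assignments $A_i$ are dependent, so all within-stratum LLNs are applied conditional on $\{S_i\}_{i=1}^n$, exactly as in the proof of Theorem~\ref{t1}.
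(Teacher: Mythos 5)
Your proposal is correct and follows essentially the same route as the paper: the paper also splits $\hat\sigma^2_{\dim}$ into $\hat V_Z^2+\hat V_H^2+\hat V_A^2$, first replaces each $\hat Z_i$ by the oracle $Z_i(a)$ (this replacement is packaged as a lemma showing $\bar{\hat Z}_{[k]a}-\bar Z_{[k]a}=\op$ and $n_{[k]a}^{-1}\sum(\hat Z_i^2-Z_i^2)=\op$, proved conditionally on $(S^{(n)},A^{(n)},U^{(n)},D^{(\mathcal L_{3-j})})$ exactly as you describe), and then invokes the known consistency of the oracle stratified variance estimators for the finite-second-moment variables $Z_i(a)$. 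One small wrinkle in your version: on the treated arm, bounding $n^{-1}\sum(\hat Z_i-Z_i(1))^2$ directly produces the term $(\tilde\tau-\tau)^2\,n^{-1}\sum\{(\hat f_{0(2)}^{\prime}/\hat f_{0(2)})^{\prime}(Y_i-\bar\tau)\}^2$, i.e.\ the \emph{squared} derivative, whereas Assumption~\ref{cond:second-moment}(iii) only controls its first absolute moment; the paper sidesteps this by expanding $\hat Z_i^2-Z_i^2$ itself, pairing one factor $(\hat f_{0(2)}^{\prime}/\hat f_{0(2)})(Y_i-\bar\tau)=o_P(n^{1/2})$ (the sup bound) with the $O_P(1)$ average of $|(\hat f_{0(2)}^{\prime}/\hat f_{0(2)})^{\prime}|$, and you should rearrange your Taylor step the same way (or appeal to the deterministic truncation bound on the estimated score) to stay within the stated assumptions.
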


Theorem~\ref{t2} indicates that the plug-in estimator $\hat \sigma^2_\dim$ is a consistent estimator of the asymptotic variance. Thus, the Wald-type confidence interval,
$$
[\hat \tau_\dim - q_{\alpha/2} \hat \sigma_\dim / \sqrt{n}, \hat \tau_\dim + q_{\alpha/2} \hat \sigma_\dim / \sqrt{n}],
$$
has an asymptotic coverage rate of $ 1-\alpha$, $0<\alpha<1$,
where $q_{\alpha/2}$ is the upper $\alpha/2$ quantile of a standard normal distribution.

\subsection{Stratified transformed difference-in-means estimator}

The asymptotic variance of $\hat \tau_\dim$ depends on the randomization methods, and its plug-in variance estimator is not universally applicable. To address this issue and enhance efficiency, we propose a stratified transformed difference-in-means estimator along with a universally applicable inference method in this section.

Previous studies have demonstrated that the stratified difference-in-means estimator, which aggregates the difference-in-means estimator calculated for each stratum weighted by the stratum size proportion $\pns$, often performs better than the simple difference-in-means estimator \citep{Bugni2019, Ma2020Regression, Ye2020}. This observation motivated us to explore a stratified transformed difference-in-means estimator. As mentioned earlier, by treating $Z_i(a) = -I(f_0)^{-1} \cdot (f_a^{\prime}/f_a)(Y_i(a))$ as the transformed outcomes, $\hat{\tau}_{\dim}$ is asymptotically equivalent to the difference-in-means estimator applied to  $Z_i(a)$. Therefore, the stratified transformed difference-in-means estimator is defined as:
$$
\hat{\tau}_{\str} = \tilde{\tau}+\sum_{k=1}^{K}\pns \cdot \frac{1}{\ns}\bigg(\sum_{i \in \mathcal{L}_1,S_i=k} \psi_{\hat{f}_{0(2)}}^{(k)}(A_i, Y_i ; \tilde{\tau}) + \sum_{i \in \mathcal{L}_2,S_i=k} \psi_{\hat{f}_{0(1)}}^{(k)}(A_i, Y_i ; \tilde{\tau}) \bigg),
$$
where $$
	\psi_{f_0}^{(k)}(A, Y ; \tau)=-\dfrac{1}{I\left(f_0\right)} \cdot\left\{ \dfrac{A}{\pi_{n[k]}} \cdot \dfrac{f_0^{\prime}}{f_0}(Y-\tau)-\dfrac{1-A}{1-\pi_{n[k]}} \cdot \dfrac{f_0^{\prime}}{f_0}(Y)\right \}.
	$$
The asymptotic variance of $\hat{\tau}_{\str}$ can be estimated by $\hat \sigma^2_{\str} = \hat{V}_{Z}^2  + \hat{V}_{H}^2$. Theorem~\ref{t4} below presented the asymptotic properties of $\hat{\tau}_{\str}$ and  $\hat \sigma^2_{\str}$.

	\begin{theorem}
		\label{t4}
		Under Assumptions \ref{a1}--\ref{a3} and \ref{cond:second-moment}, if the initial estimator $\tilde \tau$ is $\sqrt{n}$-consistent, i.e., $\sqrt{n} (\tilde \tau - \tau) = O_P(1)$, then $\hat{\tau}_{\str}$ is consistent and asymptotically normal,
		$
		\sqrt{n}(\hat{\tau}_{\str}-\tau) \stackrel{d}{\rightarrow} N(0, \sigma^2_{\str} ).
		$ 
  Moreover, $\hat \sigma^2_{\str}$ converges in probability to $\sigma^2_{\str}$. 
	\end{theorem}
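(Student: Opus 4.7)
} The strategy parallels the proof of Theorem~\ref{t1} but exploits the structural feature that $\psi^{(k)}_{f_0}$ uses the empirical stratum-specific proportion $\pi_{n[k]}$ rather than the target $\pi$. The plan is to reduce $\hat\tau_{\str}$ to a stratified difference-in-means of oracle transformed outcomes via a linearization argument, and then to invoke the CLT for stratified difference-in-means under covariate-adaptive randomization, which does not require Assumption~\ref{a4}.

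First, since $\pi_{n[k]} = n_{[k]1}/n_{[k]}$ cancels the denominator in the influence function, one obtains the clean representation
$$
\hat\tau_{\str} = \tilde\tau + \sum_{k=1}^K p_{n[k]} \left( \bar{\hat Z}_{[k]1}(\tilde\tau) - \bar{\hat Z}_{[k]0} \right),
$$
where $\bar{\hat Z}_{[k]1}(\tilde\tau)$ is the stratum-$k$ average over treated units of $-\hat I(f_0)^{-1} \hat g_j(Y_i - \tilde\tau)$, with $\hat g_j = \hat f_{0(j)}^{\prime}/\hat f_{0(j)}$ and $j$ chosen as the complementary split. I would then Taylor-expand $\hat g_j(Y_i - \tilde\tau) = \hat g_j(Y_i - \tau) + \hat g_j^{\prime}(\xi_i)(\tau - \tilde\tau)$. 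Assumption~\ref{cond:second-moment}(iii) controls the stratum-wise averages of $\hat g_j^{\prime}(\xi_i)$ by $E[g^{\prime}(Y_i(1) - \tau) \mid S_i = k] + o_P(1)$ with $g = f_0^{\prime}/f_0$; aggregating with weights $p_{n[k]}$ and using the identity $E[g^{\prime}(Y_i(1) - \tau)] = -I(f_0)$, which follows from the marginal constant-quantile treatment effect assumption and Assumption~\ref{cond:second-moment}(i), the Taylor correction contributes $(\tau - \tilde\tau)(1 + o_P(1))$. Combined with $\tilde\tau - \tau = O_P(n^{-1/2})$, this term cancels the initial $(\tilde\tau - \tau)$ up to $o_P(n^{-1/2})$. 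The $L_2$ consistency clause of Assumption~\ref{cond:second-moment}(iii), together with the sample-splitting independence enforced by splitting within each stratum-arm cell, then allows replacing $\hat g_j$ by $g$ in the leading term at cost $o_P(n^{-1/2})$, yielding
$$
\hat\tau_{\str} - \tau = \sum_{k=1}^K p_{n[k]}\left( \bar Z_{[k]1} - \bar Z_{[k]0} \right) + o_P(n^{-1/2}),
$$
where $Z_i(a)$ has finite second moments by Assumption~\ref{cond:second-moment}(i).

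The leading term is exactly a stratified difference-in-means of the oracle transformed outcomes. Applying the CLT for this statistic under covariate-adaptive randomization (cf. \citealp{Bugni2019, liu2020general}), which requires only Assumption~\ref{a3} and not Assumption~\ref{a4}, gives $\sqrt{n}(\hat\tau_{\str} - \tau) \cd N(0, V_{Z}^2 + V_{H}^2) = N(0, \sigma^2_{\str})$. The $V_A^2$ term is absent precisely because the $\pi_{n[k]}$-normalization absorbs the within-stratum imbalance $D_{n_{[k]}}$. Consistency of $\hat\sigma^2_{\str} = \hat V_Z^2 + \hat V_H^2$ then follows by decomposing each sample moment into its oracle counterpart, which converges to the population quantity via the LLN for stratum-specific averages under covariate-adaptive randomization, plus a remainder that is $o_P(1)$ under Assumption~\ref{cond:second-moment}(iii).

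The main obstacle is the linearization step: rigorously verifying the Taylor-correction cancellation requires the full strength of Assumption~\ref{cond:second-moment}(iii), including the derivative-control clauses beyond what \cite{athey2021semiparametric} needed, and ensuring that the stratum-wise sample-splitting preserves the conditional-independence structure used to replace $\hat g_j$ by $g$. A secondary subtlety is transferring the CLT for stratified difference-in-means from the standard bounded-second-moment setting to the present setting with transformed outcomes $Z_i(a)$; this transfer is immediate once finite variance is verified, since the $\pi_{n[k]}$-normalization renders the asymptotic variance independent of the within-stratum imbalance and hence of Assumption~\ref{a4}.
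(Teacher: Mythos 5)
Your plan follows essentially the same route as the paper's proof: Taylor-expand around $\tau$ so that the derivative term cancels the initial-estimator error (the paper's $B_4=-1+o_P(1)$ step), replace the estimated score by the true one using the $L_2$ convergence and the stratum-wise sample splitting (the paper's Lemma~3), and recognize the leading term as the fully saturated/stratified difference-in-means estimator of Bugni et al.\ (2019), whose CLT and variance-estimator consistency require only Assumption~3. Your remark that the $\pi_{n[k]}$-normalization absorbs the within-stratum imbalance is precisely why the bias term in the paper's Lemma~3 is controlled without invoking Assumption~4.
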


It is worth noting that the asymptotic normality of $\hat{\tau}_{\str}$ and the consistency of the variance estimator $\hat \sigma^2_{\str}$ require less stringent design requirements as the conclusion holds under the weaker Assumption~\ref{a3} instead of the stronger Assumption~\ref{a4}. Since Assumption~\ref{a3} is quite general, Theorem~\ref{t4} is applicable to almost all covariate-adaptive randomization methods, including Pocock and Simon's minimization. Moreover, the asymptotic variance is identical to that of $\hat{\tau}_{\dim}$ under strongly balanced designs, which is independent of the randomization methods used in the design stage. Therefore,  $\hat{\tau}_{\str}$ and $\hat \sigma^2_{\str}$ are universally applicable  \citep{Ye2020}. For designs that are not well-balanced, such as simple randomization, Theorem \ref{t4} indicates that the asymptotic variance of $\hat{\tau}_{\str}$ is smaller than or equal to that of $\hat{\tau}_{\dim}$, thus enhancing the asymptotic efficiency in comparison to $\hat{\tau}_{\dim}$. These conclusions align with those in \cite{Bugni2019} for outcomes with light-tails.

By Theorem~\ref{t4}, we are able to provide a valid inference for the treatment effect $\tau$. One approach is constructing a Wald-type $1 - \alpha$ ($0< \alpha < 1$) confidence interval for $\tau$: 
$$
[ \hat{\tau}_{\str} - q_{\alpha/2} \hat\sigma_{\str}/\sqrt{n},   \hat{\tau}_{\str} + q_{\alpha/2} \hat\sigma_{\str}/\sqrt{n}]. 
$$
The asymptotic coverage rate of the above confidence interval is $1 - \alpha$.

\section{Simulation study}

In this section, we carry out simulation studies to evaluate the finite sample performance of the proposed estimators $\hat{\tau}_{\dim}$ and $\hat{\tau}_{\str}$ under three covariate-adaptive randomization methods: simple randomization (SR), stratified randomization (STR), and Pocock and Simon's minimization (MIN).

We set the sample size as $n =500, 1000$ and consider three models for generating the potential outcomes, similar to those elucidated in \cite{ghosh2021efficiency}.

	Model 1 (Linear stratum effect):  
$ Y_i(0) = (3/4) x_{i1}+x_{i2}+\varepsilon_i,$
	where $x_{i1} \sim \operatorname{Unif}(-1,1)$, $x_{i2}$ takes values in $\{-1, -1/3, 1/3, 1\}$ with equal probabilities, and it is independent of $x_{i1}$.

	Model 2 (Nonlinear stratum effect): 
	$Y_i(0) =(1/2) \left\{ \exp \left(z_{i}\right)+\exp \left(z_{i} / 2\right)\right\} +\varepsilon_i,$
	with $z_i=(3/4) x_{i1}+x_{i2}$, where $x_{i1} \sim \operatorname{Unif}(-1,1)$ and $x_{i2}$ takes values in $\{-1, -1/3, 1/3, 1\}$ with equal probabilities, independent with $x_{i1}$.

	Model 3 (Nonlinear stratum effect with one covariate transformed to exponential): 
	$Y_i(0)=(1/2) \left(z_{i}+\sqrt{z_{i}}\right)+\varepsilon_i,$
	with $z_i=x_{i1}+x_{i1}x_{i2}$, where $x_{i1}=e^{u_i}, \ u_i \sim \operatorname{Unif}(-1,1)$ and $x_{i2}$ takes values in $\{-1, -1/3, 1/3, 1\}$ with equal probabilities, independent with $u_{i1}$.

The error terms $\epsilon_i$ are independently drawn from the standard normal distribution, the standard double exponential (Laplace) distribution, and the standard Cauchy distribution. The generation of $\epsilon_i$ is independent of $x_{i1}$ and $x_{i2}$. The treatment probability $\pi$ is set to 0.5, and the potential outcome $Y_i(1)$ is generated as $Y_i(1) = Y_i(0) + \tau$ with $\tau$ set at 0 and 1. The results for unequal treatment probability ($\pi=1/3$) are relegated to the supplementary material. Under each model and each tail distribution, $x_{i2}$ is used as the stratification covariate in STR. For minimization, both $x_{i1}$ and $x_{i2}$ are dichotomized as stratification covariates. The biased-coin probability and the weight used in minimization are set to 0.85 and $(0.5, 0.5)$, respectively. Under each model and each tail distribution, the data generation process is repeated 1000 times to evaluate the bias, standard deviation (SD), root mean squared error (RMSE) of the point estimators, and empirical coverage probability (CP) and mean confidence interval length (Length) of $95\%$ confidence intervals. 

We apply the kernel density estimation methods proposed by \cite{bickel1982adaptive} to estimate the influence function for $\hat{\tau}_{\dim}$ and $\hat{\tau}_{\str}$, and use the triweight kernel instead of the original Gaussian kernel to achieve better finite-sample performance. The bandwidth as a critical hyperparameter is selected through a one-step adaptive update, and other necessary truncation parameters are chosen empirically. To highlight the enhanced efficiency of our proposed methods in scenarios with heavy-tailed distributions, we compare their performance with that of the standard difference-in-means estimator and the weighted average difference-in-means estimator, denoted as $\hat{\tau}_{\textnormal{naive-dim}}$ and $\hat{\tau}_{\textnormal{str-dim}}$, respectively. The difference-in-medians estimator and the difference-in-weighted-medians estimator, denoted as $\hat{\tau}_{\textnormal{md}}$ and $\hat{\tau}_{\textnormal{wt-md}}$, are used as the initial estimators for $\hat{\tau}_{\dim}$ and $\hat{\tau}_{\str}$, respectively, and their performance is included in our results. More detailed discussions on the initial estimators are provided in the supplementary material. For variance estimation, we use the method from \cite{Ma2020Regression} for $\hat{\tau}_{\textnormal{naive-dim}}$ and $\hat{\tau}_{\textnormal{str-dim}}$, and use a plug-in method for $\hat{\tau}_{\textnormal{md}}$ and $\hat{\tau}_{\textnormal{wt-md}}$. Since $\hat{\tau}_{\textnormal{naive-dim}}$, $\hat{\tau}_{\textnormal{md}}$ and $\hat{\tau}_{\dim}$ lack asymptotic properties without Assumption \ref{a4}, their inference-related results are not available under minimization.

Tables \ref{tab:n=1000model1}-\ref{tab:n=1000model3} present the results for $n = 1000$. The results for $n = 500$ exhibit a comparable pattern but with increased variability and these results are relegated to the supplementary material. Our observations are as follows:

\begin{table}[htbp]
  \centering
  \caption{Simulation results under Model 1 for $n=1000$ and $\pi=1/2$}
  \begin{threeparttable}
    \resizebox{\textwidth}{!}{
    \begin{tabular}{rrrrrrrrrrrrrr}
    \toprule
          &       & \multicolumn{4}{c}{SR}        & \multicolumn{4}{c}{STR}       & \multicolumn{4}{c}{MIN} \\
\cmidrule{3-14}          &       & \multicolumn{1}{r}{Bias} & \multicolumn{1}{r}{SD} & \multicolumn{1}{r}{SE} & \multicolumn{1}{r}{CP} & \multicolumn{1}{r}{Bias} & \multicolumn{1}{r}{SD} & \multicolumn{1}{r}{SE} & \multicolumn{1}{r}{CP} & \multicolumn{1}{r}{Bias} & \multicolumn{1}{r}{SD} & \multicolumn{1}{r}{SE} & \multicolumn{1}{r}{CP} \\
    \midrule
    \multicolumn{1}{l}{$\tau$ = 0} &       &       &       &       &       &       &       &       &       &       &       &       &  \\
    \multicolumn{1}{l}{Normal tail} & $\hat{\tau}_{\textnormal{naive-dim}}$ & 0.004  & 0.083  & 0.083  & 0.940  & -0.003  & 0.071  & 0.069  & 0.945  & -0.001  & 0.080  & /     & / \\
          & $\hat{\tau}_{\textnormal{str-dim}}$ & 0.003  & 0.071  & 0.069  & 0.941  & -0.003  & 0.071  & 0.069  & 0.945  & -0.002  & 0.069  & 0.069  & 0.950  \\
          & $\hat{\tau}_{\textnormal{md}}$ & 0.006  & 0.106  & 0.118  & 0.963  & -0.004  & 0.094  & 0.106  & 0.962  & -0.002  & 0.104  & /     & / \\
          & $\hat{\tau}_{\textnormal{wt-md}}$ & 0.005  & 0.095  & 0.106  & 0.961  & -0.004  & 0.094  & 0.106  & 0.962  & -0.003  & 0.094  & 0.106  & 0.960  \\
          & $\hat{\tau}_{\dim}$ & 0.005  & 0.090  & 0.091  & 0.941  & -0.005  & 0.078  & 0.081  & 0.963  & -0.002  & 0.089  & /     & / \\
          & $\hat{\tau}_{\str}$ & 0.005  & 0.079  & 0.082  & 0.945  & -0.005  & 0.078  & 0.081  & 0.965  & -0.003  & 0.080  & 0.082  & 0.949  \\
          &       &       &       &       &       &       &       &       &       &       &       &       &  \\
    \multicolumn{1}{l}{Laplace tail} & $\hat{\tau}_{\textnormal{naive-dim}}$ & 0.006  & 0.105  & 0.105  & 0.947  & -0.004  & 0.098  & 0.093  & 0.928  & 0.000  & 0.103  & /     & / \\
          & $\hat{\tau}_{\textnormal{str-dim}}$ & 0.005  & 0.094  & 0.093  & 0.945  & -0.004  & 0.098  & 0.093  & 0.927  & -0.001  & 0.094  & 0.093  & 0.953  \\
          & $\hat{\tau}_{\textnormal{md}}$ & 0.008  & 0.117  & 0.119  & 0.953  & -0.003  & 0.107  & 0.109  & 0.958  & 0.000  & 0.118  & /     & / \\
          & $\hat{\tau}_{\textnormal{wt-md}}$ & 0.007  & 0.107  & 0.109  & 0.948  & -0.003  & 0.107  & 0.109  & 0.959  & 0.000  & 0.108  & 0.109  & 0.945  \\
          & $\hat{\tau}_{\dim}$ & 0.006  & 0.107  & 0.103  & 0.932  & -0.006  & 0.092  & 0.092  & 0.943  & 0.002  & 0.101  & /     & / \\
          & $\hat{\tau}_{\str}$ & 0.006  & 0.095  & 0.092  & 0.942  & -0.006  & 0.092  & 0.092  & 0.943  & 0.001  & 0.091  & 0.092  & 0.958  \\
          &       &       &       &       &       &       &       &       &       &       &       &       &  \\
    \multicolumn{1}{l}{Cauchy tail} & $\hat{\tau}_{\textnormal{naive-dim}}$ & 0.367  & 29.541  & 8.246  & 0.977  & 0.109  & 30.100  & 8.310  & 0.979  & 0.927  & 28.668  & /     & / \\
          & $\hat{\tau}_{\textnormal{str-dim}}$ & 0.324  & 28.599  & 8.233  & 0.978  & 0.110  & 30.105  & 8.310  & 0.979  & 0.970  & 28.581  & 8.316  & 0.977  \\
          & $\hat{\tau}_{\textnormal{md}}$ & 0.000  & 0.147  & 0.147  & 0.955  & 0.004  & 0.135  & 0.138  & 0.946  & 0.006  & 0.137  & /     & / \\
          & $\hat{\tau}_{\textnormal{wt-md}}$ & -0.001  & 0.138  & 0.138  & 0.942  & 0.005  & 0.135  & 0.138  & 0.946  & 0.004  & 0.130  & 0.138  & 0.965  \\
          & $\hat{\tau}_{\dim}$ & 0.002  & 0.128  & 0.128  & 0.940  & 0.003  & 0.118  & 0.118  & 0.946  & 0.006  & 0.124  & /     & / \\
          & $\hat{\tau}_{\str}$ & 0.001  & 0.118  & 0.118  & 0.949  & 0.003  & 0.118  & 0.118  & 0.945  & 0.005  & 0.116  & 0.118  & 0.956  \\
    \multicolumn{1}{l}{$\tau$ = 1} &       &       &       &       &       &       &       &       &       &       &       &       &  \\
    \multicolumn{1}{l}{Normal tail} & $\hat{\tau}_{\textnormal{naive-dim}}$ & 0.003  & 0.083  & 0.083  & 0.951  & 0.007  & 0.068  & 0.069  & 0.955  & -0.004  & 0.077  & /     & / \\
          & $\hat{\tau}_{\textnormal{str-dim}}$ & 0.003  & 0.068  & 0.069  & 0.951  & 0.007  & 0.068  & 0.069  & 0.955  & -0.004  & 0.067  & 0.069  & 0.956  \\
          & $\hat{\tau}_{\textnormal{md}}$ & 0.001  & 0.107  & 0.118  & 0.967  & 0.006  & 0.092  & 0.106  & 0.977  & 0.000  & 0.100  & /     & / \\
          & $\hat{\tau}_{\textnormal{wt-md}}$ & 0.001  & 0.095  & 0.106  & 0.968  & 0.006  & 0.092  & 0.106  & 0.977  & 0.001  & 0.091  & 0.106  & 0.974  \\
          & $\hat{\tau}_{\dim}$ & 0.003  & 0.091  & 0.091  & 0.948  & 0.007  & 0.075  & 0.081  & 0.961  & -0.002  & 0.084  & /     & / \\
          & $\hat{\tau}_{\str}$ & 0.003  & 0.078  & 0.082  & 0.968  & 0.007  & 0.075  & 0.081  & 0.960  & -0.002  & 0.075  & 0.082  & 0.967  \\
          &       &       &       &       &       &       &       &       &       &       &       &       &  \\
    \multicolumn{1}{l}{Laplace tail} & $\hat{\tau}_{\textnormal{naive-dim}}$ & 0.001  & 0.106  & 0.105  & 0.948  & 0.002  & 0.096  & 0.093  & 0.944  & 0.000  & 0.100  & /     & / \\
          & $\hat{\tau}_{\textnormal{str-dim}}$ & 0.001  & 0.094  & 0.093  & 0.951  & 0.002  & 0.096  & 0.093  & 0.945  & 0.000  & 0.090  & 0.093  & 0.953  \\
          & $\hat{\tau}_{\textnormal{md}}$ & 0.003  & 0.114  & 0.120  & 0.952  & 0.004  & 0.108  & 0.109  & 0.945  & -0.001  & 0.109  & /     & / \\
          & $\hat{\tau}_{\textnormal{wt-md}}$ & 0.003  & 0.100  & 0.109  & 0.972  & 0.004  & 0.108  & 0.109  & 0.946  & -0.002  & 0.100  & 0.109  & 0.965  \\
          & $\hat{\tau}_{\dim}$ & 0.002  & 0.105  & 0.103  & 0.937  & 0.004  & 0.092  & 0.092  & 0.945  & -0.001  & 0.097  & /     & / \\
          & $\hat{\tau}_{\str}$ & 0.002  & 0.091  & 0.092  & 0.944  & 0.004  & 0.092  & 0.092  & 0.945  & -0.001  & 0.088  & 0.092  & 0.965  \\
          &       &       &       &       &       &       &       &       &       &       &       &       &  \\
    \multicolumn{1}{l}{Cauchy tail} & $\hat{\tau}_{\textnormal{naive-dim}}$ & -34.037  & 1037.167  & 44.967  & 0.978  & -33.248  & 1099.566  & 45.306  & 0.987  & 35.288  & 1088.502  & /     & / \\
          & $\hat{\tau}_{\textnormal{str-dim}}$ & -35.219  & 1076.533  & 44.902  & 0.980  & -33.307  & 1101.616  & 45.306  & 0.987  & 34.421  & 1060.345  & 44.601  & 0.978  \\
          & $\hat{\tau}_{\textnormal{md}}$ & 0.001  & 0.148  & 0.147  & 0.948  & -0.004  & 0.144  & 0.138  & 0.938  & -0.007  & 0.144  & /     & / \\
          & $\hat{\tau}_{\textnormal{wt-md}}$ & 0.002  & 0.138  & 0.138  & 0.948  & -0.004  & 0.143  & 0.138  & 0.938  & -0.007  & 0.138  & 0.138  & 0.948  \\
          & $\hat{\tau}_{\dim}$ & -0.002  & 0.129  & 0.128  & 0.946  & -0.007  & 0.122  & 0.118  & 0.933  & -0.006  & 0.126  & /     & / \\
          & $\hat{\tau}_{\str}$ & -0.002  & 0.119  & 0.118  & 0.944  & -0.007  & 0.122  & 0.118  & 0.931  & -0.006  & 0.120  & 0.118  & 0.942  \\
    \bottomrule
    \end{tabular}%
    }
    \end{threeparttable}
  \label{tab:n=1000model1}
  \begin{tablenotes}
				\footnotesize
				\item[] Note: $\hat{\tau}_{\textnormal{naive-dim}}$, standard difference-in-means estimator; $\hat{\tau}_{\textnormal{str-dim}}$, stratified difference-in-means esti-\\mator; $\hat{\tau}_{\textnormal{md}}$, difference-in-medians estimator; $\hat{\tau}_{\textnormal{wt-md}}$, difference-in-weighted-medians estimator; $\hat{\tau}_{\dim}$,\\ transformed difference-in-means estimator; $\hat{\tau}_{\str}$, stratified transformed difference-in-means estimator;\\ SR, simple randomization; STR, stratified randomization; MIN, Pocock and Simon's minimization;\\ SD, standard deviation; SE, standard error; CP, coverage probability.
			\end{tablenotes}
\end{table}%

\begin{table}[htbp]
  \centering
  \caption{Simulation results under Model 2 for $n=1000$ and $\pi=1/2$}
  \begin{threeparttable}
    \resizebox{\textwidth}{!}{
    \begin{tabular}{rlrrrrrrrrrrrr}
    \toprule
          &       & \multicolumn{4}{c}{SR}        & \multicolumn{4}{c}{STR}       & \multicolumn{4}{c}{MIN} \\
\cmidrule{3-14}          &       & \multicolumn{1}{r}{Bias} & \multicolumn{1}{r}{SD} & \multicolumn{1}{r}{SE} & \multicolumn{1}{r}{CP} & \multicolumn{1}{r}{Bias} & \multicolumn{1}{r}{SD} & \multicolumn{1}{r}{SE} & \multicolumn{1}{r}{CP} & \multicolumn{1}{r}{Bias} & \multicolumn{1}{r}{SD} & \multicolumn{1}{r}{SE} & \multicolumn{1}{r}{CP} \\
    \midrule
    \multicolumn{1}{l}{$\tau$ = 0} &       &       &       &       &       &       &       &       &       &       &       &       &  \\
    \multicolumn{1}{l}{Normal tail} & $\hat{\tau}_{\textnormal{naive-dim}}$ & -0.002  & 0.082  & 0.083  & 0.957  & -0.002  & 0.071  & 0.070  & 0.945  & 0.001  & 0.080  & /     & / \\
          & $\hat{\tau}_{\textnormal{str-dim}}$ & 0.000  & 0.069  & 0.070  & 0.959  & -0.003  & 0.071  & 0.070  & 0.943  & 0.000  & 0.069  & 0.070  & 0.947  \\
          & $\hat{\tau}_{\textnormal{md}}$ & 0.002  & 0.102  & 0.109  & 0.959  & -0.002  & 0.090  & 0.100  & 0.970  & -0.002  & 0.097  & /     & / \\
          & $\hat{\tau}_{\textnormal{wt-md}}$ & 0.004  & 0.094  & 0.100  & 0.965  & -0.002  & 0.090  & 0.100  & 0.969  & -0.003  & 0.088  & 0.100  & 0.961  \\
          & $\hat{\tau}_{\dim}$ & 0.000  & 0.088  & 0.085  & 0.939  & -0.003  & 0.077  & 0.078  & 0.954  & 0.000  & 0.084  & /     & / \\
          & $\hat{\tau}_{\str}$ & 0.001  & 0.079  & 0.079  & 0.943  & -0.003  & 0.077  & 0.078  & 0.954  & 0.000  & 0.078  & 0.079  & 0.945  \\
          &       &       &       &       &       &       &       &       &       &       &       &       &  \\
    \multicolumn{1}{l}{Laplace tail} & $\hat{\tau}_{\textnormal{naive-dim}}$ & 0.000  & 0.103  & 0.104  & 0.949  & -0.002  & 0.097  & 0.094  & 0.937  & 0.002  & 0.098  & /     & / \\
          & $\hat{\tau}_{\textnormal{str-dim}}$ & 0.003  & 0.094  & 0.094  & 0.947  & -0.002  & 0.097  & 0.094  & 0.938  & 0.002  & 0.092  & 0.094  & 0.957  \\
          & $\hat{\tau}_{\textnormal{md}}$ & 0.000  & 0.104  & 0.106  & 0.955  & 0.000  & 0.101  & 0.098  & 0.940  & 0.003  & 0.102  & /     & / \\
          & $\hat{\tau}_{\textnormal{wt-md}}$ & 0.003  & 0.097  & 0.098  & 0.951  & 0.000  & 0.101  & 0.098  & 0.939  & 0.002  & 0.096  & 0.098  & 0.944  \\
          & $\hat{\tau}_{\dim}$ & 0.002  & 0.098  & 0.096  & 0.948  & 0.001  & 0.092  & 0.090  & 0.949  & 0.003  & 0.093  & /     & / \\
          & $\hat{\tau}_{\str}$ & 0.004  & 0.090  & 0.090  & 0.951  & 0.001  & 0.092  & 0.090  & 0.949  & 0.002  & 0.088  & 0.090  & 0.951  \\
          &       &       &       &       &       &       &       &       &       &       &       &       &  \\
    \multicolumn{1}{l}{Cauchy tail} & $\hat{\tau}_{\textnormal{naive-dim}}$ & 0.434  & 29.979  & 7.450  & 0.969  & -0.036  & 29.854  & 7.453  & 0.978  & -1.446  & 29.947  & /     & / \\
          & $\hat{\tau}_{\textnormal{str-dim}}$ & 0.379  & 28.675  & 7.438  & 0.968  & -0.037  & 29.826  & 7.453  & 0.978  & -1.453  & 30.317  & 7.479  & 0.979  \\
          & $\hat{\tau}_{\textnormal{md}}$ & 0.001  & 0.137  & 0.138  & 0.946  & 0.000  & 0.127  & 0.132  & 0.955  & -0.001  & 0.137  & /     & / \\
          & $\hat{\tau}_{\textnormal{wt-md}}$ & 0.003  & 0.129  & 0.131  & 0.946  & 0.000  & 0.127  & 0.132  & 0.956  & -0.001  & 0.130  & 0.132  & 0.954  \\
          & $\hat{\tau}_{\dim}$ & 0.003  & 0.125  & 0.125  & 0.946  & 0.000  & 0.116  & 0.118  & 0.954  & 0.000  & 0.125  & /     & / \\
          & $\hat{\tau}_{\str}$ & 0.005  & 0.116  & 0.117  & 0.946  & 0.000  & 0.115  & 0.118  & 0.955  & 0.000  & 0.118  & 0.118  & 0.949  \\
    \multicolumn{1}{l}{$\tau$ = 1} &       &       &       &       &       &       &       &       &       &       &       &       &  \\
    \multicolumn{1}{l}{Normal tail} & $\hat{\tau}_{\textnormal{naive-dim}}$ & -0.002  & 0.084  & 0.083  & 0.948  & -0.001  & 0.072  & 0.070  & 0.948  & 0.001  & 0.078  & /     & / \\
          & $\hat{\tau}_{\textnormal{str-dim}}$ & -0.002  & 0.071  & 0.070  & 0.942  & -0.001  & 0.072  & 0.070  & 0.947  & 0.002  & 0.067  & 0.070  & 0.961  \\
          & $\hat{\tau}_{\textnormal{md}}$ & -0.003  & 0.101  & 0.109  & 0.964  & -0.004  & 0.092  & 0.101  & 0.971  & 0.001  & 0.097  & /     & / \\
          & $\hat{\tau}_{\textnormal{wt-md}}$ & -0.003  & 0.091  & 0.101  & 0.963  & -0.004  & 0.092  & 0.101  & 0.969  & 0.003  & 0.089  & 0.101  & 0.973  \\
          & $\hat{\tau}_{\dim}$ & -0.002  & 0.089  & 0.085  & 0.935  & -0.001  & 0.079  & 0.079  & 0.930  & 0.001  & 0.084  & /     & / \\
          & $\hat{\tau}_{\str}$ & -0.002  & 0.079  & 0.079  & 0.945  & -0.001  & 0.079  & 0.079  & 0.929  & 0.002  & 0.078  & 0.079  & 0.957  \\
          &       &       &       &       &       &       &       &       &       &       &       &       &  \\
    \multicolumn{1}{l}{Laplace tail} & $\hat{\tau}_{\textnormal{naive-dim}}$ & -0.001  & 0.108  & 0.104  & 0.949  & -0.003  & 0.096  & 0.094  & 0.952  & -0.003  & 0.102  & /     & / \\
          & $\hat{\tau}_{\textnormal{str-dim}}$ & 0.000  & 0.097  & 0.094  & 0.940  & -0.003  & 0.096  & 0.094  & 0.950  & -0.002  & 0.095  & 0.094  & 0.939  \\
          & $\hat{\tau}_{\textnormal{md}}$ & 0.000  & 0.109  & 0.106  & 0.936  & -0.006  & 0.099  & 0.098  & 0.945  & 0.000  & 0.103  & /     & / \\
          & $\hat{\tau}_{\textnormal{wt-md}}$ & 0.000  & 0.098  & 0.098  & 0.938  & -0.006  & 0.099  & 0.098  & 0.945  & 0.000  & 0.099  & 0.098  & 0.936  \\
          & $\hat{\tau}_{\dim}$ & 0.000  & 0.099  & 0.096  & 0.939  & -0.003  & 0.089  & 0.089  & 0.955  & -0.001  & 0.094  & /     & / \\
          & $\hat{\tau}_{\str}$ & 0.000  & 0.090  & 0.089  & 0.947  & -0.003  & 0.089  & 0.089  & 0.955  & 0.000  & 0.090  & 0.089  & 0.953  \\
          &       &       &       &       &       &       &       &       &       &       &       &       &  \\
    \multicolumn{1}{l}{Cauchy tail} & $\hat{\tau}_{\textnormal{naive-dim}}$ & -159.708  & 4998.819  & 170.268  & 0.976  & -157.734  & 5119.119  & 171.269  & 0.984  & -160.063  & 5108.970  & /     & / \\
          & $\hat{\tau}_{\textnormal{str-dim}}$ & -160.720  & 5028.971  & 170.017  & 0.979  & -157.407  & 5108.884  & 171.269  & 0.984  & -172.505  & 5501.904  & 177.367  & 0.982  \\
          & $\hat{\tau}_{\textnormal{md}}$ & 0.003  & 0.139  & 0.138  & 0.943  & 0.001  & 0.127  & 0.131  & 0.953  & -0.003  & 0.136  & /     & / \\
          & $\hat{\tau}_{\textnormal{wt-md}}$ & 0.004  & 0.133  & 0.131  & 0.935  & 0.002  & 0.127  & 0.131  & 0.953  & -0.001  & 0.131  & 0.131  & 0.952  \\
          & $\hat{\tau}_{\dim}$ & 0.001  & 0.127  & 0.125  & 0.936  & 0.000  & 0.115  & 0.118  & 0.947  & -0.004  & 0.126  & /     & / \\
          & $\hat{\tau}_{\str}$ & 0.001  & 0.118  & 0.117  & 0.939  & 0.000  & 0.115  & 0.118  & 0.947  & -0.003  & 0.122  & 0.118  & 0.945  \\
    \bottomrule
    \end{tabular}%
    }
    \end{threeparttable}
  \label{tab:n=1000model2}
  \begin{tablenotes}
				\footnotesize
				\item[] Note: $\hat{\tau}_{\textnormal{naive-dim}}$, standard difference-in-means estimator; $\hat{\tau}_{\textnormal{str-dim}}$, stratified difference-in-means esti-\\mator; $\hat{\tau}_{\textnormal{md}}$, difference-in-medians estimator; $\hat{\tau}_{\textnormal{wt-md}}$, difference-in-weighted-medians estimator; $\hat{\tau}_{\dim}$,\\ transformed difference-in-means estimator; $\hat{\tau}_{\str}$, stratified transformed difference-in-means estimator;\\ SR, simple randomization; STR, stratified randomization; MIN, Pocock and Simon's minimization;\\ SD, standard deviation; SE, standard error; CP, coverage probability.
			\end{tablenotes}
\end{table}%

\begin{table}[htbp]
  \centering
  \caption{Simulation results under Model 3 for $n=1000$ and $\pi=1/2$}
  \begin{threeparttable}
    \resizebox{\textwidth}{!}{
    \begin{tabular}{rlrrrrrrrrrrrr}
    \toprule
          &       & \multicolumn{4}{c}{SR}        & \multicolumn{4}{c}{STR}       & \multicolumn{4}{c}{MIN} \\
\cmidrule{3-14}          &       & \multicolumn{1}{r}{Bias} & \multicolumn{1}{r}{SD} & \multicolumn{1}{r}{SE} & \multicolumn{1}{r}{CP} & \multicolumn{1}{r}{Bias} & \multicolumn{1}{r}{SD} & \multicolumn{1}{r}{SE} & \multicolumn{1}{r}{CP} & \multicolumn{1}{r}{Bias} & \multicolumn{1}{r}{SD} & \multicolumn{1}{r}{SE} & \multicolumn{1}{r}{CP} \\
    \midrule
    \multicolumn{1}{l}{$\tau$ = 0} &       &       &       &       &       &       &       &       &       &       &       &       &  \\
    \multicolumn{1}{l}{Normal tail} & $\hat{\tau}_{\textnormal{naive-dim}}$ & -0.002  & 0.087  & 0.085  & 0.947  & 0.001  & 0.074  & 0.072  & 0.950  & -0.001  & 0.077  & /     & / \\
          & $\hat{\tau}_{\textnormal{str-dim}}$ & -0.002  & 0.075  & 0.072  & 0.935  & 0.001  & 0.074  & 0.072  & 0.950  & -0.001  & 0.068  & 0.072  & 0.961  \\
          & $\hat{\tau}_{\textnormal{md}}$ & 0.000  & 0.108  & 0.115  & 0.968  & 0.003  & 0.095  & 0.105  & 0.968  & 0.002  & 0.096  & /     & / \\
          & $\hat{\tau}_{\textnormal{wt-md}}$ & -0.002  & 0.097  & 0.105  & 0.970  & 0.003  & 0.095  & 0.105  & 0.968  & 0.001  & 0.090  & 0.105  & 0.980  \\
          & $\hat{\tau}_{\dim}$ & 0.000  & 0.092  & 0.090  & 0.944  & 0.002  & 0.079  & 0.081  & 0.958  & -0.001  & 0.083  & /     & / \\
          & $\hat{\tau}_{\str}$ & 0.000  & 0.081  & 0.082  & 0.949  & 0.002  & 0.079  & 0.081  & 0.958  & -0.002  & 0.077  & 0.082  & 0.951  \\
          &       &       &       &       &       &       &       &       &       &       &       &       &  \\
    \multicolumn{1}{l}{Laplace tail} & $\hat{\tau}_{\textnormal{naive-dim}}$ & 0.001  & 0.106  & 0.106  & 0.941  & 0.001  & 0.099  & 0.096  & 0.937  & -0.002  & 0.104  & /     & / \\
          & $\hat{\tau}_{\textnormal{str-dim}}$ & 0.001  & 0.094  & 0.096  & 0.949  & 0.001  & 0.099  & 0.096  & 0.936  & -0.003  & 0.096  & 0.096  & 0.953  \\
          & $\hat{\tau}_{\textnormal{md}}$ & 0.001  & 0.116  & 0.113  & 0.945  & 0.002  & 0.103  & 0.104  & 0.956  & -0.001  & 0.109  & /     & / \\
          & $\hat{\tau}_{\textnormal{wt-md}}$ & 0.000  & 0.105  & 0.104  & 0.945  & 0.002  & 0.103  & 0.104  & 0.956  & -0.001  & 0.102  & 0.104  & 0.953  \\
          & $\hat{\tau}_{\dim}$ & 0.001  & 0.105  & 0.101  & 0.937  & 0.003  & 0.092  & 0.093  & 0.949  & 0.000  & 0.098  & /     & / \\
          & $\hat{\tau}_{\str}$ & 0.000  & 0.093  & 0.093  & 0.948  & 0.003  & 0.092  & 0.093  & 0.951  & -0.001  & 0.091  & 0.093  & 0.957  \\
          &       &       &       &       &       &       &       &       &       &       &       &       &  \\
    \multicolumn{1}{l}{Cauchy tail} & $\hat{\tau}_{\textnormal{naive-dim}}$ & -2.556  & 67.555  & 9.639  & 0.979  & -1.792  & 65.256  & 9.539  & 0.980  & -0.426  & 64.678  & /     & / \\
          & $\hat{\tau}_{\textnormal{str-dim}}$ & -2.735  & 70.760  & 9.623  & 0.982  & -1.789  & 65.172  & 9.539  & 0.980  & -0.342  & 64.733  & 9.522  & 0.973  \\
          & $\hat{\tau}_{\textnormal{md}}$ & 0.006  & 0.145  & 0.145  & 0.950  & -0.004  & 0.133  & 0.138  & 0.950  & 0.000  & 0.136  & /     & / \\
          & $\hat{\tau}_{\textnormal{wt-md}}$ & 0.005  & 0.137  & 0.137  & 0.948  & -0.004  & 0.133  & 0.138  & 0.950  & -0.001  & 0.131  & 0.138  & 0.958  \\
          & $\hat{\tau}_{\dim}$ & 0.003  & 0.134  & 0.128  & 0.937  & -0.002  & 0.116  & 0.120  & 0.962  & -0.002  & 0.122  & /     & / \\
          & $\hat{\tau}_{\str}$ & 0.003  & 0.126  & 0.120  & 0.929  & -0.002  & 0.116  & 0.120  & 0.961  & -0.002  & 0.117  & 0.120  & 0.950  \\
    \multicolumn{1}{l}{$\tau$ = 1} &       &       &       &       &       &       &       &       &       &       &       &       &  \\
    \multicolumn{1}{l}{Normal tail} & $\hat{\tau}_{\textnormal{naive-dim}}$ & -0.001  & 0.086  & 0.085  & 0.935  & 0.004  & 0.075  & 0.072  & 0.946  & 0.004  & 0.079  & /     & / \\
          & $\hat{\tau}_{\textnormal{str-dim}}$ & 0.000  & 0.073  & 0.072  & 0.934  & 0.004  & 0.075  & 0.072  & 0.949  & 0.003  & 0.069  & 0.072  & 0.963  \\
          & $\hat{\tau}_{\textnormal{md}}$ & -0.002  & 0.103  & 0.115  & 0.963  & 0.004  & 0.099  & 0.105  & 0.953  & 0.008  & 0.101  & /     & / \\
          & $\hat{\tau}_{\textnormal{wt-md}}$ & -0.001  & 0.094  & 0.105  & 0.961  & 0.003  & 0.099  & 0.105  & 0.951  & 0.007  & 0.095  & 0.105  & 0.964  \\
          & $\hat{\tau}_{\dim}$ & 0.000  & 0.090  & 0.089  & 0.946  & 0.004  & 0.082  & 0.081  & 0.945  & 0.005  & 0.084  & /     & / \\
          & $\hat{\tau}_{\str}$ & 0.000  & 0.079  & 0.081  & 0.948  & 0.004  & 0.082  & 0.081  & 0.945  & 0.005  & 0.077  & 0.082  & 0.957  \\
          &       &       &       &       &       &       &       &       &       &       &       &       &  \\
    \multicolumn{1}{l}{Laplace tail} & $\hat{\tau}_{\textnormal{naive-dim}}$ & 0.000  & 0.108  & 0.106  & 0.947  & -0.004  & 0.097  & 0.096  & 0.945  & 0.001  & 0.102  & /     & / \\
          & $\hat{\tau}_{\textnormal{str-dim}}$ & 0.000  & 0.096  & 0.096  & 0.956  & -0.004  & 0.097  & 0.096  & 0.945  & 0.000  & 0.094  & 0.096  & 0.952  \\
          & $\hat{\tau}_{\textnormal{md}}$ & 0.002  & 0.114  & 0.113  & 0.933  & -0.002  & 0.103  & 0.104  & 0.947  & 0.002  & 0.111  & /     & / \\
          & $\hat{\tau}_{\textnormal{wt-md}}$ & 0.002  & 0.101  & 0.104  & 0.950  & -0.002  & 0.103  & 0.104  & 0.948  & 0.001  & 0.103  & 0.104  & 0.949  \\
          & $\hat{\tau}_{\dim}$ & 0.002  & 0.103  & 0.101  & 0.943  & -0.003  & 0.089  & 0.093  & 0.948  & 0.002  & 0.100  & /     & / \\
          & $\hat{\tau}_{\str}$ & 0.002  & 0.092  & 0.093  & 0.954  & -0.003  & 0.089  & 0.093  & 0.949  & 0.001  & 0.092  & 0.093  & 0.950  \\
          &       &       &       &       &       &       &       &       &       &       &       &       &  \\
    \multicolumn{1}{l}{Cauchy tail} & $\hat{\tau}_{\textnormal{naive-dim}}$ & 0.163  & 46.122  & 9.876  & 0.985  & 1.694  & 46.874  & 9.881  & 0.980  & 2.482  & 47.248  & /     & / \\
          & $\hat{\tau}_{\textnormal{str-dim}}$ & 0.144  & 45.653  & 9.861  & 0.985  & 1.697  & 46.929  & 9.881  & 0.980  & 2.489  & 45.433  & 9.826  & 0.975  \\
          & $\hat{\tau}_{\textnormal{md}}$ & -0.004  & 0.138  & 0.144  & 0.956  & 0.009  & 0.134  & 0.136  & 0.956  & 0.000  & 0.143  & /     & / \\
          & $\hat{\tau}_{\textnormal{wt-md}}$ & -0.004  & 0.134  & 0.137  & 0.951  & 0.009  & 0.134  & 0.136  & 0.956  & 0.000  & 0.137  & 0.137  & 0.949  \\
          & $\hat{\tau}_{\dim}$ & -0.003  & 0.126  & 0.127  & 0.954  & 0.006  & 0.119  & 0.120  & 0.953  & 0.002  & 0.132  & /     & / \\
          & $\hat{\tau}_{\str}$ & -0.003  & 0.119  & 0.119  & 0.954  & 0.006  & 0.119  & 0.120  & 0.952  & 0.001  & 0.126  & 0.119  & 0.933  \\
    \bottomrule
    \end{tabular}%
    }
    \end{threeparttable}
  \label{tab:n=1000model3}
  \begin{tablenotes}
				\footnotesize
				\item[] Note: $\hat{\tau}_{\textnormal{naive-dim}}$, standard difference-in-means estimator; $\hat{\tau}_{\textnormal{str-dim}}$, stratified difference-in-means esti-\\mator; $\hat{\tau}_{\textnormal{md}}$, difference-in-medians estimator; $\hat{\tau}_{\textnormal{wt-md}}$, difference-in-weighted-medians estimator; $\hat{\tau}_{\dim}$,\\ transformed difference-in-means estimator; $\hat{\tau}_{\str}$, stratified transformed difference-in-means estimator;\\ SR, simple randomization; STR, stratified randomization; MIN, Pocock and Simon's minimization;\\ SD, standard deviation; SE, standard error; CP, coverage probability.
			\end{tablenotes}
\end{table}%

(1) The biases of both $\hat{\tau}_{\dim}$ and $\hat{\tau}_{\str}$ are negligible in comparison to the standard deviations, even when dealing with the heaviest-tailed distribution (the Cauchy distribution).

(2) With the same model and randomization approach, the standard deviations of $\hat{\tau}_{\dim}$ and $\hat{\tau}_{\str}$ increase as the potential outcomes exhibit heavier tails, yet they remain well behaved. Furthermore, under the STR method, $\hat{\tau}_{\dim}$ shows lower standard deviations compared to SR. This observation aligns with the conclusion in Theorem \ref{t1} that a more balanced design results in a more precise transformed difference-in-means estimator. In contrast, the standard deviations of $\hat{\tau}_{\str}$ are independent of the randomization methods, as predicted by Theorem \ref{t4}. Additionally, these standard deviations are nearly identical to those of $\hat{\tau}_{\dim}$ under STR, highlighting the effectiveness of stratification both in the design and analysis stages.

(3) Compared to the initial estimators $\hat{\tau}_{\textnormal{md}}$ and $\hat{\tau}_{\textnormal{wt-md}}$, both $\hat{\tau}_{\dim}$ and $\hat{\tau}_{\str}$ exhibit smaller standard deviations across all cases, underscoring the need for a one-step update. While $\hat{\tau}_{\dim}$ and $\hat{\tau}_{\str}$ also demonstrate reduced variability compared to $\hat{\tau}_{\textnormal{naive-dim}}$ and $\hat{\tau}_{\textnormal{str-dim}}$, this advantage does not extend to scenarios with normal tails. In fact, when the distribution of the outcomes is perfectly normal, both $\hat{\tau}_{\textnormal{naive-dim}}$ and $\hat{\tau}_{\dim}$ asymptotically achieve the efficiency bounds under simple randomization \citep{athey2021semiparametric}. However, in finite samples,  $\hat{\tau}_{\dim}$ exhibits a larger standard deviation, attributable to the imprecision in the density estimation and the initial estimator. 

(4) The variance estimators are reliable, ensuring that the empirical coverage probabilities are close to $95\%$. The average lengths of confidence intervals for $\hat{\tau}_{\str}$ are either smaller than (approximately $10\%$ smaller under SR) or equal to those of $\hat{\tau}_{\dim}$. Therefore, $\hat{\tau}_{\str}$ improves, or at least does not hurt the inferential efficiency in comparison to $\hat{\tau}_{\dim}$.

Based on these simulation results, we recommend $\hat{\tau}_{\str}$ and its associated variance estimators for inferring the treatment effect under covariate-adaptive randomization with heavy-tailed potential outcomes.

\section{Real data example}
The AIDS Clinical Trial Group conducted a randomized controlled double-blind study to evaluate the effects of dual or triple combinations of HIV-1 reverse transcriptase inhibitors \citep{henry1998randomized}. The study involved 1313 HIV-infected patients from 42 adult AIDS Clinical Trials Group sites and 7 National Hemophilia Foundation centers, spanning the period from June 1993 to June 1996. The CD4 counts were used to determine patient enrollment and were of interest as outcomes post-intervention. All patients were followed up to 40 weeks, with CD4 counts measured every 8 weeks. Patients were randomly assigned to one of four treatment arms, which included: 600mg zidovudine alternating monthly with 400mg didanosine (arm 1), 600mg zidovudine plus 2.25mg of zalcitabine (arm 2), 600mg zidovudine plus 400mg of didanosine (arm 3), and 600mg zidovudine plus 400mg of didanosine plus 400mg of nevirapine (arm 4). The data is available at {\href{https://content.sph.harvard.edu/fitzmaur/ala/cd4.txt}{\it https://content.sph.harvard.edu/fitzmaur/ala/cd4.txt}}.

The original CD4 count data exhibits severe non-normality and heavy-tailed behavior; refer to Figures~\ref{fig:qq} and \ref{fig:hist} (for enhanced visual clarity, the data is scaled by dividing it by $10^5$). We highlight the notable differences in quantiles between the normal and CD4 count distributions at week 0 and week 8. The Shapiro--Wilk test yields a $p$-value below $10^{-6}$, further affirming the non-normality. Both the kurtosis values at week 0 and week 8 significantly exceed 3, indicating the CD4 count distributions possess heavy tails. Although prior studies commonly applied a logarithmic transformation to mitigate skewness in CD4 counts, our approach leverages methods tailored for heavy-tailed data, facilitating the direct analysis of the CD4 counts.

\begin{figure}
    \centering
    \resizebox{\linewidth}{!}{
    \includegraphics{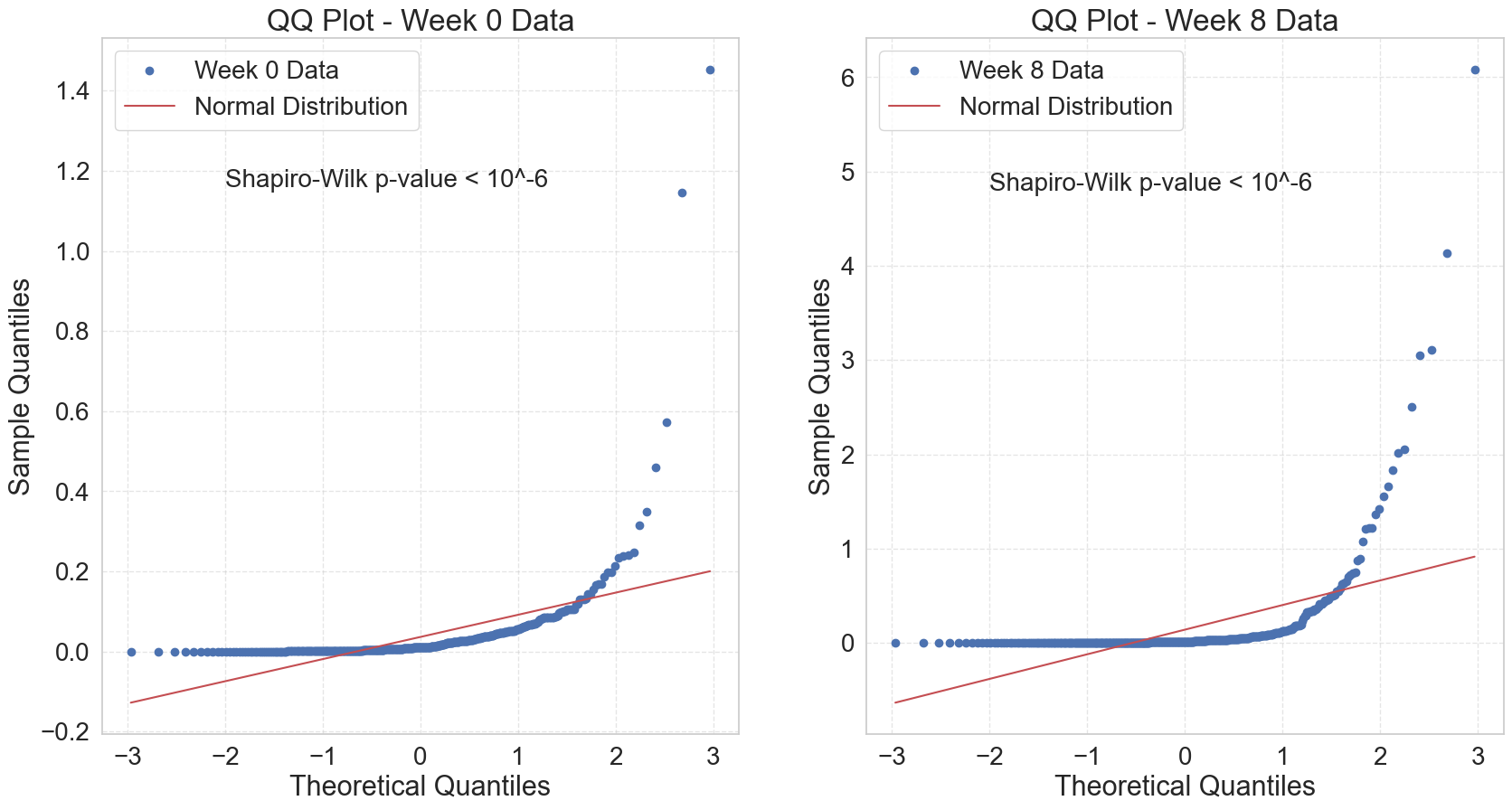}
    }
    \caption{Q-Q plot for CD4 counts ($\times 10^5$) in the first 8 weeks}
    \label{fig:qq}
\end{figure}

\begin{figure}
    \centering
    \resizebox{\linewidth}{!}{
    \includegraphics{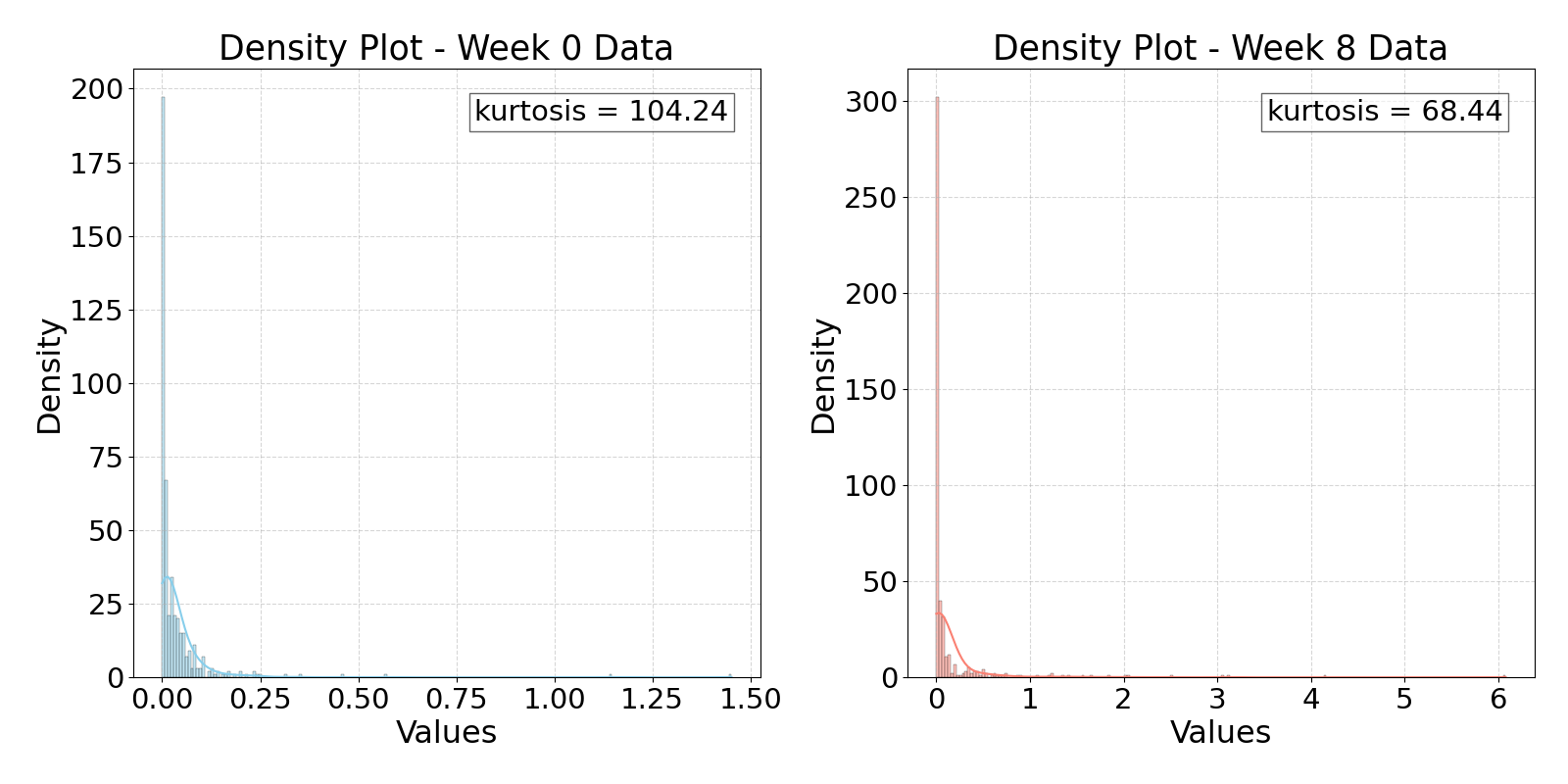}
    }
    \caption{Density plot for CD4 counts ($\times 10^5$) in the first 8 weeks}
    \label{fig:hist}
\end{figure}

We focus on assessing the impact of triple therapy, consisting of 600 mg of zidovudine, 400 mg of didanosine, and 400 mg of nevirapine, which we consider the treatment arm. The other three dual therapies serve as control arms, and we compare their effects pairwisely. We use the difference between the CD4 counts at the 8th, 16th, and 24th weeks and the initial counts as the outcomes. For each time period, we focus on patients without missing data, stratifying them into four strata based on gender and dichotomized age, and consider the estimated propensity score $n_1/n$ as the true treatment assignment probability. Table~\ref{tab:real} shows the point estimators, $95\%$ confidence intervals, and the lengths of the confidence intervals. Our results indicate that the triple therapy leads to significantly more CD4 counts compared to any other therapy, regardless of the period considered (8th, 16th, or 24th week), suggesting that the combination of triple reverse transcriptase inhibitors better delays disease progression in AIDS patients. This result is consistent with the findings of \cite{henry1998randomized}. As a comparison, we can see that the classical difference-in-means estimator and stratified difference-in-means estimator exhibit extremely large variances in such heavy-tailed situations, thereby being unable to provide effective inference.

\begin{table}[htbp]
  \centering
  \caption{Results for CD4 counts: triple therapy vs. three dual therapies}
  \begin{threeparttable}
    \resizebox{\textwidth}{!}{
    \begin{tabular}{lccccccccccccc}
    \toprule
          & \multicolumn{3}{c}{Week 8}    & \multicolumn{3}{c}{Week 16}   & \multicolumn{3}{c}{Week 24} \\
\cmidrule{2-13}          & \multicolumn{1}{l}{Estimator} & \multicolumn{1}{c}{$95\%$ CI} & \multicolumn{1}{l}{Length} & \multicolumn{1}{l}{Estimator} & \multicolumn{1}{c}{$95\%$ CI} & \multicolumn{1}{l}{Length} & \multicolumn{1}{l}{Estimator} & \multicolumn{1}{c}{$95\%$ CI} & \multicolumn{1}{l}{Length} \\
    \midrule
    Arm 4 vs. 1 &       &       &       &       &       &       &       &       &       &       &       &  \\
    $\hat{\tau}_{\textnormal{naive-dim}}$ & 14265 & (6546, 21984) & 15438 & 33485 & (11208, 55762) & 44553 & 30231 & (4451, 56011) & 51560 \\
    $\hat{\tau}_{\textnormal{str-dim}}$ & 14533 & (6839, 22228) & 15389 & 33866 & (11630, 56103) & 44473 & 30906 & (5176, 56636) & 51460 \\
    $\hat{\tau}_{\dim}$ & 673   & (217, 1129) & 912   & 989   & (434, 1543) & 1109  & 632   & (55, 1209) & 1154 \\
    $\hat{\tau}_{\str}$ & 682   & (227, 1138) & 911   & 995   & (441, 1548) & 1107  & 649   & (73, 1225) & 1152 \\
    Arm 4 vs. 2 &       &       &       &       &       &       &       &       &       &       &       &  \\
    $\hat{\tau}_{\textnormal{naive-dim}}$ & 13379 & (5687, 21070) & 15383 & 33345 & (11213, 55478) & 44264 & 29315 & (4245, 54384) & 50139 \\
    $\hat{\tau}_{\textnormal{str-dim}}$ & 13356 & (5684, 21028) & 15344 & 33601 & (11511, 55690) & 44178 & 29361 & (4356, 54366) & 50010 \\
    $\hat{\tau}_{\dim}$ & 1137  & (370, 1904) & 1534  & 1286  & (546, 2026) & 1480  & 1122  & (138, 2106) & 1968 \\
    $\hat{\tau}_{\str}$ & 1140  & (374, 1905) & 1532  & 1299  & (560, 2038) & 1478  & 1125  & (143, 2108) & 1965 \\
    Arm 4 vs. 3 &       &       &       &       &       &       &       &       &       &       &       &  \\
    $\hat{\tau}_{\textnormal{naive-dim}}$ & 6150  & (-2797, 15097) & 17894 & 27367 & (5273, 49460) & 44187 & 23900 & (-1785, 49585) & 51370 \\
    $\hat{\tau}_{\textnormal{str-dim}}$ & 6102  & (-2822, 15025) & 17847 & 26892 & (4875, 48908) & 44033 & 23829 & (-1743, 49401) & 51144 \\
    $\hat{\tau}_{\dim}$ & 3164  & (390, 5938) & 5548  & 2825  & (971, 4679) & 3708  & 2001  & (268, 3734) & 3466 \\
    $\hat{\tau}_{\str}$ & 3375  & (648, 6102) & 5454  & 2811  & (958, 4664) & 3706  & 2071  & (348, 3794) & 3445 \\
    \bottomrule
    \end{tabular}%
    }
    \end{threeparttable}
    \begin{tablenotes}
				\footnotesize
				\item[] Note: $\hat{\tau}_{\textnormal{naive-dim}}$, standard difference-in-means estimator; $\hat{\tau}_{\textnormal{str-dim}}$, stratified difference-in-means esti-\\mator; $\hat{\tau}_{\dim}$, transformed difference-in-means estimator; $\hat{\tau}_{\str}$, stratified transformed difference-in-\\means estimator; $95\%$ CI, 95\% 
    confidence interval; length,  length of the 
    confidence interval.
			\end{tablenotes}
  \label{tab:real}%
\end{table}%

To validate the reliability of our method, we carry out additional simulations using synthetic data. Our focus remains on comparing the triple therapy to other dual therapies, specifically observing changes in CD4 counts at the 8th week. Taking the comparison between the triple therapy and the alternating therapy (600mg zidovudine alternating monthly with 400mg didanosine) as an example, after excluding missing data, we analyze the first eight weeks of data from 223 participants in the control arm and 235 in the treatment arm. We impute the counterfactual outcomes using the estimated treatment effect, $\hat{\tau}_{\str}$, from the original data as the assumed true value. We then resampled the outcomes of 458 patients with replacement, along with their gender and age as stratification variables. We simulate the reallocation of treatment using simple randomization, stratified randomization, and Pocock and Simon's minimization, repeating the process 2000 times to evaluate the bias, standard deviation (SD), standard error (SE), and empirical coverage probability (CP) of the $95\%$ confidence intervals. The results are displayed in Table~\ref{tab:real-simu} alongside two other dual therapies. The findings align with those in the simulation section, showing that both estimators, $\hat{\tau}_{\dim}$ and $\hat{\tau}_{\str}$, exhibit negligible biases, which are significantly smaller than their respective SDs. The true variance term associated with treatment allocation $V_{A}^2$ defined in Section \ref{sec:ate-estimators}, is nearly zero, which explains why the SDs of $\hat{\tau}_{\dim}$ and $\hat{\tau}_{\str}$ are similar under each design. Moreover, the empirical coverage probabilities are approximately $95\%$, confirming the validity of our interval estimators.

\begin{table}[htbp]
  \centering
  \caption{Simulation results for synthetic CD4 count data}
  \begin{threeparttable}
    \resizebox{\textwidth}{!}{
    \begin{tabular}{lrrrrrrrrrrrr}
    \toprule
          & \multicolumn{4}{c}{SR}        & \multicolumn{4}{c}{STR}       & \multicolumn{4}{c}{MIN} \\
\cmidrule{2-13}          & \multicolumn{1}{l}{Bias} & \multicolumn{1}{l}{SD} & \multicolumn{1}{l}{SE} & \multicolumn{1}{l}{CP} & \multicolumn{1}{l}{Bias} & \multicolumn{1}{l}{SD} & \multicolumn{1}{l}{SE} & \multicolumn{1}{l}{CP} & \multicolumn{1}{l}{Bias} & \multicolumn{1}{l}{SD} & \multicolumn{1}{l}{SE} & \multicolumn{1}{l}{CP} \\
    \midrule
    Arm 4 vs. 1 &       &       &       &       &       &       &       &       &       &       &       &  \\
    $\hat{\tau}_{\textnormal{naive-dim}}$ & -103  & 4018  & 3931  & 0.958  & -84   & 4037  & 3911  & 0.959  & 160   & 4059  & /     & / \\
    $\hat{\tau}_{\textnormal{str-dim}}$ & -99   & 4017  & 3908  & 0.957  & -85   & 4037  & 3911  & 0.958  & 149   & 4057  & 3909  & 0.948 \\
    $\hat{\tau}_{\dim}$ & 5     & 293   & 278   & 0.965  & -4    & 302   & 278   & 0.955  & 7     & 289   & /     & / \\
    $\hat{\tau}_{\str}$ & 4     & 294   & 277   & 0.962  & -1    & 303   & 278   & 0.956  & 7     & 289   & 278   & 0.962 \\
    Arm 4 vs. 2 &       &       &       &       &       &       &       &       &       &       &       &  \\
    $\hat{\tau}_{\textnormal{naive-dim}}$ & 39    & 4077  & 3984  & 0.955  & -95   & 4135  & 3963  & 0.954  & 71    & 4055  & /     & / \\
    $\hat{\tau}_{\textnormal{str-dim}}$ & 35    & 4080  & 3959  & 0.952  & -73   & 4137  & 3963  & 0.954  & 77    & 4057  & 3959  & 0.953 \\
    $\hat{\tau}_{\dim}$ & 2     & 330   & 329   & 0.961  & -14   & 319   & 326   & 0.967  & 11    & 320   & /     & / \\
    $\hat{\tau}_{\str}$ & 4     & 334   & 326   & 0.958  & -9    & 318   & 326   & 0.967  & 11    & 319   & 322   & 0.969 \\
    Arm 4 vs. 3 &       &       &       &       &       &       &       &       &       &       &       &  \\
    $\hat{\tau}_{\textnormal{naive-dim}}$ & 198   & 4594  & 4432  & 0.945  & -110  & 4480  & 4411  & 0.946  & -157  & 4533  & /     & / \\
    $\hat{\tau}_{\textnormal{str-dim}}$ & 200   & 4604  & 4408  & 0.951  & -71   & 4478  & 4411  & 0.945  & -153  & 4531  & 4412  & 0.947 \\
    $\hat{\tau}_{\dim}$ & 23    & 580   & 589   & 0.961  & 0     & 584   & 592   & 0.964  & 36    & 581   & /     & / \\
    $\hat{\tau}_{\str}$ & 26    & 582   & 584   & 0.963  & 5     & 584   & 592   & 0.966  & 38    & 581   & 593   & 0.963 \\
    \bottomrule
    \end{tabular}%
    }
    \end{threeparttable}
  \label{tab:real-simu}
  \begin{tablenotes}
				\footnotesize
				\item[] Note: $\hat{\tau}_{\textnormal{naive-dim}}$, standard difference-in-means estimator; $\hat{\tau}_{\textnormal{str-dim}}$, stratified difference-in-means esti-\\mator; $\hat{\tau}_{\dim}$, transformed difference-in-means estimator; $\hat{\tau}_{\str}$, stratified transformed difference-in-\\means estimator; SR, simple randomization; STR, stratified randomization; MIN, Pocock and\\ Simon's minimization; SD, standard deviation; SE, standard error; CP, coverage probability.
			\end{tablenotes}
\end{table}%

\section{Discussion}

Covariate-adaptive randomization methods are extensively used in clinical trials and economic studies to assess the impact of treatments or interventions. In many of these experiments, outcomes often exhibit heavy-tailed behavior, posing challenges in accurate estimation and efficient inference of treatment effects. In this paper, we proposed methods to effectively infer treatment effects under covariate-adaptive randomization with heavy-tailed outcomes. Specifically, we examined the asymptotic properties of the semi-parametric influence function-based M-estimation approach introduced by \cite{athey2021semiparametric} under covariate-adaptive randomization. Our analysis revealed that Athey et al.'s point estimator remains consistent and asymptotically normal under covariate-adaptive randomization, albeit with the asymptotic variance contingent on the randomization method employed. While Athey et al.'s variance estimator is consistent under simple randomization, it tends to be conservative under more balanced covariate-adaptive randomization strategies like stratified permuted block randomization and minimization. To tackle this issue, we proposed a consistent non-parametric variance estimator to enable valid inferences. Furthermore, to enhance efficiency and broaden the applicability of the inference method, we introduced a stratified transformed difference-in-means estimator and established its asymptotic properties. Through simulation studies and a real data example, we demonstrated the validity and efficiency gains achieved through the proposed methods.

This paper primarily focuses on leveraging stratum information to enhance universality and efficiency. In covariate-adaptive randomized experiments, beyond stratum covariates, additional baseline covariates such as age, gender, and clinic location may also be available. Past studies with light-tailed outcomes have demonstrated that accounting for imbalances in both stratum covariates and additional covariates can yield greater efficiency gains compared to solely adjusting for stratum covariates \citep{Ma2020Regression, wang2021model, liu2020general, Ye2020}. It is worth exploring methodologies for integrating additional high-dimensional covariate information to enhance statistical efficiency, particularly in the presence of heavy-tailed outcomes. Moreover, various strategies exist for handling heavy-tailed data; for instance, \cite{ghosh2021efficiency} introduced a ranking-based approach under simple randomization. Investigating the validity of this estimator under covariate-adaptive randomization presents an intriguing avenue for further research.

\section*{Supplementary material}

The supplementary material provides sufficient conditions for Assumption~\ref{cond:second-moment}(iii), $\sqrt{n}$-consistency of three initial estimators, additional simulation results, and proofs.

\bibliographystyle{apalike} 
\bibliography{causal} 

\newpage
	
\appendix

Section~\ref{sec:sufficient-cond} provides sufficient conditions for Assumption~\ref{cond:second-moment}(iii). Section~\ref{sec:initial} establishes the $\sqrt{n}$-consistency of three initial estimators: the difference-in-medians estimator, the difference-in-weighted-medians estimator, and the weighted average difference-in-medians estimator. Section~\ref{sec:add-simu} provides additional simulation results. Section~\ref{sec:proof-main} presents the proofs of the main results. Section~\ref{sec:lemma} provides the proofs of the lemmas.

\section{Sufficient conditions for Assumption~\ref{cond:second-moment}(iii)}
\label{sec:sufficient-cond}

This section provides sufficient conditions on the density function and its estimates to ensure that Assumption~\ref{cond:second-moment}(iii) holds. These conditions, utilized in \cite{athey2021semiparametric} to ensure that the estimated score function closely approximates the true one, were originally established as properties of a class of density estimation methods in \cite{stone1975adaptive} and \cite{bickel1982adaptive}. We extend these results to accommodate covariate-adaptive randomization, allowing for dependent treatment assignments.

In this and subsequent sections, we assume that $\hat{f}_{0(j)}$ is obtained by the method proposed by \cite{bickel1982adaptive}. Specifically, let $\phi_{\sigma}$ be the probability density function of $N(0,\sigma^2)$. We define the convolution of the empirical density and $\phi_{\sigma}$ as
    $$\hat{f}_{\sigma}(y) = \ncone^{-1}\sum_{i \in \mathcal{L}_1, A_i = 0} \phi_{\sigma}(y-Y_i),$$
    where $n_{a(1)}$ denotes the number of units in $\mathcal{L}_1$ with treatment $A_i = a$, for $a=0,1$.
    
For given $\sigma_n, b_n, c_n, d_n, e_n > 0$, define 
    $$
        g_n(y)=\left\{\begin{array}{ll}
			\frac{\hat{f}_{\sigma_n}^{\prime}}{\hat{f}_{\sigma_n}}(y) & \text { if }  \hat{f}_{\sigma_n}(y) \geq d_n, \ |y| \leq e_n, \ |\hat{f}_{\sigma_n}^{\prime}(y)| \leq c_n \hat{f}_{\sigma_n}(y) \text { and } |\hat{f}_{\sigma_n}^{\prime\prime}(y)| \leq b_n \hat{f}_{\sigma_n}(y),\\
			0 & \text { otherwise.}
		\end{array}\right.
    $$
    We use $g_n(y)$ as our estimated score function, denoted as $\hat{f}_{0(1)}^{\prime}/\hat{f}_{0(1)}$. Similarly, we can obtain the estimated score function $\hat{f}_{0(2)}^{\prime}/\hat{f}_{0(2)}$ based on sample $\{Y_i\}_{i \in \mathcal{L}_2, A_i = 0}$. 
    
Let $n_{[k]0(1)}$ denote the number of units in stratum $k$ within $\mathcal{L}_1$ that receive treatment assignment $A_i=0$. Define $\Delta_n = \max_{k=1,\ldots,K} | n_{[k]0(1)} / \ncone - \ps | $. Note that for any $k$,
$$\frac{\nscone}{\ncone} = \frac{\nscone}{\nsc} \cdot \frac{n_0}{\ncone} \cdot \frac{\nsc}{\ns} \cdot \frac{\ns}{n} \cdot \frac{n}{n_0} \cp \frac{1}{2} \cdot 2 \cdot (1-\pi) \cdot \ps \cdot \frac{1}{1-\pi}= \ps,$$
thus $\Delta_n = o_P(1)$.

\begin{assumption}
\label{cond:density-bickel}
Suppose that $c_n \rightarrow \infty,$ $e_n \rightarrow \infty,$ $ \sigma_n \rightarrow 0$ and $d_n \rightarrow 0$ in such a way that
$\sigma_n c_n \to 0$, $e_n\sigma_n^{-5} = o_P(\min\{n, \Delta_n^{-2}\})$.
\end{assumption}

\begin{remark}
    Assumption \ref{cond:density-bickel} imposes requirements on the hyperparameters of the density estimation. In \cite{bickel1982adaptive}, similar restrictions were proposed with $e_n\sigma_n^{-5} = o_P(\min\{n, \Delta_n^{-2}\})$ replaced by $e_n\sigma_n^{-3} = o(n)$. We introduce an additional term $\Delta_n^{-2}$ due to the more general treatment assignment mechanism used here, which differs from independent and identically distributed samples, resulting in greater variability. However, if the treatment allocation satisfies Assumption \ref{a4}, we can show that $\Delta_n= O_P(n^{-1/2})$, and consequently, this extra term is no longer necessary.
\end{remark}

\begin{assumption}
    \label{cond:partial-score}
    Suppose that for any deterministic sequence $\{\tilde{\delta}_n\}_{n=1,2,...}$, as $\tilde{\delta}_n \to 0$, $$\sup_{y \in \mathbb R,\ k \in \{1,...,K\}} \bigg|\frac{f_{[k]0}(y+\tilde{\delta}_n)}{f_{[k]0}(y)} - 1\bigg| \to 0.$$
\end{assumption}

Assumption \ref{cond:partial-score} characterizes the continuity of the stratum-specific density functions, requiring that the change caused by a small perturbation is a uniformly higher-order term relative to the function value itself. This assumption is satisfied by most common distributions, such as the normal, Laplace, and heavy-tailed Cauchy distributions. Given that the actual distribution of outcomes may not be regular, we can weaken this assumption as follows: for any stratum $k$ and $\tilde{\delta}_n >0$ with $\tilde{\delta}_n = O(n^{-1/2})$, as $n \to \infty$, 
$$\sup_{|y| \in \mathcal{Y}_n} \bigg|\frac{f_{[k]0}^{\prime}(y)}{f_{[k]0}(y)}\bigg| = o(n^{1/2}) \quad \text{ and } \quad \sup_{|y| \in \mathcal{Y}_n, \ |\Bar{\delta}| < \delta_n} \bigg|\frac{f_{[k]0}^{\prime\prime}(y+\Bar{\delta})}{f_{[k]0}(y)}\bigg| = o(n),$$
where $\mathcal{Y}_n = \{y: |f_{[k]0}(y^{\prime})|<(b_n+c_n^2)^{-1}, \ \forall \ y-\delta_n \leq y^{\prime} \leq y+\delta_n\} \cap \{y: |y| < e_n\}$ with $b_n, c_n, e_n > 0$ satisfying Assumption \ref{cond:density-bickel}. Noting that the hyperparameters can be reasonably chosen to ensure that $\mathcal{Y}_n$ does not contain singular points, this weaker assumption can be easily satisfied.

\section{Results on the initial estimator}
\label{sec:initial}
Recall that $f_{[k]0}(y)$ and $F_{[k]0}(y)$ are the density function and cumulative distribution function of $Y_i(0)$, conditional on $S_i = k$. Let $\hat{\tau}_{\textnormal{md}}$ denote the difference between the medians of the outcomes in the treatment and control groups, and $\hat{\tau}_{\textnormal{wt-md}}$ denote the difference between the weighted medians of the outcomes in the treatment and control groups with weights $\pi_{n[S_i]}^{-1}$ and $(1-\pi_{n[S_i]})^{-1}$, respectively. Let $q_a(\tau)$ be the $\tau$-th quantile of $Y(a)$, $a=0,1$. Let $E$ be any compact subset of $(0,1)$.

\begin{assumption}
    \label{assumption:median}
    Suppose that (i) $f_0\left(q_0(\tau)\right)$ and $f_{[k]0}\left(q_0(\tau)\right)$ are bounded away from zero and infinity uniformly over $\tau \in E$ and $k=1,...,K$, and 
    (ii) $f_0(\cdot)$ and $f_{[k]0}(\cdot)$ are Lipschitz continuous over $\left\{q_a(\tau): \tau \in E\right\}$.
\end{assumption}

Assumption \ref{assumption:median}, proposed by \cite{zhang2020quantile}, is satisfied by many commonly used distributions, including those with heavy tails (e.g., the Cauchy distribution).

\begin{prop}
\label{prop:root-n-consistent}
(i) Under Assumption \ref{a1}, \ref{a2}, \ref{a4}, and \ref{assumption:median},
$\sqrt{n} (\hat{\tau}_{\textnormal{md}} - \tau) = O_P(1)$; (ii) Under Assumption \ref{a1}--\ref{a3} and \ref{assumption:median},
$\sqrt{n} (\hat{\tau}_{\textnormal{wt-md}} - \tau) = O_P(1)$. 
\end{prop}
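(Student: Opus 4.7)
For both parts, my plan is to reduce the claim to a uniform $\sqrt{n}$-rate of convergence for an appropriate empirical distribution function in each arm and then apply a Bahadur-type linearization justified by Assumption~\ref{assumption:median}. Under the constant quantile treatment effects assumption, $q_1(1/2) = q_0(1/2) + \tau$, so $\sqrt{n}$-consistency of the arm-wise (weighted) medians immediately transfers to their difference by the triangle inequality. The key structural observation, valid for both parts, is that by Assumptions~\ref{a1} and \ref{a2}, conditional on $\{S_i, A_i\}_{i=1}^n$ the observed outcomes $\{Y_i : S_i = k,\, A_i = a\}$ are i.i.d.\ from $F_{[k]a}$, so the within-stratum empirical CDF $\hat{F}_{[k]a}(y) = n_{[k]a}^{-1}\sum_{i : S_i = k, A_i = a} I(Y_i \leq y)$ satisfies $\sup_y |\hat{F}_{[k]a}(y) - F_{[k]a}(y)| = O_P(n^{-1/2})$ by the Dvoretzky--Kiefer--Wolfowitz inequality (noting that $n_{[k]a} = \Theta_P(n)$ under Assumption~\ref{a3}).

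For part~(i), I would decompose the pooled empirical CDF as $\hat F_a(y) := n_a^{-1}\sum_{i:A_i = a} I(Y_i \leq y) = \sum_k (n_{[k]a}/n_a)\, \hat F_{[k]a}(y)$ and compare to $F_a(y) = \sum_k p_{[k]} F_{[k]a}(y)$. Under Assumption~\ref{a4}, $\pi_{n[k]} - \pi = O_P(n^{-1/2})$, and Assumption~\ref{a1} gives $p_{n[k]} - p_{[k]} = O_P(n^{-1/2})$ by the central limit theorem; together these yield $n_{[k]a}/n_a - p_{[k]} = O_P(n^{-1/2})$, so $\sup_y |\hat F_a(y) - F_a(y)| = O_P(n^{-1/2})$. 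The standard quantile linearization
\[
\hat q_a - q_a(1/2) = -\frac{\hat F_a(q_a(1/2)) - 1/2}{f_a(q_a(1/2))} + o_P(n^{-1/2}),
\]
valid under Assumption~\ref{assumption:median}, then gives $\sqrt{n}(\hat q_a - q_a(1/2)) = O_P(1)$ and hence $\sqrt{n}(\hat{\tau}_{\textnormal{md}} - \tau) = O_P(1)$.

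For part~(ii), the inverse-probability-weighted empirical CDF simplifies dramatically: for $a=1$,
\[
\tilde F_1(y) := n^{-1}\sum_{i=1}^n \frac{A_i}{\pi_{n[S_i]}} I(Y_i \leq y) = n^{-1}\sum_{k=1}^K \frac{n_{[k]1}}{\pi_{n[k]}} \hat F_{[k]1}(y) = \sum_{k=1}^K p_{n[k]} \hat F_{[k]1}(y),
\]
and analogously $\tilde F_0(y) = \sum_k p_{n[k]} \hat F_{[k]0}(y)$ for the control arm with weights $(1-\pi_{n[S_i]})^{-1}$. The inverse-probability weighting thus absorbs the randomness in the within-stratum treatment proportions, leaving only the stratum-size weights $p_{n[k]}$, which satisfy $p_{n[k]} - p_{[k]} = O_P(n^{-1/2})$ under Assumption~\ref{a1} alone. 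Hence $\sup_y |\tilde F_a(y) - F_a(y)| = O_P(n^{-1/2})$ under Assumptions~\ref{a1}--\ref{a3} (Assumption~\ref{a3} being used only to ensure $\pi_{n[k]}$ stays bounded away from $0$ and $1$ with probability tending to one), and the same Bahadur argument yields $\sqrt{n}(\hat{\tau}_{\textnormal{wt-md}} - \tau) = O_P(1)$.

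The main technical obstacle is justifying the Bahadur remainder in the presence of random stratum-level weights, which is not quite the textbook i.i.d.\ setting. This reduces to verifying stochastic equicontinuity of the process $y \mapsto \sqrt{n}[\hat F_a(y) - F_a(y)]$ (or its weighted analog) in a shrinking neighborhood of $q_a(1/2)$, and it follows by decomposing the process into $K$ within-stratum empirical processes that are each Donsker by the conditional i.i.d.\ structure, plus a drift term $\sum_k \{n_{[k]a}/n_a - p_{[k]}\} F_{[k]a}(y)$ (or $\sum_k \{p_{n[k]} - p_{[k]}\} F_{[k]a}(y)$) that is uniformly $O_P(n^{-1/2})$ and Lipschitz in $y$ near $q_a(1/2)$ by Assumption~\ref{assumption:median}(ii).
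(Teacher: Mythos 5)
Your argument is correct, but it takes a genuinely different route from the paper: the paper's entire proof of Proposition~\ref{prop:root-n-consistent} is a citation of \citet[][Theorems 3.1 and 3.2]{zhang2020quantile}, which establish weak convergence of quantile-treatment-effect estimators under covariate-adaptive randomization, whereas you build a self-contained argument from the conditional i.i.d.\ structure within strata. Your two key observations are sound: (a) conditionally on $\{S_i,A_i\}_{i=1}^n$ the within-cell outcomes are i.i.d.\ from $F_{[k]a}$, so Dvoretzky--Kiefer--Wolfowitz plus $n_{[k]a}=\Theta_P(n)$ gives the uniform $O_P(n^{-1/2})$ rate cell by cell, with Assumption~\ref{a4} supplying $n_{[k]a}/n_a-p_{[k]}=O_P(n^{-1/2})$ for the pooled CDF in part~(i); and (b) the identity $n_{[k]1}/\pi_{n[k]}=n_{[k]}$ makes the inverse-propensity weights collapse the weighted empirical CDF to $\sum_k p_{n[k]}\hat F_{[k]a}$, which is exactly why part~(ii) survives under the weaker Assumption~\ref{a3}---this mirrors the structural reason the paper's $\hat\tau_{\str}$ needs only Assumption~\ref{a3}. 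What your approach buys is transparency and independence from the external reference; what it costs is the technical burden of the Bahadur remainder, which you correctly flag. Note, though, that for the stated $O_P(1)$ conclusion the full linearization (and hence the stochastic equicontinuity argument) is unnecessary: once $\sup_y|\hat F_a(y)-F_a(y)|=O_P(n^{-1/2})$ is in hand, Assumption~\ref{assumption:median}(i) gives a density lower bound in a neighborhood of $q_a(1/2)$, so for large $C$ the event $\{\hat F_a(q_a(1/2)+Cn^{-1/2})>1/2>\hat F_a(q_a(1/2)-Cn^{-1/2})\}$ has probability tending to one and traps the sample median within $Cn^{-1/2}$ of $q_a(1/2)$ by monotonicity; the constant-quantile-effect identity $q_1(1/2)=q_0(1/2)+\tau$ then finishes the proof. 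This shortcut would let you drop the Donsker/equicontinuity discussion entirely.
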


Proposition~\ref{prop:root-n-consistent} is a direct result of \citet[][Theorem 3.1 and Theorem 3.2]{zhang2020quantile}, which indicates that the initial estimators $\hat{\tau}_{\textnormal{md}}$ and $\hat{\tau}_{\textnormal{wt-md}}$ are $\sqrt{n}$-consistent.

Let $\hat{\tau}_{\textnormal{str-md}}$ be the weighted average difference-in-medians estimator, computed as the weighted summation of the difference-in-medians estimator within each stratum, with weights being the proportions of the stratum sizes. In order to establish the $\sqrt{n}$-consistency of $\hat{\tau}_{\textnormal{str-md}}$, we require a milder Condition~\ref{a5} as outlined below.

\begin{assumption}
		\label{a5}
  For $(g,G)=(f_a,F_a)$ and $(g,G)= (f_{[k]a},F_{[k]a})$, where $a=0,1$ and $k=1,\ldots,K$, suppose that $g$ is twice continuously differentiable and
there exist constants $C_0>0$ and $\epsilon \in\left(0, 1/2 \right)$ such that 
\begin{itemize}
\item[(i)] $G(x)(1-G(x)) \left|g^{\prime}(x)\right|/ g^2(x) \leq C_0$ if $G(x) \in(0, \epsilon)$ or $G(x) \in(1-\epsilon, 1)$, and $C_0$ can be replaced by another constant if $G(x) \in[\epsilon, 1-\epsilon]$; 
\item[(ii)] $g^{\prime}(x) \geq 0$ for $x<G^{-1}(\epsilon)$ and $g^{\prime}(x) \leq 0$ for $x>G^{-1}(1-\epsilon)$;
\item[(iii)]
$g(G^{-1}(1/2)) > 0$.
\end{itemize}
\end{assumption}

Assumption \ref{a5} serves as a constraint on the outcome distribution. Assumption \ref{a5}(i)--(ii) were introduced by \cite{csorgo1978strong} to ensure the stationarity of the distribution as a quantile process. Assumption \ref{a5}(iii) mandates the positivity of the density at the population median, implying finite second moments for the sample median estimator. In practice, these conditions are widely applicable and can be met by many common distributions, such as the normal distribution and the $t$ distribution, including the heavy-tailed Cauchy distribution. To extend the results to covariate-adaptive randomization and mitigate local errors arising from strata, we strengthen the requirement for each stratum to hold (i.e., $g=f_{[k]0}$ and $G=F_{[k]0}$).

Let $\tau_{[k]}= F_{[k]1}^{-1}(1/2) - F_{[k]0}^{-1}(1/2)$ denote the median treatment effect in stratum $k$.

\begin{prop}
\label{prop:root-n-consistent-str}
Under Assumptions \ref{a1}--\ref{a3}, and \ref{a5}, if $\sumk \ps \tau_{[k]} = \tau$, then $\sqrt{n} (\hat{\tau}_{\textnormal{str-md}} - \tau) = O_P(1)$. 
\end{prop}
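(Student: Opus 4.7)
The plan is to split $\hat{\tau}_{\textnormal{str-md}} - \tau$ into a weight-fluctuation piece and a within-stratum estimation piece, then handle each using the i.i.d.\ structure that Assumption~\ref{a2} recovers inside each stratum-treatment cell. Letting $\hat{\tau}_{[k]}$ denote the stratum-$k$ difference-in-medians, the hypothesis $\tau = \sum_k p_{[k]}\tau_{[k]}$ gives
\begin{equation*}
\hat{\tau}_{\textnormal{str-md}} - \tau = \sum_{k=1}^K p_{n[k]}\bigl(\hat{\tau}_{[k]} - \tau_{[k]}\bigr) + \sum_{k=1}^K \bigl(p_{n[k]} - p_{[k]}\bigr)\tau_{[k]}.
\end{equation*}
Under Assumption~\ref{a1} the indicators $I_{S_i = k}$ are i.i.d.\ Bernoulli, so the usual CLT gives $p_{n[k]} - p_{[k]} = O_P(n^{-1/2})$; since $K$ is fixed and each $\tau_{[k]}$ is a bounded constant, the second sum is $O_P(n^{-1/2})$.

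For the first sum, I would argue that within stratum $k$ the difference-in-medians behaves like a two-sample median estimator on i.i.d.\ data. Assumptions~\ref{a1} and \ref{a2} together imply that conditional on $\{S_i\}_{i=1}^n$ and $\{A_i\}_{i=1}^n$, the observed outcomes in cell $(k,a)$ form an i.i.d.\ sample of size $n_{[k]a}$ from $F_{[k]a}$; moreover, the treated and control subsamples within stratum $k$ involve disjoint units and are therefore mutually independent. Applying the classical CLT for the sample median to each subsample---valid because Assumption~\ref{a5}(iii) guarantees $f_{[k]a}(F_{[k]a}^{-1}(1/2)) > 0$---yields
\begin{equation*}
\sqrt{n_{[k]a}}\bigl(\hat{q}_{[k]a}^{1/2} - q_{[k]a}^{1/2}\bigr) = O_P(1), \quad a = 0, 1,
\end{equation*}
where $\hat{q}_{[k]a}^{1/2}$ is the sample median of the $(k,a)$ cell and $q_{[k]a}^{1/2} = F_{[k]a}^{-1}(1/2)$. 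Assumptions~\ref{a1} and \ref{a3} imply $n_{[k]a}/n \cp p_{[k]} \pi^{a}(1-\pi)^{1-a} > 0$, so Slutsky upgrades this to $\sqrt{n}(\hat{q}_{[k]a}^{1/2} - q_{[k]a}^{1/2}) = O_P(1)$, and differencing across $a$ gives $\sqrt{n}(\hat{\tau}_{[k]} - \tau_{[k]}) = O_P(1)$ for each $k$.

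Combining the two pieces: since $K$ is fixed and $p_{n[k]} \leq 1$, the first sum is $\sum_k p_{n[k]} \cdot O_P(n^{-1/2}) = O_P(n^{-1/2})$, and together with the weight-fluctuation bound we conclude $\sqrt{n}(\hat{\tau}_{\textnormal{str-md}} - \tau) = O_P(1)$. The main obstacle is the conditioning step that reduces the within-stratum problem to i.i.d.\ median asymptotics: although $\{A_i\}$ are dependent across units under covariate-adaptive randomization, Assumption~\ref{a2} decouples potential outcomes from assignments conditional on strata, which is exactly what is needed to recover an i.i.d.\ sample in each $(k,a)$ cell irrespective of the dependence structure of $\{A_i\}$. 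A secondary technical point is that $n_{[k]a}$ is itself random; this is handled via a subsequence argument or by noting that tightness of the conditional distributions together with $n_{[k]a}/n$ being bounded away from zero in probability yields the unconditional $O_P(1)$ claim.
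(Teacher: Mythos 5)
Your proposal is correct and takes essentially the same route as the paper's proof: the same decomposition into a within-stratum difference-in-medians term (handled by conditioning on $(S^{(n)},A^{(n)})$, which by Assumption~\ref{a2} reduces each $(k,a)$ cell to an i.i.d.\ sample so that the two-sample median CLT applies, using the positivity of the density at the median from Assumption~\ref{a5}) plus a weight-fluctuation term controlled by $p_{n[k]}-p_{[k]}=O_P(n^{-1/2})$. The only cosmetic difference is that the paper invokes the simple-randomization result of \cite{athey2021semiparametric} for the conditional within-stratum asymptotics, whereas you derive the same fact directly from the classical sample-median CLT.
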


Proposition~\ref{prop:root-n-consistent-str} implies that the weighted average difference-in-medians estimator $\hat{\tau}_{\textnormal{str-md}}$ is also $\sqrt{n}$-consistent. It is interesting to observe that Proposition~\ref{prop:root-n-consistent-str} requires the weaker Assumption~\ref{a3} rather than the stronger Assumption~\ref{a4} needed for Proposition~\ref{prop:root-n-consistent}(i). Consequently, $\hat{\tau}_{\textnormal{str-md}}$ has wider applicability compared to  $\hat{\tau}_{\textnormal{md}}$. The proof of Proposition~\ref{prop:root-n-consistent-str} will be provided in Section~\ref{sec:proof-initial}.

\section{Additional simulation results}
\label{sec:add-simu}

\subsection{Unequal treatment probability}

In this section, we evaluate the finite sample performance of the proposed estimators with a treatment probability of $\pi=1/3$. The other settings remain similar to those in the main text, except that the biased-coin probability for Pocock and Simon's minimization is adjusted to 0.75. Tables \ref{tab:unequalpi-model1}--\ref{tab:unequalpi-model3} present the results. The conclusions are consistent with those in the case of equal treatment probability ($\pi=1/2$): we observe negligible bias and a significant reduction in standard deviation for $\hat{\tau}_{\dim}$ and $\hat{\tau}_{\str}$ compared to their initial estimators. Additionally, for the same model and tail, $\hat{\tau}_{\str}$ exhibits comparable standard deviations across all designs. These standard deviations are similar to those of $\hat{\tau}_{\dim}$ under stratified randomization and are smaller than those of $\hat{\tau}_{\dim}$ under simple randomization. A noteworthy phenomenon is that increasing the sample size used for density estimation improves the accuracy of both $\hat{\tau}_{\dim}$ and $\hat{\tau}_{\str}$. For example, compared to the equal treatment probability case in the main text, the standard deviations increase by 4.5\% for $\hat{\tau}_{\dim}$ under simple randomization across all models and tails. Theoretically, this increase should be  6.1\%, indicating that having more control group samples reduces the error in density estimation.

\begin{table}[htbp]
  \centering
  \caption{Simulation results under Model 1 for $n=1000$ and $\pi=1/3$}
  \begin{threeparttable}
    \resizebox{\textwidth}{!}{
    \begin{tabular}{rlrrrrrrrrrrrr}
    \toprule
          &       & \multicolumn{4}{c}{SR}        & \multicolumn{4}{c}{STR}       & \multicolumn{4}{c}{MIN} \\
\cmidrule{3-14}          &       & \multicolumn{1}{l}{Bias} & \multicolumn{1}{l}{SD} & \multicolumn{1}{l}{SE} & \multicolumn{1}{l}{CP} & \multicolumn{1}{l}{Bias} & \multicolumn{1}{l}{SD} & \multicolumn{1}{l}{SE} & \multicolumn{1}{l}{CP} & \multicolumn{1}{l}{Bias} & \multicolumn{1}{l}{SD} & \multicolumn{1}{l}{SE} & \multicolumn{1}{l}{CP} \\
    \midrule
    \multicolumn{1}{l}{$\tau$ = 0} &       &       &       &       &       &       &       &       &       &       &       &       &  \\
    \multicolumn{1}{l}{Normal tail} & $\hat{\tau}_{\textnormal{naive-dim}}$ & 0.002  & 0.087  & 0.088  & 0.947  & -0.003  & 0.070  & 0.073  & 0.955  & 0.004  & 0.083  & /     & / \\
          & $\hat{\tau}_{\textnormal{str-dim}}$ & 0.002  & 0.071  & 0.073  & 0.953  & -0.003  & 0.070  & 0.073  & 0.956  & 0.002  & 0.073  & 0.073  & 0.947  \\
          & $\hat{\tau}_{\textnormal{md}}$ & 0.003  & 0.113  & 0.126  & 0.973  & 0.000  & 0.099  & 0.113  & 0.972  & 0.007  & 0.108  & /     & / \\
          & $\hat{\tau}_{\textnormal{wt-md}}$ & 0.002  & 0.098  & 0.113  & 0.980  & 0.000  & 0.098  & 0.113  & 0.970  & 0.005  & 0.097  & 0.113  & 0.976  \\
          & $\hat{\tau}_{\dim}$ & 0.002  & 0.095  & 0.097  & 0.957  & -0.002  & 0.079  & 0.087  & 0.957  & 0.003  & 0.092  & /     & / \\
          & $\hat{\tau}_{\str}$ & 0.002  & 0.080  & 0.087  & 0.963  & -0.002  & 0.079  & 0.087  & 0.955  & 0.002  & 0.083  & 0.087  & 0.960  \\
          &       &       &       &       &       &       &       &       &       &       &       &       &  \\
    \multicolumn{1}{l}{Laplace tail} & $\hat{\tau}_{\textnormal{naive-dim}}$ & -0.004  & 0.113  & 0.111  & 0.951  & -0.004  & 0.098  & 0.099  & 0.956  & -0.001  & 0.106  & /     & / \\
          & $\hat{\tau}_{\textnormal{str-dim}}$ & -0.005  & 0.101  & 0.099  & 0.949  & -0.004  & 0.098  & 0.099  & 0.955  & -0.003  & 0.098  & 0.099  & 0.948  \\
          & $\hat{\tau}_{\textnormal{md}}$ & -0.005  & 0.125  & 0.128  & 0.953  & 0.000  & 0.108  & 0.117  & 0.966  & -0.001  & 0.121  & /     & / \\
          & $\hat{\tau}_{\textnormal{wt-md}}$ & -0.006  & 0.112  & 0.117  & 0.955  & 0.000  & 0.108  & 0.117  & 0.969  & -0.003  & 0.113  & 0.116  & 0.960  \\
          & $\hat{\tau}_{\dim}$ & -0.006  & 0.109  & 0.109  & 0.950  & -0.002  & 0.093  & 0.098  & 0.963  & -0.001  & 0.106  & /     & / \\
          & $\hat{\tau}_{\str}$ & -0.006  & 0.095  & 0.098  & 0.955  & -0.002  & 0.093  & 0.098  & 0.963  & -0.003  & 0.098  & 0.098  & 0.955  \\
          &       &       &       &       &       &       &       &       &       &       &       &       &  \\
    \multicolumn{1}{l}{Cauchy tail} & $\hat{\tau}_{\textnormal{naive-dim}}$ & 1.036  & 28.982  & 7.934  & 0.980  & -1.745  & 30.787  & 8.247  & 0.976  & 0.636  & 24.073  & /     & / \\
          & $\hat{\tau}_{\textnormal{str-dim}}$ & 1.053  & 29.509  & 7.919  & 0.982  & -1.747  & 30.851  & 8.247  & 0.976  & 0.632  & 24.325  & 7.758  & 0.979  \\
          & $\hat{\tau}_{\textnormal{md}}$ & -0.002  & 0.153  & 0.156  & 0.958  & 0.003  & 0.144  & 0.146  & 0.957  & 0.000  & 0.152  & /     & / \\
          & $\hat{\tau}_{\textnormal{wt-md}}$ & -0.002  & 0.143  & 0.146  & 0.950  & 0.003  & 0.144  & 0.146  & 0.957  & -0.002  & 0.147  & 0.146  & 0.945  \\
          & $\hat{\tau}_{\dim}$ & -0.002  & 0.131  & 0.132  & 0.950  & 0.004  & 0.126  & 0.123  & 0.939  & 0.001  & 0.131  & /     & / \\
          & $\hat{\tau}_{\str}$ & -0.002  & 0.120  & 0.122  & 0.953  & 0.004  & 0.126  & 0.123  & 0.941  & -0.001  & 0.126  & 0.122  & 0.945  \\
    \multicolumn{1}{l}{$\tau$ = 1} &       &       &       &       &       &       &       &       &       &       &       &       &  \\
    \multicolumn{1}{l}{Normal tail} & $\hat{\tau}_{\textnormal{naive-dim}}$ & -0.001  & 0.087  & 0.089  & 0.954  & -0.001  & 0.073  & 0.073  & 0.944  & 0.001  & 0.084  & /     & / \\
          & $\hat{\tau}_{\textnormal{str-dim}}$ & -0.001  & 0.073  & 0.073  & 0.947  & -0.002  & 0.073  & 0.073  & 0.946  & 0.000  & 0.071  & 0.073  & 0.955  \\
          & $\hat{\tau}_{\textnormal{md}}$ & 0.002  & 0.113  & 0.126  & 0.973  & -0.001  & 0.097  & 0.113  & 0.981  & -0.001  & 0.112  & /     & / \\
          & $\hat{\tau}_{\textnormal{wt-md}}$ & 0.002  & 0.101  & 0.113  & 0.964  & -0.002  & 0.097  & 0.113  & 0.980  & -0.002  & 0.100  & 0.113  & 0.978  \\
          & $\hat{\tau}_{\dim}$ & -0.001  & 0.096  & 0.098  & 0.953  & -0.001  & 0.083  & 0.087  & 0.957  & 0.002  & 0.094  & /     & / \\
          & $\hat{\tau}_{\str}$ & -0.002  & 0.082  & 0.087  & 0.962  & -0.001  & 0.083  & 0.087  & 0.958  & 0.001  & 0.084  & 0.087  & 0.958  \\
          &       &       &       &       &       &       &       &       &       &       &       &       &  \\
    \multicolumn{1}{l}{Laplace tail} & $\hat{\tau}_{\textnormal{naive-dim}}$ & 0.000  & 0.109  & 0.111  & 0.957  & -0.001  & 0.096  & 0.099  & 0.960  & 0.004  & 0.107  & /     & / \\
          & $\hat{\tau}_{\textnormal{str-dim}}$ & -0.001  & 0.100  & 0.099  & 0.953  & -0.001  & 0.096  & 0.099  & 0.963  & 0.004  & 0.097  & 0.099  & 0.951  \\
          & $\hat{\tau}_{\textnormal{md}}$ & 0.002  & 0.123  & 0.127  & 0.950  & -0.004  & 0.107  & 0.116  & 0.963  & 0.003  & 0.116  & /     & / \\
          & $\hat{\tau}_{\textnormal{wt-md}}$ & 0.000  & 0.110  & 0.116  & 0.955  & -0.004  & 0.107  & 0.116  & 0.961  & 0.002  & 0.107  & 0.116  & 0.970  \\
          & $\hat{\tau}_{\dim}$ & 0.002  & 0.110  & 0.109  & 0.951  & -0.003  & 0.093  & 0.098  & 0.958  & 0.004  & 0.103  & /     & / \\
          & $\hat{\tau}_{\str}$ & 0.001  & 0.097  & 0.098  & 0.946  & -0.003  & 0.093  & 0.098  & 0.958  & 0.004  & 0.092  & 0.098  & 0.962  \\
          &       &       &       &       &       &       &       &       &       &       &       &       &  \\
    \multicolumn{1}{l}{Cauchy tail} & $\hat{\tau}_{\textnormal{naive-dim}}$ & 1.661  & 36.639  & 8.216  & 0.980  & -1.423  & 31.413  & 7.947  & 0.976  & 0.469  & 32.971  & /     & / \\
          & $\hat{\tau}_{\textnormal{str-dim}}$ & 1.600  & 35.465  & 8.198  & 0.980  & -1.421  & 31.421  & 7.947  & 0.976  & 0.509  & 34.453  & 7.792  & 0.975  \\
          & $\hat{\tau}_{\textnormal{md}}$ & 0.007  & 0.156  & 0.155  & 0.941  & 0.004  & 0.147  & 0.146  & 0.938  & -0.003  & 0.152  & /     & / \\
          & $\hat{\tau}_{\textnormal{wt-md}}$ & 0.006  & 0.146  & 0.146  & 0.948  & 0.003  & 0.146  & 0.146  & 0.940  & -0.004  & 0.144  & 0.146  & 0.949  \\
          & $\hat{\tau}_{\dim}$ & 0.004  & 0.138  & 0.133  & 0.935  & 0.000  & 0.125  & 0.123  & 0.941  & 0.002  & 0.134  & /     & / \\
          & $\hat{\tau}_{\str}$ & 0.003  & 0.127  & 0.123  & 0.929  & 0.000  & 0.125  & 0.123  & 0.940  & 0.001  & 0.126  & 0.123  & 0.947  \\
    \bottomrule
    \end{tabular}%
    }
    \end{threeparttable}
  \label{tab:unequalpi-model1}
  \begin{tablenotes}
				\footnotesize
				\item[] Note: $\hat{\tau}_{\textnormal{naive-dim}}$, standard difference-in-means estimator; $\hat{\tau}_{\textnormal{str-dim}}$, stratified difference-in-means esti-\\mator; $\hat{\tau}_{\textnormal{md}}$, difference-in-medians estimator; $\hat{\tau}_{\textnormal{wt-md}}$, difference-in-weighted-medians estimator; $\hat{\tau}_{\dim}$,\\ transformed difference-in-means estimator; $\hat{\tau}_{\str}$, stratified transformed difference-in-means estimator;\\ SR, simple randomization; STR, stratified randomization; MIN, Pocock and Simon's minimization;\\ SD, standard deviation; SE, standard error; CP, coverage probability.
			\end{tablenotes}
\end{table}%

\begin{table}[htbp]
  \centering
  \caption{Simulation results under Model 2 for $n=1000$ and $\pi=1/3$}
  \begin{threeparttable}
    \resizebox{\textwidth}{!}{
    \begin{tabular}{rlrrrrrrrrrrrr}
    \toprule
          &       & \multicolumn{4}{c}{SR}        & \multicolumn{4}{c}{STR}       & \multicolumn{4}{c}{MIN} \\
\cmidrule{3-14}          &       & \multicolumn{1}{l}{Bias} & \multicolumn{1}{l}{SD} & \multicolumn{1}{l}{SE} & \multicolumn{1}{l}{CP} & \multicolumn{1}{l}{Bias} & \multicolumn{1}{l}{SD} & \multicolumn{1}{l}{SE} & \multicolumn{1}{l}{CP} & \multicolumn{1}{l}{Bias} & \multicolumn{1}{l}{SD} & \multicolumn{1}{l}{SE} & \multicolumn{1}{l}{CP} \\
    \midrule
    \multicolumn{1}{l}{$\tau$ = 0} &       &       &       &       &       &       &       &       &       &       &       &       &  \\
    \multicolumn{1}{l}{Normal tail} & $\hat{\tau}_{\textnormal{naive-dim}}$ & 0.001  & 0.089  & 0.088  & 0.951  & -0.002  & 0.072  & 0.074  & 0.952  & 0.005  & 0.079  & /     & / \\
          & $\hat{\tau}_{\textnormal{str-dim}}$ & 0.001  & 0.075  & 0.074  & 0.946  & -0.002  & 0.072  & 0.074  & 0.951  & 0.005  & 0.068  & 0.074  & 0.966  \\
          & $\hat{\tau}_{\textnormal{md}}$ & -0.001  & 0.106  & 0.116  & 0.967  & -0.001  & 0.094  & 0.107  & 0.968  & 0.006  & 0.097  & /     & / \\
          & $\hat{\tau}_{\textnormal{wt-md}}$ & -0.002  & 0.096  & 0.107  & 0.971  & 0.000  & 0.094  & 0.107  & 0.970  & 0.004  & 0.090  & 0.107  & 0.984  \\
          & $\hat{\tau}_{\dim}$ & 0.002  & 0.094  & 0.091  & 0.948  & 0.000  & 0.080  & 0.084  & 0.963  & 0.008  & 0.083  & /     & / \\
          & $\hat{\tau}_{\str}$ & 0.001  & 0.084  & 0.084  & 0.954  & 0.001  & 0.080  & 0.084  & 0.963  & 0.007  & 0.076  & 0.084  & 0.970  \\
          &       &       &       &       &       &       &       &       &       &       &       &       &  \\
    \multicolumn{1}{l}{Laplace tail} & $\hat{\tau}_{\textnormal{naive-dim}}$ & -0.008  & 0.112  & 0.110  & 0.944  & -0.004  & 0.098  & 0.100  & 0.949  & 0.004  & 0.106  & /     & / \\
          & $\hat{\tau}_{\textnormal{str-dim}}$ & -0.009  & 0.101  & 0.100  & 0.950  & -0.004  & 0.098  & 0.100  & 0.949  & 0.003  & 0.097  & 0.100  & 0.949  \\
          & $\hat{\tau}_{\textnormal{md}}$ & -0.008  & 0.113  & 0.114  & 0.951  & 0.000  & 0.105  & 0.105  & 0.953  & 0.003  & 0.109  & /     & / \\
          & $\hat{\tau}_{\textnormal{wt-md}}$ & -0.009  & 0.103  & 0.105  & 0.951  & 0.000  & 0.105  & 0.105  & 0.956  & 0.002  & 0.103  & 0.106  & 0.949  \\
          & $\hat{\tau}_{\dim}$ & -0.005  & 0.101  & 0.102  & 0.956  & 0.002  & 0.093  & 0.095  & 0.955  & 0.007  & 0.100  & /     & / \\
          & $\hat{\tau}_{\str}$ & -0.005  & 0.092  & 0.095  & 0.957  & 0.002  & 0.093  & 0.095  & 0.954  & 0.006  & 0.093  & 0.095  & 0.956  \\
          &       &       &       &       &       &       &       &       &       &       &       &       &  \\
    \multicolumn{1}{l}{Cauchy tail} & $\hat{\tau}_{\textnormal{naive-dim}}$ & 8.394  & 270.475  & 15.776  & 0.979  & -18.836  & 535.206  & 23.786  & 0.979  & 9.654  & 268.566  & /     & / \\
          & $\hat{\tau}_{\textnormal{str-dim}}$ & 8.977  & 288.486  & 15.755  & 0.980  & -18.904  & 537.352  & 23.786  & 0.978  & 10.390  & 287.253  & 16.386  & 0.982  \\
          & $\hat{\tau}_{\textnormal{md}}$ & 0.000  & 0.143  & 0.146  & 0.951  & -0.005  & 0.132  & 0.138  & 0.964  & 0.003  & 0.133  & /     & / \\
          & $\hat{\tau}_{\textnormal{wt-md}}$ & -0.001  & 0.134  & 0.139  & 0.959  & -0.005  & 0.132  & 0.138  & 0.964  & 0.002  & 0.126  & 0.139  & 0.973  \\
          & $\hat{\tau}_{\dim}$ & -0.002  & 0.133  & 0.130  & 0.949  & -0.008  & 0.118  & 0.122  & 0.957  & 0.000  & 0.124  & /     & / \\
          & $\hat{\tau}_{\str}$ & -0.002  & 0.124  & 0.122  & 0.950  & -0.007  & 0.118  & 0.122  & 0.958  & 0.000  & 0.117  & 0.122  & 0.952  \\
    \multicolumn{1}{l}{$\tau$ = 1} &       &       &       &       &       &       &       &       &       &       &       &       &  \\
    \multicolumn{1}{l}{Normal tail} & $\hat{\tau}_{\textnormal{naive-dim}}$ & 0.000  & 0.088  & 0.088  & 0.950  & 0.003  & 0.076  & 0.074  & 0.949  & 0.003  & 0.083  & /     & / \\
          & $\hat{\tau}_{\textnormal{str-dim}}$ & -0.001  & 0.075  & 0.074  & 0.944  & 0.003  & 0.076  & 0.074  & 0.950  & 0.003  & 0.072  & 0.074  & 0.955  \\
          & $\hat{\tau}_{\textnormal{md}}$ & -0.002  & 0.105  & 0.116  & 0.963  & 0.001  & 0.097  & 0.107  & 0.968  & 0.004  & 0.105  & /     & / \\
          & $\hat{\tau}_{\textnormal{wt-md}}$ & -0.003  & 0.095  & 0.107  & 0.971  & 0.001  & 0.097  & 0.107  & 0.967  & 0.004  & 0.098  & 0.107  & 0.970  \\
          & $\hat{\tau}_{\dim}$ & 0.000  & 0.092  & 0.090  & 0.949  & 0.001  & 0.081  & 0.084  & 0.953  & 0.004  & 0.088  & /     & / \\
          & $\hat{\tau}_{\str}$ & -0.001  & 0.083  & 0.084  & 0.955  & 0.001  & 0.081  & 0.084  & 0.951  & 0.003  & 0.082  & 0.084  & 0.951  \\
          &       &       &       &       &       &       &       &       &       &       &       &       &  \\
    \multicolumn{1}{l}{Laplace tail} & $\hat{\tau}_{\textnormal{naive-dim}}$ & 0.004  & 0.111  & 0.111  & 0.950  & 0.000  & 0.096  & 0.100  & 0.949  & 0.001  & 0.109  & /     & / \\
          & $\hat{\tau}_{\textnormal{str-dim}}$ & 0.004  & 0.101  & 0.100  & 0.949  & 0.000  & 0.096  & 0.100  & 0.949  & 0.001  & 0.098  & 0.100  & 0.954  \\
          & $\hat{\tau}_{\textnormal{md}}$ & 0.001  & 0.113  & 0.114  & 0.948  & 0.001  & 0.107  & 0.105  & 0.941  & 0.004  & 0.113  & /     & / \\
          & $\hat{\tau}_{\textnormal{wt-md}}$ & 0.001  & 0.102  & 0.105  & 0.954  & 0.001  & 0.107  & 0.105  & 0.946  & 0.004  & 0.104  & 0.105  & 0.950  \\
          & $\hat{\tau}_{\dim}$ & 0.003  & 0.105  & 0.102  & 0.947  & 0.000  & 0.093  & 0.095  & 0.945  & 0.004  & 0.103  & /     & / \\
          & $\hat{\tau}_{\str}$ & 0.002  & 0.095  & 0.095  & 0.940  & 0.000  & 0.093  & 0.095  & 0.942  & 0.003  & 0.095  & 0.095  & 0.948  \\
          &       &       &       &       &       &       &       &       &       &       &       &       &  \\
    \multicolumn{1}{l}{Cauchy tail} & $\hat{\tau}_{\textnormal{naive-dim}}$ & -1.328  & 50.021  & 9.697  & 0.983  & -0.551  & 40.249  & 8.842  & 0.984  & 0.665  & 32.721  & /     & / \\
          & $\hat{\tau}_{\textnormal{str-dim}}$ & -1.262  & 49.057  & 9.675  & 0.982  & -0.554  & 40.244  & 8.842  & 0.984  & 0.641  & 33.373  & 8.507  & 0.975  \\
          & $\hat{\tau}_{\textnormal{md}}$ & 0.001  & 0.137  & 0.146  & 0.960  & 0.006  & 0.141  & 0.139  & 0.939  & 0.001  & 0.146  & /     & / \\
          & $\hat{\tau}_{\textnormal{wt-md}}$ & 0.000  & 0.129  & 0.139  & 0.958  & 0.006  & 0.142  & 0.139  & 0.939  & 0.002  & 0.140  & 0.139  & 0.943  \\
          & $\hat{\tau}_{\dim}$ & 0.002  & 0.123  & 0.130  & 0.966  & 0.004  & 0.125  & 0.122  & 0.939  & -0.005  & 0.130  & /     & / \\
          & $\hat{\tau}_{\str}$ & 0.001  & 0.116  & 0.122  & 0.965  & 0.004  & 0.125  & 0.122  & 0.940  & -0.005  & 0.123  & 0.122  & 0.943  \\
    \bottomrule
    \end{tabular}%
    }
    \end{threeparttable}
  \label{tab:unequalpi-model2}
  \begin{tablenotes}
				\footnotesize
				\item[] Note: $\hat{\tau}_{\textnormal{naive-dim}}$, standard difference-in-means estimator; $\hat{\tau}_{\textnormal{str-dim}}$, stratified difference-in-means esti-\\mator; $\hat{\tau}_{\textnormal{md}}$, difference-in-medians estimator; $\hat{\tau}_{\textnormal{wt-md}}$, difference-in-weighted-medians estimator; $\hat{\tau}_{\dim}$,\\ transformed difference-in-means estimator; $\hat{\tau}_{\str}$, stratified transformed difference-in-means estimator;\\ SR, simple randomization; STR, stratified randomization; MIN, Pocock and Simon's minimization;\\ SD, standard deviation; SE, standard error; CP, coverage probability.
			\end{tablenotes}
\end{table}%

\begin{table}[htbp]
  \centering
  \caption{Simulation results under Model 3 for $n=1000$ and $\pi=1/3$}
  \begin{threeparttable}
    \resizebox{\textwidth}{!}{
    \begin{tabular}{rlrrrrrrrrrrrr}
    \toprule
          &       & \multicolumn{4}{c}{SR}        & \multicolumn{4}{c}{STR}       & \multicolumn{4}{c}{MIN} \\
\cmidrule{3-14}          &       & \multicolumn{1}{l}{Bias} & \multicolumn{1}{l}{SD} & \multicolumn{1}{l}{SE} & \multicolumn{1}{l}{CP} & \multicolumn{1}{l}{Bias} & \multicolumn{1}{l}{SD} & \multicolumn{1}{l}{SE} & \multicolumn{1}{l}{CP} & \multicolumn{1}{l}{Bias} & \multicolumn{1}{l}{SD} & \multicolumn{1}{l}{SE} & \multicolumn{1}{l}{CP} \\
    \midrule
    \multicolumn{1}{l}{$\tau$ = 0} &       &       &       &       &       &       &       &       &       &       &       &       &  \\
    \multicolumn{1}{l}{Normal tail} & $\hat{\tau}_{\textnormal{naive-dim}}$ & -0.003  & 0.088  & 0.090  & 0.953  & -0.004  & 0.072  & 0.077  & 0.964  & 0.007  & 0.090  & /     & / \\
          & $\hat{\tau}_{\textnormal{str-dim}}$ & -0.003  & 0.075  & 0.076  & 0.942  & -0.004  & 0.072  & 0.077  & 0.962  & 0.005  & 0.077  & 0.077  & 0.957  \\
          & $\hat{\tau}_{\textnormal{md}}$ & -0.004  & 0.109  & 0.123  & 0.973  & -0.003  & 0.093  & 0.112  & 0.978  & 0.008  & 0.109  & /     & / \\
          & $\hat{\tau}_{\textnormal{wt-md}}$ & -0.004  & 0.099  & 0.112  & 0.970  & -0.003  & 0.093  & 0.112  & 0.977  & 0.007  & 0.098  & 0.112  & 0.977  \\
          & $\hat{\tau}_{\dim}$ & -0.001  & 0.094  & 0.095  & 0.948  & -0.002  & 0.079  & 0.087  & 0.973  & 0.009  & 0.093  & /     & / \\
          & $\hat{\tau}_{\str}$ & -0.002  & 0.083  & 0.087  & 0.952  & -0.002  & 0.079  & 0.087  & 0.972  & 0.007  & 0.084  & 0.087  & 0.955  \\
          &       &       &       &       &       &       &       &       &       &       &       &       &  \\
    \multicolumn{1}{l}{Laplace tail} & $\hat{\tau}_{\textnormal{naive-dim}}$ & -0.001  & 0.113  & 0.112  & 0.950  & 0.001  & 0.099  & 0.101  & 0.953  & 0.004  & 0.109  & /     & / \\
          & $\hat{\tau}_{\textnormal{str-dim}}$ & 0.000  & 0.102  & 0.101  & 0.948  & 0.001  & 0.098  & 0.101  & 0.953  & 0.002  & 0.099  & 0.102  & 0.953  \\
          & $\hat{\tau}_{\textnormal{md}}$ & 0.004  & 0.120  & 0.120  & 0.944  & 0.002  & 0.106  & 0.111  & 0.953  & 0.006  & 0.113  & /     & / \\
          & $\hat{\tau}_{\textnormal{wt-md}}$ & 0.004  & 0.110  & 0.111  & 0.950  & 0.002  & 0.106  & 0.111  & 0.952  & 0.004  & 0.104  & 0.111  & 0.961  \\
          & $\hat{\tau}_{\dim}$ & 0.004  & 0.107  & 0.108  & 0.950  & 0.005  & 0.094  & 0.099  & 0.965  & 0.005  & 0.105  & /     & / \\
          & $\hat{\tau}_{\str}$ & 0.004  & 0.097  & 0.098  & 0.951  & 0.005  & 0.094  & 0.099  & 0.965  & 0.003  & 0.095  & 0.098  & 0.957  \\
          &       &       &       &       &       &       &       &       &       &       &       &       &  \\
    \multicolumn{1}{l}{Cauchy tail} & $\hat{\tau}_{\textnormal{naive-dim}}$ & 5.178  & 174.538  & 17.515  & 0.982  & 1.498  & 203.644  & 19.977  & 0.983  & 8.757  & 167.820  & /     & / \\
          & $\hat{\tau}_{\textnormal{str-dim}}$ & 5.195  & 171.669  & 17.484  & 0.979  & 1.521  & 203.289  & 19.977  & 0.983  & 8.507  & 163.560  & 16.408  & 0.975  \\
          & $\hat{\tau}_{\textnormal{md}}$ & 0.001  & 0.152  & 0.152  & 0.950  & -0.003  & 0.150  & 0.144  & 0.941  & 0.000  & 0.146  & /     & / \\
          & $\hat{\tau}_{\textnormal{wt-md}}$ & 0.002  & 0.143  & 0.145  & 0.955  & -0.003  & 0.150  & 0.144  & 0.942  & -0.001  & 0.137  & 0.145  & 0.960  \\
          & $\hat{\tau}_{\dim}$ & -0.001  & 0.136  & 0.133  & 0.940  & -0.003  & 0.131  & 0.124  & 0.937  & 0.001  & 0.134  & /     & / \\
          & $\hat{\tau}_{\str}$ & -0.001  & 0.125  & 0.124  & 0.947  & -0.003  & 0.131  & 0.124  & 0.937  & -0.001  & 0.123  & 0.124  & 0.953  \\
    \multicolumn{1}{l}{$\tau$ = 1} &       &       &       &       &       &       &       &       &       &       &       &       &  \\
    \multicolumn{1}{l}{Normal tail} & $\hat{\tau}_{\textnormal{naive-dim}}$ & -0.005  & 0.092  & 0.090  & 0.945  & -0.002  & 0.078  & 0.077  & 0.944  & -0.002  & 0.086  & /     & / \\
          & $\hat{\tau}_{\textnormal{str-dim}}$ & -0.003  & 0.077  & 0.077  & 0.946  & -0.002  & 0.078  & 0.077  & 0.947  & -0.002  & 0.074  & 0.076  & 0.955  \\
          & $\hat{\tau}_{\textnormal{md}}$ & -0.003  & 0.111  & 0.123  & 0.966  & -0.001  & 0.098  & 0.112  & 0.977  & -0.002  & 0.109  & /     & / \\
          & $\hat{\tau}_{\textnormal{wt-md}}$ & -0.001  & 0.098  & 0.112  & 0.976  & -0.001  & 0.098  & 0.112  & 0.976  & -0.002  & 0.100  & 0.112  & 0.974  \\
          & $\hat{\tau}_{\dim}$ & -0.005  & 0.097  & 0.095  & 0.945  & -0.001  & 0.083  & 0.087  & 0.963  & 0.001  & 0.092  & /     & / \\
          & $\hat{\tau}_{\str}$ & -0.003  & 0.085  & 0.087  & 0.955  & -0.001  & 0.083  & 0.087  & 0.963  & 0.000  & 0.083  & 0.087  & 0.957  \\
          &       &       &       &       &       &       &       &       &       &       &       &       &  \\
    \multicolumn{1}{l}{Laplace tail} & $\hat{\tau}_{\textnormal{naive-dim}}$ & -0.008  & 0.114  & 0.112  & 0.949  & 0.000  & 0.099  & 0.102  & 0.956  & 0.001  & 0.105  & /     & / \\
          & $\hat{\tau}_{\textnormal{str-dim}}$ & -0.006  & 0.103  & 0.102  & 0.948  & 0.000  & 0.099  & 0.102  & 0.955  & 0.001  & 0.098  & 0.102  & 0.954  \\
          & $\hat{\tau}_{\textnormal{md}}$ & -0.004  & 0.121  & 0.120  & 0.948  & 0.002  & 0.107  & 0.111  & 0.951  & 0.001  & 0.112  & /     & / \\
          & $\hat{\tau}_{\textnormal{wt-md}}$ & -0.001  & 0.108  & 0.111  & 0.954  & 0.002  & 0.108  & 0.111  & 0.949  & 0.002  & 0.105  & 0.111  & 0.962  \\
          & $\hat{\tau}_{\dim}$ & -0.002  & 0.112  & 0.108  & 0.947  & 0.003  & 0.096  & 0.099  & 0.960  & 0.006  & 0.103  & /     & / \\
          & $\hat{\tau}_{\str}$ & -0.001  & 0.099  & 0.098  & 0.955  & 0.003  & 0.096  & 0.099  & 0.960  & 0.006  & 0.096  & 0.099  & 0.951  \\
          &       &       &       &       &       &       &       &       &       &       &       &       &  \\
    \multicolumn{1}{l}{Cauchy tail} & $\hat{\tau}_{\textnormal{naive-dim}}$ & 1.596  & 31.809  & 7.780  & 0.978  & -0.697  & 50.088  & 8.542  & 0.967  & 1.153  & 32.051  & /     & / \\
          & $\hat{\tau}_{\textnormal{str-dim}}$ & 1.688  & 32.813  & 7.766  & 0.980  & -0.692  & 49.925  & 8.542  & 0.967  & 1.206  & 33.546  & 7.863  & 0.983  \\
          & $\hat{\tau}_{\textnormal{md}}$ & 0.001  & 0.154  & 0.152  & 0.941  & 0.000  & 0.143  & 0.145  & 0.955  & 0.004  & 0.151  & /     & / \\
          & $\hat{\tau}_{\textnormal{wt-md}}$ & 0.003  & 0.147  & 0.145  & 0.946  & 0.000  & 0.144  & 0.145  & 0.952  & 0.003  & 0.144  & 0.145  & 0.947  \\
          & $\hat{\tau}_{\dim}$ & -0.001  & 0.138  & 0.132  & 0.945  & -0.002  & 0.128  & 0.124  & 0.940  & -0.001  & 0.132  & /     & / \\
          & $\hat{\tau}_{\str}$ & 0.001  & 0.130  & 0.124  & 0.943  & -0.001  & 0.128  & 0.124  & 0.940  & -0.001  & 0.126  & 0.124  & 0.945  \\
    \bottomrule
    \end{tabular}%
    }
    \end{threeparttable}
  \label{tab:unequalpi-model3}
  \begin{tablenotes}
				\footnotesize
				\item[] Note: $\hat{\tau}_{\textnormal{naive-dim}}$, standard difference-in-means estimator; $\hat{\tau}_{\textnormal{str-dim}}$, stratified difference-in-means esti-\\mator; $\hat{\tau}_{\textnormal{md}}$, difference-in-medians estimator; $\hat{\tau}_{\textnormal{wt-md}}$, difference-in-weighted-medians estimator; $\hat{\tau}_{\dim}$,\\ transformed difference-in-means estimator; $\hat{\tau}_{\str}$, stratified transformed difference-in-means estimator;\\ SR, simple randomization; STR, stratified randomization; MIN, Pocock and Simon's minimization;\\ SD, standard deviation; SE, standard error; CP, coverage probability.
			\end{tablenotes}
\end{table}%

\subsection{Smaller sample size}

Tables \ref{tab:smallsample-model1}-\ref{tab:smallsample-model3} present the results for a sample size of $n=500$. The other settings remain identical to those in the main text. We find that $\hat{\tau}_{\dim}$ and $\hat{\tau}_{\str}$ exhibit similar performance to the case when $n=1000$, except for greater variability,  which is attributed to the precision of the density estimation. Density estimation typically requires a considerably large sample size. In practice, when the sample size is insufficient, we can leverage domain knowledge or auxiliary data to assist in estimating the density. For example, when testing the effectiveness of a new drug relative to an existing one, even if the number of patients enrolled in the randomized trial is limited, we can still use data from patients who have been taking the existing drug to estimate the distribution of relevant physiological indicators for the control group. If additional information is unavailable, we may need to estimate the density functions separately using different hyperparameters (e.g., the bandwidth) when constructing the point estimator and the variance estimator. Specifically, for estimating the treatment effect, we adopt more aggressive hyperparameter settings (more localized bandwidth) to capture the details of each sample point; for estimating the variance, we adopt more conservative hyperparameter settings (more global bandwidth) to ensure the coverage rate of the confidence intervals.

\begin{table}[htbp]
  \centering
  \caption{Simulation results under Model 1 for $n=500$ and $\pi=1/2$}
  \begin{threeparttable}
    \resizebox{\textwidth}{!}{
    \begin{tabular}{rlrrrrrrrrrrrr}
    \toprule
          &       & \multicolumn{4}{c}{SR}        & \multicolumn{4}{c}{STR}       & \multicolumn{4}{c}{MIN} \\
\cmidrule{3-14}          &       & \multicolumn{1}{l}{Bias} & \multicolumn{1}{l}{SD} & \multicolumn{1}{l}{SE} & \multicolumn{1}{l}{CP} & \multicolumn{1}{l}{Bias} & \multicolumn{1}{l}{SD} & \multicolumn{1}{l}{SE} & \multicolumn{1}{l}{CP} & \multicolumn{1}{l}{Bias} & \multicolumn{1}{l}{SD} & \multicolumn{1}{l}{SE} & \multicolumn{1}{l}{CP} \\
    \midrule
    \multicolumn{1}{l}{$\tau$ = 0} &       &       &       &       &       &       &       &       &       &       &       &       &  \\
    \multicolumn{1}{l}{Normal tail} & $\hat{\tau}_{\textnormal{naive-dim}}$ & -0.006  & 0.119  & 0.118  & 0.953  & 0.000  & 0.099  & 0.097  & 0.942  & 0.004  & 0.109  & /     & / \\
          & $\hat{\tau}_{\textnormal{str-dim}}$ & -0.003  & 0.097  & 0.097  & 0.953  & 0.000  & 0.099  & 0.097  & 0.941  & 0.002  & 0.093  & 0.097  & 0.963  \\
          & $\hat{\tau}_{\textnormal{md}}$ & -0.004  & 0.151  & 0.165  & 0.960  & -0.001  & 0.128  & 0.148  & 0.974  & 0.007  & 0.141  & /     & / \\
          & $\hat{\tau}_{\textnormal{wt-md}}$ & 0.001  & 0.131  & 0.148  & 0.974  & -0.001  & 0.129  & 0.148  & 0.974  & 0.005  & 0.128  & 0.148  & 0.972  \\
          & $\hat{\tau}_{\dim}$ & -0.004  & 0.132  & 0.127  & 0.924  & -0.001  & 0.110  & 0.114  & 0.952  & 0.006  & 0.122  & /     & / \\
          & $\hat{\tau}_{\str}$ & -0.002  & 0.111  & 0.114  & 0.956  & -0.001  & 0.110  & 0.114  & 0.950  & 0.004  & 0.109  & 0.114  & 0.958  \\
          &       &       &       &       &       &       &       &       &       &       &       &       &  \\
    \multicolumn{1}{l}{Laplace tail} & $\hat{\tau}_{\textnormal{naive-dim}}$ & -0.003  & 0.144  & 0.148  & 0.957  & -0.002  & 0.131  & 0.132  & 0.956  & 0.002  & 0.144  & /     & / \\
          & $\hat{\tau}_{\textnormal{str-dim}}$ & 0.000  & 0.127  & 0.132  & 0.962  & -0.002  & 0.131  & 0.132  & 0.957  & 0.001  & 0.134  & 0.132  & 0.949  \\
          & $\hat{\tau}_{\textnormal{md}}$ & 0.000  & 0.159  & 0.167  & 0.954  & -0.003  & 0.147  & 0.151  & 0.950  & 0.006  & 0.160  & /     & / \\
          & $\hat{\tau}_{\textnormal{wt-md}}$ & 0.004  & 0.142  & 0.152  & 0.967  & -0.003  & 0.147  & 0.151  & 0.950  & 0.005  & 0.150  & 0.151  & 0.943  \\
          & $\hat{\tau}_{\dim}$ & -0.005  & 0.145  & 0.145  & 0.952  & -0.004  & 0.131  & 0.130  & 0.948  & 0.008  & 0.141  & /     & / \\
          & $\hat{\tau}_{\str}$ & -0.002  & 0.127  & 0.130  & 0.962  & -0.004  & 0.131  & 0.130  & 0.949  & 0.007  & 0.129  & 0.131  & 0.958  \\
          &       &       &       &       &       &       &       &       &       &       &       &       &  \\
    \multicolumn{1}{l}{Cauchy tail} & $\hat{\tau}_{\textnormal{naive-dim}}$ & -6.532  & 221.631  & 13.877  & 0.984  & 6.983  & 214.727  & 13.966  & 0.976  & -7.462  & 223.691  & /     & / \\
          & $\hat{\tau}_{\textnormal{str-dim}}$ & -5.988  & 204.184  & 13.834  & 0.980  & 6.953  & 213.873  & 13.966  & 0.976  & -7.294  & 224.395  & 14.077  & 0.976  \\
          & $\hat{\tau}_{\textnormal{md}}$ & 0.006  & 0.207  & 0.216  & 0.959  & 0.002  & 0.201  & 0.204  & 0.948  & 0.004  & 0.206  & /     & / \\
          & $\hat{\tau}_{\textnormal{wt-md}}$ & 0.011  & 0.195  & 0.204  & 0.960  & 0.002  & 0.202  & 0.204  & 0.946  & 0.003  & 0.199  & 0.204  & 0.951  \\
          & $\hat{\tau}_{\dim}$ & -0.001  & 0.192  & 0.199  & 0.954  & 0.005  & 0.178  & 0.182  & 0.950  & 0.011  & 0.186  & /     & / \\
          & $\hat{\tau}_{\str}$ & 0.003  & 0.181  & 0.185  & 0.955  & 0.005  & 0.178  & 0.182  & 0.950  & 0.009  & 0.176  & 0.183  & 0.959  \\
    \multicolumn{1}{l}{$\tau$ = 1} &       &       &       &       &       &       &       &       &       &       &       &       &  \\
    \multicolumn{1}{l}{Normal tail} & $\hat{\tau}_{\textnormal{naive-dim}}$ & 0.000  & 0.120  & 0.118  & 0.939  & -0.005  & 0.094  & 0.097  & 0.955  & 0.005  & 0.109  & /     & / \\
          & $\hat{\tau}_{\textnormal{str-dim}}$ & 0.000  & 0.099  & 0.097  & 0.943  & -0.005  & 0.094  & 0.097  & 0.955  & 0.003  & 0.093  & 0.097  & 0.965  \\
          & $\hat{\tau}_{\textnormal{md}}$ & 0.002  & 0.152  & 0.166  & 0.962  & -0.009  & 0.129  & 0.148  & 0.974  & 0.005  & 0.148  & /     & / \\
          & $\hat{\tau}_{\textnormal{wt-md}}$ & 0.003  & 0.134  & 0.149  & 0.968  & -0.009  & 0.129  & 0.148  & 0.975  & 0.005  & 0.135  & 0.149  & 0.964  \\
          & $\hat{\tau}_{\dim}$ & 0.000  & 0.133  & 0.127  & 0.935  & -0.008  & 0.105  & 0.114  & 0.966  & 0.005  & 0.121  & /     & / \\
          & $\hat{\tau}_{\str}$ & 0.000  & 0.114  & 0.114  & 0.949  & -0.008  & 0.105  & 0.114  & 0.965  & 0.004  & 0.108  & 0.114  & 0.956  \\
          &       &       &       &       &       &       &       &       &       &       &       &       &  \\
    \multicolumn{1}{l}{Laplace tail} & $\hat{\tau}_{\textnormal{naive-dim}}$ & 0.004  & 0.146  & 0.148  & 0.959  & 0.000  & 0.135  & 0.131  & 0.949  & -0.002  & 0.137  & /     & / \\
          & $\hat{\tau}_{\textnormal{str-dim}}$ & 0.005  & 0.130  & 0.131  & 0.956  & 0.000  & 0.135  & 0.131  & 0.945  & -0.004  & 0.125  & 0.131  & 0.955  \\
          & $\hat{\tau}_{\textnormal{md}}$ & 0.002  & 0.167  & 0.164  & 0.926  & -0.001  & 0.147  & 0.149  & 0.949  & 0.000  & 0.155  & /     & / \\
          & $\hat{\tau}_{\textnormal{wt-md}}$ & 0.002  & 0.149  & 0.149  & 0.937  & -0.001  & 0.147  & 0.149  & 0.948  & -0.002  & 0.141  & 0.150  & 0.954  \\
          & $\hat{\tau}_{\dim}$ & 0.001  & 0.148  & 0.145  & 0.945  & -0.001  & 0.131  & 0.130  & 0.953  & -0.005  & 0.136  & /     & / \\
          & $\hat{\tau}_{\str}$ & 0.002  & 0.132  & 0.130  & 0.938  & -0.001  & 0.131  & 0.130  & 0.950  & -0.006  & 0.123  & 0.131  & 0.959  \\
          &       &       &       &       &       &       &       &       &       &       &       &       &  \\
    \multicolumn{1}{l}{Cauchy tail} & $\hat{\tau}_{\textnormal{naive-dim}}$ & 2.273  & 71.961  & 11.065  & 0.974  & 1.112  & 73.012  & 11.102  & 0.973  & -1.240  & 72.535  & /     & / \\
          & $\hat{\tau}_{\textnormal{str-dim}}$ & 2.363  & 71.992  & 11.031  & 0.978  & 1.118  & 73.000  & 11.102  & 0.973  & -1.344  & 71.392  & 11.064  & 0.982  \\
          & $\hat{\tau}_{\textnormal{md}}$ & 0.009  & 0.210  & 0.216  & 0.955  & -0.002  & 0.187  & 0.203  & 0.970  & -0.004  & 0.198  & /     & / \\
          & $\hat{\tau}_{\textnormal{wt-md}}$ & 0.008  & 0.198  & 0.203  & 0.955  & -0.002  & 0.187  & 0.203  & 0.969  & -0.005  & 0.189  & 0.202  & 0.968  \\
          & $\hat{\tau}_{\dim}$ & 0.008  & 0.190  & 0.199  & 0.961  & -0.001  & 0.176  & 0.181  & 0.959  & -0.003  & 0.186  & /     & / \\
          & $\hat{\tau}_{\str}$ & 0.008  & 0.177  & 0.184  & 0.958  & -0.001  & 0.176  & 0.181  & 0.957  & -0.005  & 0.177  & 0.181  & 0.957  \\
    \bottomrule
    \end{tabular}%
    }
    \end{threeparttable}
  \label{tab:smallsample-model1}
  \begin{tablenotes}
				\footnotesize
				\item[] Note: $\hat{\tau}_{\textnormal{naive-dim}}$, standard difference-in-means estimator; $\hat{\tau}_{\textnormal{str-dim}}$, stratified difference-in-means esti-\\mator; $\hat{\tau}_{\textnormal{md}}$, difference-in-medians estimator; $\hat{\tau}_{\textnormal{wt-md}}$, difference-in-weighted-medians estimator; $\hat{\tau}_{\dim}$,\\ transformed difference-in-means estimator; $\hat{\tau}_{\str}$, stratified transformed difference-in-means estimator;\\ SR, simple randomization; STR, stratified randomization; MIN, Pocock and Simon's minimization;\\ SD, standard deviation; SE, standard error; CP, coverage probability.
			\end{tablenotes}
\end{table}%

\begin{table}[htbp]
  \centering
  \caption{Simulation results under Model 2 for $n=500$ and $\pi=1/2$}
  \begin{threeparttable}
    \resizebox{\textwidth}{!}{
    \begin{tabular}{rlrrrrrrrrrrrr}
    \toprule
          &       & \multicolumn{4}{c}{SR}        & \multicolumn{4}{c}{STR}       & \multicolumn{4}{c}{MIN} \\
\cmidrule{3-14}          &       & \multicolumn{1}{l}{Bias} & \multicolumn{1}{l}{SD} & \multicolumn{1}{l}{SE} & \multicolumn{1}{l}{CP} & \multicolumn{1}{l}{Bias} & \multicolumn{1}{l}{SD} & \multicolumn{1}{l}{SE} & \multicolumn{1}{l}{CP} & \multicolumn{1}{l}{Bias} & \multicolumn{1}{l}{SD} & \multicolumn{1}{l}{SE} & \multicolumn{1}{l}{CP} \\
    \midrule
    \multicolumn{1}{l}{$\tau$ = 0} &       &       &       &       &       &       &       &       &       &       &       &       &  \\
    \multicolumn{1}{l}{Normal tail} & $\hat{\tau}_{\textnormal{naive-dim}}$ & 0.000  & 0.118  & 0.117  & 0.954  & 0.004  & 0.099  & 0.099  & 0.951  & 0.001  & 0.111  & /     & / \\
          & $\hat{\tau}_{\textnormal{str-dim}}$ & 0.000  & 0.100  & 0.099  & 0.950  & 0.004  & 0.099  & 0.099  & 0.951  & 0.001  & 0.098  & 0.099  & 0.949  \\
          & $\hat{\tau}_{\textnormal{md}}$ & -0.003  & 0.146  & 0.152  & 0.953  & 0.010  & 0.131  & 0.139  & 0.960  & -0.001  & 0.137  & /     & / \\
          & $\hat{\tau}_{\textnormal{wt-md}}$ & -0.004  & 0.132  & 0.139  & 0.961  & 0.009  & 0.131  & 0.139  & 0.959  & -0.001  & 0.129  & 0.139  & 0.963  \\
          & $\hat{\tau}_{\dim}$ & -0.001  & 0.125  & 0.118  & 0.931  & 0.010  & 0.112  & 0.110  & 0.942  & 0.001  & 0.117  & /     & / \\
          & $\hat{\tau}_{\str}$ & -0.001  & 0.112  & 0.110  & 0.946  & 0.011  & 0.112  & 0.110  & 0.942  & 0.000  & 0.110  & 0.110  & 0.938  \\
          &       &       &       &       &       &       &       &       &       &       &       &       &  \\
    \multicolumn{1}{l}{Laplace tail} & $\hat{\tau}_{\textnormal{naive-dim}}$ & 0.002  & 0.146  & 0.147  & 0.947  & 0.002  & 0.133  & 0.133  & 0.948  & 0.004  & 0.144  & /     & / \\
          & $\hat{\tau}_{\textnormal{str-dim}}$ & 0.002  & 0.132  & 0.133  & 0.950  & 0.002  & 0.133  & 0.133  & 0.949  & 0.004  & 0.132  & 0.133  & 0.947  \\
          & $\hat{\tau}_{\textnormal{md}}$ & -0.002  & 0.153  & 0.146  & 0.930  & 0.000  & 0.139  & 0.135  & 0.934  & 0.006  & 0.147  & /     & / \\
          & $\hat{\tau}_{\textnormal{wt-md}}$ & -0.002  & 0.138  & 0.135  & 0.943  & 0.000  & 0.140  & 0.135  & 0.932  & 0.005  & 0.137  & 0.135  & 0.932  \\
          & $\hat{\tau}_{\dim}$ & 0.001  & 0.138  & 0.136  & 0.945  & 0.001  & 0.125  & 0.126  & 0.949  & 0.009  & 0.133  & /     & / \\
          & $\hat{\tau}_{\str}$ & 0.001  & 0.125  & 0.127  & 0.953  & 0.001  & 0.125  & 0.126  & 0.947  & 0.008  & 0.125  & 0.127  & 0.958  \\
          &       &       &       &       &       &       &       &       &       &       &       &       &  \\
    \multicolumn{1}{l}{Cauchy tail} & $\hat{\tau}_{\textnormal{naive-dim}}$ & 0.133  & 95.123  & 13.143  & 0.980  & -3.813  & 96.727  & 13.181  & 0.975  & -5.432  & 97.120  & /     & / \\
          & $\hat{\tau}_{\textnormal{str-dim}}$ & 0.069  & 91.833  & 13.105  & 0.979  & -3.789  & 96.562  & 13.181  & 0.975  & -5.710  & 99.134  & 13.248  & 0.984  \\
          & $\hat{\tau}_{\textnormal{md}}$ & 0.000  & 0.194  & 0.205  & 0.954  & -0.011  & 0.185  & 0.196  & 0.958  & -0.002  & 0.189  & /     & / \\
          & $\hat{\tau}_{\textnormal{wt-md}}$ & 0.000  & 0.185  & 0.195  & 0.958  & -0.011  & 0.185  & 0.196  & 0.958  & -0.002  & 0.181  & 0.196  & 0.966  \\
          & $\hat{\tau}_{\dim}$ & 0.000  & 0.189  & 0.194  & 0.954  & -0.012  & 0.178  & 0.179  & 0.948  & -0.005  & 0.177  & /     & / \\
          & $\hat{\tau}_{\str}$ & -0.001  & 0.178  & 0.183  & 0.954  & -0.012  & 0.178  & 0.179  & 0.947  & -0.006  & 0.169  & 0.180  & 0.966  \\
    \multicolumn{1}{l}{$\tau$ = 1} &       &       &       &       &       &       &       &       &       &       &       &       &  \\
    \multicolumn{1}{l}{Normal tail} & $\hat{\tau}_{\textnormal{naive-dim}}$ & 0.005  & 0.113  & 0.117  & 0.956  & 0.009  & 0.102  & 0.099  & 0.942  & -0.003  & 0.109  & /     & / \\
          & $\hat{\tau}_{\textnormal{str-dim}}$ & 0.007  & 0.098  & 0.099  & 0.961  & 0.009  & 0.102  & 0.099  & 0.944  & -0.005  & 0.094  & 0.099  & 0.957  \\
          & $\hat{\tau}_{\textnormal{md}}$ & 0.003  & 0.135  & 0.152  & 0.968  & 0.008  & 0.129  & 0.139  & 0.965  & 0.003  & 0.133  & /     & / \\
          & $\hat{\tau}_{\textnormal{wt-md}}$ & 0.005  & 0.124  & 0.139  & 0.972  & 0.008  & 0.130  & 0.139  & 0.966  & 0.002  & 0.122  & 0.140  & 0.971  \\
          & $\hat{\tau}_{\dim}$ & 0.007  & 0.120  & 0.118  & 0.946  & 0.010  & 0.113  & 0.110  & 0.948  & 0.000  & 0.116  & /     & / \\
          & $\hat{\tau}_{\str}$ & 0.007  & 0.108  & 0.110  & 0.958  & 0.011  & 0.113  & 0.110  & 0.947  & -0.002  & 0.107  & 0.110  & 0.950  \\
          &       &       &       &       &       &       &       &       &       &       &       &       &  \\
    \multicolumn{1}{l}{Laplace tail} & $\hat{\tau}_{\textnormal{naive-dim}}$ & 0.000  & 0.148  & 0.147  & 0.938  & -0.003  & 0.133  & 0.133  & 0.955  & 0.002  & 0.137  & /     & / \\
          & $\hat{\tau}_{\textnormal{str-dim}}$ & 0.001  & 0.134  & 0.133  & 0.947  & -0.003  & 0.133  & 0.133  & 0.954  & 0.001  & 0.128  & 0.133  & 0.957  \\
          & $\hat{\tau}_{\textnormal{md}}$ & -0.007  & 0.157  & 0.146  & 0.922  & -0.001  & 0.146  & 0.135  & 0.915  & 0.000  & 0.142  & /     & / \\
          & $\hat{\tau}_{\textnormal{wt-md}}$ & -0.005  & 0.144  & 0.135  & 0.932  & -0.001  & 0.146  & 0.135  & 0.917  & -0.001  & 0.134  & 0.135  & 0.947  \\
          & $\hat{\tau}_{\dim}$ & -0.003  & 0.142  & 0.137  & 0.933  & 0.001  & 0.129  & 0.127  & 0.947  & 0.004  & 0.126  & /     & / \\
          & $\hat{\tau}_{\str}$ & -0.002  & 0.129  & 0.127  & 0.934  & 0.001  & 0.129  & 0.127  & 0.949  & 0.003  & 0.120  & 0.127  & 0.962  \\
          &       &       &       &       &       &       &       &       &       &       &       &       &  \\
    \multicolumn{1}{l}{Cauchy tail} & $\hat{\tau}_{\textnormal{naive-dim}}$ & -0.288  & 21.346  & 6.814  & 0.968  & -1.178  & 20.716  & 6.827  & 0.976  & -0.974  & 20.829  & /     & / \\
          & $\hat{\tau}_{\textnormal{str-dim}}$ & -0.307  & 21.053  & 6.793  & 0.966  & -1.177  & 20.676  & 6.827  & 0.976  & -0.971  & 20.541  & 6.805  & 0.972  \\
          & $\hat{\tau}_{\textnormal{md}}$ & -0.015  & 0.189  & 0.207  & 0.966  & 0.006  & 0.187  & 0.195  & 0.958  & 0.005  & 0.187  & /     & / \\
          & $\hat{\tau}_{\textnormal{wt-md}}$ & -0.012  & 0.181  & 0.196  & 0.966  & 0.006  & 0.187  & 0.195  & 0.960  & 0.004  & 0.179  & 0.196  & 0.962  \\
          & $\hat{\tau}_{\dim}$ & -0.013  & 0.185  & 0.195  & 0.963  & 0.003  & 0.172  & 0.179  & 0.955  & -0.003  & 0.181  & /     & / \\
          & $\hat{\tau}_{\str}$ & -0.011  & 0.177  & 0.184  & 0.959  & 0.004  & 0.172  & 0.179  & 0.957  & -0.005  & 0.174  & 0.181  & 0.953  \\
    \bottomrule
    \end{tabular}%
    }
    \end{threeparttable}
  \label{tab:smallsample-model2}
  \begin{tablenotes}
				\footnotesize
				\item[] Note: $\hat{\tau}_{\textnormal{naive-dim}}$, standard difference-in-means estimator; $\hat{\tau}_{\textnormal{str-dim}}$, stratified difference-in-means esti-\\mator; $\hat{\tau}_{\textnormal{md}}$, difference-in-medians estimator; $\hat{\tau}_{\textnormal{wt-md}}$, difference-in-weighted-medians estimator; $\hat{\tau}_{\dim}$,\\ transformed difference-in-means estimator; $\hat{\tau}_{\str}$, stratified transformed difference-in-means estimator;\\ SR, simple randomization; STR, stratified randomization; MIN, Pocock and Simon's minimization;\\ SD, standard deviation; SE, standard error; CP, coverage probability.
			\end{tablenotes}
\end{table}%

\begin{table}[htbp]
  \centering
  \caption{Simulation results under Model 3 for $n=500$ and $\pi=1/2$}
  \begin{threeparttable}
    \resizebox{\textwidth}{!}{
    \begin{tabular}{rlrrrrrrrrrrrr}
    \toprule
          &       & \multicolumn{4}{c}{SR}        & \multicolumn{4}{c}{STR}       & \multicolumn{4}{c}{MIN} \\
\cmidrule{3-14}          &       & \multicolumn{1}{l}{Bias} & \multicolumn{1}{l}{SD} & \multicolumn{1}{l}{SE} & \multicolumn{1}{l}{CP} & \multicolumn{1}{l}{Bias} & \multicolumn{1}{l}{SD} & \multicolumn{1}{l}{SE} & \multicolumn{1}{l}{CP} & \multicolumn{1}{l}{Bias} & \multicolumn{1}{l}{SD} & \multicolumn{1}{l}{SE} & \multicolumn{1}{l}{CP} \\
    \midrule
    \multicolumn{1}{l}{$\tau$ = 0} &       &       &       &       &       &       &       &       &       &       &       &       &  \\
    \multicolumn{1}{l}{Normal tail} & $\hat{\tau}_{\textnormal{naive-dim}}$ & 0.000  & 0.126  & 0.120  & 0.940  & -0.006  & 0.100  & 0.102  & 0.957  & 0.003  & 0.115  & /     & / \\
          & $\hat{\tau}_{\textnormal{str-dim}}$ & -0.001  & 0.105  & 0.102  & 0.946  & -0.006  & 0.100  & 0.102  & 0.957  & 0.000  & 0.101  & 0.102  & 0.956  \\
          & $\hat{\tau}_{\textnormal{md}}$ & -0.002  & 0.154  & 0.161  & 0.955  & -0.005  & 0.135  & 0.145  & 0.964  & 0.002  & 0.144  & /     & / \\
          & $\hat{\tau}_{\textnormal{wt-md}}$ & -0.004  & 0.136  & 0.147  & 0.964  & -0.005  & 0.136  & 0.145  & 0.964  & 0.000  & 0.134  & 0.146  & 0.965  \\
          & $\hat{\tau}_{\dim}$ & 0.002  & 0.137  & 0.125  & 0.925  & -0.002  & 0.111  & 0.114  & 0.951  & 0.006  & 0.121  & /     & / \\
          & $\hat{\tau}_{\str}$ & 0.001  & 0.119  & 0.114  & 0.939  & -0.002  & 0.111  & 0.114  & 0.950  & 0.004  & 0.112  & 0.114  & 0.956  \\
          &       &       &       &       &       &       &       &       &       &       &       &       &  \\
    \multicolumn{1}{l}{Laplace tail} & $\hat{\tau}_{\textnormal{naive-dim}}$ & -0.006  & 0.148  & 0.149  & 0.942  & 0.001  & 0.136  & 0.135  & 0.942  & 0.006  & 0.141  & /     & / \\
          & $\hat{\tau}_{\textnormal{str-dim}}$ & -0.007  & 0.133  & 0.135  & 0.942  & 0.001  & 0.136  & 0.135  & 0.944  & 0.003  & 0.129  & 0.135  & 0.967  \\
          & $\hat{\tau}_{\textnormal{md}}$ & 0.002  & 0.160  & 0.156  & 0.941  & 0.002  & 0.140  & 0.143  & 0.945  & 0.000  & 0.148  & /     & / \\
          & $\hat{\tau}_{\textnormal{wt-md}}$ & 0.002  & 0.144  & 0.144  & 0.954  & 0.002  & 0.140  & 0.143  & 0.945  & -0.002  & 0.137  & 0.144  & 0.953  \\
          & $\hat{\tau}_{\dim}$ & 0.000  & 0.145  & 0.143  & 0.953  & -0.001  & 0.127  & 0.131  & 0.945  & 0.006  & 0.132  & /     & / \\
          & $\hat{\tau}_{\str}$ & -0.001  & 0.129  & 0.131  & 0.954  & -0.001  & 0.127  & 0.131  & 0.945  & 0.004  & 0.121  & 0.131  & 0.965  \\
          &       &       &       &       &       &       &       &       &       &       &       &       &  \\
    \multicolumn{1}{l}{Cauchy tail} & $\hat{\tau}_{\textnormal{naive-dim}}$ & 84.118  & 2821.234  & 109.258  & 0.980  & -98.881  & 2988.736  & 106.743  & 0.987  & -101.940  & 3099.181  & /     & / \\
          & $\hat{\tau}_{\textnormal{str-dim}}$ & 93.859  & 3130.406  & 108.938  & 0.978  & -98.910  & 2990.158  & 106.743  & 0.987  & -93.180  & 2821.739  & 104.067  & 0.978  \\
          & $\hat{\tau}_{\textnormal{md}}$ & 0.006  & 0.196  & 0.213  & 0.975  & 0.001  & 0.182  & 0.200  & 0.974  & 0.009  & 0.194  & /     & / \\
          & $\hat{\tau}_{\textnormal{wt-md}}$ & 0.004  & 0.181  & 0.202  & 0.972  & 0.001  & 0.182  & 0.200  & 0.972  & 0.006  & 0.184  & 0.201  & 0.967  \\
          & $\hat{\tau}_{\dim}$ & 0.005  & 0.188  & 0.198  & 0.963  & -0.001  & 0.177  & 0.181  & 0.958  & 0.001  & 0.186  & /     & / \\
          & $\hat{\tau}_{\str}$ & 0.003  & 0.174  & 0.186  & 0.967  & -0.001  & 0.177  & 0.181  & 0.960  & -0.001  & 0.175  & 0.182  & 0.963  \\
    \multicolumn{1}{l}{$\tau$ = 1} &       &       &       &       &       &       &       &       &       &       &       &       &  \\
    \multicolumn{1}{l}{Normal tail} & $\hat{\tau}_{\textnormal{naive-dim}}$ & 0.000  & 0.123  & 0.120  & 0.944  & 0.002  & 0.103  & 0.102  & 0.946  & -0.004  & 0.109  & /     & / \\
          & $\hat{\tau}_{\textnormal{str-dim}}$ & -0.002  & 0.104  & 0.102  & 0.949  & 0.002  & 0.103  & 0.102  & 0.945  & -0.004  & 0.095  & 0.102  & 0.964  \\
          & $\hat{\tau}_{\textnormal{md}}$ & 0.003  & 0.151  & 0.160  & 0.964  & 0.001  & 0.134  & 0.146  & 0.961  & 0.000  & 0.134  & /     & / \\
          & $\hat{\tau}_{\textnormal{wt-md}}$ & 0.001  & 0.137  & 0.146  & 0.961  & 0.001  & 0.134  & 0.146  & 0.961  & 0.001  & 0.126  & 0.146  & 0.971  \\
          & $\hat{\tau}_{\dim}$ & 0.004  & 0.131  & 0.125  & 0.941  & 0.003  & 0.115  & 0.114  & 0.943  & -0.002  & 0.117  & /     & / \\
          & $\hat{\tau}_{\str}$ & 0.003  & 0.117  & 0.114  & 0.939  & 0.003  & 0.115  & 0.114  & 0.943  & -0.002  & 0.107  & 0.114  & 0.962  \\
          &       &       &       &       &       &       &       &       &       &       &       &       &  \\
    \multicolumn{1}{l}{Laplace tail} & $\hat{\tau}_{\textnormal{naive-dim}}$ & 0.004  & 0.155  & 0.149  & 0.944  & 0.000  & 0.133  & 0.135  & 0.956  & -0.004  & 0.141  & /     & / \\
          & $\hat{\tau}_{\textnormal{str-dim}}$ & 0.002  & 0.140  & 0.135  & 0.950  & 0.000  & 0.133  & 0.135  & 0.957  & -0.005  & 0.128  & 0.135  & 0.957  \\
          & $\hat{\tau}_{\textnormal{md}}$ & 0.005  & 0.162  & 0.157  & 0.938  & 0.002  & 0.145  & 0.144  & 0.945  & -0.002  & 0.156  & /     & / \\
          & $\hat{\tau}_{\textnormal{wt-md}}$ & 0.003  & 0.149  & 0.145  & 0.940  & 0.002  & 0.145  & 0.144  & 0.945  & -0.003  & 0.145  & 0.144  & 0.947  \\
          & $\hat{\tau}_{\dim}$ & 0.004  & 0.145  & 0.144  & 0.948  & 0.000  & 0.129  & 0.131  & 0.946  & 0.003  & 0.139  & /     & / \\
          & $\hat{\tau}_{\str}$ & 0.003  & 0.131  & 0.132  & 0.950  & 0.001  & 0.129  & 0.131  & 0.946  & 0.002  & 0.128  & 0.131  & 0.948  \\
          &       &       &       &       &       &       &       &       &       &       &       &       &  \\
    \multicolumn{1}{l}{Cauchy tail} & $\hat{\tau}_{\textnormal{naive-dim}}$ & 1.387  & 83.917  & 9.725  & 0.976  & 3.544  & 87.378  & 9.797  & 0.978  & 3.096  & 85.248  & /     & / \\
          & $\hat{\tau}_{\textnormal{str-dim}}$ & 1.518  & 84.781  & 9.694  & 0.972  & 3.552  & 87.676  & 9.797  & 0.980  & 3.205  & 84.577  & 9.710  & 0.980  \\
          & $\hat{\tau}_{\textnormal{md}}$ & 0.004  & 0.207  & 0.211  & 0.957  & 0.002  & 0.190  & 0.200  & 0.962  & -0.014  & 0.198  & /     & / \\
          & $\hat{\tau}_{\textnormal{wt-md}}$ & 0.003  & 0.195  & 0.201  & 0.951  & 0.002  & 0.190  & 0.200  & 0.962  & -0.016  & 0.189  & 0.202  & 0.962  \\
          & $\hat{\tau}_{\dim}$ & 0.003  & 0.195  & 0.196  & 0.942  & 0.006  & 0.180  & 0.181  & 0.949  & -0.010  & 0.187  & /     & / \\
          & $\hat{\tau}_{\str}$ & 0.003  & 0.183  & 0.184  & 0.944  & 0.007  & 0.180  & 0.181  & 0.950  & -0.011  & 0.179  & 0.183  & 0.945  \\
    \bottomrule
    \end{tabular}%
    }
    \end{threeparttable}
  \label{tab:smallsample-model3}
  \begin{tablenotes}
				\footnotesize
				\item[] Note: $\hat{\tau}_{\textnormal{naive-dim}}$, standard difference-in-means estimator; $\hat{\tau}_{\textnormal{str-dim}}$, stratified difference-in-means esti-\\mator; $\hat{\tau}_{\textnormal{md}}$, difference-in-medians estimator; $\hat{\tau}_{\textnormal{wt-md}}$, difference-in-weighted-medians estimator; $\hat{\tau}_{\dim}$,\\ transformed difference-in-means estimator; $\hat{\tau}_{\str}$, stratified transformed difference-in-means estimator;\\ SR, simple randomization; STR, stratified randomization; MIN, Pocock and Simon's minimization;\\ SD, standard deviation; SE, standard error; CP, coverage probability.
			\end{tablenotes}
\end{table}%

\newpage

\subsection{Initial estimators}

In this section, we evaluate the influence of the initial estimator on the final treatment effect estimators, adopting the same settings as in the main text. Tables \ref{tab:changeinitial-model1}-\ref{tab:changeinitial-model3} present the results for the weighted average difference-in-medians estimator $\hat{\tau}_{\textnormal{str-md}}$, the difference-in-weighted-medians estimator $\hat{\tau}_{\textnormal{wt-md}}$, and the stratified transformed difference-in-means estimator $\hat{\tau}_{\str}$ when using each of the first two as the initial estimator. To distinguish them, we denote $\hat{\tau}_{\str}$ as $\hat{\tau}_{\str,1}$ and $\hat{\tau}_{\str,2}$ when using $\hat{\tau}_{\textnormal{str-md}}$ and $\hat{\tau}_{\textnormal{wt-md}}$ as initial estimators, respectively. To the best of our knowledge, the asymptotic properties of  $\hat{\tau}_{\textnormal{str-md}}$ under covariate-adaptive randomization have not yet been studied, so we do not report inference-related results for it. We find that $\hat{\tau}_{\str}$ exhibits similar performance when using these two different initial estimators, with the differences being smaller than the Monte Carlo error. Thus, $\hat{\tau}_{\str}$ appears to be robust to the choice of the initial estimator. For the two initial estimators themselves, $\hat{\tau}_{\textnormal{str-md}}$ demonstrates comparable or smaller variance compared to $\hat{\tau}_{\textnormal{wt-md}}$ in all cases. Notably, under a Cauchy tail distribution, $\hat{\tau}_{\textnormal{str-md}}$ has a variance close to that of $\hat{\tau}_{\str}$. Further investigation reveals that the error in density estimation contributes significantly to the variance of $\hat{\tau}_{\str}$. When the sample size used for density estimation exceeds, for example, 350, $\hat{\tau}_{\str}$ begins to exhibit significantly smaller variance.

\begin{table}[htbp]
  \centering
  \caption{Simulation results under Model 1 for different initial estimators}
  \begin{threeparttable}
    \resizebox{\textwidth}{!}{
    \begin{tabular}{rlrrrrrrrrrrrr}
    \toprule
          &       & \multicolumn{4}{c}{SR}        & \multicolumn{4}{c}{STR}       & \multicolumn{4}{c}{MIN} \\
\cmidrule{3-14}          &       & \multicolumn{1}{l}{Bias} & \multicolumn{1}{l}{SD} & \multicolumn{1}{l}{SE} & \multicolumn{1}{l}{CP} & \multicolumn{1}{l}{Bias} & \multicolumn{1}{l}{SD} & \multicolumn{1}{l}{SE} & \multicolumn{1}{l}{CP} & \multicolumn{1}{l}{Bias} & \multicolumn{1}{l}{SD} & \multicolumn{1}{l}{SE} & \multicolumn{1}{l}{CP} \\
    \midrule
    \multicolumn{1}{l}{$\tau$ = 0} &       &       &       &       &       &       &       &       &       &       &       &       &  \\
    \multicolumn{1}{l}{Normal tail} & $\hat{\tau}_{\textnormal{str-md}}$ & 0.005  & 0.090  & /     & /     & -0.003  & 0.090  & /     & /     & -0.004  & 0.086  & /     & / \\
          & $\hat{\tau}_{\textnormal{wt-md}}$ & 0.005  & 0.095  & 0.106  & 0.961  & -0.004  & 0.094  & 0.106  & 0.962  & -0.003  & 0.094  & 0.106  & 0.960  \\
          & $\hat{\tau}_{\str,1}$ & 0.005  & 0.079  & 0.082  & 0.950  & -0.005  & 0.078  & 0.081  & 0.964  & -0.003  & 0.080  & 0.082  & 0.948  \\
          & $\hat{\tau}_{\str,2}$ & 0.005  & 0.079  & 0.082  & 0.945  & -0.005  & 0.078  & 0.081  & 0.965  & -0.003  & 0.080  & 0.082  & 0.949  \\
          &       &       &       &       &       &       &       &       &       &       &       &       &  \\
    \multicolumn{1}{l}{Laplace tail} & $\hat{\tau}_{\textnormal{str-md}}$ & 0.003  & 0.092  & /     & /     & -0.005  & 0.093  & /     & /     & 0.001  & 0.090  & /     & / \\
          & $\hat{\tau}_{\textnormal{wt-md}}$ & 0.007  & 0.107  & 0.109  & 0.948  & -0.003  & 0.107  & 0.109  & 0.959  & 0.000  & 0.108  & 0.109  & 0.945  \\
          & $\hat{\tau}_{\str,1}$ & 0.006  & 0.096  & 0.092  & 0.932  & -0.005  & 0.093  & 0.092  & 0.942  & 0.001  & 0.093  & 0.092  & 0.954  \\
          & $\hat{\tau}_{\str,2}$ & 0.006  & 0.095  & 0.092  & 0.942  & -0.006  & 0.092  & 0.092  & 0.943  & 0.001  & 0.091  & 0.092  & 0.958  \\
          &       &       &       &       &       &       &       &       &       &       &       &       &  \\
    \multicolumn{1}{l}{Cauchy tail} & $\hat{\tau}_{\textnormal{str-md}}$ & -0.001  & 0.113  & /     & /     & 0.007  & 0.121  & /     & /     & 0.005  & 0.112  & /     & / \\
          & $\hat{\tau}_{\textnormal{wt-md}}$ & -0.001  & 0.138  & 0.138  & 0.942  & 0.005  & 0.135  & 0.138  & 0.946  & 0.004  & 0.130  & 0.138  & 0.965  \\
          & $\hat{\tau}_{\str,1}$ & 0.002  & 0.118  & 0.118  & 0.948  & 0.003  & 0.118  & 0.118  & 0.943  & 0.006  & 0.116  & 0.118  & 0.958  \\
          & $\hat{\tau}_{\str,2}$ & 0.001  & 0.118  & 0.118  & 0.949  & 0.003  & 0.118  & 0.118  & 0.945  & 0.005  & 0.116  & 0.118  & 0.956  \\
    \multicolumn{1}{l}{$\tau$ = 1} &       &       &       &       &       &       &       &       &       &       &       &       &  \\
    \multicolumn{1}{l}{Normal tail} & $\hat{\tau}_{\textnormal{str-md}}$ & 0.003  & 0.087  & /     & /     & 0.009  & 0.087  & /     & /     & -0.003  & 0.083  & /     & / \\
          & $\hat{\tau}_{\textnormal{wt-md}}$ & 0.001  & 0.095  & 0.106  & 0.968  & 0.006  & 0.092  & 0.106  & 0.977  & 0.001  & 0.091  & 0.106  & 0.974  \\
          & $\hat{\tau}_{\str,1}$ & 0.003  & 0.077  & 0.082  & 0.963  & 0.006  & 0.075  & 0.081  & 0.964  & -0.002  & 0.075  & 0.082  & 0.970  \\
          & $\hat{\tau}_{\str,2}$ & 0.003  & 0.078  & 0.082  & 0.968  & 0.007  & 0.075  & 0.081  & 0.960  & -0.002  & 0.075  & 0.082  & 0.967  \\
          &       &       &       &       &       &       &       &       &       &       &       &       &  \\
    \multicolumn{1}{l}{Laplace tail} & $\hat{\tau}_{\textnormal{str-md}}$ & 0.000  & 0.090  & /     & /     & 0.004  & 0.089  & /     & /     & 0.003  & 0.083  & /     & / \\
          & $\hat{\tau}_{\textnormal{wt-md}}$ & 0.003  & 0.100  & 0.109  & 0.972  & 0.004  & 0.108  & 0.109  & 0.946  & -0.002  & 0.100  & 0.109  & 0.965  \\
          & $\hat{\tau}_{\str,1}$ & 0.002  & 0.092  & 0.092  & 0.941  & 0.004  & 0.094  & 0.092  & 0.946  & -0.001  & 0.089  & 0.092  & 0.962  \\
          & $\hat{\tau}_{\str,2}$ & 0.002  & 0.091  & 0.092  & 0.944  & 0.004  & 0.092  & 0.092  & 0.945  & -0.001  & 0.088  & 0.092  & 0.965  \\
          &       &       &       &       &       &       &       &       &       &       &       &       &  \\
    \multicolumn{1}{l}{Cauchy tail} & $\hat{\tau}_{\textnormal{str-md}}$ & 0.000  & 0.117  & /     & /     & -0.008  & 0.119  & /     & /     & -0.006  & 0.114  & /     & / \\
          & $\hat{\tau}_{\textnormal{wt-md}}$ & 0.002  & 0.138  & 0.138  & 0.948  & -0.004  & 0.143  & 0.138  & 0.938  & -0.007  & 0.138  & 0.138  & 0.948  \\
          & $\hat{\tau}_{\str,1}$ & -0.002  & 0.119  & 0.118  & 0.944  & -0.007  & 0.121  & 0.118  & 0.934  & -0.006  & 0.119  & 0.118  & 0.944  \\
          & $\hat{\tau}_{\str,2}$ & -0.002  & 0.119  & 0.118  & 0.944  & -0.007  & 0.122  & 0.118  & 0.931  & -0.006  & 0.120  & 0.118  & 0.942  \\
    \bottomrule
    \end{tabular}%
    }
    \end{threeparttable}
  \label{tab:changeinitial-model1}
  \begin{tablenotes}
				\footnotesize
				\item[] Note: $\hat{\tau}_{\textnormal{str-md}}$, stratified difference-in-medians estimator; $\hat{\tau}_{\textnormal{wt-md}}$, difference-in-weighted-medians esti-\\mator; $\hat{\tau}_{\str,1}$, stratified transformed difference-in-means estimator with $\hat{\tau}_{\textnormal{str-md}}$ as initial estimator;\\ $\hat{\tau}_{\str,2}$, stratified transformed difference-in-means estimator with $\hat{\tau}_{\textnormal{wt-md}}$ as initial estimator; SR, simple\\ randomization; STR, stratified randomization; MIN, Pocock and Simon's minimization; SD, standard\\ deviation; SE, standard error; CP, coverage probability.
			\end{tablenotes}
\end{table}%

\begin{table}[htbp]
  \centering
  \caption{Simulation results under Model 2 for different initial estimators}
  \begin{threeparttable}
    \resizebox{\textwidth}{!}{
    \begin{tabular}{rlrrrrrrrrrrrr}
    \toprule
          &       & \multicolumn{4}{c}{SR}        & \multicolumn{4}{c}{STR}       & \multicolumn{4}{c}{MIN} \\
\cmidrule{3-14}          &       & \multicolumn{1}{l}{Bias} & \multicolumn{1}{l}{SD} & \multicolumn{1}{l}{SE} & \multicolumn{1}{l}{CP} & \multicolumn{1}{l}{Bias} & \multicolumn{1}{l}{SD} & \multicolumn{1}{l}{SE} & \multicolumn{1}{l}{CP} & \multicolumn{1}{l}{Bias} & \multicolumn{1}{l}{SD} & \multicolumn{1}{l}{SE} & \multicolumn{1}{l}{CP} \\
    \midrule
    \multicolumn{1}{l}{$\tau$ = 0} &       &       &       &       &       &       &       &       &       &       &       &       &  \\
    \multicolumn{1}{l}{Normal tail} & $\hat{\tau}_{\textnormal{str-md}}$ & 0.000  & 0.090  & /     & /     & -0.003  & 0.090  & /     & /     & 0.001  & 0.087  & /     & / \\
          & $\hat{\tau}_{\textnormal{wt-md}}$ & 0.004  & 0.094  & 0.100  & 0.965  & -0.002  & 0.090  & 0.100  & 0.969  & -0.003  & 0.088  & 0.100  & 0.961  \\
          & $\hat{\tau}_{\str,1}$ & 0.001  & 0.079  & 0.079  & 0.944  & -0.003  & 0.076  & 0.078  & 0.951  & -0.001  & 0.078  & 0.079  & 0.948  \\
          & $\hat{\tau}_{\str,2}$ & 0.001  & 0.079  & 0.079  & 0.943  & -0.003  & 0.077  & 0.078  & 0.954  & 0.000  & 0.078  & 0.079  & 0.945  \\
          &       &       &       &       &       &       &       &       &       &       &       &       &  \\
    \multicolumn{1}{l}{Laplace tail} & $\hat{\tau}_{\textnormal{str-md}}$ & 0.000  & 0.090  & /     & /     & 0.002  & 0.092  & /     & /     & 0.005  & 0.088  & /     & / \\
          & $\hat{\tau}_{\textnormal{wt-md}}$ & 0.003  & 0.097  & 0.098  & 0.951  & 0.000  & 0.101  & 0.098  & 0.939  & 0.002  & 0.096  & 0.098  & 0.944  \\
          & $\hat{\tau}_{\str,1}$ & 0.004  & 0.091  & 0.090  & 0.949  & 0.002  & 0.093  & 0.090  & 0.947  & 0.003  & 0.089  & 0.090  & 0.946  \\
          & $\hat{\tau}_{\str,2}$ & 0.004  & 0.090  & 0.090  & 0.951  & 0.001  & 0.092  & 0.090  & 0.949  & 0.002  & 0.088  & 0.090  & 0.951  \\
          &       &       &       &       &       &       &       &       &       &       &       &       &  \\
    \multicolumn{1}{l}{Cauchy tail} & $\hat{\tau}_{\textnormal{str-md}}$ & 0.005  & 0.118  & /     & /     & -0.003  & 0.117  & /     & /     & 0.001  & 0.119  & /     & / \\
          & $\hat{\tau}_{\textnormal{wt-md}}$ & 0.003  & 0.129  & 0.131  & 0.946  & 0.000  & 0.127  & 0.132  & 0.956  & -0.001  & 0.130  & 0.132  & 0.954  \\
          & $\hat{\tau}_{\str,1}$ & 0.005  & 0.116  & 0.117  & 0.946  & 0.001  & 0.116  & 0.118  & 0.951  & 0.001  & 0.118  & 0.118  & 0.951  \\
          & $\hat{\tau}_{\str,2}$ & 0.005  & 0.116  & 0.117  & 0.946  & 0.000  & 0.115  & 0.118  & 0.955  & 0.000  & 0.118  & 0.118  & 0.949  \\
    \multicolumn{1}{l}{$\tau$ = 1} &       &       &       &       &       &       &       &       &       &       &       &       &  \\
    \multicolumn{1}{l}{Normal tail} & $\hat{\tau}_{\textnormal{str-md}}$ & -0.005  & 0.091  & /     & /     & 0.001  & 0.090  & /     & /     & 0.003  & 0.088  & /     & / \\
          & $\hat{\tau}_{\textnormal{wt-md}}$ & -0.003  & 0.091  & 0.101  & 0.963  & -0.004  & 0.092  & 0.101  & 0.969  & 0.003  & 0.089  & 0.101  & 0.973  \\
          & $\hat{\tau}_{\str,1}$ & -0.002  & 0.079  & 0.079  & 0.947  & -0.001  & 0.079  & 0.079  & 0.931  & 0.001  & 0.078  & 0.079  & 0.955  \\
          & $\hat{\tau}_{\str,2}$ & -0.002  & 0.079  & 0.079  & 0.945  & -0.001  & 0.079  & 0.079  & 0.929  & 0.002  & 0.078  & 0.079  & 0.957  \\
          &       &       &       &       &       &       &       &       &       &       &       &       &  \\
    \multicolumn{1}{l}{Laplace tail} & $\hat{\tau}_{\textnormal{str-md}}$ & -0.002  & 0.092  & /     & /     & -0.003  & 0.093  & /     & /     & -0.001  & 0.087  & /     & / \\
          & $\hat{\tau}_{\textnormal{wt-md}}$ & 0.000  & 0.098  & 0.098  & 0.938  & -0.006  & 0.099  & 0.098  & 0.945  & 0.000  & 0.099  & 0.098  & 0.936  \\
          & $\hat{\tau}_{\str,1}$ & 0.000  & 0.091  & 0.089  & 0.949  & -0.003  & 0.090  & 0.089  & 0.955  & 0.000  & 0.090  & 0.089  & 0.945  \\
          & $\hat{\tau}_{\str,2}$ & 0.000  & 0.090  & 0.089  & 0.947  & -0.003  & 0.089  & 0.089  & 0.955  & 0.000  & 0.090  & 0.089  & 0.953  \\
          &       &       &       &       &       &       &       &       &       &       &       &       &  \\
    \multicolumn{1}{l}{Cauchy tail} & $\hat{\tau}_{\textnormal{str-md}}$ & 0.001  & 0.120  & /     & /     & 0.002  & 0.114  & /     & /     & -0.004  & 0.121  & /     & / \\
          & $\hat{\tau}_{\textnormal{wt-md}}$ & 0.004  & 0.133  & 0.131  & 0.935  & 0.002  & 0.127  & 0.131  & 0.953  & -0.001  & 0.131  & 0.131  & 0.952  \\
          & $\hat{\tau}_{\str,1}$ & 0.002  & 0.118  & 0.117  & 0.940  & 0.001  & 0.116  & 0.118  & 0.946  & -0.002  & 0.122  & 0.118  & 0.948  \\
          & $\hat{\tau}_{\str,2}$ & 0.001  & 0.118  & 0.117  & 0.939  & 0.000  & 0.115  & 0.118  & 0.947  & -0.003  & 0.122  & 0.118  & 0.945  \\
    \bottomrule
    \end{tabular}%
     }
    \end{threeparttable}
  \label{tab:changeinitial-model2}
  \begin{tablenotes}
				\footnotesize
				\item[] Note: $\hat{\tau}_{\textnormal{str-md}}$, stratified difference-in-medians estimator; $\hat{\tau}_{\textnormal{wt-md}}$, difference-in-weighted-medians esti-\\mator; $\hat{\tau}_{\str,1}$, stratified transformed difference-in-means estimator with $\hat{\tau}_{\textnormal{str-md}}$ as initial estimator;\\ $\hat{\tau}_{\str,2}$, stratified transformed difference-in-means estimator with $\hat{\tau}_{\textnormal{wt-md}}$ as initial estimator; SR, simple\\ randomization; STR, stratified randomization; MIN, Pocock and Simon's minimization; SD, standard\\ deviation; SE, standard error; CP, coverage probability.
			\end{tablenotes}
\end{table}%

\begin{table}[htbp]
  \centering
  \caption{Simulation results under Model 3 for different initial estimators}
  \begin{threeparttable}
    \resizebox{\textwidth}{!}{
    \begin{tabular}{rlrrrrrrrrrrrr}
    \toprule
          &       & \multicolumn{4}{c}{SR}        & \multicolumn{4}{c}{STR}       & \multicolumn{4}{c}{MIN} \\
\cmidrule{3-14}          &       & \multicolumn{1}{l}{Bias} & \multicolumn{1}{l}{SD} & \multicolumn{1}{l}{SE} & \multicolumn{1}{l}{CP} & \multicolumn{1}{l}{Bias} & \multicolumn{1}{l}{SD} & \multicolumn{1}{l}{SE} & \multicolumn{1}{l}{CP} & \multicolumn{1}{l}{Bias} & \multicolumn{1}{l}{SD} & \multicolumn{1}{l}{SE} & \multicolumn{1}{l}{CP} \\
    \midrule
    \multicolumn{1}{l}{$\tau$ = 0} &       &       &       &       &       &       &       &       &       &       &       &       &  \\
    \multicolumn{1}{l}{Normal tail} & $\hat{\tau}_{\textnormal{str-md}}$ & -0.002  & 0.095  & /     & /     & 0.001  & 0.089  & /     & /     & -0.003  & 0.088  & /     & / \\
          & $\hat{\tau}_{\textnormal{wt-md}}$ & -0.002  & 0.097  & 0.105  & 0.970  & 0.003  & 0.095  & 0.105  & 0.968  & 0.001  & 0.090  & 0.105  & 0.980  \\
          & $\hat{\tau}_{\str,1}$ & 0.000  & 0.082  & 0.082  & 0.946  & 0.002  & 0.079  & 0.081  & 0.964  & -0.001  & 0.077  & 0.081  & 0.957  \\
          & $\hat{\tau}_{\str,2}$ & 0.000  & 0.081  & 0.082  & 0.949  & 0.002  & 0.079  & 0.081  & 0.958  & -0.002  & 0.077  & 0.082  & 0.951  \\
          &       &       &       &       &       &       &       &       &       &       &       &       &  \\
    \multicolumn{1}{l}{Laplace tail} & $\hat{\tau}_{\textnormal{str-md}}$ & -0.001  & 0.096  & /     & /     & 0.005  & 0.093  & /     & /     & -0.006  & 0.094  & /     & / \\
          & $\hat{\tau}_{\textnormal{wt-md}}$ & 0.000  & 0.105  & 0.104  & 0.945  & 0.002  & 0.103  & 0.104  & 0.956  & -0.001  & 0.102  & 0.104  & 0.953  \\
          & $\hat{\tau}_{\str,1}$ & 0.001  & 0.095  & 0.093  & 0.943  & 0.003  & 0.093  & 0.093  & 0.948  & 0.000  & 0.092  & 0.093  & 0.950  \\
          & $\hat{\tau}_{\str,2}$ & 0.000  & 0.093  & 0.093  & 0.948  & 0.003  & 0.092  & 0.093  & 0.951  & -0.001  & 0.091  & 0.093  & 0.957  \\
          &       &       &       &       &       &       &       &       &       &       &       &       &  \\
    \multicolumn{1}{l}{Cauchy tail} & $\hat{\tau}_{\textnormal{str-md}}$ & 0.004  & 0.128  & /     & /     & 0.000  & 0.118  & /     & /     & 0.000  & 0.119  & /     & / \\
          & $\hat{\tau}_{\textnormal{wt-md}}$ & 0.005  & 0.137  & 0.137  & 0.948  & -0.004  & 0.133  & 0.138  & 0.950  & -0.001  & 0.131  & 0.138  & 0.958  \\
          & $\hat{\tau}_{\str,1}$ & 0.003  & 0.125  & 0.120  & 0.932  & -0.002  & 0.116  & 0.120  & 0.963  & -0.002  & 0.116  & 0.120  & 0.951  \\
          & $\hat{\tau}_{\str,2}$ & 0.003  & 0.126  & 0.120  & 0.929  & -0.002  & 0.116  & 0.120  & 0.961  & -0.002  & 0.117  & 0.120  & 0.950  \\
    \multicolumn{1}{l}{$\tau$ = 1} &       &       &       &       &       &       &       &       &       &       &       &       &  \\
    \multicolumn{1}{l}{Normal tail} & $\hat{\tau}_{\textnormal{str-md}}$ & 0.000  & 0.093  & /     & /     & 0.005  & 0.093  & /     & /     & 0.003  & 0.091  & /     & / \\
          & $\hat{\tau}_{\textnormal{wt-md}}$ & -0.001  & 0.094  & 0.105  & 0.961  & 0.003  & 0.099  & 0.105  & 0.951  & 0.007  & 0.095  & 0.105  & 0.964  \\
          & $\hat{\tau}_{\str,1}$ & 0.000  & 0.079  & 0.081  & 0.949  & 0.005  & 0.082  & 0.081  & 0.945  & 0.005  & 0.078  & 0.082  & 0.956  \\
          & $\hat{\tau}_{\str,2}$ & 0.000  & 0.079  & 0.081  & 0.948  & 0.004  & 0.082  & 0.081  & 0.945  & 0.005  & 0.077  & 0.082  & 0.957  \\
          &       &       &       &       &       &       &       &       &       &       &       &       &  \\
    \multicolumn{1}{l}{Laplace tail} & $\hat{\tau}_{\textnormal{str-md}}$ & 0.000  & 0.090  & /     & /     & -0.001  & 0.092  & /     & /     & -0.005  & 0.093  & /     & / \\
          & $\hat{\tau}_{\textnormal{wt-md}}$ & 0.002  & 0.101  & 0.104  & 0.950  & -0.002  & 0.103  & 0.104  & 0.948  & 0.001  & 0.103  & 0.104  & 0.949  \\
          & $\hat{\tau}_{\str,1}$ & 0.003  & 0.093  & 0.093  & 0.949  & -0.003  & 0.090  & 0.093  & 0.947  & 0.002  & 0.094  & 0.093  & 0.951  \\
          & $\hat{\tau}_{\str,2}$ & 0.002  & 0.092  & 0.093  & 0.954  & -0.003  & 0.089  & 0.093  & 0.949  & 0.001  & 0.092  & 0.093  & 0.950  \\
          &       &       &       &       &       &       &       &       &       &       &       &       &  \\
    \multicolumn{1}{l}{Cauchy tail} & $\hat{\tau}_{\textnormal{str-md}}$ & -0.005  & 0.121  & /     & /     & 0.009  & 0.122  & /     & /     & 0.004  & 0.123  & /     & / \\
          & $\hat{\tau}_{\textnormal{wt-md}}$ & -0.004  & 0.134  & 0.137  & 0.951  & 0.009  & 0.134  & 0.136  & 0.956  & 0.000  & 0.137  & 0.137  & 0.949  \\
          & $\hat{\tau}_{\str,1}$ & -0.003  & 0.119  & 0.119  & 0.954  & 0.006  & 0.119  & 0.120  & 0.950  & 0.001  & 0.126  & 0.119  & 0.932  \\
          & $\hat{\tau}_{\str,2}$ & -0.003  & 0.119  & 0.119  & 0.954  & 0.006  & 0.119  & 0.120  & 0.952  & 0.001  & 0.126  & 0.119  & 0.933  \\
    \bottomrule
    \end{tabular}%
    }
    \end{threeparttable}
  \label{tab:changeinitial-model3}
  \begin{tablenotes}
				\footnotesize
				\item[] Note: $\hat{\tau}_{\textnormal{str-md}}$, stratified difference-in-medians estimator; $\hat{\tau}_{\textnormal{wt-md}}$, difference-in-weighted-medians esti-\\mator; $\hat{\tau}_{\str,1}$, stratified transformed difference-in-means estimator with $\hat{\tau}_{\textnormal{str-md}}$ as initial estimator;\\ $\hat{\tau}_{\str,2}$, stratified transformed difference-in-means estimator with $\hat{\tau}_{\textnormal{wt-md}}$ as initial estimator; SR, simple\\ randomization; STR, stratified randomization; MIN, Pocock and Simon's minimization; SD, standard\\ deviation; SE, standard error; CP, coverage probability.
			\end{tablenotes}
\end{table}%

\subsection{Splitting techniques}
We investigate the influence of sample splitting techniques within model 1 for $n=1000$. The results for the other two models are similar so we omit them. We assess two splitting approaches: (i) simple random splitting devoid of stratum and treatment information, akin to the method in \cite{athey2021semiparametric} (SR splitting), and (ii) splitting the sample within each stratum and treatment arm, as outlined in Section~\ref{sec:ate-estimators} (CAR splitting). Table~\ref{tab:splitting} shows the results. Both splitting techniques exhibit comparable performance, with neither method dominating the other.

\begin{table}[htbp]
  \centering
  \caption{Simulation results under Model 1 for different splitting techniques}
  \begin{threeparttable}
    \resizebox{\textwidth}{!}{
    \begin{tabular}{rllrrrrrrrrrrrr}
    \toprule
          &       &       & \multicolumn{4}{c}{SR}        & \multicolumn{4}{c}{STR}       & \multicolumn{4}{c}{MIN} \\
\cmidrule{4-15}          &       &       & \multicolumn{1}{r}{Bias} & \multicolumn{1}{r}{SD} & \multicolumn{1}{r}{SE} & \multicolumn{1}{r}{CP} & \multicolumn{1}{r}{Bias} & \multicolumn{1}{r}{SD} & \multicolumn{1}{r}{SE} & \multicolumn{1}{r}{CP} & \multicolumn{1}{r}{Bias} & \multicolumn{1}{r}{SD} & \multicolumn{1}{r}{SE} & \multicolumn{1}{r}{CP} \\
    \midrule
    \multicolumn{1}{l}{$\tau$ = 0} &       &       &       &       &       &       &       &       &       &       &       &       &       &  \\
    \multicolumn{1}{l}{Normal tail} & \multicolumn{1}{l}{$\hat{\tau}_{\dim}$} & CAR splitting & 0.005  & 0.090  & 0.091  & 0.941  & -0.005  & 0.078  & 0.081  & 0.963  & -0.002  & 0.089  & /  & /  \\
          &       & SR splitting & 0.000  & 0.090  & 0.091  & 0.949  & -0.002  & 0.077  & 0.081  & 0.966  & 0.003  & 0.085  & /  & /  \\
          & \multicolumn{1}{l}{$\hat{\tau}_{\str}$} & CAR splitting & 0.005  & 0.079  & 0.082  & 0.945  & -0.005  & 0.078  & 0.081  & 0.965  & -0.003  & 0.080  & 0.082  & 0.949  \\
          &       & SR splitting & 0.002  & 0.079  & 0.081  & 0.952  & -0.002  & 0.077  & 0.081  & 0.967  & 0.003  & 0.076  & 0.082  & 0.962  \\
          &       &       &       &       &       &       &       &       &       &       &       &       &       &  \\
    \multicolumn{1}{l}{Laplace tail} & \multicolumn{1}{l}{$\hat{\tau}_{\dim}$} & CAR splitting & 0.006  & 0.107  & 0.103  & 0.932  & -0.006  & 0.092  & 0.092  & 0.943  & 0.002  & 0.101  & /  & /  \\
          &       & SR splitting & -0.003  & 0.107  & 0.102  & 0.933  & -0.001  & 0.091  & 0.092  & 0.957  & 0.000  & 0.100  & /  & /  \\
          & \multicolumn{1}{l}{$\hat{\tau}_{\str}$} & CAR splitting & 0.006  & 0.095  & 0.092  & 0.942  & -0.006  & 0.092  & 0.092  & 0.943  & 0.001  & 0.091  & 0.092  & 0.958  \\
          &       & SR splitting & -0.001  & 0.095  & 0.092  & 0.943  & -0.001  & 0.091  & 0.092  & 0.957  & 0.000  & 0.091  & 0.092  & 0.950  \\
          &       &       &       &       &       &       &       &       &       &       &       &       &       &  \\
    \multicolumn{1}{l}{Cauchy tail} & \multicolumn{1}{l}{$\hat{\tau}_{\dim}$} & CAR splitting & 0.002  & 0.128  & 0.128  & 0.940  & 0.003  & 0.118  & 0.118  & 0.946  & 0.006  & 0.124  & /  & /  \\
          &       & SR splitting & -0.003  & 0.126  & 0.128  & 0.950  & -0.004  & 0.115  & 0.118  & 0.950  & -0.001  & 0.123  & /  & /  \\
          & \multicolumn{1}{l}{$\hat{\tau}_{\str}$} & CAR splitting & 0.001  & 0.118  & 0.118  & 0.949  & 0.003  & 0.118  & 0.118  & 0.945  & 0.005  & 0.116  & 0.118  & 0.956  \\
          &       & SR splitting & -0.001  & 0.117  & 0.118  & 0.948  & -0.004  & 0.115  & 0.118  & 0.948  & -0.002  & 0.116  & 0.118  & 0.962  \\
    \multicolumn{1}{l}{$\tau$ = 1} &       &       &       &       &       &       &       &       &       &       &       &       &       &  \\
    \multicolumn{1}{l}{Normal tail} & \multicolumn{1}{l}{$\hat{\tau}_{\dim}$} & CAR splitting & 0.003  & 0.091  & 0.091  & 0.948  & 0.007  & 0.075  & 0.081  & 0.961  & -0.002  & 0.084  & /  & /  \\
          &       & SR splitting & 0.002  & 0.089  & 0.091  & 0.945  & -0.001  & 0.078  & 0.081  & 0.960  & -0.004  & 0.086  & /  & /  \\
          & \multicolumn{1}{l}{$\hat{\tau}_{\str}$} & CAR splitting & 0.003  & 0.078  & 0.082  & 0.968  & 0.007  & 0.075  & 0.081  & 0.960  & -0.002  & 0.075  & 0.082  & 0.967  \\
          &       & SR splitting & 0.000  & 0.078  & 0.082  & 0.955  & -0.001  & 0.078  & 0.081  & 0.959  & -0.004  & 0.076  & 0.082  & 0.968  \\
          &       &       &       &       &       &       &       &       &       &       &       &       &       &  \\
    \multicolumn{1}{l}{Laplace tail} & \multicolumn{1}{l}{$\hat{\tau}_{\dim}$} & CAR splitting & 0.002  & 0.105  & 0.103  & 0.937  & 0.004  & 0.092  & 0.092  & 0.945  & -0.001  & 0.097  & /  & /  \\
          &       & SR splitting & 0.002  & 0.102  & 0.103  & 0.950  & -0.002  & 0.089  & 0.092  & 0.959  & -0.003  & 0.096  & /  & /  \\
          & \multicolumn{1}{l}{$\hat{\tau}_{\str}$} & CAR splitting & 0.002  & 0.091  & 0.092  & 0.944  & 0.004  & 0.092  & 0.092  & 0.945  & -0.001  & 0.088  & 0.092  & 0.965  \\
          &       & SR splitting & 0.000  & 0.089  & 0.092  & 0.954  & -0.002  & 0.089  & 0.092  & 0.958  & -0.003  & 0.088  & 0.092  & 0.964  \\
          &       &       &       &       &       &       &       &       &       &       &       &       &       &  \\
    \multicolumn{1}{l}{Cauchy tail} & \multicolumn{1}{l}{$\hat{\tau}_{\dim}$} & CAR splitting & -0.002  & 0.129  & 0.128  & 0.946  & -0.007  & 0.122  & 0.118  & 0.933  & -0.006  & 0.126  & /  & /  \\
          &       & SR splitting & 0.003  & 0.132  & 0.128  & 0.945  & 0.001  & 0.119  & 0.119  & 0.951  & -0.003  & 0.128  & /  & /  \\
          & \multicolumn{1}{l}{$\hat{\tau}_{\str}$} & CAR splitting & -0.002  & 0.119  & 0.118  & 0.944  & -0.007  & 0.122  & 0.118  & 0.931  & -0.006  & 0.120  & 0.118  & 0.942  \\
          &       & SR splitting & 0.001  & 0.122  & 0.119  & 0.941  & 0.001  & 0.119  & 0.119  & 0.951  & -0.004  & 0.120  & 0.118  & 0.950  \\
    \bottomrule
    \end{tabular}%
    }
    \end{threeparttable}
  \label{tab:splitting}
  \begin{tablenotes}
				\footnotesize
				\item[] Note: $\hat{\tau}_{\dim}$, transformed difference-in-means estimator; $\hat{\tau}_{\str}$, stratified transformed difference-in-\\means estimator; SR, simple randomization; STR, stratified randomization; MIN, Pocock and\\ Simon's minimization; SD, standard deviation; SE, standard error; CP, coverage probability.
			\end{tablenotes}
\end{table}%

\section{Proof of main results}
\label{sec:proof-main}

Let $S^{(n)}= \{S_1,\ldots,S_n\}$, $A^{(n)}=\{A_1,\ldots,A_n\}$, $U^{(n)} = \{I_{1 \in \mathcal{L}_1},\ldots,I_{n \in \mathcal{L}_1}\}$, $Y^{(n)}=\{Y_1,\ldots,Y_n\}$, $D^{(\mathcal{L}_j)} = \{Y_i, A_i, S_i \}_{i \in \mathcal{L}_j}$, for $j = 1, 2$. Consider a measurable function $f(Y^{(n)}, A^{(n)}, S^{(n)}, U^{(n)})$ such as $\hat f_{0(1)}^{'} / \hat f_{0(1)}$. Given the independence of the splitting indicators $I_{i \in \mathcal{L}_1}$'s from $\{Y^{(n)}, A^{(n)}, S^{(n)}\}$, and that Assumptions~\ref{a1}--\ref{a2} hold, using the technique of \cite{Bugni2018} for deriving the law of large numbers and central limit theory, we can conclude that conditional on $(S^{(n)},A^{(n)}, U^{(n)})$, $f(Y^{(n)}, A^{(n)}, S^{(n)}, U^{(n)})$ has the same distribution as $f(Y^{'(n)},A^{(n)}, S^{(n)}, U^{(n)})$, where $Y^{'(n)} = \{Y_1^{'}, \ldots, Y_n^{'}\}$ with $Y_i{'} = A_i Y_i{'}(1) + (1 - A_i) Y_i^{'}(0)$. Furthermore, $\{Y_i^{'} \}_{i \in \mathcal{L}_1, S_i = k, A_i = a}$ are i.i.d. following the same distribution as $\{Y_i(a) \mid S_i = k\}$ and are independent of $(S^{(n)},A^{(n)}, U^{(n)})$. For simplicity, we will say that conditional on $(S^{(n)},A^{(n)}, U^{(n)})$, $\{Y_i\}_{i \in \mathcal{L}_1, S_i = k, A_i = a}$ are i.i.d. following the same distribution as $\{Y_i(a) \mid S_i = k \}$. This technique will be repeatedly used for convenience.

\subsection{Proof of Theorem~\ref{l0}}
\label{sec:proof-second-moment}

We need to show that 
    \begin{equation}\label{eqn:f-bounded}
         \sup_{y \in \mathbb R} \bigg\{\bigg|\frac{\hat{f}_{0(j)}^{\prime}}{\hat{f}_{0(j)}}\bigg|(y) \bigg\} = o_P(n^{1/2}),
    \end{equation}
   \begin{equation}
    \label{eqn:f-consistent}
    \int\left(\frac{\hat{f}_{0(j)}^{\prime}}{\hat{f}_{0(j)}}-\frac{f_0^{\prime}}{f_0}\right)^2(y) d F_0(y) = o_P(1),
    \end{equation}
    and for $m$ such that $m/n$ converges to a positive constant, with an i.i.d. sample $\{Y_1^{\prime},...,Y_m^{\prime}\}$ drawn from the density $f_{[k]0}$ that is independent of $\hat{f}_{0(j)}^{\prime}/\hat{f}_{0(j)}$, and any ${\delta}_n=O_P\left(n^{-1 / 2}\right)$, as $n \rightarrow \infty$, we have
    \begin{equation}
    \label{eqn:abs-f-continuity}
    \frac{1}{m} \sum_{i=1}^m \left|\left(\frac{\hat{f}_{0(j)}^{\prime}}{\hat{f}_{0(j)}}\right)^{\prime}\right|\left(Y_i^{\prime}+{\delta}_n\right)=O_P(1),
    \end{equation}
    \begin{equation}
    \label{eqn:f-continuity}
    \frac{1}{m} \sum_{i=1}^m \bigg(\frac{\hat{f}_{0(j)}^{\prime}}{\hat{f}_{0(j)}}\bigg)^{\prime}\left(Y_i^{\prime}+{\delta}_n\right) = \frac{1}{m} \sum_{i=1}^m \bigg(\frac{f_0^{\prime}}{f_0}\bigg)^{\prime}\left(Y_i^{\prime}\right) + \op.
    \end{equation}

\begin{proof}

    By symmetry, we only need to prove the results for $j = 1$. We will prove equations \eqref{eqn:f-bounded}--\eqref{eqn:f-continuity} in steps (I)--(IV), respectively.
    
    (I) We prove equation \eqref{eqn:f-bounded}. By definition and Assumption~\ref{cond:density-bickel}, it is easy to see that
    $$\sup_{y \in \mathbb R} \bigg\{\bigg|\frac{\hat{f}_{0(j)}^{\prime}}{\hat{f}_{0(j)}}\bigg|(y) \bigg\} \leq c_n = o(\sigma_n^{-1}) = o_P(n^{1/2}).$$

    (II) We prove equation \eqref{eqn:f-consistent}. Let $\mathcal{D}$ be the truncation field defined as:
    $$
    \mathcal{D} = \{ y \in \mathbb R : \hat{f}_{\sigma_n}(y) \geq d_n, \ |y| \leq e_n, \  |\hat{f}_{\sigma_n}^{\prime}(y)| \leq c_n \hat{f}_{\sigma_n}(y), \  |\hat{f}_{\sigma_n}^{\prime\prime}(y)| \leq b_n \hat{f}_{\sigma_n}(y) \}.
    $$

    Firstly, we show that 
    \begin{equation}
      \label{bickel6.1}
      \int_{\{ y: f_0(y)>0 \}}\left\{\frac{\hat{f}_{\sigma_n}^{\prime}}{\hat{f}_{\sigma_n}}(y)I_{\mathcal{D}}(y) -\frac{f_{\sigma_n}^{\prime}}{f_{\sigma_n}}(y)\right\}^2 f_{\sigma_n}(y) d y \cp 0,
    \end{equation}
    \begin{equation}
      \label{bickel6.1-doubleprime}
		\int_{\{ y: f_0(y)>0 \}}\left\{\frac{\hat{f}_{\sigma_n}^{\prime\prime}}{\hat{f}_{\sigma_n}}(y)I_{\mathcal{D}}(y) -\frac{f_{\sigma_n}^{\prime\prime}}{f_{\sigma_n}}(y)\right\}^2 f_{\sigma_n}(y) d y \cp 0, 
    \end{equation}
    where $f_{\sigma_n}(y)$ is the convolution of $f_0$ and $\phi_{\sigma_n}$, defined as $f_{\sigma_n}(y)=\int_{-\infty}^{\infty} f_0(x)\phi_{\sigma_n}(y-x) dx$. \cite{bickel1982adaptive} proved similar results for i.i.d. samples. We will generalize it to the samples under covariate-adaptive randomization with considerations of dependent treatment assignments. 

    We partition the integral in equation \eqref{bickel6.1} into two components: one over $\mathcal{D}$  and the other over $\mathcal{D}^C$, the complement of $\mathcal{D}$. Let's denote these two components as $T_1$ and $T_2$. Note that $T_1 \geq 0$ and $T_2 \geq 0$.

    By the dominated convergence theorem, it suffices for $T_1 \cp 0$ to show that $P(T_1 > \epsilon \mid S^{(n)},A^{(n)},U^{(n)}) \cp 0$ for any $\epsilon > 0$. By the Markov inequality, 
    \begin{eqnarray}
        P(T_1 > \epsilon \mid S^{(n)},A^{(n)},U^{(n)}) \leq  \epsilon^{-1} E\{ T_1 \mid S^{(n)},A^{(n)},U^{(n)} \}. \nonumber 
    \end{eqnarray}

    On $\mathcal{D}$, we have
    \begin{eqnarray}\label{eqn:t1}
    && E\{ T_1 \mid S^{(n)},A^{(n)},U^{(n)} \}  \leq 2\int_{\mathcal{D}} f_{\sigma_n}^{-1}(y) E\Big[ \big\{\hat{f}_{\sigma_n}^{\prime}(y)-f_{\sigma_n}^{\prime}(y)\big\}^2 \mid S^{(n)},A^{(n)},U^{(n)} \Big] d y \nonumber \\
     && \quad \quad \quad  + 2 \int_{\mathcal{D}} c_n^2 f_{\sigma_n}^{-1}(y) E\Big[\big\{\hat{f}_{\sigma_n}(y)-f_{\sigma_n}(y)\big\}^2 \mid S^{(n)},A^{(n)},U^{(n)} \Big] d y. 
    \end{eqnarray}

    Let the superscript $(l)$ denote the $l$-th order derivative with respect to $y$. Define $f_{\sigma_n[k]}(y) = \int_{-\infty}^{\infty} f_{[k]0}(x)\phi_{\sigma_n}(y-x) dx.$
Then, we have
\begin{eqnarray}\label{eqn:var}
    && \operatorname{Var} \{\hat{f}_{\sigma_n}^{(l)}(y) \mid S^{(n)},A^{(n)},U^{(n)}\} \nonumber \\
    &=& \operatorname{Var}\left( \sum_{k=1}^K 
    \frac{n_{[k]0(1)}}{\ncone} \cdot \frac{1}{n_{[k]0(1)}}\sum_{i \in \mathcal{L}_1, S_i = k, A_i = 0}\phi^{(l)}_i \mid S^{(n)},A^{(n)},U^{(n)} \right) \nonumber \\
    & =& \sum_{k=1}^K  \left(\frac{n_{[k]0(1)}}{\ncone}\right)^2 \operatorname{Var}\left(\frac{1}{n_{[k]0(1)}}\sum_{i \in \mathcal{L}_1, S_i = k, A_i = 0}\phi^{(l)}_i \mid S^{(n)},A^{(n)},U^{(n)} \right) \nonumber \\
    & \leq & \sum_{k=1}^K  \left(\frac{n_{[k]0(1)}}{\ncone}\right)^2 \cdot \kappa_l^{SR} \sigma_n^{-2l-1} n_{[k]0(1)}^{-1} f_{\sigma_n[k]}(y) \nonumber \\
    & = & \kappa_l^{SR} \sigma_n^{-2l-1} \sum_{k=1}^K \frac{n_{[k]0(1)}}{\ncone^2}f_{\sigma_n[k]}(y), \quad l=0,1,2,
\end{eqnarray}
where $\phi_i^{(l)} = \phi_{\sigma_n}^{(l)}(y-Y_i)$. The second equality holds because, conditional on $S^{(n)}, A^{(n)}$ and $U^{(n)}$, the $\phi_i^{(l)}$'s from different strata are independent (refer to the discussion at the start of Section~\ref{sec:proof-main}). The inequality in the third line results from applying the results in the proof of \citet[][Lemma 6.1]{bickel1982adaptive} to each stratum.

By a similar argument, we have
$$
\begin{aligned}
    E \{\hat{f}_{\sigma_n}^{(l)}(y) \mid S^{(n)},A^{(n)},U^{(n)}\} & = \sum_{k=1}^K  \frac{n_{[k]0(1)}}{\ncone} E\left(\frac{1}{n_{[k]0(1)}}\sum_{i \in \mathcal{L}_1, S_i = k, A_i = 0}\phi^{(l)}_i \mid S^{(n)},A^{(n)},U^{(n)} \right) \\
    &= \sum_{k=1}^K  \frac{n_{[k]0(1)}}{\ncone} f_{\sigma_n[k]}^{(l)}(y).
\end{aligned}
$$
Since $f_{\sigma_n}^{(l)}(y) = \sumk \ps f_{\sigma_n[k]}^{(l)}(y)$, then
$$
\Big | E \{\hat{f}_{\sigma_n}^{(l)}(y) \mid S^{(n)},A^{(n)},U^{(n)}\} - f_{\sigma_n}^{(l)}(y) \Big | \leq  \frac{ \Delta_n }{  \min_{k=1,\ldots,K} \ps }  f_{\sigma_n}^{(l)}(y).
$$

Let $\Pi$ denote the ratio of a circle's circumference to its diameter. A straightforward calculation yields
$$
\begin{aligned}
    f_{\sigma_n[k]}(y) =& \int_{-\infty}^{\infty} f_{[k]0}(x)\phi_{\sigma_n}(y-x) dx \leq \int_{-\infty}^{\infty} (\sqrt{2\Pi}\sigma_n)^{-1}f_{[k]0}(x) dx =(\sqrt{2\Pi}\sigma_n)^{-1},  \\
    \big\{f_{\sigma_n}(y) \big\}^2 =& f_{\sigma_n}(y) \sumk \ps f_{\sigma_n[k]}(y) \leq (\sqrt{2\Pi}\sigma_n)^{-1} f_{\sigma_n}(y),\\ 
    \big\{f_{\sigma_n[k]}^{\prime}(y)\big\}^2 =&  \bigg(\int_{-\infty}^{\infty} f_{[k]0}(x)\phi_{\sigma_n}^{\prime}(y-x) dx\bigg)^2 \\
    =& \int_{-\infty}^{\infty}f_{[k]0}(u) du \int_{-\infty}^{\infty} f_{[k]0}(v)dv \cdot \phi_{\sigma_n}^{\prime}(y-u)\phi_{\sigma_n}^{\prime}(y-v)\\
    =& \int_{-\infty}^{\infty}f_{[k]0}(u) du \int_{-\infty}^{\infty} f_{[k]0}(v)dv \cdot \sigma_n^{-4}(y-u)(y-v)\phi_{\sigma_n}(y-u)\phi_{\sigma_n}(y-v)\\
    \leq& \sigma_n^{-4}\int_{-\infty}^{\infty}f_{[k]0}(u) du \int_{-\infty}^{\infty} f_{[k]0}(v)dv \cdot \frac{(y-u)^2+(y-v)^2}{2}\phi_{\sigma_n}(y-u)\phi_{\sigma_n}(y-v)\\
    =& \sigma_n^{-4}\int_{-\infty}^{\infty}f_{[k]0}(u) (y-u)^2\phi_{\sigma_n}(y-u) du \int_{-\infty}^{\infty} \phi_{\sigma_n}(y-v)f_{[k]0}(v)dv\\
    \leq&{\sigma_n^{-2}}\int_{-\infty}^{\infty} 2e^{-1}(\sqrt{2\Pi}\sigma_n)^{-1}f_{[k]0}(u) du \int_{-\infty}^{\infty} \phi_{\sigma_n}(y-v)f_{[k]0}(v)dv\\
    =& 2 (\sqrt{2\Pi} e)^{-1} {\sigma_n^{-3}}f_{\sigma_n[k]}(y),
\end{aligned}
$$
$$
\begin{aligned}    
    \big\{f_{\sigma_n[k]}^{\prime\prime}(y)\big\}^2 =&  \bigg(\int_{-\infty}^{\infty} f_{[k]0}(x)\phi_{\sigma_n}^{\prime\prime}(y-x) dx\bigg)^2 \\
    =& \int_{-\infty}^{\infty}f_{[k]0}(u) du \int_{-\infty}^{\infty} f_{[k]0}(v)dv \cdot \phi_{\sigma_n}^{\prime\prime}(y-u)\phi_{\sigma_n}^{\prime\prime}(y-v)\\
    =& \int_{-\infty}^{\infty}f_{[k]0}(u) du \int_{-\infty}^{\infty} f_{[k]0}(v)dv \\
    & \cdot \sigma_n^{-4}\bigg\{\frac{(y-u)^2}{\sigma_n^2}-1\bigg\}\bigg\{\frac{(y-v)^2}{\sigma_n^2}-1\bigg\} \phi_{\sigma_n}(y-u)\phi_{\sigma_n}(y-v)\\
    \leq& \sigma_n^{-4}\int_{-\infty}^{\infty}f_{[k]0}(u) du \int_{-\infty}^{\infty} f_{[k]0}(v)dv \cdot \frac{(y-u)^4+(y-v)^4}{2\sigma_n^4}\phi_{\sigma_n}(y-u)\phi_{\sigma_n}(y-v)\\
    & + \sigma_n^{-4}\int_{-\infty}^{\infty}f_{[k]0}(u) du \int_{-\infty}^{\infty} f_{[k]0}(v)dv \cdot \phi_{\sigma_n}(y-u)\phi_{\sigma_n}(y-v)\\
    =& \sigma_n^{-4}\int_{-\infty}^{\infty}f_{[k]0}(u) \cdot (y-u)^4/\sigma_n^4 \cdot \phi_{\sigma_n}(y-u) du \int_{-\infty}^{\infty} \phi_{\sigma_n}(y-v)f_{[k]0}(v)dv\\
    &+ \sigma_n^{-4}\int_{-\infty}^{\infty}f_{[k]0}(u) \phi_{\sigma_n}(y-u) du \int_{-\infty}^{\infty} f_{[k]0}(v)\phi_{\sigma_n}(y-v)dv\\
    \leq&{\sigma_n^{-4}}\int_{-\infty}^{\infty} 16e^{-2}(\sqrt{2\Pi}\sigma_n)^{-1}f_{[k]0}(u) du \int_{-\infty}^{\infty} \phi_{\sigma_n}(y-v)f_{[k]0}(v)dv + \sigma_n^{-4}f_{\sigma_n[k]}^2(y)\\
    =& 16 (\sqrt{2\Pi} e^2)^{-1} {\sigma_n^{-5}}f_{\sigma_n[k]}(y) + (\sqrt{2\Pi})^{-1}\sigma_n^{-5}f_{\sigma_n[k]}(y).
\end{aligned}$$
Then, applying the Cauchy--Schwarz inequality and noting that $\sumk \ps = 1$, 
\begin{eqnarray}
\big\{f_{\sigma_n}^{\prime}(y)\big\}^2 &=& \Big\{\sumk \ps f_{\sigma_n[k]}^{\prime}(y) \Big\}^2 \leq \sumk \ps \big\{f_{\sigma_n[k]}^{\prime}(y)\big\}^2 \leq 2 (\sqrt{2\Pi} e)^{-1} {\sigma_n^{-3}}f_{\sigma_n}(y),  \nonumber \\
\big\{f_{\sigma_n}^{\prime\prime}(y)\big\}^2 &=& \Big\{\sumk \ps f_{\sigma_n[k]}^{\prime\prime}(y) \Big\}^2 \leq \sumk \ps \big\{f_{\sigma_n[k]}^{\prime\prime}(y)\big\}^2 \leq (16+e^2)(\sqrt{2\Pi}e^2)^{-1} {\sigma_n^{-5}}f_{\sigma_n}(y).  \nonumber
\end{eqnarray}
Therefore,
\begin{eqnarray}\label{eqn:bias}
    \Big | E \{\hat{f}_{\sigma_n}^{(l)}(y) \mid S^{(n)},A^{(n)},U^{(n)}\} - f_{\sigma_n}^{(l)}(y) \Big |^2 \leq  \frac{ \Delta_n^2 }{  \min_{k=1,\ldots,K} \ps^2 } \frac{4}{\sqrt{2\Pi}} \sigma_n^{-2l-1} f_{\sigma_n}(y), \ l=0,1,2.
\end{eqnarray}

Let $C_l = \max\{ \kappa_l^{SR}, 4/\sqrt{2\Pi} \} /  \min_{k=1,\ldots,K} \ps^2 $ and $\Lambda_n = \max_{k=1,\ldots,K} n_{[k]0(1)}/ \ncone$. Then $C_l$ is a constant independent of $n$ and $\Lambda_n \cp  \max_{k=1,\ldots,K} \ps $. Combining \eqref{eqn:var} and \eqref{eqn:bias}, we have
$$
E\Big[ \big\{\hat{f}_{\sigma_n}^{(l)}(y)-f_{\sigma_n}^{(l)}(y)\big\}^2 \mid S^{(n)},A^{(n)},U^{(n)} \Big] \leq C_l \Big\{ \frac{ \Lambda_n }{\ncone} + \Delta_n^2  \Big\} \sigma_n^{-2l-1} f_{\sigma_n}(y). 
$$
Substituting it into equation~\eqref{eqn:t1}, we have
$$
E\{ T_1 \mid S^{(n)},A^{(n)},U^{(n)} \} \leq 2 C_1 \Big\{ \frac{ \Lambda_n }{\ncone} + \Delta_n^2  \Big\} \frac{e_n}{\sigma_n^3} + 2 C_0 \Big\{ \frac{ \Lambda_n }{\ncone} + \Delta_n^2  \Big\} \frac{c_n^2 e_n}{\sigma_n}.
$$
By Assumption~\ref{cond:density-bickel}, we have $E\{ T_1 \mid S^{(n)},A^{(n)},U^{(n)} \} \cp 0$. Therefore $T_1 \cp 0$.

Similarly, let $T_1^{\prime}$ and $T_2^{\prime}$ denote the integrals in equation \eqref{bickel6.1-doubleprime} over $\mathcal{D}$ and  $\mathcal{D}^C$, respectively. Then, we have
$$
E\{ T_1^{\prime} \mid S^{(n)},A^{(n)},U^{(n)} \} \leq 2 C_2 \Big\{ \frac{ \Lambda_n }{\ncone} + \Delta_n^2  \Big\} \frac{e_n}{\sigma_n^5} + 2 C_0 \Big\{ \frac{ \Lambda_n }{\ncone} + \Delta_n^2  \Big\} \frac{b_n^2 e_n}{\sigma_n}.
$$
By Assumption~\ref{cond:density-bickel}, we have $E\{ T_1^{\prime} \mid S^{(n)},A^{(n)},U^{(n)} \} \cp 0$. Therefore $T_1^{\prime} \cp 0$.

On $\mathcal{D}^C$, we have  
\begin{equation}
\label{T2bound}
\begin{aligned}
E(T_2) \leq&    \int \frac{f_{\sigma_n}^{\prime 2}}{f_{\sigma_n}}(y)\bigg[P\left\{\left|\hat{f}_{\sigma_n}^{\prime}(y)\right|>c_n \hat{f}_{\sigma_n}(y)\right\} + P\left\{\left|\hat{f}_{\sigma_n}^{\prime\prime}(y)\right|>b_n \hat{f}_{\sigma_n}(y)\right\}\nonumber\\
&+P\left\{\hat{f}_{\sigma_n}(y)<d_n, f_0(y)>0\right\}+I\left(|y|>e_n\right)\bigg] d y, \nonumber \\
E(T_2^{\prime}) \leq&    \int \frac{f_{\sigma_n}^{\prime\prime 2}}{f_{\sigma_n}}(y)\bigg[P\left\{\left|\hat{f}_{\sigma_n}^{\prime}(y)\right|>c_n \hat{f}_{\sigma_n}(y)\right\} + P\left\{\left|\hat{f}_{\sigma_n}^{\prime\prime}(y)\right|>b_n \hat{f}_{\sigma_n}(y)\right\}\nonumber\\
&+P\left\{\hat{f}_{\sigma_n}(y)<d_n, f_0(y)>0\right\}+I\left(|y|>e_n\right)\bigg] d y. \nonumber 
\end{aligned} 
\end{equation}
Both of them converge to zero if 
\begin{eqnarray}
\label{e:f_hat_properties1}
    \hat{f}_{\sigma_n}(y) \cp f_0(y),&&\forall y, \quad \text{if} \quad n \sigma_n \rightarrow \infty,\\
\label{e:f_hat_properties2}
    \hat{f}_{\sigma_n}^{\prime}(y) \cp f_0^{\prime}(y),&& a.e. \quad y, \quad \text{if} \quad n \sigma_n^3 \rightarrow \infty,\\
\label{e:f_hat_properties3}
    \hat{f}_{\sigma_n}^{\prime\prime}(y) \cp f_0^{\prime\prime}(y),&& a.e. \quad y, \quad \text{if} \quad n \sigma_n^5 \rightarrow \infty,\\
\label{e:f_hat_properties4}
    \int \frac{f_{\sigma_n}^{\prime 2}}{f_{\sigma_n}}(y) d y &\leq& \int \frac{f_0^{\prime 2}}{f_0}(y) d y, \quad \forall n,\\
\label{e:f_hat_properties5}
    \int \frac{f_{\sigma_n}^{\prime\prime 2}}{f_{\sigma_n}}(y) d y &\leq& \int \frac{f_0^{\prime\prime 2}}{f_0}(y) d y, \quad \forall n;
\end{eqnarray}
see \cite{bickel1982adaptive}.

For deriving \eqref{e:f_hat_properties1}, conditional on $S^{(n)}, A^{(n)}$ and $U^{(n)}$, we have 
$$n_{[k]0(1)}^{-1}\sum_{i \in \mathcal{L}_1, S_i = k, A_i = 0} \phi_{\sigma}(y-Y_i) \cp f_{[k]0}(y), \quad n \sigma_n \rightarrow \infty.$$
Thus, conditional on $S^{(n)}, A^{(n)}$ and $U^{(n)}$, $\hat{f}_{\sigma_n}(y) - \sumk (n_{[k]0(1)}/\ncone)  f_{[k]0}(y) \cp 0$, implying unconditional convergence in probability. Moreover, since $p_{n[k](1)} = n_{[k]0(1)} / n_{0(1)} \cp \ps$, we have $\hat{f}_{\sigma_n}(y) \cp f_0(y)$. Similarly, we can derive \eqref{e:f_hat_properties2} and \eqref{e:f_hat_properties3}. Since \eqref{e:f_hat_properties4} is not related to $Y_i$, it holds by the proof in \cite{bickel1982adaptive}. Finally, applying the Cauchy--Schwarz inequality gives
$$\frac{f_{\sigma_n}^{\prime\prime 2}(y)}{f_{\sigma_n}(y)} = \frac{\big\{\int f_0^{\prime\prime}(y-x) \phi_{\sigma_n}(x) dx\big\}^2}{\int f_0(y-x)\phi_{\sigma_n}(x) dx }\leq \int \frac{f_0^{\prime\prime 2}(y-x)}{f_0(y-x)}\phi_{\sigma_n}(x) dx, \quad \forall y.$$ 
Integrating the above inequality with respect to $y$, we have
\begin{eqnarray*}
  \int_{-\infty}^{\infty} \frac{f_{\sigma_n}^{\prime\prime 2}(y)}{f_{\sigma_n}(y)} dy &\leq& \int_{-\infty}^{\infty} \frac{f_0^{\prime\prime 2}(y-x)}{f_0(y-x)} dy \int_{-\infty}^{\infty} \phi_{\sigma_n}(x) dx \\
  &=& \int_{-\infty}^{\infty} \frac{f_0^{\prime\prime 2}(z)}{f_0(z)} dz \int_{-\infty}^{\infty} \phi_{\sigma_n}(x) dx = \int_{-\infty}^{\infty} \frac{f_0^{\prime\prime 2}(z)}{f_0(z)} dz,
\end{eqnarray*}
that is, \eqref{e:f_hat_properties5} holds.

Therefore, we obtain $E(T_2) \to 0$ and $E(T_2^{\prime}) \to 0$, implying $T_2 \cp 0$ and $T_2^{\prime} \cp 0$. Consequently, equations \eqref{bickel6.1} and \eqref{bickel6.1-doubleprime} hold.

Moreover, by \eqref{e:f_hat_properties1}--\eqref{e:f_hat_properties5}, we have, as $n\rightarrow \infty$,
\begin{equation}
\label{bickel6.2}
    \int_{\{y:f_0(y)>0\}}\left\{\frac{f_{\sigma_n}^{\prime}}{\sqrt{f_{\sigma_n}}}(y)-\frac{f_0^{\prime}}{\sqrt{f_0}}(y)\right\}^2 d y \rightarrow 0,
\end{equation}
and 
\begin{equation}
\nonumber
    \int_{\{y:f_0(y)>0\}}\left\{\frac{f_{\sigma_n}^{\prime\prime}}{\sqrt{f_{\sigma_n}}}(y)-\frac{f_0^{\prime\prime}}{\sqrt{f_0}}(y)\right\}^2 d y \rightarrow 0.
\end{equation}

By Lemmas 6.3 in \cite{bickel1982adaptive}, as $n\rightarrow \infty$, the following term is bounded,
\begin{equation}
\nonumber
    \int_{\{y:f_0(y)>0\}} \sigma_n^{-2}\left(\sqrt{f_{\sigma_n}(y)}-\sqrt{f_0(y)}\right)^2 d y.
\end{equation}
Since $|\hat{f}_{\sigma_n}^{\prime}/\hat{f}_{\sigma_n}|(y) \leq c_n = o(\sigma_n^{-1})$, $|\hat{f}_{\sigma_n}^{\prime\prime}/\hat{f}_{\sigma_n}|(y) \leq b_n = o(\sigma_n^{-1})$, then we have
\begin{equation}
\label{bickel6.3}
    \int_{\{y:f_0(y)>0\}} \bigg(\frac{\hat{f}_{\sigma_n}^{\prime}}{\hat{f}_{\sigma_n}}\bigg)^2(y)\left(\sqrt{f_{\sigma_n}(y)}-\sqrt{f_0(y)}\right)^2 d y \cp 0,
\end{equation}
\begin{equation}
    \int_{\{y:f_0(y)>0\}} \bigg(\frac{\hat{f}_{\sigma_n}^{\prime\prime}}{\hat{f}_{\sigma_n}}\bigg)^2(y)\left(\sqrt{f_{\sigma_n}(y)}-\sqrt{f_0(y)}\right)^2 d y \cp 0. \nonumber
\end{equation}

Finally, we bound the $L_2$-loss of $g_n(y)$ by
$$
\begin{aligned}
& \int\left\{\frac{\hat{f}_{\sigma_n}^{\prime}}{\hat{f}_{\sigma_n}}(y)I_{\mathcal{D}}(y)-\frac{f_0^{\prime}}{f_0}(y)\right\}^2 f_0(y) d y \\
& \leq 3\left[\int_{\{y:f_0(y)>0\}}\left\{\frac{\hat{f}_{\sigma_n}^{\prime}}{\hat{f}_{\sigma_n}}(y)-\bigg(\frac{\hat{f}_{\sigma_n}^{\prime}}{\hat{f}_{\sigma_n}}\bigg)\bigg(\frac{f_{\sigma_n}}{f_0}\bigg)^{1 / 2}(y)\right\}^2I_{\mathcal{D}}(y) f_0(y) d y\right. \\
& \quad+\int_{\{y:f_0(y)>0\}}\left\{\bigg(\frac{\hat{f}_{\sigma_n}^{\prime}}{\hat{f}_{\sigma_n}}\bigg)\bigg(\frac{f_{\sigma_n}}{f_0}\bigg)^{1 / 2}(y)I_{\mathcal{D}}(y)-\bigg(\frac{f_{\sigma_n}^{\prime}}{f_{\sigma_n}}\bigg)\bigg(\frac{f_{\sigma_n}}{f_0}\bigg)^{1 / 2}(y)\right\}^2 f_0(y) d y \\
& \left.\quad+\int_{\{y:f_0(y)>0\}}\left\{\bigg(\frac{f_{\sigma_n}^{\prime}}{f_{\sigma_n}}\bigg)\bigg(\frac{f_{\sigma_n}}{f_0}\bigg)^{1 / 2}(y)-\frac{f_0^{\prime}}{f_0}(y)\right\}^2 f_0(y) d y\right],
\end{aligned}
$$
where the first term tends to zero in probability by (\ref{bickel6.3}), the second term tends to zero in probability by (\ref{bickel6.1}), and the last term tends to zero in probability by (\ref{bickel6.2}). Thus, equation~\eqref{eqn:f-consistent} holds.

Similarly, we can show that 
$$\int\left\{\frac{\hat{f}_{\sigma_n}^{\prime\prime}}{\hat{f}_{\sigma_n}}(y)I_{\mathcal{D}}(y)-\frac{f_0^{\prime\prime}}{f_0}(y)\right\}^2 f_0(y) d y = o_P(1).$$

(III) We prove equation \eqref{eqn:abs-f-continuity}. We slightly abuse notation by using $(\hat{f}_{0(1)}^{\prime\prime}/\hat{f}_{0(1)})(y)$ to denote $(\hat{f}_{\sigma_n}^{\prime\prime}/\hat{f}_{\sigma_n})(y) \cdot I_{\mathcal{D}}(y)$. Leveraging the Cauchy--Schwarz inequality, we have
    $$\int\bigg|\frac{\hat{f}_{0(1)}^{\prime\prime}}{\hat{f}_{0(1)}}-\frac{f_0^{\prime\prime}}{f_0}\bigg|(y) d F_0(y) = o_P(1).$$
For a specific stratum $k$, since $f_{[k]0}(y) \leq (\min_{k=1,...,K} \ps)^{-1}  f_{0}(y)$, then we have \begin{equation}
\label{e:L1-doubleprime}
    \int\bigg|\frac{\hat{f}_{0(1)}^{\prime\prime}}{\hat{f}_{0(1)}}-\frac{f_0^{\prime\prime}}{f_0}\bigg|(y) d F_{[k]0}(y) = o_P(1).
\end{equation}
    
    Next, by Cauchy--Schwarz inequality, we have $$ \begin{aligned}
        &\left(\int \bigg|\bigg(\frac{\hat{f}_{0(1)}^{\prime}}{\hat{f}_{0(1)}}\bigg)^2-\bigg(\frac{f_{0}^{\prime}}{f_{0}}\bigg)^2\bigg|(y) d F_{0}(y)\right)^2 \\
        &\leq \int \left(\frac{\hat{f}_{0(1)}^{\prime}}{\hat{f}_{0(1)}}-\frac{f_{0}^{\prime}}{f_{0}}\right)^2(y) d F_{0}(y) \cdot \int \left(\frac{\hat{f}_{0(1)}^{\prime}}{\hat{f}_{0(1)}}+\frac{f_{0}^{\prime}}{f_{0}}\right)^2(y) d F_{0}(y).
    \end{aligned}$$
The first integration on the right side of the above inequality is $o_P(1)$ by equation \eqref{eqn:f-consistent}. Moreover, the second integration is $O_P(1)$ due to equation \eqref{eqn:f-consistent} and $$ I(f_0) =  \int \left(\frac{f_{0}^{\prime}}{f_{0}}\right)^2(y) d F_{0}(y) < \infty.$$
Thus, 
$$
\left(\int \bigg|\bigg(\frac{\hat{f}_{0(1)}^{\prime}}{\hat{f}_{0(1)}}\bigg)^2-\bigg(\frac{f_{0}^{\prime}}{f_{0}}\bigg)^2\bigg|(y) d F_{0}(y)\right)^2  = \op.
$$
Since $f_{[k]0}(y) \leq (\min_{k=1,...,K} \ps)^{-1}  f_{0}(y)$, then we have 
    \begin{equation}
    \label{e:L1-square}
        \int \bigg|\bigg(\frac{\hat{f}_{0(1)}^{\prime}}{\hat{f}_{0(1)}}\bigg)^2- \bigg(\frac{f_{0}^{\prime}}{f_{0}}\bigg)^2\bigg|(y) d F_{[k]0}(y) = \op.
    \end{equation}

    Combining \eqref{e:L1-doubleprime} and \eqref{e:L1-square}, by the triangle inequality, we have
    $$\begin{aligned}
        \int \bigg|\bigg(\frac{\hat{f}_{0(1)}^{\prime}}{\hat{f}_{0(1)}}\bigg)^{\prime} \bigg|(y) dF_{[k]0}(y) =&  \int \bigg|\frac{\hat{f}_{0(1)}^{\prime\prime}}{\hat{f}_{0(1)}} - \bigg(\frac{\hat{f}_{0(1)}^{\prime}}{\hat{f}_{0(1)}}\bigg)^{2} \bigg|(y) dF_{[k]0}(y)\\
        \leq& \int \bigg|\frac{f_0^{\prime\prime}}{f_0}\bigg|(y) d F_{[k]0}(y) + \int \bigg(\frac{f_{0}^{\prime}}{f_{0}}\bigg)^2(y) d F_{[k]0}(y) + o_P(1).
    \end{aligned}$$
    Noting that $$\sum_{k=1}^K \ps \cdot \int \bigg|\frac{f_0^{\prime\prime}}{f_0}\bigg|(y) d F_{[k]0}(y) = \int |f_0^{\prime\prime}(y)| dy < \infty,$$
    $$\sum_{k=1}^K \ps \cdot \int \bigg(\frac{f_{0}^{\prime}}{f_{0}}\bigg)^2(y) d F_{[k]0}(y) = \int \bigg(\frac{f_{0}^{\prime}}{f_{0}}\bigg)^2(y) d F_{0}(y) = I(f_0) < \infty,$$
    we have $$\int |(\hat{f}_{0(1)}^{\prime}/\hat{f}_{0(1)})^{\prime} |(y) dF_{[k]0}(y) \leq (\min_{k=1,...,K} \ps)^{-1} \big\{\int |f_0^{\prime\prime}(y)| dy + I(f_0)\big\} + \op.$$

    By the argument in \cite{bickel1982adaptive}, we establish \eqref{eqn:abs-f-continuity} by substituting  ${\delta}_n$ with $\tilde{\delta}_n=t_n n^{-1 / 2} \to 0$, where $t_n$ is an arbitrary convergent deterministic sequence.

    By Assumption \ref{cond:partial-score}, we have
    \begin{equation}
        \label{e:prop2-tmp}
    \begin{aligned}
         \int \bigg|\bigg(\frac{\hat{f}_{0(1)}^{\prime}}{\hat{f}_{0(1)}}\bigg)^{\prime} \bigg|(y+\tilde{\delta}_n)f_{[k]0}(y) dy= &\int \bigg|\bigg(\frac{\hat{f}_{0(1)}^{\prime}}{\hat{f}_{0(1)}}\bigg)^{\prime} \bigg|(y)f_{[k]0}(y-\tilde{\delta}_n) dy\\ 
    =& \int \bigg|\bigg(\frac{\hat{f}_{0(1)}^{\prime}}{\hat{f}_{0(1)}}\bigg)^{\prime} \bigg|(y) \cdot f_{[k]0}(y)\big\{1+o(1)\big\} dy\\
    \leq& \big(\min_{k=1,...,K} \ps\big)^{-1} \cdot \bigg\{\int |f_0^{\prime\prime}(y)| dy + I(f_0)\bigg\} + \op.
    \end{aligned}
    \end{equation}
    
If $\{Y_i^{\prime}\}_{i=1,...,m}$ are i.i.d. samples drawn from the density $f_{[k]0}$ and are independent of $\hat{f}_{0(1)}^{\prime}/\hat{f}_{0(1)}$, by the weak law of large numbers, we have
$$\begin{aligned}
    &\frac{1}{m} \sum_{i=1}^m \bigg|\bigg(\frac{\hat{f}_{0(1)}^{\prime}}{\hat{f}_{0(1)}}\bigg)^{\prime}\bigg|\left(Y_i^{\prime}+\tilde{\delta}_n\right) \mid \hat{f}_{0(1)} \\
&= \int \bigg|\bigg(\frac{\hat{f}_{0(1)}^{\prime}}{\hat{f}_{0(1)}}\bigg)^{\prime} \bigg|(y+\tilde{\delta}_n)f_{[k]0}(y) dy + \op \\
&\leq \big(\min_{k=1,...,K} \ps\big)^{-1} \cdot \bigg\{\int |f_0^{\prime\prime}(y)| dy + I(f_0)\bigg\} + \op.\end{aligned}$$
Therefore,
$$\frac{1}{m} \sum_{i=1}^m \bigg|\bigg(\frac{\hat{f}_{0(1)}^{\prime}}{\hat{f}_{0(1)}}\bigg)^{\prime}\bigg|\left(Y_i^{\prime}+\tilde{\delta}_n\right) = O_P(1).$$

(IV) We prove equation \eqref{eqn:f-continuity}. Similar to (III), if $\{Y_i^{\prime}\}_{i=1,...,m}$ are i.i.d. samples drawn from the density $f_{[k]0}$ and are independent of $\hat{f}_{0(1)}^{\prime}/\hat{f}_{0(1)}$, by the weak law of large numbers, we have
$$\begin{aligned}
    &\frac{1}{m} \sum_{i=1}^m \bigg\{\bigg(\frac{\hat{f}_{0(1)}^{\prime}}{\hat{f}_{0(1)}}\bigg)^{\prime}(Y_i^{\prime}+\tilde{\delta}_n) - \bigg(\frac{f_{0}^{\prime}}{f_0}\bigg)^{\prime}(Y_i^{\prime}) \bigg\} \mid \hat{f}_{0(1)}  \\
=& \int \bigg\{\bigg(\frac{\hat{f}_{0(1)}^{\prime}}{\hat{f}_{0(1)}}\bigg)^{\prime} (y+\tilde{\delta}_n)- \bigg(\frac{f_{0}^{\prime}}{f_0}\bigg)^{\prime}(y)\bigg\}f_{[k]0}(y) dy + \op \\
=& \int \frac{\hat{f}_{0(1)}^{\prime\prime}}{\hat{f}_{0(1)}}(y)f_{[k]0}(y-\tilde{\delta}_n) dy - \int \bigg(\frac{\hat{f}_{0(1)}^{\prime}}{\hat{f}_{0(1)}}\bigg)^{2} (y)f_{[k]0}(y-\tilde{\delta}_n) dy - \int \bigg(\frac{f_{0}^{\prime}}{f_0}\bigg)^{\prime}(y)f_{[k]0}(y) dy + \op \\
\leq& \int \frac{\hat{f}_{0(1)}^{\prime\prime}}{\hat{f}_{0(1)}}(y)f_{[k]0}(y) dy - \int \bigg(\frac{\hat{f}_{0(1)}^{\prime}}{\hat{f}_{0(1)}}\bigg)^{2} (y)f_{[k]0}(y) dy + o(1) \cdot \bigg\{\int \bigg|\frac{\hat{f}_{0(1)}^{\prime\prime}}{\hat{f}_{0(1)}}\bigg|(y)f_{[k]0}(y) dy \\
&+ \int \bigg|\bigg(\frac{\hat{f}_{0(1)}^{\prime}}{\hat{f}_{0(1)}}\bigg)^{2}\bigg| (y)  f_{[k]0}(y)dy\bigg\} - \bigg\{\int\frac{f_0^{\prime\prime}}{f_0}(y)f_{[k]0}(y) dy - \int \bigg(\frac{f_0^{\prime}}{f_0}\bigg)^{2} (y)f_{[k]0}(y) dy\bigg\}+ \op \\
\leq& \int\bigg|\frac{\hat{f}_{0(1)}^{\prime\prime}}{\hat{f}_{0(1)}}-\frac{f_0^{\prime\prime}}{f_0}\bigg|(y)f_{[k]0}(y) dy + \int \bigg|\bigg(\frac{\hat{f}_{0(1)}^{\prime}}{\hat{f}_{0(1)}}\bigg)^2- \bigg(\frac{f_{0}^{\prime}}{f_{0}}\bigg)^2\bigg|(y)f_{[k]0}(y) dy + \op\\
=& \op.\end{aligned}$$
Therefore,
$$\frac{1}{m} \sum_{i=1}^m \bigg(\frac{\hat{f}_{0(1)}^{\prime}}{\hat{f}_{0(1)}}\bigg)^{\prime}\left(Y_i^{\prime}+\tilde{\delta}_n\right) = \frac{1}{m} \sum_{i=1}^m \bigg(\frac{f_0^{\prime}}{f_0}\bigg)^{\prime}\left(Y_i^{\prime}\right) + \op.$$

\end{proof}

\subsection{Proof of Proposition~\ref{prop:root-n-consistent}}
\begin{proof}
    Proposition~\ref{prop:root-n-consistent} directly follows from \citet[][Theorem 3.1 and Theorem 3.2]{zhang2020quantile}, and therefore, we omit its proof. 
\end{proof}

\subsection{Proof of Proposition~\ref{prop:root-n-consistent-str}}
\label{sec:proof-initial}

\begin{proof}
Let $\mathcal{Y}$ be a set of real numbers or random variables, and $\textnormal{median}(\mathcal{Y})$ denotes the median of its elements. Condition on $(S^{(n)}, A^{(n)})$, and employing the results under simple randomization \citep[][Theorem A.1]{athey2021semiparametric}, we have
$$\sqrt{n_{[k]}}\big\{\textnormal{median}(\left\{Y_i\right\}_{S_i=k,A_i=1})-\textnormal{median}(\left\{Y_i\right\}_{S_i=k,A_i=0})-\tau_{[k]}\big\} \mid (S^{(n)}, A^{(n)}) \xrightarrow{d} N(0, \sigma_{[k], \textnormal{median}}^2),$$
where $\sigma_{[k], \textnormal{median}}^2 = \{4\pi(1-\pi)f_{[k]0}^{2}(F_{[k]0}^{-1}(1/2))\}^{-1}$.

As $f_{[k]0}(F_{[k]0}^{-1}(1/2)) > 0$ and $n_{[k]} / n \xrightarrow{P} \ps $, it follows that 
$$
\sqrt{n}\{\textnormal{median}(\{Y_i\}_{S_i=k,A_i=1})-\textnormal{median}(\{Y_i\}_{S_i=k,A_i=0})-\tau_{[k]}\} =  O_{P}(1).
$$

By weighting the above term with the stratum proportion $p_{n[k]}$ and and summing over $k$, we have
$$\sqrt{n} \bigg[\sum_{k=1}^K p_{n[k]}\left\{\textnormal{median}(\left\{Y_i\right\}_{S_i=k,A_i=1})-\textnormal{median}(\left\{Y_i\right\}_{S_i=k,A_i=0}) \right\} -  \sum_{k=1}^K p_{n[k]}\tau_{[k]} \bigg] = O_P(1).$$
Given $\sqrt{n}(p_{n[k]}- p_{[k]}) = O_P(1)$ and $\sum_{k=1}^K p_{[k]}\tau_{[k]} = \tau$, we have $$\sqrt{n} \bigg[\sum_{k=1}^K p_{n[k]}\left\{\textnormal{median}(\left\{Y_i\right\}_{S_i=k,A_i=1})-\textnormal{median}(\left\{Y_i\right\}_{S_i=k,A_i=0}) \right\} - \tau \bigg] =O_P(1).$$
Therefore,
$
\sqrt{n} ( \hat{\tau}_{\textnormal{str-md}} - \tau) =O_P(1).
$

\end{proof}

	\subsection{Proof of Theorem \ref{t1}}

Before proving the result, we introduce two useful lemmas whose proofs are provided in Section~\ref{sec:lemma}. Note that Lemma~\ref{l1} below relies on the less stringent Assumption~\ref{a3} rather than the more restrictive Assumption~\ref{a4}.

 \begin{lemma}
    \label{l1}
    Under Assumptions \ref{a1}--\ref{a3} and \ref{cond:second-moment}, $\hat{I}(f_0)=I\left(f_0\right)+o_P(1).$    
\end{lemma}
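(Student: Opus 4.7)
The plan is to decompose $\hat{I}(f_0)$ along strata and splits, then handle each piece by a conditional weak law of large numbers combined with the $L_2$ consistency of the estimated score function in Assumption~\ref{cond:second-moment}(iii). Specifically, I would write
$$
\hat{I}(f_0) \;=\; \sum_{j=1}^{2} \sum_{k=1}^{K} \frac{n_{[k]0(j)}}{n_0}\, T_{jk},
\qquad
T_{jk} \;:=\; \frac{1}{n_{[k]0(j)}} \sum_{i \in \mathcal{L}_j,\, A_i=0,\, S_i=k} \bigg\{\frac{\hat{f}_{0(3-j)}^{\prime}}{\hat{f}_{0(3-j)}}(Y_i)\bigg\}^{2},
$$
and argue that each $T_{jk}$ converges in probability to the stratum-specific Fisher quantity $I_{[k]}(f_0) := \int (f_0^{\prime}/f_0)^2(y) f_{[k]0}(y)\,dy$.

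To handle $T_{jk}$, I would condition on $\mathcal{F}_{3-j} := \sigma(S^{(n)}, A^{(n)}, U^{(n)}, D^{(\mathcal{L}_{3-j})})$. Under this conditioning, $\hat{f}_{0(3-j)}$ is fixed, and by the conditional-independence argument recalled at the start of Section~\ref{sec:proof-main}, the $\{Y_i\}_{i \in \mathcal{L}_j, A_i = 0, S_i = k}$ are i.i.d.\ from $F_{[k]0}$. Writing $\hat g := \hat{f}_{0(3-j)}^{\prime}/\hat{f}_{0(3-j)}$ and $g := f_0^{\prime}/f_0$, the conditional mean is $\mu_{jk} := \int \hat g^{2}(y) f_{[k]0}(y)\,dy$. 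Using $f_{[k]0} \le p_{[k]}^{-1} f_0$ together with the $L_2(F_0)$-consistency in Assumption~\ref{cond:second-moment}(iii) and the Cauchy--Schwarz bound
$
\int |\hat g^2 - g^2|\,dF_{[k]0} \le \{\int(\hat g - g)^2 dF_{[k]0}\}^{1/2}\{\int(\hat g + g)^2 dF_{[k]0}\}^{1/2},
$
I would conclude $\mu_{jk} = I_{[k]}(f_0) + \op$. A Chebyshev bound then gives
$$
\Var(T_{jk} \mid \mathcal{F}_{3-j}) \;\le\; \frac{1}{n_{[k]0(j)}} \sup_{y} \hat g^{2}(y)\cdot \mu_{jk} \;=\; \frac{o_P(n)}{n_{[k]0(j)}} \cdot O_P(1) \;=\; \op,
$$
so $T_{jk} = I_{[k]}(f_0) + \op$ unconditionally.

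Finally, Assumption~\ref{a3} yields $\sum_{j=1}^{2} n_{[k]0(j)}/n_0 = n_{[k]0}/n_0 \cp p_{[k]}$, whence
$$
\hat{I}(f_0) \;\cp\; \sum_{k=1}^{K} p_{[k]}\, I_{[k]}(f_0) \;=\; \int \bigg(\frac{f_0^{\prime}}{f_0}\bigg)^{2}(y)\, \sum_{k=1}^{K} p_{[k]} f_{[k]0}(y)\,dy \;=\; I(f_0),
$$
which is the required conclusion. The main obstacle is the conditional WLN for a triangular array, since $\hat g^{2}(Y_i)$ is not uniformly bounded; the argument hinges on marrying the $L_2(F_0)$-consistency with the pointwise truncation $\sup_y|\hat g|(y) = o_P(n^{1/2})$ so that the second-moment blow-up is absorbed by the $1/n_{[k]0(j)} = O_P(n^{-1})$ averaging factor. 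Note also that only Assumption~\ref{a3} (not the stronger Assumption~\ref{a4}) is needed for the stratum-proportion convergence, which is why the lemma holds under arbitrary covariate-adaptive randomization, including minimization.
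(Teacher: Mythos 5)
Your proof is correct and follows essentially the same route as the paper's: condition on the opposite split so that the control outcomes within each stratum are i.i.d.\ from $F_{[k]0}$, identify the conditional mean with $\int \hat g^2\,dF_{[k]0}$, pass to $I_{[k]}(f_0)$ via Cauchy--Schwarz, the $L_2(F_0)$-consistency of the estimated score, and the bound $f_{[k]0}\le p_{[k]}^{-1}f_0$, and then recombine across strata and splits using the proportion convergence guaranteed by Assumption~\ref{a3}. The only difference is presentational: you make the conditional law of large numbers for the triangular array explicit through a Chebyshev bound exploiting $\sup_y|\hat g|(y)=o_P(n^{1/2})$, whereas the paper simply invokes the weak law of large numbers for the conditionally i.i.d.\ summands.
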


 \begin{remark}
    Lemma \ref{l1} requires weaker conditions compared to Theorem \ref{t1}. Even in cases where the designs do not satisfy Assumption \ref{a4} (e.g., Pocock and Simon's minimization), the consistency of $\hat{I}(f_0)$ can still be preserved as long as the design satisfies Assumption~\ref{a3}.
 \end{remark}

 \begin{lemma}
\label{l2}
    Under Assumptions \ref{a1}, \ref{a2}, \ref{a4} and \ref{cond:second-moment}, we have 
    $$\frac{1}{\sqrt{n}}\sum_{i \in \mathcal{L}_1} \psi_{\hat{f}_{0(2)}}(A_i, Y_i ; \tau) = \frac{1}{\sqrt{n}}\sum_{i \in \mathcal{L}_1} \psi_{f_0}(A_i, Y_i ; \tau) + o_P(1),$$
    $$\frac{1}{\sqrt{n}}\sum_{i \in \mathcal{L}_2} \psi_{\hat{f}_{0(1)}}(A_i, Y_i ; \tau) = \frac{1}{\sqrt{n}}\sum_{i \in \mathcal{L}_1} \psi_{f_0}(A_i, Y_i ; \tau) + o_P(1).$$
\end{lemma}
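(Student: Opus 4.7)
The plan is to prove the first identity; the second follows by the symmetric argument with $\mathcal{L}_1$ and $\mathcal{L}_2$ swapped (the right-hand index set in the displayed second equation is evidently a typo for $\mathcal{L}_2$). Let $Q_g(A,Y;\tau) = (A/\pi)(g'/g)(Y-\tau) - ((1-A)/(1-\pi))(g'/g)(Y)$, so that $\psi_g = -Q_g/I(g)$, with $\hat I(f_0)$ in place of $I(\hat f_{0(j)})$ by the convention adopted after equation \eqref{E1}. Writing
\begin{equation*}
\psi_{\hat f_{0(2)}} - \psi_{f_0} = -\hat I(f_0)^{-1}\bigl(Q_{\hat f_{0(2)}} - Q_{f_0}\bigr) + \bigl(I(f_0)^{-1} - \hat I(f_0)^{-1}\bigr)Q_{f_0},
\end{equation*}
the second piece contributes $o_P(1)$ after scaling by $n^{-1/2}$: Lemma~\ref{l1} gives $I(f_0)^{-1} - \hat I(f_0)^{-1} = o_P(1)$, and $n^{-1/2}\sum_{i\in\mathcal{L}_1} Q_{f_0}(A_i,Y_i;\tau) = O_P(1)$ by a standard CLT for influence-function sums under CAR, which is applicable because $Q_{f_0}$ has finite second moment by Assumption~\ref{cond:second-moment}(i). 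So the remaining task is to show $n^{-1/2}\sum_{i\in\mathcal{L}_1}(Q_{\hat f_{0(2)}} - Q_{f_0})(A_i,Y_i;\tau) = o_P(1)$.

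The strategy is to condition on $\mathcal{G}_n := (S^{(n)}, A^{(n)}, U^{(n)}, D^{(\mathcal{L}_2)})$, which renders $\hat f_{0(2)}$ a fixed function, and to invoke the observation in the discussion at the start of Section~\ref{sec:proof-main} that, given $\mathcal{G}_n$, the outcomes $\{Y_i\}_{i\in\mathcal{L}_1,\,S_i=k,\,A_i=a}$ are i.i.d.\ with density $f_{[k]a}$. Letting $R(y) = (\hat f_{0(2)}'/\hat f_{0(2)} - f_0'/f_0)(y)$, and using $f_{[k]1}(y) = f_{[k]0}(y-\tau)$ with a change of variables, the conditional mean equals
\begin{equation*}
n^{-1/2}\sum_{k=1}^K \bigg\{\frac{n_{[k]1(1)}}{\pi} - \frac{n_{[k]0(1)}}{1-\pi}\bigg\} \int R(y)\, f_{[k]0}(y)\, dy.
\end{equation*}
Since $n_{[k]a(1)} = n_{[k]a}/2 + O(1)$, the bracketed weight equals $(n_{[k]1}-\pi n_{[k]})/(2\pi(1-\pi)) + O(1)$, which is $O_P(\sqrt n)$ by Assumption~\ref{a4}. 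Meanwhile, Cauchy--Schwarz combined with $f_{[k]0} \le p_{[k]}^{-1} f_0$ and Assumption~\ref{cond:second-moment}(iii) gives $\int R(y) f_{[k]0}(y)\, dy = o_P(1)$. Hence the conditional mean is $O_P(\sqrt n)\cdot o_P(1)/\sqrt n = o_P(1)$.

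For the conditional variance, the conditional independence of the summands given $\mathcal{G}_n$ yields
\begin{equation*}
\mathrm{Var}\bigg(n^{-1/2}\sum_{i\in\mathcal{L}_1}(Q_{\hat f_{0(2)}} - Q_{f_0}) \,\Big|\, \mathcal{G}_n\bigg) \le \frac{1}{n}\sum_{k=1}^K \bigg\{\frac{n_{[k]1(1)}}{\pi^2} + \frac{n_{[k]0(1)}}{(1-\pi)^2}\bigg\}\int R^2(y) f_{[k]0}(y)\, dy;
\end{equation*}
the coefficients $n_{[k]a(1)}/n$ are $O_P(1)$, and the integrals are $o_P(1)$ again by Assumption~\ref{cond:second-moment}(iii), so the whole upper bound is $o_P(1)$. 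Combining via Chebyshev and noting that conditional convergence to zero in probability implies unconditional convergence, the desired $o_P(1)$ follows; multiplying by the $O_P(1)$ factor $\hat I(f_0)^{-1}$ from Lemma~\ref{l1} completes the argument. The main obstacle is the mean cancellation step: the integral $\int R\, f_{[k]0}$ is only known to be $o_P(1)$, so only the $\sqrt n$-rate control on stratum imbalances furnished by Assumption~\ref{a4} — rather than the weaker Assumption~\ref{a3} — is strong enough to render the conditional bias negligible after the $n^{-1/2}$ scaling.
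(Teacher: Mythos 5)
Your proposal is correct and follows essentially the same route as the paper: both reduce to the difference of the unnormalized scores $Q_{\hat f_{0(2)}}-Q_{f_0}$ (the paper multiplies through by $\hat I(f_0)$ rather than carrying $\hat I(f_0)^{-1}$, but the decomposition is the same), dispatch the Fisher-information discrepancy via Lemma~\ref{l1} plus an $O_P(1)$ bound on the score sum, and handle the main term by conditioning on $(S^{(n)},A^{(n)},U^{(n)},D^{(\mathcal{L}_2)})$ and showing the conditional bias is negligible using the $O_P(\sqrt n)$ stratum-imbalance rate from Assumption~\ref{a4} together with $\int R\,dF_{[k]0}=o_P(1)$, and the conditional variance is $o_P(1)$ via Assumption~\ref{cond:second-moment}(iii). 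Your closing remark correctly identifies the bias-cancellation step as the one place where Assumption~\ref{a4} (rather than Assumption~\ref{a3}) is genuinely needed, and the flagged typo in the second display is indeed a typo.
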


	\begin{proof}[Proof of Theorem~\ref{t1}]
	By definition and the first order Taylor expansion, we have
	$$ \begin{aligned}
		\hat{\tau}_{\dim}-\tau &= \tilde{\tau}-\tau+\frac{1}{n}\left(\sum_{i \in \mathcal{L}_1} \psi_{\hat{f}_{0(2)}}\left(A_i, Y_i ; \tilde{\tau}\right)+\sum_{i \in \mathcal{L}_2} \psi_{\hat{f}_{0(1)}}\left(A_i, Y_i ; \tilde{\tau}\right)\right) \\
		&= \tilde{\tau}-\tau+\frac{1}{n}\left(\sum_{i \in \mathcal{L}_1} \psi_{\hat{f}_{0(2)}}\left(A_i, Y_i ; \tau\right)+\sum_{i \in \mathcal{L}_2} \psi_{\hat{f}_{0(1)}}\left(A_i, Y_i ; \tau\right)\right) \\  & \quad+ \frac{1}{n} (\tilde{\tau}-\tau)\left(\sum_{i \in \mathcal{L}_1} \frac{\partial}{\partial \tau} \psi_{\hat{f}_{0(2)}}\left(A_i, Y_i ; \bar{\tau}\right)+\sum_{i \in \mathcal{L}_2} \frac{\partial}{\partial \tau} \psi_{\hat{f}_{0(1)}}\left(A_i, Y_i ; \bar{\tau}\right)\right),
	\end{aligned}$$
    where $\bar{\tau}$ is between $\tilde{\tau}$ and $\tau$. By Lemma \ref{l2}, we have
   \begin{eqnarray*}
   \hat{\tau}_{\dim}-\tau &=&  \frac{1}{n}\sum_{i=1}^{n} \psi_{f_0}(A_i, Y_i ; \tau)+ (\tilde{\tau}-\tau)  \bigg[1+\frac{1}{n}\Big\{\sum_{i \in \mathcal{L}_1} \frac{\partial}{\partial \tau} \psi_{\hat{f}_{0(2)}}\left(A_i, Y_i ; \bar{\tau}\right) \\
   && +\sum_{i \in \mathcal{L}_2} \frac{\partial}{\partial \tau} \psi_{\hat{f}_{0(1)}}\left(A_i, Y_i ; \bar{\tau}\right)\Big\}\bigg]+o_P\Big(\frac{1}{\sqrt{n}}\Big)  \\
 &=& B_1 + (\tilde{\tau}-\tau)(1+B_2) + o_P\Big(\frac{1}{\sqrt{n}}\Big),
   \end{eqnarray*}
	where 
	$$ \begin{aligned} B_1 =& \frac{1}{n}\sum_{i=1}^{n} \psi_{f_0}\left(A_i, Y_i ; \tau\right), \\
		B_2 =& \frac{1}{n} \left(\sum_{i \in \mathcal{L}_1} \frac{\partial}{\partial \tau} \psi_{\hat{f}_{0(2)}}\left(A_i, Y_i ; \bar{\tau}\right)+\sum_{i \in \mathcal{L}_2} \frac{\partial}{\partial \tau} \psi_{\hat{f}_{0(1)}}\left(A_i, Y_i ; \bar{\tau}\right)\right).\end{aligned} $$

Recall that $Z_i(a) = -I(f_a)^{-1} \cdot (f_a^{\prime}/f_a)(Y_i(a))$, $a=0,1$, and abbreviate $Z_i(A_i) = A_i Z_i(1) + (1 - A_i) Z_i(0)$ to $Z_i$. Since $f_1(y) = f_0(y-\tau)$, then $I(f_1) = I(f_0)$. Moreover,
$$\bigg(\int_{-\infty}^\infty |f_0^{\prime}(y)| dy \bigg)^2 \leq \int_{-\infty}^\infty \frac{\big(f_0^{\prime}\big)^2}{f_0}(y) dy \cdot \int_{-\infty}^\infty f_0(y) dy = I(f_0) \cdot 1 < \infty. $$ 
This implies $\int_{-\infty}^\infty |f_0^{\prime}(y)| dy < \infty$, concluding that $f_0(\infty)$ and $f_0(-\infty)$ both exist and equal 0. Consequently,
$$-I(f_0) \cdot E\{Z_i(0)\} = \int_{-\infty}^\infty \frac{\big(f_0^{\prime}\big)}{f_0}(y) d F_0(y) =  \int_{-\infty}^\infty f_0^{\prime}(y) d y = f_0(\infty) - f_0(-\infty) = 0. $$
Therefore, $E\{Z_i(0)\} = 0$. Similarly, $E\{Z_i(1)\}  = 0$.

For $Y_i$'s satisfying Assumptions \ref{a1}, \ref{a2} and \ref{a4}, it can be easily shown that these assumptions remain valid when $Y_i$ is replaced by $Z_i$. Moreover, the transformed outcomes $Z_i(a)$'s have finite second-order moment with $E\{Z_i(a)\}^2 = [{I(f_0)}]^{-1}$, and

 $$\begin{aligned}
     B_1 &= -\frac{1}{n\pi}\sumn A_i \frac{1}{I\left(f_0\right)}\cdot \frac{f_0^{\prime}}{f_0}(Y_i-\tau) + \frac{1}{n(1-\pi)}\sumn (1 - A_i) \frac{1}{I\left(f_0\right)}\cdot \frac{f_0^{\prime}}{f_0}(Y_i) \\
      &= \frac{1}{n\pi}\sumn A_i Z_i -  \frac{1}{n(1-\pi)}\sumn (1 - A_i) Z_i.
 \end{aligned}$$
Applying \citet[][Theorem 4.1]{Bugni2018} to $Z_i(a)$, we have  $n_1^{-1/2} \sumn A_i Z_i = O_P(1)$,  and $n_0^{-1/2} \sumn (1 - A_i) Z_i = O_P(1)$. Alongside $n_1/n \cp \pi$ and $n_0 /n \cp (1-\pi)$, we have
$$
B_1 =  \frac{1}{n_1}\sumn A_i Z_i -  \frac{1}{n_0}\sumn (1 - A_i) Z_i + o_P\Big(\frac{1}{\sqrt{n}}\Big).
$$

	By Theorem 4.1 in \cite{Bugni2018}, we have 
	$$ \sqrt{n} B_1 \cd  N(0, V_{Z}^2 + V_{H}^2 + V_{A}^2), $$
where
$$
V_{Z}^2 = \frac{\Var\{\tilde Z_i(1)\}}{\pi} + \frac{\Var\{\tilde Z_i(0)\}}{1-\pi}, \quad V_{H}^2 = E\{ \check{Z}_i(1) - \check{Z}_i(0) \}^2, \quad V_{A}^2 = E\Big[ q_{[S_i]} \Big\{ \frac{\check Z_i(1)}{\pi} + \frac{\check Z_i(0)}{1-\pi}  \Big\}^2  \Big],
$$
with $\tilde Z_i(a) = Z_i(a) - E\{Z_i(a) \mid S_i  \}$ and $\check Z_i(a) = E\{Z_i(a) \mid S_i  \} -  E\{Z_i(a)\}$, $a=0,1$.

 Moreover, since $\sqrt{n} (\tilde \tau - \tau) = O_P(1)$ and $n_{[k]1(1)}/n \xrightarrow{P} \pi\ps/2$, then applying Assumption \ref{cond:second-moment}(iii) for $\{Y_i\}_{i \in \mathcal{L}_1, S_i=k, A_i=a}$ conditional on $\{S^{(n)},A^{(n)},U^{(n)}\}$, we have
 $$\begin{aligned}
     \frac{1}{n_{[k]1(1)}} \sum_{i \in \mathcal{L}_1, S_i =k, A_i=1} \left(\frac{\hat{f}_{0(1)}^{\prime}}{\hat{f}_{0(1)}}\right)^{\prime}\left(Y_i-\tau+(\tau-\bar{\tau})\right)=\frac{1}{n_{[k]1(1)}}  \sum_{i \in \mathcal{L}_1, S_i =k, A_i=1} \left(\frac{{f_0}^{\prime}}{f_0}\right)^{\prime}\left(Y_i-\tau\right)+o_P(1).
 \end{aligned}
 $$
 A similar result holds for $\{Y_i\}_{i \in \mathcal{L}_2, S_i = k, A_i=a}$. Summing over $k$ with weights $n_{[k]1(1)}/n$, we have
	$$
	\begin{aligned}
		B_2 
		&=\frac{1}{n} \left(\sum_{i \in \mathcal{L}_1} \frac{\partial}{\partial \tau} \psi_{\hat{f}_{0(2)}}\left(A_i, Y_i ; \bar{\tau}\right)+\sum_{i \in \mathcal{L}_2} \frac{\partial}{\partial \tau} \psi_{\hat{f}_{0(1)}}\left(A_i, Y_i ; \bar{\tau}\right)\right) \\
		&=\frac{1}{n} \sum_{i=1}^{n} \frac{\partial}{\partial \tau} \psi_{f_0}\left(A_i, Y_i ; \tau\right)+o_P(1) \\
            &=-\frac{1}{nI(f_0)}\sum_{i=1}^n\bigg\{ \frac{A_i}{\pi}\cdot \frac{\partial}{\partial \tau}\frac{{f_0}^{\prime}}{f_0}\left(Y_i-\tau\right)-\frac{1-A_i}{\pi}\cdot \frac{\partial}{\partial \tau}\frac{{f_0}^{\prime}}{f_0}\left(Y_i\right)\bigg\}+o_P(1). \\
		&=\frac{1}{nI(f_0)}\sum_{i=1}^n \frac{A_i}{\pi}\left(\frac{{f_0}^{\prime}}{f_0}\right)^{\prime}\left(Y_i-\tau\right)+o_P(1). \\
	\end{aligned}
	$$
Since $\int_{-\infty}^\infty |f_0^{\prime\prime}(y)| dy < \infty$, then
 $$
 E \left(\frac{{f_0}^{\prime}}{f_0}\right)^{\prime}\left(Y_i(1)-\tau\right) = \int \left({f_0}^{\prime}/f_0\right)^{\prime}(y - \tau) d F_{1}(y) = \int \left({f_0}^{\prime}/f_0\right)^{\prime}(y) d F_{0}(y) = - I(f_0) < \infty.
 $$
Additionally, $\int_{-\infty}^\infty |f_0^{\prime\prime}(y)| dy < \infty$ also indicates that $|({f_0}^{\prime}/f_0)^{\prime}(Y_i(1)-\tau)|$ has a finite first-order moment. By the strong law of large numbers under covariate-adaptive randomization \citep[][Lemma B.3]{Bugni2018}, we have 
 $$
 \frac{1}{n_1} \sumn A_i \left(\frac{{f_0}^{\prime}}{f_0}\right)^{\prime}\left(Y_i-\tau\right) = -\pi I(f_0) + o_P(1).
 $$
 Together with $n_1/n \cp \pi$, we have $B_2 + 1 = o_P(1)$.

Since $\sqrt{n} (\tilde \tau - \tau) = O_P(1)$, then by Slutsky's theorem, we have
$$
\sqrt{n} (\hat{\tau}_{\dim}-\tau) = \sqrt{n} B_1 + o_P(1) \cd  N(0, V_{Z}^2 + V_{H}^2 + V_{A}^2). 
$$
 \end{proof}

 \subsection{Proof of Corollary \ref{cor1}}
 \begin{proof}
Recall that $V_{A}^2 = E[ q_{[S_i]} \{ \check Z_i(1)/\pi + \check Z_i(0)/(1-\pi)  \}^2  ]$, where $\check Z_i(a) = E\{Z_i(a) \mid S_i  \} -  E\{Z_i(a)\}$, $a=0,1$. When $q_{[k]} = 0$ for $k=1,...,K$, we have $V_A^2 = 0$. By Theorem \ref{t1}, we have 
        $$\sqrt{n}(\hat{\tau}_{\dim}-\tau) \stackrel{d}{\rightarrow} N(0, V_{Z}^2 + V_{H}^2).$$ 
 \end{proof}

 \subsection{Proof of Theorem \ref{thm:Athey-var}}
 \begin{proof}
     By Lemma \ref{l1}, we have $\hat{I}(f_0)=I(f_0)
     +o_P(1)$, which implies that $\{\pi(1-\pi)\hat{I}(f_0)\}^{-1}$ is a consistent estimator of $\{\pi(1-\pi)I(f_0)\}^{-1}$. Note that $ E\{Z_i(a)\} = 0$ and $\Var\{Z_i(a)\} = E\{Z_i(a)\}^2 = I(f_0)^{-1}$ for $a=0,1$. Thus, we can decompose $\{\pi(1-\pi)I(f_0)\}^{-1}$ as follows:
     $$\begin{aligned}
     &\{\pi(1-\pi)I(f_0)\}^{-1} \\
         = &\frac{\Var\{Z_i(1)\}}{\pi} +\frac{\Var\{Z_i(0)\}}{1-\pi}\\
         =& \frac{\Var\{E(Z_i(1) \mid S_i)\}}{\pi} + \frac{\Var\{E(Z_i(0) \mid S_i)\}}{1-\pi}+ \frac{E [\Var \{ Z_i(1) \mid S_i \} ] }{\pi} + \frac{E [\Var \{ Z_i(0) \mid S_i \} ] }{1 - \pi}.
     \end{aligned}$$
     Since 
     $
     \Var \{\tilde Z_i(a) \} = \Var [ Z_i(a) - E\{Z_i(a) \mid S_i \} ] =E [ \Var\{ Z_i(a) \mid S_i  \} ], \ a=0,1,
     $
     then $V_Z^2 = E [\Var \{ Z_i(1) \mid S_i \} ] / \pi +E [\Var \{ Z_i(0) \mid S_i \} ] / (1 - \pi) $. Thus,
     $$
     \{\pi(1-\pi)I(f_0)\}^{-1} = V_Z^2 + \frac{\Var\{E(Z_i(1) \mid S_i)\}}{\pi} + \frac{\Var\{E(Z_i(0) \mid S_i)\}}{1-\pi} =  V_Z^2 + \frac{E \{\check Z_i(1)\}^2}{\pi} + \frac{E \{\check Z_i(0)\}^2}{1-\pi}.
     $$
     A straightforward calculation yields 
     $$ \frac{ \check Z_i^2(1)}{\pi} + \frac{\check Z_i^2(0)}{1-\pi} = \{\check Z_i(1)-\check Z_i(0)\}^2 + \pi(1-\pi) \Big\{ \frac{\check Z_i(1)}{\pi} + \frac{\check Z_i(0)}{1-\pi}  \Big\}^2, $$
     which implies
     $$\frac{E \{\check Z_i(1)\}^2}{\pi} + \frac{E\{\check Z_i(0)\}^2}{1-\pi} = V_H^2 + E\Big[ \pi(1-\pi) \Big\{ \frac{\check Z_i(1)}{\pi} + \frac{\check Z_i(0)}{1-\pi}  \Big\}^2  \Big].
     $$
     Thus,
     $$
     \{\pi(1-\pi)I(f_0)\}^{-1} = V_Z^2 +  V_H^2 + E\Big[ \pi(1-\pi) \Big\{ \frac{\check Z_i(1)}{\pi} + \frac{\check Z_i(0)}{1-\pi}  \Big\}^2  \Big].
     $$

     Recall that $\sigma_{\dim}^2 = V_{Z}^2 + V_H^2 + V_A^2$, where  
     $$
      V_{A}^2 = E\Big[ q_{[S_i]} \Big\{ \frac{\check Z_i(1)}{\pi} + \frac{\check Z_i(0)}{1-\pi}  \Big\}^2  \Big].
     $$
     Since $0\leq q_{[k]} \leq \pi(1-\pi)$ for $k=1,...,K$, we have $\sigma_{\dim}^2 \leq \{\pi(1-\pi)I(f_0)\}^{-1}$ and the equality holds if $q_{[k]} = \pi(1-\pi)$ for  $k=1,\ldots,K$.
 \end{proof}

 \subsection{Proof of Theorem \ref{t2}}

We need Lemma~\ref{lem:mean-conv} below to prove Theorem \ref{t2}. The proof of Lemma~\ref{lem:mean-conv} is provided in Section~\ref{sec:lemma}. Let $\Bar{Z}_a = n_a^{-1} \sum_{i:A_i=a} Z_i$ and $\Bar{Z}_{[k]a} = n_{[k]a}^{-1} \sum_{i:S_i=k, A_i=a} Z_i$ for $a =0,1$.
\begin{lemma}\label{lem:mean-conv}
    Under Assumptions~\ref{a1}--\ref{a3}, and \ref{cond:second-moment}, we have $\Bar{\hat{Z}}_{[k]a}-\Bar{Z}_{[k]a} = \op$ and  $n_{[k]a}^{-1} \sum_{i:S_i=k, A_i=a} (\hat{Z}_i^2 - Z_i^2) = \op$ for $a=0,1$.
\end{lemma}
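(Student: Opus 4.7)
The plan is to decompose $\hat Z_i - Z_i$ into three error sources corresponding to the three approximations in $\hat Z_i$: replacing $I(f_0)$ by $\hat I(f_0)$, replacing the true score $g = f_0'/f_0$ by the estimated score $\hat g_{3-j} = \hat f_{0(3-j)}'/\hat f_{0(3-j)}$, and (for treated units) replacing $\tau$ by $\tilde\tau$. Specifically, for $i \in \mathcal{L}_j$ I write
\begin{align*}
\hat Z_i - Z_i = &-\{\hat I(f_0)^{-1}-I(f_0)^{-1}\}\,\hat g_{3-j}(Y_i - A_i\tau) \\
&- I(f_0)^{-1}\{\hat g_{3-j}(Y_i - A_i\tau) - g(Y_i - A_i\tau)\} \\
&- \hat I(f_0)^{-1}\{\hat g_{3-j}(Y_i - A_i\tilde\tau) - \hat g_{3-j}(Y_i - A_i\tau)\},
\end{align*}
and handle the three pieces separately when averaged over $\{i:S_i=k,A_i=a\}$.

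For the first piece, Lemma~\ref{l1} gives $\hat I(f_0)^{-1}-I(f_0)^{-1}=\op$, while conditioning on $(S^{(n)},A^{(n)},U^{(n)})$ makes $\{Y_i\}_{i\in\mathcal{L}_j,S_i=k,A_i=a}$ i.i.d.\ with density $f_{[k]a}$ and independent of $\hat g_{3-j}$; combined with the triangle inequality and the $L_2$ bound from Assumption~\ref{cond:second-moment}(iii), the average of $\hat g_{3-j}(Y_i - A_i\tau)$ is $O_P(1)$, so this piece is $\op$. For the second piece, note that whether $A_i=0$ or $A_i=1$, the variable $Y_i - A_i\tau$ has density $f_{[k]0}$ conditional on $S_i = k$ (since $f_1(y)=f_0(y-\tau)$). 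A conditional weak law of large numbers, together with the $L_2$ convergence $\int(\hat g_{3-j}-g)^2 dF_0 = \op$ from Assumption~\ref{cond:second-moment}(iii) and the bound $f_{[k]0}\leq (\min_k p_{[k]})^{-1}f_0$, gives a stratum-wise $L_1$ and $L_2$ convergence and hence the sample mean is $\op$. For the third piece, only $A_i=1$ contributes. Writing $\hat g_{3-j}(Y_i-\tilde\tau)-\hat g_{3-j}(Y_i-\tau)=(\tau-\tilde\tau)(\hat g_{3-j})'(\xi_i)$ by the mean value theorem with $\xi_i = Y_i - \tau + O_P(n^{-1/2})$, together with $\sqrt{n}(\tilde\tau-\tau)=O_P(1)$ and the $O_P(1)$-bound on $m^{-1}\sum|(\hat g_{3-j})'|(Y_i'+\delta_n)$ in Assumption~\ref{cond:second-moment}(iii), yields that this piece is $O_P(n^{-1/2})\cdot O_P(1) = \op$.

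For the second statement, I use $\hat Z_i^2 - Z_i^2 = (\hat Z_i - Z_i)(\hat Z_i + Z_i)$ and Cauchy--Schwarz, so it suffices to show that $n_{[k]a}^{-1}\sum_{i:S_i=k,A_i=a}(\hat Z_i - Z_i)^2 = \op$ and $n_{[k]a}^{-1}\sum_{i:S_i=k,A_i=a} Z_i^2 = O_P(1)$. The latter follows from the conditional weak law of large numbers, since $E\{Z_i^2\mid S_i=k,A_i=a\}$ is finite (bounded by $(\min_k p_{[k]})^{-1}I(f_0)^{-1}$). For the former, I square the three-term decomposition above. The first and third squared pieces are handled as in part (i) (with $\op\cdot O_P(1)$ and $O_P(n^{-1})\cdot O_P(1)$ bounds, respectively), while the second squared piece reduces to $n_{[k]a}^{-1}\sum(\hat g_{3-j}-g)^2(Y_i - A_i\tau)$, which converges in probability to $\int(\hat g_{3-j}-g)^2 dF_{[k]0} = \op$ directly by Assumption~\ref{cond:second-moment}(iii) and the density domination.

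The main obstacle I anticipate is the third piece involving $\tilde\tau - \tau$: the mean value theorem introduces an intermediate point $\xi_i$ that depends on $i$ and on the random shift $\tilde\tau-\tau$, while Assumption~\ref{cond:second-moment}(iii) only directly controls $m^{-1}\sum|(\hat g_{3-j})'|(Y_i'+\delta_n)$ for a single sequence $\delta_n=O_P(n^{-1/2})$. To bridge this, I would use the fundamental theorem of calculus to bound $|\hat g_{3-j}(Y_i-\tilde\tau)-\hat g_{3-j}(Y_i-\tau)|\leq \int_{\min(\tau,\tilde\tau)}^{\max(\tau,\tilde\tau)}|(\hat g_{3-j})'(Y_i-u)|\,du$, interchange sum and integral, and apply Assumption~\ref{cond:second-moment}(iii) pointwise in $u$ within an $O_P(n^{-1/2})$ window; the outer integral then supplies the $O_P(n^{-1/2})$ factor needed for $\op$ convergence, with the squared version handled analogously via Jensen's inequality.
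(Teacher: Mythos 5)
Your three-term telescoping decomposition and your treatment of the first two pieces match the paper's argument in substance: condition on $(S^{(n)},A^{(n)},U^{(n)},D^{(\mathcal{L}_{3-j})})$ so that the fold-$j$ outcomes in stratum $k$ are i.i.d.\ and independent of $\hat g:=\hat f_{0(3-j)}^{\prime}/\hat f_{0(3-j)}$, then use Lemma~\ref{l1}, the $L_2$ convergence in Assumption~\ref{cond:second-moment}(iii), the domination $f_{[k]0}\le(\min_k p_{[k]})^{-1}f_0$, and Markov's inequality for nonnegative averages. Your worry about the $i$-dependence of the mean-value point is legitimate (the paper simply writes a single intermediate $\bar\tau$), and your fundamental-theorem-of-calculus fix is in the right spirit; note, however, that to integrate the $O_P(1)$ bound over $u$ you need $\int|(\hat g)^{\prime}|(y-u)\,dF_{[k]0}(y)$ to be bounded uniformly over the $O_P(n^{-1/2})$ window, which is what the uniform continuity condition underlying Assumption~\ref{cond:second-moment}(iii) (Assumption S.2) in fact supplies.

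The genuine gap is in the second-moment part, in your ``third squared piece.'' You claim $n_{[k]1}^{-1}\sum_i\{\hat g(Y_i-\tilde\tau)-\hat g(Y_i-\tau)\}^2=O_P(n^{-1})\cdot O_P(1)$, which after the mean value theorem (or Jensen applied to your integral bound) requires $m^{-1}\sum_i\{(\hat g)^{\prime}(\xi_i)\}^2=O_P(1)$. Assumption~\ref{cond:second-moment}(iii) only controls the first absolute moment $m^{-1}\sum_i|(\hat g)^{\prime}|(Y_i^{\prime}+\delta_n)=O_P(1)$; no second-moment bound on $(\hat g)^{\prime}$ is available, and for the kernel construction $(\hat g)^{\prime}$ is only bounded by $b_n+c_n^2$, which diverges. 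The repair is the one used in step (V) of the paper's proof: differentiate $\hat g^2$ rather than squaring the increment of $\hat g$, i.e.\ write $\hat g^2(Y_i-\tilde\tau)-\hat g^2(Y_i-\tau)=(\tau-\tilde\tau)\cdot 2\hat g(\xi_i)(\hat g)^{\prime}(\xi_i)$ --- equivalently, bound your squared increment by $2\sup_y|\hat g(y)|\cdot|\hat g(Y_i-\tilde\tau)-\hat g(Y_i-\tau)|$ --- and then invoke $\sup_y|\hat g(y)|=o_P(n^{1/2})$ from Assumption~\ref{cond:second-moment}(iii) together with $|\tilde\tau-\tau|=O_P(n^{-1/2})$ and the first-moment bound on $|(\hat g)^{\prime}|$, yielding $O_P(n^{-1/2})\cdot o_P(n^{1/2})\cdot O_P(1)=o_P(1)$. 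This is precisely why the $\sup$ condition was added to the assumption. With that substitution your Cauchy--Schwarz reduction of $\hat Z_i^2-Z_i^2=(\hat Z_i-Z_i)(\hat Z_i+Z_i)$ goes through; the paper instead works with $\hat g^2-g^2$ directly via $\int|\hat g^2-g^2|\,dF_{[k]0}=o_P(1)$, but the two routes are of comparable difficulty.
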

 
\begin{proof}[Proof of Theorem~\ref{t2}]

        Recall that $Z_i(a) = -I(f_a)^{-1} \cdot (f_a^{\prime}/f_a)(Y_i(a))$ ($a=0,1$) has finite second-order moment $I(f_0)^{-1}$, and Assumptions \ref{a1}--\ref{a4} hold if we replace $Y_i$ with $Z_i$. Define 
        $$
        \begin{aligned}
            \tilde{V}_{Z}^2 &= \frac{1}{\pi}\sum_{k=1}^K p_{n[k]}\Big\{\frac{1}{n_{[k]1}}\sumis A_i(Z_i-\Bar{Z}_{[k]1})^2 \Big\} + \frac{1}{1-\pi}\sum_{k=1}^K p_{n[k]}\Big\{ \frac{1}{n_{[k]0}}\sumis (1-A_i)(Z_i-\Bar{Z}_{[k]0})^2 \Big\},\\
            \tilde{V}_{H}^2 &= \sum_{k=1}^K p_{n[k]} \left\{(\Bar{Z}_{[k]1}-\Bar{Z}_{1})-(\Bar{Z}_{[k]0}-\Bar{Z}_{0}) \right\}^2,\\
            \tilde{V}_{A}^2 &=\sum_{k=1}^K p_{n[k]}q_{[k]}\Big\{\frac{(\Bar{Z}_{[k]1}-\Bar{Z}_{1})}{\pi}+\frac{(\Bar{Z}_{[k]0}-\Bar{Z}_{0})}{1-\pi} \Big\}^2.
        \end{aligned}
        $$

        Applying Proposition 1 in \cite{Ma2020Regression} to $Z_i(a)$, we have $\tilde{V}_{A}^2 - V_{A}^2 \cp 0 $ under Assumptions~\ref{a1}, \ref{a2}, and \ref{a4}. Applying Theorems 3.2 and 3.3 in \cite{Bugni2019} to $Z_i(a)$, we have $\tilde{V}_{Z}^2 - V_{Z}^2 \cp 0$ and $\tilde{V}_{H}^2 - V_{H}^2 \cp 0$ under Assumptions~\ref{a1}--\ref{a3}. The only difference between $(\hat{V}_{Z}^2,\hat{V}_{H}^2,\hat{V}_{A}^2)$ and $(\tilde{V}_{Z}^2,\tilde{V}_{H}^2,\tilde{V}_{A}^2)$ is the substitution of $\hat{Z}_i$ with $Z_i$.

        Lemma~\ref{lem:mean-conv} implies that
        $$\begin{aligned}
            \hat{V}_{Z}^2 =& \frac{1}{\pi}\sum_{k=1}^K p_{n[k]}\Big\{\frac{1}{n_{[k]1}}\sum_{S_i=k, A_i=1} (Z_i-\Bar{Z}_{[k]1})^2 + o_P(1) \Big\}\\
            &+ \frac{1}{1-\pi}\sum_{k=1}^K p_{n[k]}\Big\{ \frac{1}{n_{[k]0}}\sum_{S_i=k,A_i=0}(Z_i-\Bar{Z}_{[k]0})^2 + o_P(1)\Big\} = \tilde{V}_{Z}^2 +o_P(1),\\
            \hat{V}_{H}^2 =& \sum_{k=1}^K p_{n[k]} \left\{(\Bar{Z}_{[k]1}-\Bar{Z}_{1})-(\Bar{Z}_{[k]0}-\Bar{Z}_{0}) +o_P(1)\right\}^2 = \tilde{V}_{H}^2 + o_P(1),\\
            \hat{V}_{A}^2 =&\sum_{k=1}^K p_{n[k]}q_{[k]}\Big\{\frac{(\Bar{Z}_{[k]1}-\Bar{Z}_{1})+o_P(1)}{\pi}+\frac{(\Bar{Z}_{[k]0}-\Bar{Z}_{0})+o_P(1)}{1-\pi} \Big\}^2 =\tilde{V}_{A}^2+o_P(1).
        \end{aligned}$$
       Here, the $o_P(1)$ term can be moved outside because $\tilde{V}_{Z}^2$, $\tilde{V}_{H}^2$ and $\tilde{V}_{A}^2$ are all bounded in probability. Therefore, $ \hat{V}_{Z}^2$, $\hat{V}_{H}^2$, $\hat{V}_{A}^2$ are all consistent estimators, which implies that $\hat \sigma^2_\dim = \hat{V}_{Z}^2  + \hat{V}_{H}^2 + \hat{V}_{A}^2$ converges in probability to $\sigma^2_\dim = {V}_{Z}^2  + {V}_{H}^2 + {V}_{A}^2$.

        \end{proof}

	\subsection{Proof of Theorem \ref{t4}}

Before proving the result, we introduce a lemma, with its proof provided in Section~\ref{sec:lemma}.

 \begin{lemma}
    \label{l2:str}
    Under Assumptions \ref{a1}--\ref{a3} and \ref{cond:second-moment}, we have 
    $$\frac{1}{\sqrt{n}}\sum_{k=1}^K\sum_{i \in \mathcal{L}_1, S_i=k} \psi_{\hat{f}_{0(2)}}^{(k)}(A_i, Y_i ; \tau) = \frac{1}{\sqrt{n}}\sum_{k=1}^K\sum_{i \in \mathcal{L}_1, S_i=k} \psi_{f_0}^{(k)}(A_i, Y_i ; \tau) + o_P(1), $$
$$\frac{1}{\sqrt{n}}\sum_{k=1}^K\sum_{i \in \mathcal{L}_2, S_i=k} \psi_{\hat{f}_{0(1)}}^{(k)}(A_i, Y_i ; \tau) = \frac{1}{\sqrt{n}}\sum_{k=1}^K\sum_{i \in \mathcal{L}_2, S_i=k} \psi_{f_0}^{(k)}(A_i, Y_i ; \tau) + o_P(1).$$
\end{lemma}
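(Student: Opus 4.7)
My approach is to decompose the score-replacement error into two pieces: one from replacing $\hat{I}(f_0)$ by $I(f_0)$, and one from replacing the estimated score $\hat{f}_{0(2)}^{\prime}/\hat{f}_{0(2)}$ by $f_0^{\prime}/f_0$. Define
\begin{align*}
M_i^{(k)} := \frac{A_i}{\pins}\frac{f_0^{\prime}}{f_0}(Y_i-\tau) - \frac{1-A_i}{1-\pins}\frac{f_0^{\prime}}{f_0}(Y_i), \quad \Delta_i^{(k)} := \frac{A_i}{\pins}\Delta(Y_i-\tau) - \frac{1-A_i}{1-\pins}\Delta(Y_i),
\end{align*}
with $\Delta := \hat f_{0(2)}^{\prime}/\hat f_{0(2)} - f_0^{\prime}/f_0$, so that $\psi_{f_0}^{(k)}(A_i,Y_i;\tau) = -M_i^{(k)}/I(f_0)$ and $\psi_{\hat f_{0(2)}}^{(k)}(A_i,Y_i;\tau) = -(M_i^{(k)}+\Delta_i^{(k)})/\hat I(f_0)$. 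Subtracting gives
\begin{align*}
\psi_{\hat f_{0(2)}}^{(k)} - \psi_{f_0}^{(k)} = -\frac{\Delta_i^{(k)}}{\hat I(f_0)} - M_i^{(k)}\cdot \frac{I(f_0)-\hat I(f_0)}{\hat I(f_0)I(f_0)}.
\end{align*}
By Lemma \ref{l1} and a routine conditional mean/variance computation (analogous to the one below, and relying on $E[(f_0^{\prime}/f_0)^2\mid S_i=k] \leq I(f_0)/\min_k p_{[k]}<\infty$) to show $n^{-1/2}\sum_{k}\sum_{i\in\mathcal L_1, S_i=k} M_i^{(k)} = O_P(1)$, the second summand contributes $\op$ after aggregation and rescaling. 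So it remains to prove $n^{-1/2}\sum_k\sum_{i\in\mathcal L_1, S_i=k}\Delta_i^{(k)} = \op$.

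\textbf{Core step.} Fix $k$ and let $\Xi_k := n^{-1/2}\sum_{i\in\mathcal L_1, S_i=k}\Delta_i^{(k)}$. I would condition on $\mathcal F_n := (D^{(\mathcal L_2)}, S^{(n)}, A^{(n)}, U^{(n)})$, which freezes $\hat f_{0(2)}$ (and hence $\Delta$) together with all allocation information. As explained at the start of Section~\ref{sec:proof-main}, under this conditioning $\{Y_i : i\in\mathcal L_1, S_i = k, A_i = a\}$ are i.i.d.\ with density $f_{[k]a}$, so the summands of $\Xi_k$ are conditionally independent across $i$. Using the shift identity $f_{[k]1}(y) = f_{[k]0}(y-\tau)$ and the change of variable $u=y-\tau$, both arms contribute the same integral $\int\Delta f_{[k]0}$, yielding
\begin{align*}
E[\Xi_k\mid\mathcal F_n] = \frac{1}{\sqrt n}\left(\frac{n_{[k]1(1)}}{\pins}-\frac{\nscone}{1-\pins}\right)\int \Delta(y)f_{[k]0}(y)dy.
\end{align*}
The bracketed factor is $O_P(1)$: since $n_{[k]a(1)} = \lfloor n_{[k]a}/2\rfloor$ and $n_{[k]a} = \pi_{n[k]}^a (1-\pi_{n[k]})^{1-a} n_{[k]}$, up to rounding both fractions equal $n_{[k]}/2$ and cancel. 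By Cauchy--Schwarz, the domination $f_{[k]0}\leq (\min_k p_{[k]})^{-1}f_0$, and Assumption~\ref{cond:second-moment}(iii),
\begin{align*}
\left|\int\Delta\, f_{[k]0}\right|^2 \leq \int\Delta^2 f_{[k]0} \leq (\min_k p_{[k]})^{-1}\int\Delta^2 dF_0 = o_P(1),
\end{align*}
so $E[\Xi_k\mid\mathcal F_n] = \op$. The same integral bound together with conditional independence gives
\begin{align*}
\mathrm{Var}(\Xi_k\mid\mathcal F_n) \leq \left(\frac{n_{[k]1(1)}/n}{\pins^2}+\frac{\nscone/n}{(1-\pins)^2}\right)\int\Delta^2 f_{[k]0} = O_P(1)\cdot o_P(1) = \op.
\end{align*}
Conditional Chebyshev then yields $\Xi_k = \op$; summing over the finite index $k=1,\ldots,K$ proves the first identity, and the second follows by symmetry after swapping $\mathcal L_1\leftrightarrow\mathcal L_2$ and $\hat f_{0(2)}\leftrightarrow\hat f_{0(1)}$.

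\textbf{Main obstacle.} The delicate step is the conditional mean: $\int\Delta\, f_{[k]0}$ is only guaranteed to be $\op$, so a naive pointwise bound on the summands would yield a mean of size $\sqrt n\cdot o_P(1)$, far too large after the $1/\sqrt n$ rescaling. What rescues the argument is the exact cancellation $n_{[k]1(1)}/\pins - \nscone/(1-\pins) = O_P(1)$, produced by the identities $n_{[k]a} = \pins^a(1-\pins)^{1-a}n_{[k]}$ and $n_{[k]a(1)} = \lfloor n_{[k]a}/2\rfloor$. This precise denominator matching is exactly what motivated the stratum- and arm-wise splitting scheme described in Section~\ref{sec:ate-estimators}; a pooled random split would not deliver this cancellation, and the lemma would fail under that variant.
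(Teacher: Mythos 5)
Your proposal is correct and follows essentially the same route as the paper's proof: after invoking Lemma~\ref{l1} for $\hat I(f_0)$, you condition on $(S^{(n)},A^{(n)},U^{(n)},D^{(\mathcal L_2)})$ and control the conditional mean via the near-cancellation of $n_{[k]1(1)}/\pi_{n[k]}-n_{[k]0(1)}/(1-\pi_{n[k]})$ and the conditional variance via the $L_2$ bound of Assumption~\ref{cond:second-moment}(iii) together with $f_{[k]0}\le(\min_k p_{[k]})^{-1}f_0$, exactly as in the paper (your $O_P(1)$ bound on that difference is in fact sharper than the paper's $O_P(n^{1/2})$, and your direct moment bound for $n^{-1/2}\sum_i M_i^{(k)}$ is a harmless substitute for the paper's appeal to the CLT for $\psi_{f_0}^{(k)}$). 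One caveat: your closing claim that the lemma would \emph{fail} under a pooled random split is overstated — a pooled split still gives $n_{[k]1(1)}/\pi_{n[k]}-n_{[k]0(1)}/(1-\pi_{n[k]})=O_P(n^{1/2})$, which combined with $\int\Delta\,f_{[k]0}=o_P(1)$ and the $n^{-1/2}$ scaling still yields $o_P(1)$, as the paper's own (weaker) bias bound shows.
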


 \begin{proof}[Proof of Theorem~\ref{t4}]	
	
	By definition and the first order Taylor expansion, we have
	$$ \begin{aligned}
		\hat{\tau}_{\str}-\tau &= \tilde{\tau}-\tau+\sum_{k=1}^{K}\frac{1}{n}\left(\sum_{\substack{i \in \mathcal{L}_1, S_i=k}} \psi_{\hat{f}_{0(2)}}^{(k)}\left(A_i, Y_i ; \tilde{\tau}\right)+\sum_{\substack{i \in \mathcal{L}_2, S_i=k}} \psi_{\hat{f}_{0(1)}}^{(k)}\left(A_i, Y_i ; \tilde{\tau}\right)\right) \\
		&= \tilde{\tau}-\tau+\sum_{k=1}^{K}\frac{1}{n}\left(\sum_{\substack{i \in \mathcal{L}_1, S_i=k}} \psi_{\hat{f}_{0(2)}}^{(k)}\left(A_i, Y_i ; \tau\right)+\sum_{\substack{i \in \mathcal{L}_2, S_i=k}} \psi_{\hat{f}_{0(1)}}^{(k)}\left(A_i, Y_i ; \tau\right)\right) \\  & \quad+ \sum_{k=1}^{K}\frac{1}{n} (\tilde{\tau}-\tau)\left(\sum_{\substack{i \in \mathcal{L}_1, S_i=k}} \frac{\partial}{\partial \tau} \psi_{\hat f_{0(2)}}^{(k)}\left(A_i, Y_i ; \bar{\tau}\right)+\sum_{\substack{i \in \mathcal{L}_2, S_i=k}} \frac{\partial}{\partial \tau} \psi_{\hat f_{0(1)}}^{(k)}\left(A_i, Y_i ; \bar{\tau}\right)\right),
	\end{aligned}$$
 where $\bar{\tau}$ is between $\tilde{\tau}$ and $\tau$.
	
	By Lemma \ref{l2:str}, we have
	$$
 \begin{aligned} 
 &\hat{\tau}_{\str}-\tau \\
  = &\frac{1}{n}\sum_{k=1}^{K}\sum_{\substack{S_i=k}} \psi_{f_0}^{(k)}(A_i, Y_i ; \tau) + o_P\Big(\frac{1}{\sqrt{n}}\Big) \\
    & +(\tilde{\tau}-\tau)\left[1+\frac{1}{n}\sum_{k=1}^{K}\left(\sum_{\substack{i \in \mathcal{L}_1, S_i=k}} \frac{\partial}{\partial \tau} \psi_{\hat f_{0(2)}}^{(k)}\left(A_i, Y_i ; \bar{\tau}\right)+\sum_{\substack{i \in \mathcal{L}_2, S_i=k}} \frac{\partial}{\partial \tau} \psi_{\hat f_{0(1)}}^{(k)}\left(A_i, Y_i ; \bar{\tau}\right)\right)\right]\\
		=& B_3 + (\tilde{\tau}-\tau)(1+B_4) + o_P\Big(\frac{1}{\sqrt{n}}\Big), \end{aligned}$$
	
	where 
	$$ \begin{aligned} B_3 =&  \frac{1}{n}\sum_{i=1}^{n}\psi_{f_0}^{(k)}(A_i, Y_i ; \tau), \\
		B_4 =& \frac{1}{n} \sum_{k=1}^{K}\left(\sum_{\substack{i \in \mathcal{L}_1, S_i=k}} \frac{\partial}{\partial \tau} \psi_{\hat f_{0(2)}}^{(k)}\left(A_i, Y_i ; \bar{\tau}\right)+\sum_{\substack{i \in \mathcal{L}_2, S_i=k}} \frac{\partial}{\partial \tau} \psi_{\hat f_{0(1)}}^{(k)}\left(A_i, Y_i ; \bar{\tau}\right)\right). \end{aligned}$$

Recall that $Z_i(a) = -I(f_a)^{-1} \cdot (f_a^{\prime}/f_a)(Y_i(a))$, $a=0,1$. For $Y_i$'s satisfying Assumption \ref{a1}--\ref{a3}, it is straightforward to verify that these assumptions still hold if we replace $Y_i$ with $Z_i$. By definition,
 $$\begin{aligned}
     B_3 &= -\frac{1}{n}\sum_{k=1}^K\sumis \frac{A_i}{\pi_{n[k]}} \cdot \frac{1}{I\left(f_0\right)}\cdot \frac{f_0^{\prime}}{f_0}(Y_i-\tau) + \frac{1}{n}\sum_{k=1}^K\sumis \frac{1-A_i}{1-\pi_{n[k]}}\cdot \frac{1}{I\left(f_0\right)}\cdot \frac{f_0^{\prime}}{f_0}(Y_i) \\
      &= \frac{1}{n}\sum_{k=1}^K\sumis \frac{A_iZ_i}{\pi_{n[k]}} -  \frac{1}{n}\sum_{k=1}^K\sumis \frac{(1-A_i)Z_i}{1-\pi_{n[k]}},
 \end{aligned}$$
 which is the fully saturated linear regression estimator proposed by \cite{Bugni2019}, applied to $Z_i$. Recall that $E\{Z_i(1)\} = E\{Z_i(0)\} = 0$. By Theorem 3.1 in \cite{Bugni2019}, we have
	$$ \sqrt{n} B_3 \stackrel{d}{\rightarrow} N(0, V_{Z}^2 + V_{H}^2). $$

 Next, we show that $B_4 = -1+o_P(1)$. By the proof of Theorem \ref{t1}, we have $B_2 = -1 +o_P(1)$ (note that this conclusion still holds under the weaker Assumption \ref{a3}). Moreover,
        $$\begin{aligned}
	B_4-B_2  = & \frac{1}{n}\sum_{k=1}^{K}\left(\sum_{\substack{i \in \mathcal{L}_1, S_i=k}} \frac{\partial}{\partial \tau} \psi_{\hat f_{0(2)}}^{(k)}\left(A_i, Y_i ; \bar{\tau}\right)+\sum_{\substack{i \in \mathcal{L}_2, S_i=k}} \frac{\partial}{\partial \tau} \psi_{\hat f_{0(1)}}^{(k)}\left(A_i, Y_i ; \bar{\tau}\right)\right) \\
  &- \frac{1}{n}\sum_{k=1}^{K}\left(\sum_{\substack{i \in \mathcal{L}_1, S_i=k}} \frac{\partial}{\partial \tau} \psi_{\hat f_{0(2)}}\left(A_i, Y_i ; \bar{\tau}\right)+\sum_{\substack{i \in \mathcal{L}_2, S_i=k}} \frac{\partial}{\partial \tau} \psi_{\hat f_{0(1)}}\left(A_i, Y_i ; \bar{\tau}\right)\right). \\
	\end{aligned}$$
       By symmetry, we only need to show that 
       $$\frac{1}{n}\sum_{k=1}^{K}\sum_{\substack{i \in \mathcal{L}_1, S_i=k, A_i=0}} \frac{\partial}{\partial \tau} \psi_{\hat f_{0(2)}}^{(k)}\left(A_i, Y_i ; \bar{\tau}\right) - \frac{1}{n}\sum_{k=1}^{K}\sum_{\substack{i \in \mathcal{L}_1, S_i=k, A_i=0}} \frac{\partial}{\partial \tau} \psi_{\hat f_{0(2)}}\left(A_i, Y_i ; \bar{\tau}\right) = o_P(1),$$
    which is implied by, for any ${\delta}_n = O_P(n^{-1/2})$, 
    \begin{equation}
    \label{e:str-continuity}
    \frac{1}{n}\sum_{k=1}^{K}\Bigg[ p_{n[k]}\bigg(\frac{1}{1-\pi_{n[k]}}-\frac{1}{1-\pi}\bigg)\cdot \sum_{\substack{i \in \mathcal{L}_1, S_i=k, A_i=0}} \bigg(\frac{\hat{f}_{0(2)}^{\prime}}{\hat{f}_{0(2)}}\bigg)^{\prime}\left(Y_i+{\delta}_n\right)\Bigg] = o_P(1). 
    \end{equation}

   By Assumption \ref{cond:second-moment}(iii), we have $n^{-1} \sum_{\substack{i \in \mathcal{L}_1, S_i=k, A_i=0}} (\hat{f}_{0(2)}^{\prime}/ \hat{f}_{0(2)})^{\prime}(Y_i+{\delta}_n)$ is bounded in probability (conditional on $(S^{(n)}, A^{(n)}, U^{(n)}, D^{(\mathcal{L}_2)})$, $\{Y_i\}_{i \in \mathcal{L}_1, S_i=k, A_i=0}$ are i.i.d.). Recall that $\pi_{n[k]} \cp \pi$, we have (\ref{e:str-continuity}) holds. Thus $B_4 = -1+o_P(1)$. 
   
	Note that $\sqrt{n}(\tilde{\tau}-\tau)=O_P(n^{-1/2})$, we have $ (\tilde{\tau}-\tau)(1+B_4) = o_P(n^{-1/2})$. By Slutsky's theorem,
	$$ \sqrt{n}(\hat{\tau}_{\str}-\tau) \stackrel{d}{\rightarrow} N(0, V_{Z}^2 + V_{H}^2).  $$

By the proof of Theorem~\ref{t2}, $\hat{V}_{Z}^2$ and $\hat{V}_{H}^2$ are consistent estimators of $V_{Z}^2$ and $V_{H}^2$ under Assumptions~\ref{a1}--\ref{a3}. Thus, $\hat \sigma^2_{\str} = \hat V_{Z}^2 + \hat V_{H}^2 \cp \sigma^2_{\str} = V_{Z}^2 + V_{H}^2$ under Assumptions~\ref{a1}--\ref{a3}.
     \end{proof}

\section{Proof of lemmas}
\label{sec:lemma}

\subsection{Proof of Lemma~\ref{l1}}
\begin{proof}
    By definition,
$$
\hat{I}(f_0)=\frac{1}{n_0}\bigg[\sum_{i \in \mathcal{L}_1, A_i=0}\bigg\{\frac{\hat{f}_{0(2)}^{\prime}}{\hat{f}_{0(2)}}(Y_i)\bigg\}^2 + \sum_{i \in \mathcal{L}_2, A_i=0}\bigg\{\frac{\hat{f}_{0(1)}^{\prime}}{\hat{f}_{0(1)}}(Y_i)\bigg\}^2\bigg].
$$

It holds that
$$\begin{aligned}
    &\frac{1}{n_0}\sum_{i \in \mathcal{L}_1, A_i=0}\bigg\{\frac{\hat{f}_{0(2)}^{\prime}}{\hat{f}_{0(2)}}(Y_i)\bigg\}^2 - \frac{I(f_0)}{2}\\
    =& \left[\frac{1}{n_0}\sum_{i \in \mathcal{L}_1, A_i=0}\bigg\{\frac{\hat{f}_{0(2)}^{\prime}}{\hat{f}_{0(2)}}(Y_i)\bigg\}^2 - \sum_{k=1}^K \frac{n_{[k]0(1)}}{n_0} \int \left(\frac{\hat{f}_{0(2)}^{\prime}}{\hat{f}_{0(2)}}\right)^2(y) d F_{[k]0}(y)\right] \\
    &+ \sum_{k=1}^K \left(\frac{n_{[k]0(1)}}{n_0}-\frac{p_{[k]}}{2}\right) \int \left(\frac{\hat{f}_{0(2)}^{\prime}}{\hat{f}_{0(2)}}\right)^2(y) d F_{[k]0}(y) \\
    &+ \left[\sum_{k=1}^K \frac{p_{[k]}}{2} \int \left\{ \bigg(\frac{\hat{f}_{0(2)}^{\prime}}{\hat{f}_{0(2)}}\bigg)^2 - \bigg(\frac{f_{0}^{\prime}}{f_{0}}\bigg)^2\right\}(y) d F_{[k]0}(y)\right]\\
    := & B_1+B_2+B_3.
\end{aligned}$$

For the first term $B_1$, since the splitting is independent of $S^{(n)}, A^{(n)}, Y^{(n)}(a)=\{Y_1(a),\ldots,Y_n(a)\}$, for $a=0,1$, we have
$$E\left[ \frac{1}{n_0} \sum_{i \in \mathcal{L}_1, A_i=0} \bigg\{\frac{\hat{f}_{0(2)}^{\prime}}{\hat{f}_{0(2)}}(Y_i)\bigg\}^2  \mid S^{(n)}, A^{(n)}, U^{(n)}, D^{(\mathcal{L}_2)}  \right] = \sumk \frac{n_{[k]0(1)}}{n_0} \int \left(\frac{\hat{f}_{0(2)}^{\prime}}{\hat{f}_{0(2)}}\right)^2(y) d F_{[k]0}(y).
$$
By the triangle inequality, we have
\begin{eqnarray}
    &&  \left|  \bigg\{ \int \left(\frac{\hat{f}_{0(2)}^{\prime}}{\hat{f}_{0(2)}}\right)^2(y) d F_{[k]0}(y) \bigg\}^{1/2} -  \bigg\{ \int \left(\frac{{f}_{0}^{\prime}}{{f}_{0}}\right)^2(y) d F_{[k]0}(y)  \bigg\}^{1/2} \right|^2 \nonumber \\
    &\leq &  \int \left(\frac{\hat{f}_{0(2)}^{\prime}}{\hat{f}_{0(2)}} - \frac{{f}_{0}^{\prime}}{{f}_{0}} \right)^2(y) d F_{[k]0}(y) \nonumber \\
    &\leq& \frac{1}{\min_{k=1,\ldots,K}\ps} \int \left(\frac{\hat{f}_{0(2)}^{\prime}}{\hat{f}_{0(2)}} - \frac{{f}_{0}^{\prime}}{{f}_{0}} \right)^2(y) d F_{0}(y) = \op, \nonumber
\end{eqnarray}
where the second inequality is due to
$
f_0(y) = \sumk \ps f_{[k]0}(y) \geq (\min_{k'=1,\ldots,K} p_{[k']}) \cdot f_{[k]0}(y), 
$ for all $k=1,\ldots,K,$ 
and the last equality is due to Assumption~\ref{cond:second-moment}(iii). Combined with $ \{ \int ({f}_{0}^{\prime}/{f}_{0})^2(y) d F_{[k]0}(y) \}^{1/2} = O_P(1)$, we have 
\begin{equation}\label{eqn:OP}
\int \left(\frac{\hat{f}_{0(2)}^{\prime}}{\hat{f}_{0(2)}}\right)^2(y) d F_{[k]0}(y) = \int \left(\frac{{f}_{0}^{\prime}}{{f}_{0}}\right)^2(y) d F_{[k]0}(y) + \op.
\end{equation}
Therefore, by the weak law of large numbers, we have $\{B_1 \mid  S^{(n)}, A^{(n)}, U^{(n)}, D^{(\mathcal{L}_2)} \}  = \op$, which  implies $B_1 = \op $.

For the second term $B_2$, by \eqref{eqn:OP} and $I(f_0) < \infty$, we have 
\begin{equation}
\int \left(\frac{\hat{f}_{0(2)}^{\prime}}{\hat{f}_{0(2)}}\right)^2(y) d F_{[k]0}(y) = O_P(1). \nonumber
\end{equation}
Since ${n_{[k]0(1)}}/{n_0}-{p_{[k]}}/{2} = \op$, then $B_2 = \op$.

For the third term $B_3$, by Assumption~\ref{cond:second-moment}(iii) and the fact that $F_0(y) = \sumk \ps F_{[k]0}(y)$, we have
\begin{equation}
    \label{e:int-square-dif}
    B_3 = 2^{-1} \int \left\{ \bigg(\frac{\hat{f}_{0(2)}^{\prime}}{\hat{f}_{0(2)}}\bigg)^2 - \bigg(\frac{f_{0}^{\prime}}{f_{0}}\bigg)^2\right\}(y) d F_{0}(y) = \op.
\end{equation}

Combining them together, we have
$$
\frac{1}{n_0}\sum_{i \in \mathcal{L}_1, A_i=0}\bigg\{\frac{\hat{f}_{0(2)}^{\prime}}{\hat{f}_{0(2)}}(Y_i)\bigg\}^2 = \frac{I(f_0)}{2} + \op.
$$
Similarly,
$$
\frac{1}{n_0}\sum_{i \in \mathcal{L}_2, A_i=0}\bigg\{\frac{\hat{f}_{0(1)}^{\prime}}{\hat{f}_{0(1)}}(Y_i)\bigg\}^2 = \frac{I(f_0)}{2} + \op.
$$
Therefore, $\hat I(f_0) = I(f_0) + \op$.

\end{proof}

\subsection{Proof of Lemma~\ref{l2}}

\begin{proof}
We will prove the result for the first claim ($i \in \mathcal{L}_1$) only, as the proof for the second claim is similar. By Lemma~\ref{l1}, we have $\hat I(f_0) = I(f_0) + \op$. Given that $0< I(f_0) < \infty$, it suffices to show that
\begin{equation}\label{eqn:suf}
\hat{I}(f_0) \cdot \frac{1}{\sqrt{n}}\sum_{i \in \mathcal{L}_1} \psi_{\hat{f}_{0(2)}}(A_i, Y_i ; \tau) - \hat{I}(f_0) \cdot \frac{1}{\sqrt{n}}\sum_{i \in \mathcal{L}_1} \psi_{f_0}(A_i, Y_i ; \tau) = \op.
\end{equation}

Note that
$$\begin{aligned}
    &\hat{I}(f_0) \cdot \frac{1}{\sqrt{n}}\sum_{i \in \mathcal{L}_1} \psi_{\hat{f}_{0(2)}}(A_i, Y_i ; \tau) - \hat{I}(f_0) \cdot \frac{1}{\sqrt{n}}\sum_{i \in \mathcal{L}_1} \psi_{f_0}(A_i, Y_i ; \tau) \\
    =& \left[\hat{I}(f_0) \cdot \frac{1}{\sqrt{n}}\sum_{i \in \mathcal{L}_1} \psi_{\hat{f}_{0(2)}}(A_i, Y_i ; \tau) - I(f_0) \cdot \frac{1}{\sqrt{n}}\sum_{i \in \mathcal{L}_1} \psi_{f_0}(A_i, Y_i ; \tau)\right] \\
    &+  \left[I(f_0) \cdot \frac{1}{\sqrt{n}}\sum_{i \in \mathcal{L}_1} \psi_{f_0}(A_i, Y_i ; \tau)-\hat{I}(f_0) \cdot \frac{1}{\sqrt{n}}\sum_{i \in \mathcal{L}_1} \psi_{f_0}(A_i, Y_i ; \tau)\right] \\
    :=& B_1 + B_2.
\end{aligned}$$
It suffices for \eqref{eqn:suf} to show that $B_1= \op$ and $B_2 = \op$.

For the first term $B_1$, by definition, we have
$$
\hat{I}(f_0) \sum_{i \in \mathcal{L}_1} \psi_{\hat{f}_{0(2)}}(A_i, Y_i ; \tau) = \sum_{i \in \mathcal{L}_1, A_i =1} \frac{1}{\pi} \cdot\left(\frac{\hat{f}_{0(2)}^{\prime}}{\hat{f}_{0(2)}}\right)\left(Y_i-\tau\right)-\sum_{i \in \mathcal{L}_1, A_i=0}\frac{1}{1-\pi} \cdot\left(\frac{\hat{f}_{0(2)}^{\prime}}{\hat{f}_{0(2)}}\right)\left(Y_i\right),
$$
$$
I(f_0) \sum_{i \in \mathcal{L}_1} \psi_{f_0}(A_i, Y_i ; \tau) =  \sum_{i \in \mathcal{L}_1, A_i =1} \frac{1}{\pi} \cdot\left(\frac{{f}_{0(2)}^{\prime}}{{f}_{0(2)}}\right)\left(Y_i-\tau\right)-\sum_{i \in \mathcal{L}_1, A_i=0}\frac{1}{1-\pi} \cdot\left(\frac{{f}_{0(2)}^{\prime}}{{f}_{0(2)}}\right)\left(Y_i\right).
$$

Thus, 
\begin{eqnarray}
    && \hat{I}(f_0) \sum_{i \in \mathcal{L}_1} \psi_{\hat{f}_{0(2)}}(A_i, Y_i ; \tau) - I(f_0) \sum_{i \in \mathcal{L}_1} \psi_{f_0}(A_i, Y_i ; \tau) \nonumber \\
   & = &  \sum_{i \in \mathcal{L}_1, A_i =1} \frac{1}{\pi} \cdot\left(\frac{\hat{f}_{0(2)}^{\prime}}{\hat{f}_{0(2)}} - \frac{{f}_{0(2)}^{\prime}}{{f}_{0(2)}} \right)\left(Y_i-\tau\right)-\sum_{i \in \mathcal{L}_1, A_i=0}\frac{1}{1-\pi} \cdot\left(\frac{\hat{f}_{0(2)}^{\prime}}{\hat{f}_{0(2)}} - \frac{{f}_{0(2)}^{\prime}}{{f}_{0(2)}} \right)\left(Y_i\right). \nonumber
\end{eqnarray}

Then, by Assumptions~\ref{a1}--\ref{a2}, we have
\begin{equation}
    \label{e:lemma2}
    \begin{aligned}
	&\Var\bigg(\bigg\{\frac{1}{\sqrt{n}} \sum_{i \in \mathcal{L}_1}\left(\hat{I}\left(f_{0}\right) \cdot \psi_{\hat{f}_{0(2)}}\left(A_i, Y_i ; \tau\right)-I\left(f_0\right) \cdot \psi_{f_0}\left(A_i, Y_i ; \tau\right)\right) \bigg\}\mid S^{(n)},A^{(n)}, U^{(n)},D^{(\mathcal{L}_2)}\bigg)\\
		=&\frac{1}{n}\Var\bigg(
		\bigg\{ \sum_{i \in \mathcal{L}_1, A_i =1} \frac{1}{\pi} \cdot\left(\frac{\hat{f}_{0(2)}^{\prime}}{\hat{f}_{0(2)}}-\frac{f_0^{\prime}}{f_0}\right)\left(Y_i-\tau\right)-\sum_{i \in \mathcal{L}_1, A_i=0}\frac{1}{1-\pi} \cdot\left(\frac{\hat{f}_{0(2)}^{\prime}}{\hat{f}_{0(2)}}-\frac{f_0^{\prime}}{f_0}\right)\left(Y_i\right)\bigg\} \\
		& \mid S^{(n)},A^{(n)},U^{(n)},D^{(\mathcal{L}_2)} \bigg)\\
    =&\frac{1}{n}\Var\bigg(
		\bigg\{ \sum_{k=1}^K\sum_{i \in \mathcal{L}_1,S_i=k, A_i =1} \frac{1}{\pi} \cdot\left(\frac{\hat{f}_{0(2)}^{\prime}}{\hat{f}_{0(2)}}-\frac{f_0^{\prime}}{f_0}\right)\left(Y_i-\tau\right) \\
& -\sum_{k=1}^K\sum_{i \in \mathcal{L}_1,S_i=k, A_i =0}\frac{1}{1-\pi} \cdot\left(\frac{\hat{f}_{0(2)}^{\prime}}{\hat{f}_{0(2)}}-\frac{f_0^{\prime}}{f_0}\right)\left(Y_i\right)\bigg\}^2  \mid S^{(n)},A^{(n)},U^{(n)},D^{(\mathcal{L}_2)} \bigg)\\
    \leq& \frac{2K}{n}\sum_{k=1}^K \Bigg[ \Var\bigg(\sum_{i \in \mathcal{L}_1,S_i=k, A_i =1} \frac{1}{\pi} \cdot\left(\frac{\hat{f}_{0(2)}^{\prime}}{\hat{f}_{0(2)}}-\frac{f_0^{\prime}}{f_0}\right)\left(Y_i-\tau\right) \mid S^{(n)},A^{(n)},U^{(n)},D^{(\mathcal{L}_2)} \bigg)\\
    &+ \Var\bigg(\sum_{i \in \mathcal{L}_1,S_i=k, A_i =0} \frac{1}{1-\pi} \cdot\left(\frac{\hat{f}_{0(2)}^{\prime}}{\hat{f}_{0(2)}}-\frac{f_0^{\prime}}{f_0}\right)\left(Y_i\right) \mid S^{(n)},A^{(n)},U^{(n)},D^{(\mathcal{L}_2)}\bigg)\Bigg].
\end{aligned}
\end{equation}
Moreover,
\begin{equation}
\label{E_singlek_lemma2}
    \begin{aligned}
    &\Var\bigg(\sum_{i \in \mathcal{L}_1,S_i=k, A_i =0} \frac{1}{1-\pi} \cdot\left(\frac{\hat{f}_{0(2)}^{\prime}}{\hat{f}_{0(2)}}-\frac{f_0^{\prime}}{f_0}\right)\left(Y_i\right) \mid S^{(n)},A^{(n)},U^{(n)},D^{(\mathcal{L}_2)} \bigg)\\
    =&\frac{1}{(1-\pi)^2} \sum_{i \in \mathcal{L}_1,S_i=k, A_i =0}\Var\bigg\{\left(\frac{\hat{f}_{0(2)}^{\prime}}{\hat{f}_{0(2)}}-\frac{f_0^{\prime}}{f_0}\right)\left(Y_i\right) \mid S^{(n)},A^{(n)},U^{(n)},D^{(\mathcal{L}_2)} \bigg\}\\
    \leq&\frac{1}{(1-\pi)^2} \sum_{i \in \mathcal{L}_1,S_i=k, A_i =0}E\bigg\{\left(\frac{\hat{f}_{0(2)}^{\prime}}{\hat{f}_{0(2)}}-\frac{f_0^{\prime}}{f_0}\right)^2\left(Y_i\right) \mid S^{(n)},A^{(n)},U^{(n)},D^{(\mathcal{L}_2)} \bigg\} \\
    =& \frac{n_{[k]0(1)}}{(1-\pi)^2}\int \left(\frac{\hat{f}_{0(2)}^{\prime}}{\hat{f}_{0(2)}}-\frac{f_0^{\prime}}{f_0}\right)^2\left(y\right) d F_{[k]0}(y) \\
    =&o_P(n),
\end{aligned}
\end{equation}
where the last equality is due to $n_{[k]0(1)}/n \cp \ps (1-\pi)/2$ and 
\begin{eqnarray}\label{eqn:FkOp}
    \int \left(\frac{\hat{f}_{0(2)}^{\prime}}{\hat{f}_{0(2)}}-\frac{f_0^{\prime}}{f_0}\right)^2\left(y\right) d F_{[k]0}(y) \leq \frac{1}{ \min_{k=1,\ldots,K} \ps}  \int \left(\frac{\hat{f}_{0(2)}^{\prime}}{\hat{f}_{0(2)}}-\frac{f_0^{\prime}}{f_0}\right)^2\left(y\right) d F_{0}(y) = \op.
\end{eqnarray}
Similarly,
\begin{equation}
\label{E_singlek_lemma2_2}
    \begin{aligned}
    &\Var\bigg(\sum_{i \in \mathcal{L}_1,S_i=k, A_i =1} \frac{1}{\pi} \cdot\left(\frac{\hat{f}_{0(2)}^{\prime}}{\hat{f}_{0(2)}}-\frac{f_0^{\prime}}{f_0}\right)\left(Y_i - \tau \right) \mid S^{(n)},A^{(n)},U^{(n)},D^{(\mathcal{L}_2)} \bigg) = o_P(n),
\end{aligned}
\end{equation}
By substituting \eqref{E_singlek_lemma2} and \eqref{E_singlek_lemma2_2} into \eqref{e:lemma2}, we have
\begin{equation}
    \label{e:lemma2-var}
    \begin{aligned}
	&\Var\bigg(\bigg\{\frac{1}{\sqrt{n}} \sum_{i \in \mathcal{L}_1}\left(\hat{I}\left(f_{0}\right) \cdot \psi_{\hat{f}_{0(2)}}\left(A_i, Y_i ; \tau\right)-I\left(f_0\right) \cdot \psi_{f_0}\left(A_i, Y_i ; \tau\right)\right) \bigg\}\mid S^{(n)},A^{(n)}, U^{(n)},D^{(\mathcal{L}_2)}\bigg)\\
 =& \op.
\end{aligned}
\end{equation}

Next, we show that the conditional bias also converges to zero in probability. Specifically, we will show that
\begin{eqnarray}
\label{e:lemma2-bias}
&& E\bigg(\bigg\{\sum_{i \in \mathcal{L}_1}\left(\hat{I}\left(f_{0}\right) \cdot \psi_{\hat{f}_{0(2)}}\left(A_i, Y_i ; \tau\right)-I\left(f_0\right) \cdot \psi_{f_0}\left(A_i, Y_i ; \tau\right)\right) \bigg\}\mid S^{(n)},A^{(n)}, U^{(n)},D^{(\mathcal{L}_2)}\bigg) \nonumber \\
& = & \sum_{k=1}^K \bigg(\frac{n_{[k]1(1)}}{n\pi}-\frac{n_{[k]0(1)}}{n(1-\pi)}\bigg) \cdot \int \left(\frac{\hat{f}_{0(2)}^{\prime}}{\hat{f}_{0(2)}}-\frac{f_0^{\prime}}{f_0}\right)\left(y\right) dF_{[k]0}(y) = o_P\bigg(\frac{1}{\sqrt{n}}\bigg).
\end{eqnarray}
Under Assumptions \ref{a1}, \ref{a2}, and \ref{a4}, by \citet[][Lemma C.2]{Bugni2019}, we have $n_{[k]1} / n - \ps\pi = O_P(n^{-1/2})$. Since $(n_{[k]1}-1)/2 \leq n_{[k]1(1)} \leq n_{[k]1} /2$, we have $n_{[k]1(1)} / (n\pi) - \ps/2 = O_P(n^{-1/2})$. Similarly, $n_{[k]0(1)} / \{n(1-\pi)\} - \ps/2 = O_P(n^{-1/2})$. Moreover, by \eqref{eqn:FkOp} and the Cauchy--Schwarz inequality, we have
$
\int (\hat{f}_{0(2)}^{\prime}/\hat{f}_{0(2)}-f_0^{\prime}/f_0)\left(y\right) dF_{[k]0}(y) = o_P(1).
$
Therefore, \eqref{e:lemma2-bias} holds.

Combining \eqref{e:lemma2-var} and \eqref{e:lemma2-bias}, we have 
\begin{eqnarray*}
    &&E\bigg(\bigg\{\frac{1}{\sqrt{n}} \sum_{i \in \mathcal{L}_1}\left(\hat{I}\left(f_{0}\right) \cdot \psi_{\hat{f}_{0(2)}}\left(A_i, Y_i ; \tau\right)-I\left(f_0\right) \cdot \psi_{f_0}\left(A_i, Y_i ; \tau\right)\right) \bigg\}^2\mid S^{(n)},A^{(n)}, aU^{(n)},D^{(\mathcal{L}_2)}\bigg)\\
    &=& \op.
\end{eqnarray*}
Therefore, conditional on $S^{(n)},A^{(n)}, U^{(n)},D^{(\mathcal{L}_2)}$, $B_1$ converges to zero in probability, which implies $B_1 = \op$ unconditionally.

For the second term $B_2$, in the proof of Theorem~\ref{t1}, we have shown that 
$
n^{-1/2} \sumn \psi_{f_0}(A_i, Y_i ; \tau)
$
is asymptotically normal with zero mean and finite variance. Since the splitting is independent of $S^{(n)}, A^{(n)}, Y^{(n)}(a)$, we have $n^{-1/2} \sum_{i \in \mathcal{L}_1} \psi_{f_0}(A_i, Y_i ; \tau) = O_P(1)$. By Lemma~\ref{l1}, we have $\hat I(f_0) = I(f_0) + \op$. 
Therefore, $B_2 = \op$.

\end{proof}

\subsection{Proof of Lemma~\ref{lem:mean-conv}}

\begin{proof}
    Recall that $\hat{Z}_i$ is defined as follows: $\hat{Z}_i = -\hat{I}(f_0)^{-1}  (\hat{f}_{0(3-j)}^{\prime}/\hat{f}_{0(3-j)})(Y_i),$ for $i \in \mathcal{L}_j, \ A_i = 0$, $j=1,2$, and $\hat{Z}_i = -\hat{I}(f_0)^{-1}  (\hat{f}_{0(3-j)}^{\prime}/\hat{f}_{0(3-j)})(Y_i-\tilde{\tau}),$ for $i \in \mathcal{L}_j,\ A_i = 1$, $j=1,2$. Recall that $\Bar{\hat{Z}}_{[k]a} = \sum_{S_i = k, A_i=a} \hat{Z}_i / n_{[k]a}$ is the stratum-specific sample mean of $\hat{Z}_i$ in stratum $k$ under treatment arm $a$ and $\Bar{\hat{Z}}_{a} = \sum_{i:A_i=a} \hat{Z}_i / n_a$ is the sample mean of  $\hat{Z}_i$ under treatment arm $a$, $a=0,1$, $k=1,\ldots,K$.

    (I) We show that $\Bar{\hat{Z}}_{[k]0} - \Bar{Z}_{[k]0} = \op$. By definition, we have
$$
\begin{aligned}
    &\Bar{\hat{Z}}_{[k]0} - \Bar{Z}_{[k]0} \\
    = & n_{[k]0}^{-1} \cdot \hat{I}(f_0)^{-1}\bigg\{\sum_{i \in \mathcal{L}_1, S_i=k, A_i=0} \frac{\hat{f}_{0(2)}^{\prime}}{\hat{f}_{0(2)}}(Y_i) + \sum_{i \in \mathcal{L}_2, S_i=k, A_i=0} \frac{\hat{f}_{0(1)}^{\prime}}{\hat{f}_{0(1)}}(Y_i)  \bigg\}\\
    &- n_{[k]0}^{-1} \cdot I(f_0)^{-1}\sum_{i:S_i=k, A_i=0} \frac{f_{0}^{\prime}}{f_{0}}(Y_i)\\
    = &  \frac{\nscone}{\nsc} \cdot  n_{[k]0(1)}^{-1}\bigg\{\hat{I}(f_0)^{-1} \sum_{i \in \mathcal{L}_1, S_i=k, A_i=0} \frac{\hat{f}_{0(2)}^{\prime}}{\hat{f}_{0(2)}}\left(Y_i\right)-I(f_0)^{-1} \sum_{i \in \mathcal{L}_1, S_i=k, A_i=0} \frac{f_0^{\prime}}{f_0}\left(Y_i\right) \bigg\} \\
    &+ \frac{n_{[k]0(2)}}{\nsc} \cdot n_{[k]0(2)}^{-1}\bigg\{\hat{I}(f_0)^{-1} \sum_{i \in \mathcal{L}_2, S_i=k, A_i=0} \frac{\hat{f}_{0(1)}^{\prime}}{\hat{f}_{0(1)}}\left(Y_i\right)-I(f_0)^{-1} \sum_{i \in \mathcal{L}_2, S_i=k, A_i=0} \frac{f_0^{\prime}}{f_0}\left(Y_i\right)\bigg\}.
\end{aligned}
$$  
        
        Since $n_{[k]0(1)}/n_{[k]0} \cp 1/2$ and $n_{[k]0(2)}/n_{[k]0} \cp 1/2$, it suffices for $\Bar{\hat{Z}}_{[k]0}-\Bar{Z}_{[k]0} = o_P(1)$ to show that
        $$n_{[k]0(1)}^{-1}\bigg\{\hat{I}(f_0)^{-1} \sum_{i \in \mathcal{L}_1, S_i=k, A_i=0} \frac{\hat{f}_{0(2)}^{\prime}}{\hat{f}_{0(2)}}\left(Y_i\right)-I(f_0)^{-1} \sum_{i \in \mathcal{L}_1, S_i=k, A_i=0} \frac{f_0^{\prime}}{f_0}\left(Y_i\right) \bigg\}= o_P(1),$$
        $$n_{[k]0(2)}^{-1}\bigg\{\hat{I}(f_0)^{-1} \sum_{i \in \mathcal{L}_2, S_i=k, A_i=0} \frac{\hat{f}_{0(1)}^{\prime}}{\hat{f}_{0(1)}}\left(Y_i\right)-I(f_0)^{-1} \sum_{i \in \mathcal{L}_2, S_i=k, A_i=0} \frac{f_0^{\prime}}{f_0}\left(Y_i\right)\bigg\} = o_P(1).$$

        We will prove the first claim only, as the proof of the second claim is similar. We make the following decomposition:
        \begin{eqnarray}
            & &n_{[k]0(1)}^{-1}\bigg\{\hat{I}(f_0)^{-1} \sum_{i \in \mathcal{L}_1, S_i=k, A_i=0} \frac{\hat{f}_{0(2)}^{\prime}}{\hat{f}_{0(2)}}\left(Y_i\right)-I(f_0)^{-1} \sum_{i \in \mathcal{L}_1, S_i=k, A_i=0} \frac{f_0^{\prime}}{f_0}\left(Y_i\right) \bigg\} \nonumber \\
            &&= n_{[k]0(1)}^{-1}\bigg\{\hat{I}(f_0)^{-1} \sum_{i \in \mathcal{L}_1, S_i=k, A_i=0} \frac{\hat{f}_{0(2)}^{\prime}}{\hat{f}_{0(2)}}\left(Y_i\right)-\hat{I}(f_0)^{-1} \sum_{i \in \mathcal{L}_1, S_i=k, A_i=0} \frac{f_0^{\prime}}{f_0}\left(Y_i\right) \bigg\} \nonumber \\
            &&\quad + n_{[k]0(1)}^{-1}\bigg\{\hat{I}(f_0)^{-1} \sum_{i \in \mathcal{L}_1, S_i=k, A_i=0} \frac{f_0^{\prime}}{f_0}\left(Y_i\right)-I(f_0)^{-1} \sum_{i \in \mathcal{L}_1, S_i=k, A_i=0} \frac{f_0^{\prime}}{f_0}\left(Y_i\right) \bigg\} \nonumber \\
    &&:= B_1 + B_2. \nonumber
        \end{eqnarray}

        Next, we show that $B_1=\op$ and $B_2 = \op$. By Lemma \ref{l1}, $\hat{I}(f_0) = I(f_0)+o_P(1)$. Thus, $\hat{I}(f_0)^{-1}=O_P(1)$. It suffices for $B_1=\op$ to show that
        \begin{eqnarray}
            n_{[k]0(1)}^{-1} E \bigg\{\sum_{i \in \mathcal{L}_1, S_i=k, A_i=0} \left| \frac{\hat{f}_{0(2)}^{\prime}}{\hat{f}_{0(2)}} - \frac{f_{0(2)}^{\prime}}{f_{0(2)}} \right| \left(Y_i\right) \mid S^{(n)},A^{(n)},U^{(n)},D^{(\mathcal{L}_2)} \bigg\} = o_P(1). \nonumber
        \end{eqnarray}
        It holds that
       \begin{eqnarray}
            && n_{[k]0(1)}^{-1} E \bigg\{\sum_{i \in \mathcal{L}_1, S_i=k, A_i=0} \left| \frac{\hat{f}_{0(2)}^{\prime}}{\hat{f}_{0(2)}} - \frac{f_{0(2)}^{\prime}}{f_{0(2)}} \right| \left(Y_i\right) \mid S^{(n)},A^{(n)},U^{(n)},D^{(\mathcal{L}_2)} \bigg\}  \nonumber \\
            &=& \int \left| \frac{\hat{f}_{0(2)}^{\prime}}{\hat{f}_{0(2)}}-\frac{f_0^{\prime}}{f_0}\right| \left(y\right) d F_{[k]0}(y) \nonumber \\
            & \leq & \left\{ \int \left( \frac{\hat{f}_{0(2)}^{\prime}}{\hat{f}_{0(2)}}-\frac{f_0^{\prime}}{f_0}\right)^2 \left(y\right) d F_{[k]0}(y) \right\}^{1/2} = \op, \nonumber
        \end{eqnarray}
where the last equality is due to \eqref{eqn:FkOp}.

For $B_2 = \op$, it suffices to show that
$$
n_{[k]0(1)}^{-1} \sum_{i \in \mathcal{L}_1, S_i=k, A_i=0} \frac{f_0^{\prime}}{f_0}\left(Y_i\right) = O_P(1).
$$
By applying the Cauchy--Schwarz inequality and using $f_{[k]0}(y) \leq \min_{k=1,\ldots,K} \ps^{-1} f_{0}(y)$, we have
$$
E \bigg|\frac{f_0^{\prime}}{f_0}\left(Y_i\right)  \bigg|  = \int \bigg| \frac{f_0^{\prime}}{f_0} \left(y\right) \bigg| d F_{[k]0}(y)  \leq \frac{1}{\min_{k=1,\ldots,K} \ps} \bigg[ \int  \bigg\{ \frac{f_0^{\prime}}{f_0} \left(y\right) \bigg\}^2 d F_{0}(y) \bigg]^{1/2} < \infty.
$$
Then, we have
$$
n_{[k]0(1)}^{-1} \sum_{i \in \mathcal{L}_1, S_i=k, A_i=0} \frac{f_0^{\prime}}{f_0}\left(Y_i\right) \mid S^{(n)},A^{(n)},U^{(n)} = E \bigg\{ \frac{f_0^{\prime}}{f_0}\left(Y_i\right)  \bigg\} + \op.
$$
Thus,
$$
n_{[k]0(1)}^{-1} \sum_{i \in \mathcal{L}_1, S_i=k, A_i=0} \frac{f_0^{\prime}}{f_0}\left(Y_i\right) = E \bigg\{ \frac{f_0^{\prime}}{f_0}\left(Y_i\right)  \bigg\} + \op.
$$

(II) We show that $\Bar{\hat{Z}}_{[k]1} - \Bar{Z}_{[k]1} = \op$. Similar to the proof of (I), we have
        $$\begin{aligned}
            &n_{[k]1}^{-1}\cdot \hat{I}(f_0)^{-1}\bigg\{\sum_{i \in \mathcal{L}_1, S_i=k, A_i=1} \frac{\hat{f}_{0(2)}^{\prime}}{\hat{f}_{0(2)}}(Y_i-\tau) + \sum_{i \in \mathcal{L}_2, S_i=k, A_i=1} \frac{\hat{f}_{0(1)}^{\prime}}{\hat{f}_{0(1)}}(Y_i-\tau) \bigg\}\\
            &-n_{[k]1}^{-1}\cdot I(f_0)^{-1}\sum_{S_i=k, A_i=1} \frac{\hat{f}_{0}}{f_{0}}(Y_i-\tau) = o_P(1).
        \end{aligned}$$
        The difference arises from our lack of knowledge about $\tau$, leading us to substitute $\tilde \tau$ for $\tau$ in the definition of $\Bar{\hat{Z}}{[k]1}$. Considering this substitution, $\Bar{\hat{Z}}{[k]1}-\Bar{Z}_{[k]1} = o_P(1)$ is implied by
        $$
            \frac{n_{[k]1(1)}}{n_{[k]1}} \cdot n_{[k]1(1)}^{-1}\bigg\{\sum_{i \in \mathcal{L}_1, S_i=k, A_i=1} \frac{\hat{f}_{0(2)}^{\prime}}{\hat{f}_{0(2)}}(Y_i-\tilde{\tau}) - \sum_{i \in \mathcal{L}_1, S_i=k, A_i=1} \frac{\hat{f}_{0(2)}^{\prime}}{\hat{f}_{0(2)}}(Y_i-\tau)\bigg\} = o_P(1).
            $$
            $$
            \frac{n_{[k]1(2)}}{n_{[k]1}} \cdot n_{[k]1(2)}^{-1}\bigg\{\sum_{i \in \mathcal{L}_1, S_i=k, A_i=1} \frac{\hat{f}_{0(1)}^{\prime}}{\hat{f}_{0(1)}}(Y_i-\tilde{\tau}) - \sum_{i \in \mathcal{L}_1, S_i=k, A_i=1} \frac{\hat{f}_{0(1)}^{\prime}}{\hat{f}_{0(1)}}(Y_i-\tau)\bigg\} = o_P(1).
            $$
        We will prove the first claim only, as the proof of the second claim is similar. Note that $n_{[k]1(1)}/n_{[k]1} \xrightarrow{P} 1/2$. By the first-order Taylor expansion, the first claim is implied by
        $$\frac{1}{n_{[k]1(1)}}\sum_{i \in \mathcal{L}_1, S_i=k, A_i=1} (\tau-\tilde{\tau}) \cdot \bigg(\frac{\hat{f}_{0(2)}^{\prime}}{\hat{f}_{0(2)}}\bigg)^{\prime}\big((Y_i-\tau)+(\tau-\Bar{\tau})\big)  = o_P(1), $$
        where $\Bar{\tau}$ is between $\tilde{\tau}$ and $\tau$.

        Recall that $\sqrt{n}(\tilde{\tau}-\tau) = O_P(1)$, thus $\tau-\Bar{\tau} = O_P(n^{-1/2}) = O_P(n_{[k]1(1)}^{-1/2})$, and it suffices to show that
        $$\frac{1}{n_{[k]1(1)}}\sum_{i \in \mathcal{L}_1, S_i=k, A_i=1} \bigg(\frac{\hat{f}_{0(2)}^{\prime}}{\hat{f}_{0(2)}}\bigg)^{\prime}\big(Y_i^{*}+(\tau-\Bar{\tau})\big)  = O_P(1), $$
        where $Y_i^{*} = Y_i-\tau \sim F_{0[k]}(y)$ for $i \in \mathcal{L}_1, S_i=k, A_i=1$. The above statement follows from Assumption \ref{cond:second-moment}(iii).

(III) We show that $\Bar{\hat{Z}}_{a}-\Bar{Z}_{a} = o_P(1)$. 
By definition, $\Bar{\hat{Z}}_{a} = \sum_{i:A_i=a} \hat{Z}_i / n_a = \sumk (n_{[k]a}/n_a) \Bar{\hat{Z}}_{[k]a} $. Then
$$
\Bar{\hat{Z}}_{a}-\Bar{Z}_{a} = \sumk (n_{[k]a}/n_a) (\Bar{\hat{Z}}_{[k]a} - \Bar{Z}_{[k]a}) = \op,
$$
where the last equality is due to $n_{[k]a}/n_a \cp \ps$ and $\Bar{\hat{Z}}_{[k]a} - \Bar{Z}_{[k]a} = \op$.

(IV) We show that $n_{[k]0}^{-1} \sum_{S_i=k, A_i=0} (\hat{Z}_i^2 - Z_i^2) = o_P(1)$, which is implied by
        $$n_{[k]0(1)}^{-1}\bigg\{\hat{I}(f_0)^{-2} \sum_{i \in \mathcal{L}_1, S_i=k, A_i=0} \bigg(\frac{\hat{f}_{0(2)}^{\prime}}{\hat{f}_{0(2)}}\bigg)^2\left(Y_i\right)-I(f_0)^{-2} \sum_{i \in \mathcal{L}_1, S_i=k, A_i=0} \bigg(\frac{f_0^{\prime}}{f_0}\bigg)^2\left(Y_i\right) \bigg\}= o_P(1),$$
        $$n_{[k]0(2)}^{-1}\bigg\{\hat{I}(f_0)^{-2} \sum_{i \in \mathcal{L}_2, S_i=k, A_i=0} \bigg(\frac{\hat{f}_{0(1)}^{\prime}}{\hat{f}_{0(1)}}\bigg)^2\left(Y_i\right)-I(f_0)^{-2} \sum_{i \in \mathcal{L}_2, S_i=k, A_i=0} \bigg(\frac{f_0^{\prime}}{f_0}\bigg)^2\left(Y_i\right) \bigg\}= o_P(1).$$
We will prove the first equation ($i \in \mathcal{L}_1$) only, as the proof for the second equation ($i \in \mathcal{L}_2$) is similar. 

Similar to (I), we make the following decomposition:
\begin{eqnarray}
        \label{e:difference-hatf-scone}
            & &n_{[k]0(1)}^{-1}\bigg\{\hat{I}(f_0)^{-2} \sum_{i \in \mathcal{L}_1, S_i=k, A_i=0} \bigg(\frac{\hat{f}_{0(2)}^{\prime}}{\hat{f}_{0(2)}}\bigg)^2\left(Y_i\right)-I(f_0)^{-2} \sum_{i \in \mathcal{L}_1, S_i=k, A_i=0} \bigg(\frac{f_0^{\prime}}{f_0}\bigg)^2\left(Y_i\right) \bigg\} \nonumber \\
            &&= n_{[k]0(1)}^{-1}\bigg\{\hat{I}(f_0)^{-2} \sum_{i \in \mathcal{L}_1, S_i=k, A_i=0} \bigg(\frac{\hat{f}_{0(2)}^{\prime}}{\hat{f}_{0(2)}}\bigg)^2\left(Y_i\right)-\hat{I}(f_0)^{-2} \sum_{i \in \mathcal{L}_1, S_i=k, A_i=0} \bigg(\frac{f_0^{\prime}}{f_0}\bigg)^2\left(Y_i\right) \bigg\} \nonumber \\
            &&\quad + n_{[k]0(1)}^{-1}\bigg\{\hat{I}(f_0)^{-2} \sum_{i \in \mathcal{L}_1, S_i=k, A_i=0} \bigg(\frac{f_0^{\prime}}{f_0}\bigg)^2\left(Y_i\right)-I(f_0)^{-2} \sum_{i \in \mathcal{L}_1, S_i=k, A_i=0} \bigg(\frac{f_0^{\prime}}{f_0}\bigg)^2\left(Y_i\right) \bigg\} \nonumber \\
    &&:= B_3 + B_4. \nonumber
        \end{eqnarray}
        
By (\ref{eqn:OP}), we have $n_{[k]0(1)}^{-1}[\sum_{i \in \mathcal{L}_1, S_i=k, A_i=0} \{(\hat{f}_{0(2)}^{\prime}/ \hat{f}_{0(2)})^2-(f_0^{\prime}/f_0)^2\}(Y_i)]= o_P(1)$. By Lemma \ref{l1}, $\hat{I}(f_0) = I(f_0) +\op$, which implies $\hat{I}(f_0)^{-2}=O_P(1)$. Therefore, $B_3 = o_P(1)$.

It suffices for $B_4 = \op$ to show that
$$
n_{[k]0(1)}^{-1} \sum_{i \in \mathcal{L}_1, S_i=k, A_i=0} \bigg(\frac{f_0^{\prime}}{f_0}\bigg)^2\left(Y_i\right) = O_P(1).
$$
Note that 
$$
E \bigg(\frac{f_0^{\prime}}{f_0}\bigg)^2\left(Y_i\right)  = \int \bigg(\frac{f_0^{\prime}}{f_0}\bigg)^2 \left(y\right)  d F_{[k]0}(y)  \leq \frac{1}{\min_{k=1,\ldots,K} \ps} \int \bigg(\frac{f_0^{\prime}}{f_0}\bigg)^2 \left(y\right)   d F_{0}(y)  < \infty.
$$
Then, we have
$$
n_{[k]0(1)}^{-1} \sum_{i \in \mathcal{L}_1, S_i=k, A_i=0} \bigg(\frac{f_0^{\prime}}{f_0}\bigg)^2\left(Y_i\right) \mid S^{(n)},A^{(n)},U^{(n)} = E \bigg(\frac{f_0^{\prime}}{f_0}\bigg)^2\left(Y_i\right) + \op.
$$
Thus,
$$
n_{[k]0(1)}^{-1} \sum_{i \in \mathcal{L}_1, S_i=k, A_i=0} \bigg(\frac{f_0^{\prime}}{f_0}\bigg)^2\left(Y_i\right) = E \bigg(\frac{f_0^{\prime}}{f_0}\bigg)^2\left(Y_i\right)  + \op = O_P(1).
$$

(V) We show that $n_{[k]1}^{-1} \sum_{S_i=k, A_i=1} (\hat{Z}_i^2 - Z_i^2) = o_P(1)$. Similar to the proof of (IV), we have
        $$n_{[k]1(1)}^{-1}\bigg\{\hat{I}(f_0)^{-2} \sum_{i \in \mathcal{L}_1, S_i=k, A_i=1} \bigg(\frac{\hat{f}_{0(2)}^{\prime}}{\hat{f}_{0(2)}}\bigg)^2\left(Y_i - \tau \right)-I(f_0)^{-2} \sum_{i \in \mathcal{L}_1, S_i=k, A_i=1} \bigg(\frac{f_0^{\prime}}{f_0}\bigg)^2\left(Y_i - \tau \right) \bigg\}= o_P(1),$$
        $$n_{[k]0(2)}^{-1}\bigg\{\hat{I}(f_0)^{-2} \sum_{i \in \mathcal{L}_2, S_i=k, A_i=1} \bigg(\frac{\hat{f}_{1(1)}^{\prime}}{\hat{f}_{0(1)}}\bigg)^2\left(Y_i-\tau \right)-I(f_0)^{-2} \sum_{i \in \mathcal{L}_2, S_i=k, A_i=1} \bigg(\frac{f_0^{\prime}}{f_0}\bigg)^2\left(Y_i - \tau \right) \bigg\}= o_P(1).$$
        The difference arises from our lack of knowledge about $\tau$, leading us to substitute $\tilde \tau$ for $\tau$ in the definition of $\Bar{\hat{Z}}{[k]1}$. Considering this substitution, $\Bar{\hat{Z}}_{[k]1}-\Bar{Z}_{[k]1} = o_P(1)$ is implied by
        $$
            n_{[k]1(1)}^{-1}\bigg\{\sum_{i \in \mathcal{L}_1, S_i=k, A_i=1} \bigg(\frac{\hat{f}_{0(2)}^{\prime}}{\hat{f}_{0(2)}}\bigg)^2(Y_i-\tilde{\tau}) - \sum_{i \in \mathcal{L}_1, S_i=k, A_i=1} \bigg(\frac{\hat{f}_{0(2)}^{\prime}}{\hat{f}_{0(2)}}\bigg)^2(Y_i-\tau)\bigg\} = o_P(1),
            $$
            $$
            n_{[k]1(2)}^{-1}\bigg\{\sum_{i \in \mathcal{L}_1, S_i=k, A_i=1} \bigg(\frac{\hat{f}_{0(1)}^{\prime}}{\hat{f}_{0(1)}}\bigg)^2(Y_i-\tilde{\tau}) - \sum_{i \in \mathcal{L}_1, S_i=k, A_i=1} \bigg(\frac{\hat{f}_{0(1)}^{\prime}}{\hat{f}_{0(1)}}\bigg)^2(Y_i-\tau)\bigg\} = o_P(1).
            $$
        As before we prove the first equation ($i \in \mathcal{L}_1$) only, as the proof for the second equation ($i \in \mathcal{L}_2$) is similar.

        By the first-order Taylor expansion, it suffices to show that
        \begin{equation}
        \label{e:lemma3-V}
                n_{[k]1(1)}^{-1} \cdot (\tilde{\tau}-\tau)\sum_{i \in \mathcal{L}_1, S_i=k, A_i=1}  \bigg(\frac{\hat{f}_{0(2)}^{\prime}}{\hat{f}_{0(2)}}\bigg)(Y_i-\Bar{\tau}) \cdot \bigg(\frac{\hat{f}_{0(2)}^{\prime}}{\hat{f}_{0(2)}}\bigg)^{\prime}(Y_i-\Bar{\tau}) =o_P(1),
        \end{equation}
        where $\Bar{\tau}$ is between $\tilde{\tau}$ and $\tau$.

        By Assumption \ref{cond:second-moment}(iii), we have
        $$n_{[k]1(1)}^{-1} \sum_{i \in \mathcal{L}_1, S_i=k, A_i=1} \bigg|\bigg(\frac{\hat{f}_{0(2)}^{\prime}}{\hat{f}_{0(2)}}\bigg)^{\prime}\bigg|(Y_i-\Bar{\tau}) =O_P(1).$$
        Note that for any $i \in \mathcal{L}_1$ with $S_i=k$ and $A_i=1$, $$\bigg|\frac{\hat{f}_{0(2)}^{\prime}}{\hat{f}_{0(2)}}\bigg|(Y_i-\Bar{\tau}) \leq \sup_{y \in \mathbb R} \bigg\{\bigg|\frac{\hat{f}_{0(2)}^{\prime}}{\hat{f}_{0(2)}}\bigg|(y)\bigg\} = o_P(n^{1/2}).$$
        Combining with $\tilde{\tau}-\tau = O_P(n^{-1/2})$, we obtain \eqref{e:lemma3-V}.
    
\end{proof}

\subsection{Proof of Lemma~\ref{l2:str}}
\begin{proof}
We adopt a similar approach to the proof of Lemma \ref{l2}. As before, we will only prove the result for $i \in \mathcal{L}_1$.

By Lemma~\ref{l1}, we have $\hat I(f_0) = I(f_0) + \op$. Since $0< I(f_0) < \infty$, then it suffices to show that
\begin{equation}\label{eqn:suf-lemma3}
\hat{I}(f_0) \cdot \frac{1}{\sqrt{n}}\sum_{k=1}^K\sum_{i \in \mathcal{L}_1, S_i=k} \psi_{\hat{f}_{0(2)}}^{(k)}(A_i, Y_i ; \tau) - \hat{I}(f_0) \cdot \frac{1}{\sqrt{n}}\sum_{k=1}^K\sum_{i \in \mathcal{L}_1, S_i=k} \psi_{f_0}^{(k)}(A_i, Y_i ; \tau) = \op.
\end{equation}

Firstly, we have
\begin{equation}
\label{e:lemma3}
\begin{aligned}
    &\hat{I}(f_0) \cdot \frac{1}{\sqrt{n}}\sum_{k=1}^K\sum_{i \in \mathcal{L}_1, S_i=k} \psi_{\hat{f}_{0(2)}}^{(k)}(A_i, Y_i ; \tau) - \hat{I}(f_0) \cdot \frac{1}{\sqrt{n}}\sum_{k=1}^K\sum_{i \in \mathcal{L}_1, S_i=k} \psi_{f_0}^{(k)}(A_i, Y_i ; \tau) \\
    =&  \left[\hat{I}(f_0) \cdot \frac{1}{\sqrt{n}}\sum_{k=1}^K\sum_{i \in \mathcal{L}_1, S_i=k}\psi_{\hat{f}_{0(2)}}^{(k)}(A_i, Y_i ; \tau) - I(f_0) \cdot \frac{1}{\sqrt{n}}\sum_{k=1}^K\sum_{i \in \mathcal{L}_1, S_i=k} \psi_{f_0}^{(k)}(A_i, Y_i ; \tau)\right] \\
    &+ \left[I(f_0) \cdot \frac{1}{\sqrt{n}}\sum_{k=1}^K\sum_{i \in \mathcal{L}_1, S_i=k} \psi_{f_0}^{(k)}(A_i, Y_i ; \tau)-\hat{I}(f_0) \cdot \frac{1}{\sqrt{n}}\sum_{k=1}^K\sum_{i \in \mathcal{L}_1, S_i=k} \psi_{f_0}^{(k)}(A_i, Y_i ; \tau)\right] \\
    :=&B_1+B_2.
\end{aligned}
\end{equation}
It suffices for \eqref{eqn:suf-lemma3} to show that $B_1 =\op$ and $B_2 = \op$.

In the proof of Theorem \ref{t4}, we have shown that $n^{-1/2}\sum_{k=1}^K\sum_{i: S_i=k} \psi_{f_0}^{(k)}(A_i, Y_i ; \tau)$ converges in distribution to a normal distribution with zero mean and finite variance. Since the splitting is independent of $S^{(n)}, A^{(n)}, Y^{(n)}(a)$, then $n^{-1/2}\sum_{k=1}^K\sum_{i \in \mathcal{L}_1, S_i=k} \psi_{f_0}^{(k)}(A_i, Y_i ; \tau) = O_P(1)$. By Lemma~\ref{l1}, we have $\hat I(f_0) = I(f_0) + \op$. Therefore, $B_2 = \op$.

Next, we show that $B_1 = \op$, which is implied by
\begin{eqnarray}\label{eqn:lemma3-var}
    &&\Var\bigg(\bigg\{\hat{I}(f_0) \cdot \frac{1}{\sqrt{n}}\sum_{k=1}^K\sum_{i \in \mathcal{L}_1, S_i=k}\psi_{\hat{f}_{0(2)}}^{(k)}(A_i, Y_i ; \tau) - I(f_0) \cdot \frac{1}{\sqrt{n}}\sum_{k=1}^K\sum_{i \in \mathcal{L}_1, S_i=k} \psi_{f_0}^{(k)}(A_i, Y_i ; \tau) \bigg\} \nonumber \\
    && \quad \mid S^{(n)},A^{(n)}, U^{(n)},D^{(\mathcal{L}_2)}\bigg) = \op
\end{eqnarray}
and 
\begin{eqnarray}\label{eqn:lemma3-bias}
    &&E\bigg(\bigg\{\hat{I}(f_0) \cdot \frac{1}{\sqrt{n}}\sum_{k=1}^K\sum_{i \in \mathcal{L}_1, S_i=k}\psi_{\hat{f}_{0(2)}}^{(k)}(A_i, Y_i ; \tau) - I(f_0) \cdot \frac{1}{\sqrt{n}}\sum_{k=1}^K\sum_{i \in \mathcal{L}_1, S_i=k} \psi_{f_0}^{(k)}(A_i, Y_i ; \tau) \bigg\} \nonumber \\
    &&\quad \mid S^{(n)},A^{(n)}, U^{(n)},D^{(\mathcal{L}_2)}\bigg) =  \op.
\end{eqnarray}

To prove \eqref{eqn:lemma3-var}, we have
\begin{equation}
    \nonumber
    \begin{aligned}
	&\Var\bigg(\bigg\{\hat{I}(f_0) \cdot \frac{1}{\sqrt{n}}\sum_{k=1}^K\sum_{i \in \mathcal{L}_1, S_i=k}\psi_{\hat{f}_{0(2)}}^{(k)}(A_i, Y_i ; \tau) - I(f_0) \cdot \frac{1}{\sqrt{n}}\sum_{k=1}^K\sum_{i \in \mathcal{L}_1, S_i=k} \psi_{f_0}^{(k)}(A_i, Y_i ; \tau) \bigg\} \\
    & \mid S^{(n)},A^{(n)}, U^{(n)},D^{(\mathcal{L}_2)}\bigg)\\
		=&\frac{1}{n}\Var\bigg(
		\bigg\{\sum_{k=1}^K \sum_{i \in \mathcal{L}_1,S_i=k, A_i =1} \frac{1}{\pi_{n[k]}} \cdot\left(\frac{\hat{f}_{0(2)}^{\prime}}{\hat{f}_{0(2)}}-\frac{f_0^{\prime}}{f_0}\right)\left(Y_i-\tau\right)\\
    &-\sum_{k=1}^K\sum_{i \in \mathcal{L}_1,S_i=k, A_i=0}\frac{1}{1-\pi_{n[k]}} \cdot\left(\frac{\hat{f}_{0(2)}^{\prime}}{\hat{f}_{0(2)}}-\frac{f_0^{\prime}}{f_0}\right)\left(Y_i\right)\bigg\} \mid S^{(n)},A^{(n)},U^{(n)},D^{(\mathcal{L}_2)} \bigg)\\
    \leq& \frac{2K}{n}\sum_{k=1}^K \Bigg[ \frac{1}{\pi_{n[k]}^2} \Var\bigg(\sum_{i \in \mathcal{L}_1,S_i=k, A_i =1} \cdot\left(\frac{\hat{f}_{0(2)}^{\prime}}{\hat{f}_{0(2)}}-\frac{f_0^{\prime}}{f_0}\right)\left(Y_i-\tau\right) \mid S^{(n)},A^{(n)},U^{(n)},D^{(\mathcal{L}_2)} \bigg)\\
    &+ \frac{1}{(1-\pi_{n[k]})^2}  \Var\bigg(\sum_{i \in \mathcal{L}_1,S_i=k, A_i =0} \cdot\left(\frac{\hat{f}_{0(2)}^{\prime}}{\hat{f}_{0(2)}}-\frac{f_0^{\prime}}{f_0}\right)\left(Y_i\right) \mid S^{(n)},A^{(n)},U^{(n)},D^{(\mathcal{L}_2)}\bigg)\Bigg].
\end{aligned}
\end{equation}

We have shown in \eqref{E_singlek_lemma2} and \eqref{E_singlek_lemma2_2} that
$$
 \Var\bigg(\sum_{i \in \mathcal{L}_1,S_i=k, A_i =0} \cdot\left(\frac{\hat{f}_{0(2)}^{\prime}}{\hat{f}_{0(2)}}-\frac{f_0^{\prime}}{f_0}\right)\left(Y_i\right) \mid S^{(n)},A^{(n)},U^{(n)},D^{(\mathcal{L}_2)}\bigg) = o_P(n),
$$
$$
 \Var\bigg(\sum_{i \in \mathcal{L}_1,S_i=k, A_i =1} \cdot\left(\frac{\hat{f}_{0(2)}^{\prime}}{\hat{f}_{0(2)}}-\frac{f_0^{\prime}}{f_0}\right)\left(Y_i - \tau \right) \mid S^{(n)},A^{(n)},U^{(n)},D^{(\mathcal{L}_2)}\bigg) = o_P(n).
$$
Since $\pi_{n[k]} \cp \pi$, then equation~\eqref{eqn:lemma3-var} holds.

To prove \eqref{eqn:lemma3-bias}, we have
\begin{eqnarray}
\label{e:lemma3-bias}
 &&E\bigg(\bigg\{\hat{I}(f_0) \sum_{k=1}^K\sum_{i \in \mathcal{L}_1, S_i=k}\psi_{\hat{f}_{0(2)}}^{(k)}(A_i, Y_i ; \tau) - I(f_0) \sum_{k=1}^K\sum_{i \in \mathcal{L}_1, S_i=k} \psi_{f_0}^{(k)}(A_i, Y_i ; \tau) \bigg\} \nonumber \\
    &&\quad \mid S^{(n)},A^{(n)}, U^{(n)},D^{(\mathcal{L}_2)}\bigg) \nonumber \\
& = & \sum_{k=1}^K \bigg(\frac{n_{[k]1(1)}}{\pi_{n[k]}}-\frac{n_{[k]0(1)}}{(1-\pi_{n[k]})}\bigg) \cdot \int \left(\frac{\hat{f}_{0(2)}^{\prime}}{\hat{f}_{0(2)}}-\frac{f_0^{\prime}}{f_0}\right)\left(y\right) dF_{[k]0}(y). \nonumber
\end{eqnarray}
Note that $n_{[k]1(1)}/ (n\pi_{n[k]}) = (n_{[k]1(1)}/n_{[k]1}) ( n_{[k]}/n ) = \ps /2 + O_P(n^{-1/2})$, and similarly, $n_{[k]0(1)}/\{n(1-\pi_{n[k]})\}=\ps /2 + O_P(n^{-1/2})$. Thus, $n_{[k]1(1)}/\pi_{n[k]} - n_{[k]0(1)}/(1-\pi_{n[k]}) = O_P(n^{1/2})$. We have shown by \eqref{eqn:FkOp} and the Cauchy--Schwarz inequality that, 
$
\int (\hat{f}_{0(2)}^{\prime}/\hat{f}_{0(2)}-f_0^{\prime}/f_0)\left(y\right) dF_{[k]0}(y) = o_P(1).
$
Therefore, equation \eqref{e:lemma3-bias} holds.
    
\end{proof}

\typeout{get arXiv to do 4 passes: Label(s) may have changed. Rerun}
\end{document}